\newsavebox{\measure@tikzpicture}
  \def\tikz@width{#1}%
\algnewcommand{\LineComment}[1]{\State \(\triangleright\) #1}
\newcommand{\cmark}{\ding{51}}%
\definecolor{xdxdff}{rgb}{0.49019607843137253,0.49019607843137253,1.}
\definecolor{qqqqcc}{rgb}{0.,0.,0.8}
\colorlet{LightOrange}{orange!60!white}
\colorlet{DarkOrange}{orange!40!black!60}
\colorlet{LightViolet}{DarkViolet!60!white}
\colorlet{DarkViolet}{DarkViolet!40!black!60}
\colorlet{LightBlue}{blue!60!white}
\colorlet{DarkBlue}{blue!40!black!60}
\colorlet{LightGreen}{green!60!white}
\colorlet{DarkGreen}{green!40!black!60}
\newenvironment{absolutelynopagebreak}
  {\par\nobreak\vfil\penalty0\vfilneg
   \vtop\bgroup}
  {\par\xdef\tpd{\the\prevdepth}\egroup
   \prevdepth=\tpd}
\definecolor{lightergray}{rgb}{0.8,0.8,0.8}
\definecolor{verylightgray}{rgb}{0.95,0.95,0.95}
\definecolor{verylightblue}{rgb}{0.925,0.95,1.0}
\definecolor{lightred}{rgb}{1.0,0.85,0.85}
\newcommand{\floornlogn}{\left(\frac{n}{\log n}\right)}
\newcommand{\lmm}{\mathsf{Y}_{d}(n,p)}
\newcommand{\BCC}{\mathcal{B}}
\newcommand{\ECC}{E}
\newcommand{\FCC}{F}
\newcommand{\HCC}{H}
\newcommand{\ICC}{\mathcal{I}}
\newcommand{\KCC}{{K}}
\newcommand{\LCC}{{L}}
\newcommand{\MCC}{\mathcal{M}}
\newcommand{\PCC}{\mathcal{P}}
\newcommand{\SCC}{\mathcal{S}}
\newcommand{\UCC}{\mathcal{U}}
\newcommand{\VCC}{\mathcal{V}}
\newcommand{\HKCC}{\HCC_\KCC}
\newcommand{\VSS}{\mathcal{V}}
\newcommand{\bbz}{\mathbb{Z}}
\newcommand{\bbn}{\mathbb{N}}
\newcommand{\bbr}{\mathbb{R}}
\newcommand{\bba}{\mathbb{A}}
\newcommand{\cfm}{\mathsf{X}(n,\textbf{p})}
\newcommand{\boldp}{\textbf{p}}
\newcommand{\bfj}{\mathbf{j}}
\newcommand{\kc}{K(C)}
\newcommand{\kprime}{K'(C)}
\DeclareMathOperator{\conv}{conv}
\newcommand{\satgates}{\mathscr{S}}
\newcommand{\unsatgates}{\mathscr{\overline{S}}}
\newcommand{\tildev}{\tilde{\VCC}}
\newcommand{\tildef}{\tilde{F}}
\newcommand{\weirderfactor}{2^{\log^{(1-\epsilon)}n}}
\newcommand{\erk}{\textnormal{er}(\KCC)}
\newcommand{\opa}{\textnormal{OPT}_\textnormal{A}}
\newcommand{\opb}{\textnormal{OPT}_\textnormal{B}}
\newcommand{\opt}[1]{\textnormal{OPT}_{#1}}
\newcommand{\jin}{j \in [1,n]}
\theoremstyle{plain}% default
\newtheorem{theorem}{Theorem}[section]
\newtheorem*{theorem*}{Theorem}
\newtheorem{claim}{Claim}[section]
\newtheorem{lemma}[theorem]{Lemma}
\newtheorem{proposition}[theorem]{Proposition}
\newtheorem{corollary}{Corollary}
\theoremstyle{definition}
\newtheorem{definition}{Definition}[section]
\theoremstyle{remark}
\newtheorem{remark}{Remark}
\newtheorem{notation}{Notation}
\newcommand{\fpt}{{\bf FPT}\xspace}
\newcommand{\NP}{{\bf NP}\xspace}
\newcommand{\QP}{{\bf QP}\xspace}
\newcommand{\WP}{{\bf W{[P]}}\xspace}
\newcommand{\khat}{\hat{K}}
\newcommand{\khatmax}{\khat_{\max}}
\newcommand{\brat}{(2)}
\newcommand{\kone}{K^1}
\newcommand{\dcomplex}[2]{{#1}^{(#2)}}
\newcommand{\twocomplex}[1]{\dcomplex{#1}{2}}
\newcommand{\alphastrings}{\Sigma^{\ast}}
\newcommand{\OPTA}{\opt\mincircuitsat(C)}
\newcommand{\OPTB}{\opt\minrmm(\kc)}
\newcommand{\madk}{\widehat{K}}
\newcommand{\madv}{\widehat{\VCC}}
\newcommand{\OPTC}{\opt\minrmm(K)}
\newcommand{\OPTD}{\opt\minmm(K)}
\newcommand{\OPTDD}{\opt\minmm(\widehat{K})}
\newcommand{\ALGA}{m_{\mincircuitsat} (C,\ICC(C,\VCC))}
\newcommand{\ALGAA}{m_{\mincircuitsat} (C,\ICC(C,\tilde{\VCC}))}
\newcommand{\ALGB}{m_{\minrmm} (\kc,\VCC)}
\newcommand{\ALGBB}{m_{\minrmm} (\kc,\tilde{\VCC})}
\DeclareMathOperator*{\im}{im}
\newcommand{\wcs}{\textsc{Weighted Circuit Satisfiability}\xspace}
\newcommand{\minwcs}{\textsc{Min-Weighted Circuit Satisfiability}\xspace}
\newcommand{\paraexptwo}{\textsc{Erasability Expansion Height}\xspace}
\newcommand{\paraexpert}{\textsc{Expansion Height}\xspace}
\newcommand{\minmonotone}{\textsc{Min-Monotone Circuit Sat}\xspace}
\newcommand{\minmorse}{\textsc{Min-Morse Matching}\xspace}
\newcommand{\erasability}{\textsc{Erasability}\xspace}
\newcommand{\minrmorse}{\textsc{Min-Reduced Morse Matching}\xspace}
\newcommand{\maxmorse}{\textsc{Max-Morse Matching}\xspace}
\newcommand{\minmm}{\mathsf{MinMM}\xspace}
\newcommand{\minrmm}{\mathsf{MinrMM}\xspace}
\newcommand{\maxmm}{\mathsf{MaxMM}\xspace}
\newcommand{\wcsabbr}{\mathsf{WCS}\xspace}
\newcommand{\minwcsabbr}{\mathsf{MinWCS}\xspace}
\newcommand{\mincircuitsat}{\minwcsabbr[\mathcal{C}^{+}]}
\renewcommand{\mincircuitsat}{\mathsf{MinMCS}\xspace}
\newcommand{\ijpair}{(i,j)}
\newcommand{\qjpair}{(q,j)}
\newcommand{\pjpair}{(p,j)}
\newcommand{\kppair}{(k,p)}
\newcommand{\lppair}{(\ell,p)}
\newcommand{\jqpair}{(j,q)}
\newcommand{\ionepair}{(i,1)}
\newcounter{problemcounter}
\newcommand{\gadget}{\mathbf{D}}
\newcommand{\maingadgetx}{\gadget_{m,\ell}^{\zeta_1}}
\newcommand{\altgadgetx}{\gadget_{m,\ell}^{\zeta_2}}
        \newtheorem{case}{Case}
        \theoremstyle{nonumberplain}
\begin{document}

\title{Parameterized inapproximability of Morse matching} %TODO Please add

%\titlerunning{Parameterized inapproximability of Morse matching}%optional, please use if title is longer than one line

%\author{John Q. Public}{Dummy University Computing Laboratory, [optional: Address], Country \and My second affiliation, Country \and \url{http://www.myhomepage.edu} }{johnqpublic@dummyuni.org}{https://orcid.org/0000-0002-1825-0097}{(Optional) author-specific funding acknowledgements}%TODO mandatory, please use full name; only 1 author per \author macro; first two parameters are mandatory, other parameters can be empty. Please provide at least the name of the affiliation and the country. The full address is optional

%\author{Joan R. Public\footnote{Optional footnote, e.g. to mark corresponding author}}{Department of Informatics, Dummy College, [optional: Address], Country}{joanrpublic@dummycollege.org}{[orcid]}{[funding]}

\author{Ulrich Bauer \\   \texttt{ulrich.bauer@tum.de}  \and \and Abhishek Rathod \\   \texttt{arathod@purdue.edu} }

\maketitle

\begin{abstract}
%While computing the optimal gradient vector field on a simplicial complex is often of practical interest, unfortunately the problem is  known to \NP-hard even for $2$-complexes~\cite{JP06}.
%In particular, we consider the approximability of minimization variant of the Morse matching problem on $2$-complexes. 

We study the problem of minimizing the number of critical simplices from the point of view of inapproximability and parameterized complexity.
We first show inapproximability of \minmorse within a factor of $\weirderfactor$.
Our second result shows that \minmorse is $\WP$-hard with respect to the standard parameter. Next, we show that \minmorse with standard parameterization has no FPT approximation algorithm for \emph{any} approximation factor $\rho$.
The above hardness results are applicable to complexes of dimension $\ge 2$.

On the positive side, we provide a factor $O(\frac{n}{\log n})$ approximation algorithm for \minmorse on $2$-complexes, noting that no such algorithm is known for higher dimensional complexes.
Finally, we devise discrete gradients with very few critical simplices for typical instances drawn from a fairly wide range of parameter values of the Costa--Farber model of random complexes.

%To some extent this explains the observed effectiveness of simple heuristics for \minmorse. 
%All of the above results primarily rely on the reduction from \minmonotone to an intermediate problem that we call \minrmorse. The hardness results for \minmorse can then be inferred from hardness results for \minrmorse.

\end{abstract}

\section{Introduction}

Classical Morse theory~\cite{Milnor} is an analytical tool for studying topology of smooth manifolds. Forman's discrete Morse theory is a combinatorial analogue of Morse theory that is applicable to simplicial complexes, and more generally regular cell complexes~\cite{Fo98}. 
In Forman's theory, discrete Morse functions play the role of smooth Morse functions, whereas discrete gradient vector fields are the analogues of gradient-like vector fields.
The principal objects of study are, therefore, the so-called discrete gradient vector fields (or discrete gradients) on simplicial complexes. Discrete gradients are  partial
face-coface matchings that satisfy certain acyclicity conditions.  Forman's theory also has an elegant  graph theoretic formulation~\cite{Ch00}, in which the  acyclic matchings (or \emph{Morse matchings}) in the Hasse diagram of a simplicial complex are in one-to-one correspondence with the discrete gradients on the simplicial complex.  For this reason, we  use the terms \emph{gradient vector fields} and \emph{Morse matchings} interchangeably.

 Discrete Morse theory has become a popular tool in computational topology, image processing and visualization~\cite{Bauer2017Morse,bauerjact,Caz03,shiva2012,wangchen,robins,stratified,mukherjee}, and is actively studied in algebraic and toplogical combinatorics~\cite{Jollenbeck2009,Ko08,Mil07,scoville,kozlovorganized}. Over the period of last decade, it has emerged a powerful computational tool for several problems in topological data analysis~\cite{kkm,holmgren,DeyWangWang,reininghaus}.   Because of the wide array of applications there is a lot of practical interest in computing gradient vector fields on simplicial complexes with a (near-)optimal number of critical simplices~\cite{Lan16,ripser,Brendel,MR3472422,HMN,HMMNWJD10,LewinerLT04}. 
The idea of using discrete Morse theory to speed up the computation of (co)homology~\cite{HMMNWJD10,lampret,nanda}, persistent homology~\cite{MN13,ripser}, zigzag persistence~\cite{escolar,maria}, and multiparameter persistent homology~\cite{scaramuccia} relies on the fact that discrete Morse theory can be employed to reduce the problem of computing the homology of an input simplicial complex to that of a much smaller chain complex. %In fact, certain state-of-the-art methods for computing homology~\cite{HMMNWJD10} depend crucially on discrete Morse theory.

The effectiveness of certain heuristics for Morse matching  raises an important question: to what extent is it feasible to obtain near-optimal solutions for Morse matching in polynomial time?  To this end,  inapproximability results for \minmorse for simplicial complexes of dimension $d \geq 3$ were established in~\cite{BR19}. To this date, however, we are unaware of  any hardness results for \minmorse on $2$-complexes from the perspective of inapproximability or parameterized complexity (although the related  \erasability problem was shown to be \WP-hard by Burton et al.~\cite{MR3472422}).  With this paper, we seek to close the knowledge gap.  By establishing various hardness results, we demonstrate the limitations of polynomial time methods for computing near-optimal Morse matchings. On the other hand, by devising an approximation algorithm, we make it evident that \minmorse on $2$-complexes is not entirely inapproximable. 
We also observe that the typical Morse matching instances drawn from a wide range of parameter values of the Costa--Farber complexes are a lot easier in contrast to the discouraging worst case inapproximability bounds.

\subsection{Related work}

Joswig and Pfetsch~\cite{JP06} showed that finding an optimal gradient vector field is an \NP-hard problem  based on the relationship between erasability and Morse Matching observed by Lewiner~\cite{Le02,Le03a}. The \emph{erasability problem} was first studied by E\v gecio\v glu and Gonzalez~\cite{EG96}. Joswig and Pfetsch also posed the question of approximability of optimal Morse matching as an open problem. On the positive side, Rathod et al.~\cite{RBN17} devised the first approximation algorithms for \maxmorse on simplicial complexes that provide constant factor approximation bounds for fixed dimension. Complementing these results, Bauer and Rathod~\cite{BR19} showed that for simplicial complexes of dimension $d \geq 3$ with $n$ simplices, it is \NP-hard to approximate \minmorse within a factor of $O(n^{1-\epsilon})$, for any $\epsilon>0$.  However, the question of approximability of \minmorse for $2$-complexes is left unanswered in~\cite{BR19}. 
%They also showed that it is \NP-hard to approximate \maxmorse within certain fixed constant factor.

Next, Burton et al.~\cite{MR3472422}  showed that the \erasability problem (that is, finding the number of $2$-simplices that need to be removed to make a $2$-complex erasable) is \WP-complete. 
%who developed an FPT algorithm for computing optimal Morse functions on 3-manifolds, and also
We note the \WP-hardness of erasability can be inferred from our methods as well, and therefore our result can be seen as a strengthening of the hardness result from~\cite{MR3472422}.
Moreover, our parameterized inapproximability results rely on the machinery developed by Eickmeyer et al.~\cite{Eickmeyer} and Marx~\cite{Marx}.

Our reduction techniques have a flavor that is similar to the techniques used by Malgouryes and Franc\'es~\cite{MF} and Tancer~\cite{MR3439259} for proving \NP-hardness of certain collapsibility problems.
In particular, Tancer~\cite{MR3439259} also describes a procedure for \enquote{filling $1$-cycles with disks} to make the complex contractible (and even collapsible for satisfiable inputs). Our technique of filling $1$-cycles is however entirely different (and arguably  simpler) than Tancer's procedure.
%, and is one of our key technical contributions. For instance, unlike Tancer, we fill the cycles with topological disks.
Our work is also related to \cite{BR19,BRS19} in that we use the so-called modified dunce hats for constructing the gadget used in the reduction. Recently, modified dunce hats were used to provide a simpler proof of \NP-completeness of the shellability decision problem~\cite{woodroofe}.%, an important problem in combinatorial topology. 

\subsection{The Morse Matching Problems\label{sec:mmprob}}

The \maxmorse problem ($\maxmm$) can be described
as follows: Given a simplicial complex $\mathcal{K}$, compute a gradient
vector field that maximizes the cardinality of matched (regular) simplices,
over all possible gradient vectors fields on $\mathcal{K}$. Equivalently,
the goal is to maximize the number of gradient pairs. For the complementary
problem \minmorse ($\minmm$), the goal is to compute
a gradient vector field that minimizes the number of unmatched (critical)
simplices, over all possible gradient vector fields on $\KCC$. While the problem of finding the exact
optimum are equivalent for $\minmm$ and $\maxmm$, the approximation variants behave quite differently.

Additionally, we define another variant of the minimization problem for $2$-dimensional complexes, namely  \minrmorse ($\minrmm$). For this problem, we seek to minimize the total number of critical simplices minus one.
This variant is natural, since any discrete gradient necessarily has at least one critical $0$-simplex.
It corresponds to a variant definition of simplicial complexes commonly used in combinatorics, which also consider the empty set as a simplex of dimension $-1$.

\subsection{Our contributions}

\begin{table}[h]
\begin{tabular}{|c|c|c|}
\hline 
 Approx. algorithm  & {Bauer, R. (2019)} & {This paper} \tabularnewline
\hline 
\hline 
 \minmorse  (dim. $ = 2$) & -- & $\ensuremath{O(\frac{n}{\log n})}$\tabularnewline
\hline 
\end{tabular}
\end{table}

\begin{table}[h]
\begin{tabular}{|c|c|c|}
\hline 
 Inapproximability   & {Bauer, R. (2019)} & {This paper} \tabularnewline
\hline 
\hline 
\maxmorse (dim. $\ge2$) & $\left(1-\frac{1}{4914}\right)+\epsilon$ & --\tabularnewline
\hline 
  \minmorse (dim. $\ge3$) & $O(n^{1-\epsilon})$ & --\tabularnewline
\hline 
 \minmorse (dim. $=2$) & -- & $\weirderfactor$\tabularnewline
\hline 

\end{tabular}
\caption{ (In)approximability of Morse matching}
\end{table}

\begin{table}[h]
\begin{tabular}{|c|c|c|}
\hline 
 & {Burton et al. (2013)} &  {This paper} \tabularnewline
\hline 
\hline 
 \WP-hardness of $\textbf{er}(K)$ (SP) & \cmark & \cmark\tabularnewline
\hline 
\WP-hardness of \minmorse (SP) & -- & \cmark\tabularnewline
\hline 
FPT-inapproximability of \minmorse (SP) & -- & \cmark\tabularnewline
\hline 
FPT-algorithm for \minmorse (TW) & \cmark & --\tabularnewline
\hline 
\end{tabular}

\caption{Parameterized complexity of Morse matching.  In the above table, SP denotes standard parameterization, whereas TW denotes treewidth parameterization.}
\end{table}

In \Cref{sec:mainminrmorse}, we establish several hardness results for  $\minrmm$, using a reduction from $\mincircuitsat$. In particular, we show the following:
 \begin{compactitem} 
 \item $\minrmm$ has no approximation within a factor of  $2^{\log^{(1-\epsilon)}n}  $, for any $\epsilon>0$, unless \mbox{$\NP \subseteq \QP$} (throughout this paper, $\log$ denotes the logarithm by base $2$),
 \item the standard parameterization of $\minrmm$ is \WP-hard, and 
 \item $\minrmm$ with standard parameterization has no FPT approximation algorithm for \emph{any} approximation ratio function $\rho$, unless $\fpt = \WP$.
 \end{compactitem}

In \Cref{sec:mainminmorse}, we first show that the \WP-hardness result and FPT-inapproximability results easily carry over from $\minrmm$ to $\minmm$. 
To the best of our knowledge, this constitutes the first  FPT-inapproximability result in computational topology.
Using the amplified complex construction introduced in~\cite{BR19}, we  observe that the inapproximability result also carries over from $\minrmm$ to 
$\minmm$. In particular, we show that even for $2$-complexes $\minmm$ cannot be approximated within a factor of $2^{\log^{(1-\epsilon)}n}  $, for any $\epsilon>0$, unless $\NP \subseteq \QP$, where $n$ denotes the number of simplices in the complex. 

 \Cref{sec:approxalgo,sec:costafarber} are concerned with some positive results.
First, in \Cref{sec:approxalgo}, we design an $O(\frac{n}{\log n})$-factor algorithm for $\minmm$ on $2$-complexes.
Then, in \Cref{sec:costafarber}, we make the  observation that  Kahle's techniques~\cite{kahle2} for designing  discrete gradients on random clique complexes generalize to  Costa--Farber random complexes. Specifically, we show that for a wide range of parameter  values, there exist discrete gradients for which the ratio of expected number of critical $r$-simplices to the expected number of $r$-simplices (for any fixed dimension $r$) tends to zero. Although these methods do not lead to  approximation algorithms, they fall under the general paradigm of beyond worst-case  analysis~\cite{roughgarden}.

 Note that we do not distinguish between abstract and geometric simplicial complexes since every abstract simplicial complex can be embedded in a Euclidean space of appropriate dimension.
As a final remark, we believe that with this paper we tie all the loose ends regarding complexity questions in discrete Morse theory.

%\section{Background and Preliminaries}

\section{Topological preliminaries}

\subsection{Simplicial complexes}  \label{sub:Simplicial_complexes}

A $k$\emph{-simplex} $\sigma=\conv V$ is the convex hull
of a set $V$ of $(k+1)$ affinely independent points in $\mathbb{R}^{d}$. We call $k$ the dimension of $\sigma$.
We say that $\sigma$ is \emph{spanned} by the points $V$.
Any nonempty subset of $V$
also spans a simplex, a \emph{face} of $\sigma$. A simplex $\sigma$ is said to be a \emph{coface}
of a simplex  $\tau$ if and only if $\tau$ is face of $\sigma$. 
We say that $\sigma$ is a \emph{facet} of $\tau$ and $\tau$ is a \emph{cofacet} of $\sigma$ if $\sigma$ is a face of $\tau$ with $\dim\sigma=\dim\tau-1$. 
A \emph{simplicial complex} $\KCC$  is a collection of simplices that satisfies the following conditions:
\begin{compactitem}
\item	any face of a simplex in $\KCC$ also belongs to $\KCC$, and
\item	the intersection of two simplices $\sigma_1,\sigma_2\in \KCC$ is either empty or a face of both $\sigma_1$ and $\sigma_2$. 
\end{compactitem}

For a complex $\KCC$, we denote the set of $d$-simplices of $\KCC$ by $\dcomplex{\KCC}{d}$. The $n$-\emph{skeleton} of a simplicial complex $\KCC$ is the simplicial complex $ \bigcup_{m=0}^{n}\dcomplex{\KCC}{m}$. 
A simplex $\sigma$ is called a \emph{maximal face} of a simplicial complex $\KCC$ if it is not a strict subset of any other simplex $\tau \in \KCC$.
The \emph{underlying space} of $\KCC$ is the union of its simplices, denoted by $|\KCC|$. The underlying space is implicitly used whenever we refer to $\KCC$ as a topological space.

An \emph{abstract simplicial complex} $\SCC$ 
is a collection of finite nonempty sets $A \in \SCC$ such that every nonempty subset of $A$ is also contained in $\SCC$.
The sets in $\SCC$ are called its \emph{simplices}.
A subcomplex of $\KCC$ is an abstract simplicial complex $L$ such that every face of $L$ belongs to $K$; denoted as $L\subset K$.
For example, the vertex sets of the simplices in a geometric complex form an abstract simplicial complex, called its \emph{vertex scheme}.
%If $\KCC$ is a geometric simplicial complex whose vertex scheme is isomorphic to an abstract simplicial complex $\SCC$, then $\KCC$ is a \emph{geometric realization} of $\SCC$. It is unique up to simplicial isomorphism.
Given an abstract simplicial complex $\KCC$ with $n$ simplices, we can associate a \emph{pointed simplicial complex}  to it by choosing an arbitrary vertex and regarding it as the distinguished basepoint of $\KCC$. The \emph{$m^{\text{th}}$ wedge sum of $\KCC$} is then the quotient space of a disjoint union of $m$ copies of $\KCC$ with the distinguished basepoints of each of the copies of $\KCC$ identified.

\subsection{Discrete Morse theory and Erasability}
\label{sub:Discrete-Morse-Theory}

We assume that the reader is familiar with simplicial complexes. \Cref{sub:Simplicial_complexes} summarizes the key definitions.
In this section, we provide a brief description of Forman's discrete Morse theory on simplicial
complexes. For a comprehensive expository introduction, we refer the reader to~\cite{Fo02}.

%
%A \emph{simplicial complex} $\mathcal{K}$  is a finite collection of simplices that satisfies the following conditions:
%\begin{inparaenum}[(a)]
%\item	A face of a simplex in $\mathcal{K}$ also belongs to $\mathcal{K}$ and
%\item	The intersection of two simplices $\sigma_1,\sigma_2\in \mathcal{K}$ is either empty or a face of both $\sigma_1$ and $\sigma_2$. 
%\end{inparaenum}
%Let $\KCC$ be a simplicial complex, and let $\sigma^{d},\tau^{d-1}$ be simplices\footnote{A d-dimensional simplex $\sigma^{d}$ may be denoted either as $\sigma$ or $\sigma^{d}$ depending on whether we wish to emphasize its dimension.} of $\KCC$.  Alternatively, we say that $\tau$ is the \emph{facet} of $\sigma$ and $\sigma$ is a \emph{cofacet} of $\tau$. The boundary $bd(\sigma)$ and the coboundary $cbd(\sigma)$ of a simplex are defined as: $bd(\sigma)=\{\tau \mid \tau\faceRel\sigma\}$ and $cbd(\sigma)=\{\rho \mid \sigma\faceRel\rho\}$.
%

A real-valued function $f$ on a simplicial complex $\KCC$ is called a \emph{discrete Morse function} if
\begin{compactitem}
\item $f$ is \emph{monotonic}, i.e., $\sigma \subseteq \tau$ implies $f(\sigma) \leq f(\tau)$, and
\item for all $t \in \im(f) $, the preimage $f^{-1}(t)$ is either a singleton $\left\{ \sigma\right\}$ (in which case $\sigma$ is a \emph{critical simplex}) or a pair $\left\{\sigma,\tau\right\}$, where $\sigma$ is a facet of $\tau$ (in which case $\left(\sigma,\tau\right)$ form a \emph{gradient pair} and $\sigma$ and $\tau$ are \emph{regular simplices}). 
\end{compactitem}
%Specifically, for a function $f:\KCC\rightarrow\BBR$, for every $\sigma\in\KCC$,
%let $\NCC_{1}(\sigma)=\#\{\rho\in cbd(\sigma)|f(\rho)=f(\sigma)\}$
%and $\NCC_{2}(\sigma)=\#\{\tau\in bd(\sigma)|f(\tau)=f(\sigma)\}$.
%Such a function is a discrete Morse function if for every $\sigma\in\KCC$,
%$\NCC_{1}(\sigma)+\NCC_{2}(\sigma)\leq1$. If $\NCC_{1}(\sigma)=\NCC_{2}(\sigma)=0$,
%then the simplex $\sigma$ is \emph{critical}, else it is \emph{regular}.
Given a discrete Morse function $f$ defined on complex $\KCC$, the \emph{discrete gradient vector field} $\VSS$ of $f$ is the  collection  of pairs of simplices  $\left(\sigma,\tau\right)$, where $\left(\sigma,\tau\right)$ is in $\VSS$ if and only if $\sigma$ is a facet of $\tau$ and $f(\sigma)=f(\tau)$.
%A sequence of simplices $\{\sigma_{0},\tau_{0},\sigma_{1},\tau_{1},\dots\sigma_{q},\tau_{q},\sigma_{q+1}\}$ consisting of distinct simplices $(\sigma_{i}\faceRel\tau_{i})\in\VSS$ and $\sigma_{i+1}\faceRel\tau_{i}$ is called a \emph{gradient path} of $\VSS$. 

%\subsubsection{Discrete Morse theory and the Hasse diagram}

Discrete gradient vector fields have a useful interpretation in terms of acyclic graphs obtained from matchings on Hasse diagrams, due to Chari~\cite{Ch00}.
%
%Given a simplicial complex $\KCC$, its Hasse diagram $\HKCC$ is a directed graph in which the set of nodes of $\HKCC$ is the set of simplices of $\HKCC$. The edges in $\HKCC$ are determined by facet incidences. In particular, $\HKCC$ contains an edge from the vertex representing $\tau$ to the vertex representing $\sigma$ if and only if $\sigma$ is a face of $\tau$ and $\dim\sigma=\dim\tau-1$.
%
%Given a simple graph $\GCC=(\VCC,\ECC)$, a \emph{matching} $M\subset \ECC$ in $\GCC$ is a set of pairwise non-adjacent edges, i.e., no two edges share a common vertex. A vertex is \emph{matched} if it is the endpoint of one of the edges in the matching. Otherwise, the vertex is said to be \emph{unmatched}.
%
Let $\KCC$ be a simplicial complex, let $\HKCC$ be its Hasse diagram, and let $M$ be a matching in the underlying undirected graph~$\HKCC$. Let $\HKCC(M)$ be the directed graph obtained from $\HKCC$ by reversing the direction of each edge of the matching $M$. Then $M$ is a \emph{Morse matching} if and only if $\HKCC(M)$ is a directed acyclic graph. Every Morse matching $M$ on the Hasse diagram $\HKCC$ corresponds to a unique gradient vector field $\VCC_M$ on complex $\KCC$ and vice versa. For a Morse matching~$M$, the unmatched vertices correspond to critical simplices of $\VCC_M$, and the matched vertices correspond to the regular simplices of $\VSS_M$.

%Given a $d$-dimensional complex $\CCC$, let $\beta_i(\CCC)$ denote the $i^{\textnormal{th}}$ Betti number of $\CCC$ (using integer coefficients), and we write \[\beta(\CCC) = \sum\limits_{i=0}^{d} \beta_i(\CCC).\]
%\todo{Not used}

%Notation: 

%Following Cohen~\cite{Coh73}, we make the following definitions.
A non-maximal face $\sigma\in\KCC$ is said to be a \emph{free face} if it
is contained in a unique maximal simplex $\tau\in\KCC$. %We also say that $(\sigma,\tau)$ is a \emph{free pair}.
%Let $\KCC^{\prime}$
%be a simplicial complex obtained from $\KCC$ by removing all simplices
%$\gamma$ such that $\sigma\subseteq\gamma\subseteq\tau$, where $\sigma$
%is a free face. In this case, we say that $\KCC^{\prime}$ arises from $\KCC$
%by a \emph{simplicial collapse}.
%\todo{something is wrong with this definition. it does not enforce $\sigma \neq \tau$, which would not be a collapse}
%Additionally, 
If $d = \dim\tau = \dim\sigma+1$,
we say that $\KCC^{\prime} = \KCC \setminus \{\sigma,\tau\}$ arises from $\KCC$ by an \emph{elementary
collapse}, or an \emph{elementary $d$-collapse} denoted by $\KCC\searrow^{e}\KCC^{\prime}$. Furthermore, we say that
$\KCC$ \emph{collapses} to $\LCC$, denoted by $\KCC\searrow\LCC$,
if there exists a sequence $\KCC=\KCC_{1},\KCC_{2},\dots\KCC_{n}=\LCC$
such that $\KCC_{i}\searrow^{e}\KCC_{i+1}$ for all $i$. 
%If $\KCC\searrow\LCC$, or more generally, if $\KCC$ and $\LCC$ are related through a sequence
%collapses and \emph{expansions} (inverses of collapses),
%then the two complexes are \emph{simple-homotopy equivalent
%type}. In particular, $\KCC$ and $\LCC$ are homotopy equivalent.
 If $\KCC$ collapses to a point, one says that $\KCC$ is collapsible.
%and writes $\KCC\searrow0$.

A simplicial collapse can be encoded by a discrete gradient.
\begin{theorem}[Forman~\cite{Fo98}, Theorem 3.3]
  \label{Gradient Collapsing Theorem}
  Let $\KCC$ be a simplicial complex with a vector field $\VSS$,
  and let $\LCC \subseteq \KCC$ be a subcomplex.
  If $\KCC \setminus \LCC$ is a union of pairs in $\VSS$, then $\KCC \searrow \LCC$.
\end{theorem}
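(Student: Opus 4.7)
The plan is to induct on $|\KCC \setminus \LCC|$, producing at each step a single elementary collapse whose pair lies in $\VSS$. The base case $\KCC \setminus \LCC = \emptyset$ gives $\KCC = \LCC$ trivially. For the inductive step, it suffices to exhibit a pair $(\sigma^*, \tau^*) \in \VSS$ with both simplices in $\KCC \setminus \LCC$ such that $\sigma^*$ is a free face of $\tau^*$ in $\KCC$. After performing the corresponding elementary collapse, $\KCC' := \KCC \setminus \{\sigma^*, \tau^*\}$ is still a simplicial complex containing $\LCC$ as a subcomplex; $\KCC' \setminus \LCC$ remains a union of $\VSS$-pairs; and deleting the two corresponding vertices from the acyclic orientation $\HKCC(M)$ preserves acyclicity, so the restriction of $\VSS$ remains a discrete gradient on $\KCC'$. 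The inductive hypothesis then chains the two collapse sequences together.

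The key substep is finding such a pair. Let $d$ be the maximum dimension occurring in $\KCC \setminus \LCC$; note $d \geq 1$, since a vertex in $\KCC \setminus \LCC$ would need to be paired with a $1$-simplex, which must also lie in $\KCC \setminus \LCC$. Any $d$-simplex $\tau$ in $\KCC \setminus \LCC$ is maximal in $\KCC$: a hypothetical cofacet $\rho$ of $\tau$ would lie in $\LCC$ by maximality of $d$, forcing $\tau \in \LCC$ since $\LCC$ is a subcomplex, a contradiction. For the same reason, $\tau$'s $\VSS$-partner cannot be a cofacet, so $\tau$ is paired with some facet $\sigma_\tau \in \KCC \setminus \LCC$.

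To locate a pair $(\sigma_{\tau^*}, \tau^*)$ with $\sigma_{\tau^*}$ a free face of $\tau^*$ in $\KCC$, define an auxiliary directed graph $G$ whose vertices are the $d$-simplices of $\KCC \setminus \LCC$ and whose edges $\tau \to \tau'$ are drawn whenever $\tau \neq \tau'$ share a common facet that equals $\sigma_{\tau'}$. Every edge $\tau \to \tau'$ of $G$ lifts to a length-two path $\tau \to \sigma_{\tau'} \to \tau'$ in $\HKCC(M)$, where the first arc is an unmatched facet edge (since $\sigma_{\tau'}$ is matched to $\tau'$, not to $\tau$) and the second is a reversed matching edge. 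Any directed cycle in $G$ would therefore lift to a directed cycle in $\HKCC(M)$, contradicting acyclicity; hence $G$ is a finite DAG and has a source $\tau^*$. Any coface of $\sigma^* := \sigma_{\tau^*}$ in $\KCC$ distinct from $\tau^*$ must be a $d$-simplex, must lie in $\KCC \setminus \LCC$ (otherwise $\sigma^* \in \LCC$ by the subcomplex property), and would contribute an incoming edge to $\tau^*$ in $G$, contradicting sourcehood. Thus $\sigma^*$ is a free face of $\tau^*$ in $\KCC$, completing the inductive step.

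The main obstacle is precisely this translation from acyclicity of $\VSS$ as a matching on $\HKCC$ to the existence of the topological source $\tau^*$; once that combinatorial lemma is in place, the remaining pieces — the subcomplex check, persistence of the restricted gradient, and the inductive chaining — amount to routine bookkeeping.
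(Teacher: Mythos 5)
Your argument is correct. Note that the paper itself does not prove this statement at all; it is quoted directly from Forman (Theorem 3.3 of \cite{Fo98}), so there is no in-paper proof to compare against. What you give is essentially the standard argument for passing from an acyclic matching to a collapsing sequence: induct on $|\KCC\setminus\LCC|$, observe that every top-dimensional simplex of $\KCC\setminus\LCC$ is maximal in $\KCC$ (any cofacet would lie in $\LCC$ and drag it into $\LCC$), so each such $\tau$ is matched downward to a facet $\sigma_\tau\in\KCC\setminus\LCC$, and then use acyclicity of $\HKCC(M)$ --- via your auxiliary digraph $G$, whose edges lift to alternating down/up paths in $\HKCC(M)$ --- to find a source $\tau^*$ whose matched facet $\sigma_{\tau^*}$ has no other coface in $\KCC$, yielding a legitimate elementary collapse; removing that pair preserves all hypotheses, so the induction closes. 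All the individual steps check out: the partner of a top simplex lies in $\KCC\setminus\LCC$ because $\KCC\setminus\LCC$ is a union of $\VSS$-pairs, the lifted $2$-paths are genuinely directed (the facet edge $\tau\to\sigma_{\tau'}$ is unmatched because $\sigma_{\tau'}$ is matched to $\tau'\neq\tau$), and any extra coface of $\sigma^*$ is forced into $\KCC\setminus\LCC$, hence has dimension exactly $d$ and produces a forbidden incoming edge at $\tau^*$. The one point worth making explicit is that your proof uses acyclicity, i.e.\ you must read ``vector field $\VSS$'' as a discrete \emph{gradient} vector field (equivalently, a Morse matching), which is the intended hypothesis --- for a non-acyclic vector field the statement is false, so this reading is necessary rather than a gap. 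Minor cosmetic point: in the final step the logical order is ``coface lies in $\KCC\setminus\LCC$ (subcomplex property), hence has dimension exactly $d$ (maximality of $d$)''; you list the two facts in the reverse order, but both justifications are present.
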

In this case, we say that the collapse $\KCC \searrow \LCC$ is \emph{induced by} the gradient $\VSS$. As a consequence of this theorem, we obtain:

\begin{theorem}[Forman~\cite{Fo98}, Corollary 3.5]
Let $\KCC$ be a simplicial complex with a discrete gradient vector field $\VSS$ and let $m_d$ denote the number of critical simplices of $\VSS$ of dimension~$d$. Then $\KCC$ is homotopy equivalent to a CW complex with exactly $m_d$ cells of dimension $d$.
\end{theorem}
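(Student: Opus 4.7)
The plan is to prove the corollary by induction on a Morse-compatible filtration of $\KCC$, using Theorem~\ref{Gradient Collapsing Theorem} as the homotopy-preserving engine and introducing critical simplices one at a time as CW attachments.

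First, I would construct a total order on the \emph{units} of $\VSS$, where a unit is either a critical simplex or a matched pair $(\sigma,\tau)\in\VSS$. The key observation is that the acyclicity of $\HKCC(M)$ descends to a DAG on units, with an edge from unit $U_1$ to unit $U_2$ whenever some simplex in $U_2$ is a face of some simplex in $U_1$ other than its matched partner. A topological sort of this DAG yields an ordering $U_1, U_2, \ldots, U_N$ such that the partial unions $\KCC_i = U_1 \cup \cdots \cup U_i$ are each subcomplexes of $\KCC$, with $\KCC_0 = \emptyset$ and $\KCC_N = \KCC$.

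Next, I would inductively build a CW complex $X_i$ together with a homotopy equivalence $X_i \simeq \KCC_i$, ensuring that $X_i$ has exactly as many $d$-cells as there are critical $d$-simplices of $\VSS$ in $\KCC_i$. If $U_i$ is a critical simplex $\sigma$ of dimension $d$, then by construction $\partial \sigma \subset \KCC_{i-1}$, so $\KCC_i$ is obtained from $\KCC_{i-1}$ by attaching a $d$-cell; I take $X_i$ to be $X_{i-1}$ with a $d$-cell attached along the composition of $\partial \sigma \hookrightarrow \KCC_{i-1}$ with the equivalence $\KCC_{i-1} \simeq X_{i-1}$. If $U_i$ is a matched pair $(\sigma,\tau)$, then $\sigma$ is a free face of $\tau$ in $\KCC_i$ by the ordering, so Theorem~\ref{Gradient Collapsing Theorem} applied to the two-simplex set $\{\sigma,\tau\}$ gives $\KCC_i \searrow \KCC_{i-1}$; I take $X_i = X_{i-1}$, and the invariant on cell counts is preserved in both cases.

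The central obstacle is verifying the first step: that the acyclicity of $\HKCC(M)$ indeed forces the unit DAG to be acyclic. The subtlety is that an edge in the unit DAG from $(\sigma',\tau')$ to $(\sigma,\tau)$ corresponds to a face relation $\sigma \subset \tau'$ with $\sigma \neq \sigma'$, and one must check that any cycle among units lifts to a directed cycle in $\HKCC(M)$ by concatenating the reversed matching edge $\sigma' \to \tau'$ with the standard (unreversed) face edge $\tau' \to \sigma$. This is the standard correspondence between \emph{V-paths} and directed paths in $\HKCC(M)$ underlying Chari's reformulation; once established, the remainder is a routine verification that each elementary step either preserves homotopy type (matched pairs) or attaches a cell of the correct dimension (critical simplices), so that $X_N$ is the desired CW complex with precisely $m_d$ cells in each dimension $d$.
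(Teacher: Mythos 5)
Your proof is correct, but be aware that the paper itself does not prove this statement: it is imported verbatim from Forman (Corollary~3.5 of \cite{Fo98}), stated right after Theorem~\ref{Gradient Collapsing Theorem} as a known consequence. Forman's own derivation filters $\KCC$ by sublevel complexes of the discrete Morse function: between consecutive critical values the sublevel complexes collapse onto one another (his Theorem~3.3/3.4), and crossing the value of a critical $d$-simplex attaches a $d$-cell. Your route replaces the Morse function by a linear extension of the acyclic matching itself (your unit DAG), which is arguably the more natural argument given the paper's matching-centric setup and never requires choosing a compatible Morse function; what Forman's ordering buys is that all the acyclicity bookkeeping is absorbed into monotonicity of $f$, whereas you must establish it by hand. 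On that point your sketch of the central verification is right but slightly underspecified: a priori a directed cycle in the unit digraph could pass through critical units or use face relations of codimension at least $2$, and such edges do not lift one-for-one to edges of $\HKCC(M)$; you first need the telescoping dimension count around the cycle (each inter-unit face relation drops dimension by at least $1$, each unit can gain at most $1$, and only a matched pair entered at its bottom simplex gains exactly $1$) to conclude that every unit in the cycle is a pair and every inter-unit relation is a facet relation $\sigma_{i+1}\lessdot\tau_i$ with $\sigma_{i+1}\neq\sigma_i$; only then does the cycle lift to a closed V-path and hence a directed cycle in $\HKCC(M)$. Two further cosmetic points: with your edge orientation the prefixes are subcomplexes for the \emph{reverse} topological order (faces must precede cofaces), and when you attach a cell to $X_{i-1}$ along the composite of $\partial\sigma\hookrightarrow\KCC_{i-1}$ with the homotopy equivalence $\KCC_{i-1}\simeq X_{i-1}$, the attaching map need not be cellular, so you should either invoke cellular approximation or note that the claim only asserts homotopy equivalence to \emph{some} CW complex with the prescribed cell counts---exactly how the statement is phrased. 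None of these affect the substance: the induction, the freeness of $\sigma$ in $\KCC_i$ for a matched pair (any other coface would force its unit to precede $U_i$, a contradiction), and the cell-count invariant are all sound.
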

In particular, a discrete gradient vector field on $\KCC$ with $m_d$ critical simplices of dimension~$d$ gives rise to a chain complex having dimension $m_d$ in each degree $d$, whose homology is isomorphic to that of $\KCC$. This condensed representation motivates the algorithmic search for (near-)optimal Morse matchings.

%For the remainder of this section, we assume that the given simplicial
%complex $\KCC$ is $2$-dimensional. 

%\subsubsection{Erasability}

Following the terminology used in~\cite{EG96,BR19}, we make the following definitions:
A maximal face $\tau$ in a simplicial complex $\KCC$ is called an \emph{internal simplex} if it has no free face. 
%If a complex $\KCC$ collapses to a $d$ or lower dimensional
%subcomplex, this is denoted by $\KCC\searrow d$. \todo{do we need this?} 
If a $2$-complex $\KCC$ collapses to a $1$-complex, we say that $\KCC$ is \emph{erasable}. Moreover, for a $2$-complex~$\KCC$, the quantity $\erk$ is the minimum number
of internal $2$-simplices that need to be removed so that the resulting complex collapses to a $1$-complex. Equivalently, it is the minimum number of critical $2$-simplices of any discrete gradient on $\KCC$.
%and $\KCC\searrow1$, 
%\begin{notn}
%In particular, we let $\twocomplex{\KCC}$ denote the set of $2$-simplices of a complex $\KCC$.
%\end{notn}
Furthermore, we say that a subcomplex $\LCC \subseteq \KCC$ is an \emph{erasable subcomplex of $\KCC$ (through the gradient $\VSS$)}
%that only has pairs $(\sigma,\tau)$ of dimension $(1,2)$) 
if there exists another subcomplex $\MCC \subseteq \KCC$ with $\KCC \searrow \MCC$ (induced by the gradient $\VSS$) such that the set of $2$-dimensional  simplices of these complexes satisfy the following relation:
% $\KCC \setminus \MCC \subseteq \LCC$ and 
%$\twocomplex\LCC \cap \twocomplex\MCC = \emptyset$.
$ \twocomplex\LCC \subseteq \twocomplex\KCC \setminus \twocomplex\MCC $. We call such a gradient $\VSS$ an \emph{erasing gradient}.
 Finally, we say that a simplex $\sigma$ in a complex $\KCC$ is \emph{eventually free (through the gradient $\VSS$)} if there exists a subcomplex $\LCC$ of $\KCC$ such that $\KCC \searrow \LCC$ (induced by $\VSS$) and $\sigma$ is free in $\LCC$.
Equivalently, $\KCC$ collapses further to a subcomplex not containing~$\sigma$.

We recall the following results from \cite{BR19}. 

\begin{lemma}[Bauer, Rathod \cite{BR19}, Lemma 2.1]
\label{lem:uniqueCriticalVertex}
Let $\KCC$ be a connected simplicial complex, let $p$ be a vertex of $\KCC$, and let $\VSS_1$ be a discrete gradient on $\KCC$ with $m_0>1$ critical simplices of dimension $0$ and $m$ critical simplices in total. Then there exists a polynomial time algorithm to compute another gradient vector field $\widetilde\VSS$ on $\KCC$ with $p$ as the only critical simplex of dimension $0$ and $m - 2(m_0-1)$ critical simplices  in total. 
%The gradient $\widetilde\VSS$ can be computed in time linear in the number of simplices.
\end{lemma}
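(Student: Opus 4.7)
My plan is to iteratively cancel a critical $0$-simplex with a critical $1$-simplex via a Morse-theoretic modification, removing two critical cells per iteration until only $p$ remains critical in dimension $0$. I would first examine the subgraph $F$ of the $1$-skeleton $\onecomplex{\KCC}$ whose edges are precisely the $1$-cells paired (by $\VSS_1$) with one of their endpoints. A short argument shows that $F$ is a forest: a cycle in $F$ would produce a directed cycle in the modified Hasse diagram $\HKCC(M)$, contradicting the Morse property. Comparing vertex and edge counts in each tree component then shows that every component of $F$ contains exactly one critical vertex, so the components of $F$ are in bijection with the $m_0$ critical $0$-simplices. If $p$ is not already critical, I would shift the critical vertex of its component to $p$ by reversing the matchings along the unique path in $F$ from $p$ to that component's critical vertex, which keeps the gradient acyclic and preserves the total critical count.

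The crux is to show that, while $m_0 > 1$, there exists a critical $1$-cell $\beta$ whose two endpoints lie in distinct components of $F$. For this I would compute the Morse boundary over $\ztwo$: for any critical edge $\beta = \{a,b\}$, every V-path descending from $\beta$ to a critical vertex is forced once it enters a component of $F$ (since each regular vertex has a unique matched edge), so the Morse boundary equals $u_a + u_b \pmod{2}$, where $u_c$ denotes the critical vertex of the $F$-component of $c$. Connectedness of $\KCC$ forces the Morse homology $H_0^{\mathrm{Morse}} \cong H_0(\KCC) \cong \ztwo$, so the image of $\partial_1^{\mathrm{Morse}}$ has rank $m_0 - 1 \geq 1$; in particular at least one cross-component critical edge must exist.

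Given such a $\beta$, at least one of $u_a, u_b$ differs from $p$; call it $u'$. The V-path from $\beta$ to $u'$ is unique by the same forced-descent argument, so Forman's cancellation theorem applies: reversing this V-path yields a new gradient in which $\beta$ and $u'$ become regular (paired along the reversed V-path), while every other critical simplex, including $p$, remains critical. This decreases both $m_0$ and $m_1$ by $1$, reducing the total critical count by $2$. Iterating $m_0 - 1$ times produces the desired gradient $\widetilde{\VSS}$ with $p$ as the sole critical $0$-simplex and exactly $m - 2(m_0-1)$ critical simplices. Each round requires only computing $F$, scanning critical edges for one with endpoints in distinct components, and reversing a single V-path---all polynomial in the size of $\KCC$---so with at most $n$ rounds the procedure runs in polynomial time.

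The main obstacle I anticipate is the existence step in the second paragraph. A naive attempt via a shortest path in $\onecomplex{\KCC}$ from some non-$p$ critical vertex to $p$ can fail, since the cross-component edges along such a path may all be matched upward to $2$-cells rather than being critical. The Morse chain complex argument circumvents this cleanly, but one must justify the V-path counting formula for the Morse boundary and verify that the local V-path reversal does not disturb pairings involving $2$-cells; both are automatic in the dimension $0$--$1$ setting, since the reversed V-path only touches $0$- and $1$-cells.
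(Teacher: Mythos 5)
This lemma is quoted in the paper from~\cite{BR19} without proof, so there is no in-paper argument to compare against; judged on its own terms, your proof is correct and self-contained. The skeleton is the standard cancellation argument: the vertex--edge pairs of $\VSS_1$ form a forest $F$ (your cycle-reversal argument for acyclicity is right), each component of $F$ carries exactly one critical vertex, re-rooting $p$'s component preserves acyclicity since only $(0,1)$-pairs inside a tree are altered, and then one repeatedly cancels a critical vertex $u'\neq p$ against a critical edge $\beta$ whose two descending V-paths end at distinct critical vertices, the uniqueness of the V-path from $\beta$ to $u'$ licensing Forman's cancellation theorem. The one genuinely non-routine point is exactly the one you isolate: a cross-component edge of the $1$-skeleton need not be critical (it may be paired with a $2$-simplex), so a naive path-chasing argument fails, and your fix --- computing the $\ztwo$ Morse boundary of a critical edge as $u_a+u_b$ via the forced descent in $F$, and using $\mathsf{H}_0(\KCC;\ztwo)\cong\ztwo$ to force $\operatorname{rank}\partial_1^{\mathrm{Morse}}=m_0-1\geq 1$ --- is a clean and valid way to certify that a cancellable pair exists at every round (one can check on small examples, e.g.\ the boundary of the tetrahedron, that the ``all cross edges $2$-paired'' configuration is ruled out precisely by acyclicity, so some homological or acyclicity input is genuinely needed here). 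The only external facts you invoke are standard: Forman's cancellation theorem and the fact that the Morse chain complex computes the homology of $\KCC$; together with the observation that each reversal touches only $0$- and $1$-cells (so $(1,2)$-pairs and all other critical cells are undisturbed, and $p$ stays critical because you always cancel $u'\neq p$), the count drops by exactly $2$ per round, giving $m-2(m_0-1)$ after $m_0-1$ rounds, and each round is clearly polynomial.
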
 

 %\Cref{sec:addres} lists  additional results from discrete Morse theory that we need in this paper.
%\begin{lemma}[\cite{BR19}, Lemma 2.2]
%\label{lem:unionErasable}
%If $\LCC_1,\LCC_2$ are erasable subcomplexes of a 2-complex $\KCC$, then so is their union.
%\end{lemma}

\begin{lemma}[Bauer, Rathod \cite{BR19}, Lemma 2.3]
\label{lem:subErasable}
If $\KCC$ is an erasable complex, then any subcomplex $\LCC\subset \KCC$ is also erasable. 
\end{lemma}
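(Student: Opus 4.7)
The plan is to switch from the collapsibility perspective to the Chari-style Morse matching perspective, where passing to a subcomplex behaves transparently. By the equivalence noted just before the lemma ($\erk$ equals the minimum number of critical 2-simplices over all discrete gradients), $\KCC$ being erasable is the same as the existence of a discrete gradient vector field $\VSS$ on $\KCC$ with no critical 2-simplices; equivalently, every 2-simplex of $\KCC$ is matched with one of its facets in $\VSS$. The natural candidate gradient on the subcomplex $\LCC$ is the restriction $\VSS|_\LCC := \{(\sigma,\tau)\in\VSS : \sigma,\tau\in\LCC\}$, and the task reduces to verifying that it is a valid discrete gradient on $\LCC$ with no critical 2-simplices.

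Three properties need to be checked. (i) $\VSS|_\LCC$ is a (partial) matching on the Hasse diagram $\HCC_\LCC$, which is immediate since $\VSS$ is a matching on $\HKCC$ and we retain only those pairs whose simplices both lie in $\LCC$. (ii) No 2-simplex of $\LCC$ becomes critical: for any $\tau \in \twocomplex{\LCC}$, its $\VSS$-partner is some facet $\sigma \subset \tau$, and since $\LCC$ is closed under faces we have $\sigma \in \LCC$, so the pair $(\sigma,\tau)$ survives the restriction. The conceptual point here is that the subcomplex condition only ever breaks pairs at the low-dimensional end (when a matched cofacet lies outside $\LCC$) and never at the high-dimensional end. (iii) Acyclicity of the modified Hasse diagram: $\HCC_\LCC(\VSS|_\LCC)$ coincides with the vertex-induced subgraph of $\HKCC(\VSS)$ on the simplices of $\LCC$, because face relations between simplices of $\LCC$ are the same whether computed in $\LCC$ or in $\KCC$, and the reversed edges correspond precisely to pairs of $\VSS$ lying in $\LCC \times \LCC$, i.e., to $\VSS|_\LCC$. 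Any directed cycle in the induced subgraph lifts to a directed cycle in $\HKCC(\VSS)$, contradicting the assumed acyclicity.

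I do not anticipate a substantial obstacle; the proof is essentially a structural observation that discrete gradients restrict well to subcomplexes. In fact, the argument gives a slightly stronger statement: for any gradient $\VSS$ on $\KCC$ and any subcomplex $\LCC \subseteq \KCC$, every 2-simplex of $\LCC$ that was matched by $\VSS$ remains matched by $\VSS|_\LCC$, so the count of critical 2-simplices is monotone (non-increasing) under passage to subcomplexes, and erasability is a hereditary property.
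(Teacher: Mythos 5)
Your argument is correct. The restriction $\VSS|_\LCC$ of a Morse matching to a subcomplex is again a Morse matching (the induced-subgraph observation settles acyclicity), and since $\KCC$ is a $2$-complex, a non-critical $2$-simplex can only be matched \emph{downward}, to a facet, which lies in $\LCC$ by closedness under faces; hence no $2$-simplex of $\LCC$ becomes critical, and the equivalence the paper states between erasability and the existence of a gradient with no critical $2$-simplices finishes the proof. Note that the paper itself does not prove this lemma --- it is quoted from~\cite{BR19} --- so there is no in-paper argument to compare against; your gradient-restriction proof is the natural one, and the stronger monotonicity statement you extract (the number of critical $2$-simplices of $\VSS|_\LCC$ is at most that of $\VSS$, so $\textnormal{er}(\LCC)\le\erk$) is a correct and useful by-product that the bare statement of the lemma does not record.
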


\begin{lemma} \label{lem:easy} Suppose that we are given a complex $K$ and a  set $M $ of simplices in $K$ with the property that simplices in $M$ have no cofaces in $K$. Then  $L  = K \setminus M$ is a subcomplex of $K$ with the property that the
  gradient vector fields on  $K$ with all simplices in $M$ critical  are in one-to-one correspondence with gradient vector fields on $L$.
\end{lemma}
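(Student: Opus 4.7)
The plan is to check the two claims separately: first that $L$ is indeed a subcomplex, and then to set up the bijection via Chari's Hasse-diagram formulation of Morse matchings.

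For the first claim, I would argue contrapositively. Suppose $\sigma \in L = \KCC \setminus M$ and let $\tau$ be a face of $\sigma$. Then $\sigma$ is a coface of $\tau$ in $\KCC$. By the assumption that every simplex in $M$ has no coface in $\KCC$, $\tau$ cannot lie in $M$, so $\tau \in L$. Hence $L$ is closed under taking faces and is a subcomplex.

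For the bijection, recall that a gradient vector field on a complex corresponds to a Morse matching on its Hasse diagram. Since each $\sigma \in M$ has no coface in $\KCC$, the only edges of $\HCC_\KCC$ incident to $\sigma$ are the edges from $\sigma$ down to its facets. In particular, $\HCC_L$ is precisely the induced subgraph of $\HCC_\KCC$ obtained by deleting the vertices $M$ together with these downward edges. Now, given a gradient $\VSS$ on $\KCC$ in which every simplex of $M$ is critical, all simplices of $M$ are unmatched, so the matching uses no edge incident to $M$ and therefore restricts to a matching on $\HCC_L$. Conversely, any matching on $\HCC_L$ can be viewed as a matching on $\HCC_\KCC$ that leaves every simplex in $M$ unmatched.

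It remains to check that acyclicity is preserved in both directions, which is the only nontrivial step but is handled cleanly by the same observation. In the modified diagram $\HCC_\KCC(\VSS)$, the matched edges are reversed while unmatched edges retain their direction from higher to lower dimension. Since each $\sigma \in M$ is unmatched and has no cofaces, its only incident edges remain directed out of $\sigma$, i.e., every vertex of $M$ is a source in $\HCC_\KCC(\VSS)$. Sources cannot lie on any directed cycle, so every directed cycle of $\HCC_\KCC(\VSS)$ lies entirely in $\HCC_L$ and therefore coincides with a directed cycle of $\HCC_L(\VSS|_L)$. Thus $\HCC_\KCC(\VSS)$ is acyclic if and only if $\HCC_L(\VSS|_L)$ is acyclic, which upgrades the bijection of matchings to a bijection of Morse matchings, i.e., gradient vector fields. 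The only real obstacle is spotting that the hypothesis on $M$ makes its vertices sources in $\HCC_\KCC(\VSS)$; once this is observed, the rest is a direct verification.
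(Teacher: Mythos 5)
Your proof is correct and takes essentially the same route as the paper's: the correspondence is restriction of the gradient in one direction and extension by declaring all simplices of $M$ critical in the other. The paper states this bijection in two lines without further verification; your checks that $L$ is a subcomplex and that acyclicity transfers (via the observation that each simplex of $M$ is a source in $\HKCC(\VSS)$, hence lies on no directed cycle) simply make explicit the details the paper leaves implicit.
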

\begin{proof} Given a gradient vector field  on $L$, we extend it to a gradient vector field on $K$ by making all simplices in $M$ critical. Given a vector field  $\VCC$ on $K$, the restriction $\VCC |_{L}$ is a gradient vector field on $L$.
\end{proof}

\begin{notation}For the remainder of the paper, we use $[m]$ to denote the set $\{1,2,\dots,m\}$ for any $m\in \mathbb{N}$, and $[i,j]$  to denote the set $\{i, i+1, \dots , j\}$ for any $i,j\in \mathbb{N}$.
\end{notation}

\section{Algorithmic preliminaries}

\subsection{Approximation algorithms} \label{sec:apxintro}

An $\alpha$-\emph{approxi\-ma\-tion algorithm} for an optimization problem is a polynomial-time algorithm that, for all instances of the problem, produces a solution whose objective value is within a factor $\alpha$ of the objective value of an optimal solution. The factor $\alpha$
is called the \emph{approximation ratio} (or \emph{approximation factor}) of the algorithm. 

An \emph{approximation preserving reduction} is a polynomial time procedure for transforming an optimization problem $A$ to an optimization problem $B$, such that an $\alpha$-approximation algorithm for $B$ implies an $f(\alpha)$-approximation algorithm for $A$, for some function $f$. Then, if $A$ is hard to approximate within factor $f(\alpha)$, the reduction implies that $B$ is hard to approximate within factor $\alpha$.
A particularly well-studied class of approximation preserving reductions is given by the \emph{L-reductions}, which provide an effective tool in proving hardness of approximability results~\cite{PY91, WS10}. 

Now, consider a minimization problem $A$  with a non-negative integer valued objective function $m_{A}$.
Given an instance $x$ of $A$, the goal is to find a solution~$y$  minimizing the objective function $m_{A}(x,y)$. Define $\opa(x)$ as the minimum value of the objective function on input $x$.   
%An L-reduction is a natural and simple transformation among problems which consists of two linear mappings, namely, one between the values of the optimum solutions of the two problems and the other between the errors of the corresponding approximate solutions. They constitute a 
%\begin{Definition} 
An \emph{L-reduction} (with parameters $\mu$ and $\nu$)
% with parameters $(\mu,\nu)$}
from a minimization problem $A$ to another minimization problem
$B$ is a pair of polynomial time computable functions $f$ and $g$, and fixed constants $\mu,\nu>0$, satisfying the following conditions:
\begin{enumerate}
\item The function $f$ maps instances of $A$ to instances of $B$.
\item For any instance $x$ of $A$, we have \[\opb(f(x))\leq \mu \, \opa(x) . \]
\item The function $g$ maps an instance $x$ of $A$ and a solution of the corresponding instance $f(x)$ of $B$ to a solution of $x$.
\item For any instance $x$ of $A$, and any solution $y$ of $f(x)$, we have
 \[m_A(x,g(x,y)) - \opa(x)  \leq \nu \left(m_B(f(x),y) -  \opb(f(x)) \right) . \]
\end{enumerate}
If $\mu=\nu=1$, the reduction is \emph{strict}.
%\end{Definition}

%From the definition of L-reduction, it is fairly straightforward to
%prove the following lemma.

We will use the following straightforward fact about L-reductions, which appears as Theorem~16.6 in a book by Williamson and Shmoys~\cite{WS10}.
\begin{theorem}\label{thm:newfactorappend} If there is an L-reduction with parameters $\mu$ and $\nu$ from a minimization problem $A$ to another minimization problem $B$, and there is a $(1+ \delta )$-approximation algorithm for $B$,
then there is a $(1+ \mu\nu\delta)$-approximation algorithm for $A$.
\end{theorem}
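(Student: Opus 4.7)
The plan is to chain together the three inequalities in play: the two defining the L-reduction and the one giving the quality of the approximation algorithm for $B$.

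First I would exhibit the algorithm for $A$. Given an input instance $x$, compute $f(x)$ in polynomial time (guaranteed by the definition of L-reduction), apply the $(1+\delta)$-approximation algorithm for $B$ to $f(x)$ to obtain a solution $y$ with $m_B(f(x),y) \le (1+\delta)\,\opb(f(x))$, and return $g(x,y)$, which is again computed in polynomial time. The overall algorithm is polynomial time, so it suffices to bound its approximation ratio.

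Next, to bound the ratio, I would apply condition~4 of the L-reduction, which yields
\[
m_A(x, g(x,y)) - \opa(x) \;\le\; \nu\bigl(m_B(f(x), y) - \opb(f(x))\bigr) \;\le\; \nu\delta\,\opb(f(x)),
\]
where the second inequality uses the approximation guarantee for $B$. I would then invoke condition~2 to upper bound $\opb(f(x))$ by $\mu\,\opa(x)$, giving $m_A(x, g(x,y)) - \opa(x) \le \mu\nu\delta\,\opa(x)$, i.e., $m_A(x, g(x,y)) \le (1+\mu\nu\delta)\,\opa(x)$, as desired.

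There is no serious obstacle here: the proof is a direct unpacking of the L-reduction definition, and this is exactly the reason L-reductions are a convenient tool for transferring inapproximability results. The only thing to be mindful of is the order in which the two L-reduction inequalities are applied — condition~4 first, to pass the additive error from $B$ to $A$, and then condition~2, to convert that additive error back into a multiplicative one relative to $\opa(x)$.
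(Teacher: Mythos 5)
Your proof is correct and is exactly the standard argument: the paper itself gives no proof of this statement, citing it as Theorem~16.6 of Williamson and Shmoys, and your chaining of the approximation guarantee for $B$ with condition~4 (to transfer the additive error) and then condition~2 (to convert it to a multiplicative factor of $\opa(x)$) is precisely that textbook proof, including the polynomial-time composition of $f$, the algorithm for $B$, and $g$.
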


%Also of interest to us is another class of reductions known as 

%\begin{definition}[gap-preserving reduction] 
%Let $\alpha,\beta$ be two computable functions from $\mathbb{N}$
%to $\mathbb{R}_{\geq1}$, and let $P$ and $Q$ be two minimization problems.
%We say that a reduction $R$ from $P$ to $Q$ is a \emph{$(\alpha,\beta)$-gap-preserving reduction} if
%\begin{enumerate}
%\item $R$ is an polynomial time reduction from $P$ to $Q$,
%\item for every $\alpha$-gap instance $x$ of $P$, the instance $R(x)$ is a $\beta$-gap instance of $Q$.
%\end{enumerate}
%\end{definition}

\subsection{Parameterized complexity} \label{sec:paramintro}

Parameterized complexity, as introduced by Downey and Fellows in \cite{MR1656112}, is a refinement of classical complexity theory. The theory revolves around the general idea of developing complexity bounds for instances of a problem not just based on their size, but also involving an additional \emph{parameter}, which might be significantly smaller than the size.
Specifically, we have the following definition.

\begin{definition}[Parameter, parameterized problem~\cite{FG06}]
  Let $\Sigma$ be a finite alphabet.
  \begin{enumerate}
    \item A \emph{parameter} of $\alphastrings$, the set of strings over $\Sigma$, is a function $\rho: \alphastrings \to \bbn$, attaching to every input $w \in \alphastrings$ a natural number $\rho(w)$.
    \item A \emph{Parameterized problem} over $\Sigma$ is a pair $(P,\rho)$ consisting of a set $P\subseteq \alphastrings$ and a  (polynomial time computable) parametrization $\rho : \alphastrings \to \bbn$.
    \item  A parameterized problem $(P,\rho)$ is said to be {\em fixed-parameter tractable} or {\em FPT in the parameter $\rho$} if the question 
            $$( x , p ) \in \{ (y, \rho(y)) \mid y \in P \}$$ 
         can be decided in running time $O(g(p)) \cdot  | x |^{O(1)}$, where $g \colon \bbn \to \bbn$ is an arbitrary computable function depending only on the parameter $p$.
  \end{enumerate}
\end{definition}

FPT reductions provide a principal tool to establish hardness results in the parameterized complexity landscape. 

\begin{definition}[FPT reduction~\cite{FG06}]
Given two parameterized problems $(P,k)$ and $(Q,k')$,
we say that there is an \emph{FPT reduction} from $(P,k)$ and $(Q,k')$,  if there exists a functions $\varphi$ that transforms parameterized instances of $P$ to parameterized instances of $Q$ while satisfying the following properties:
\begin{enumerate}
\item $\varphi$ is computable by an FPT algorithm,
\item $\varphi(x)$ is a yes-instance of $(Q,k')$ if and only if $x$ is a yes-instance of $(P,k)$.
\item There exists a computable function $g \colon \bbn \to \bbn$ such that $k'(\varphi(x)) \le g(k(x))$.
\end{enumerate}
\end{definition}

%If there is a parameterized reduction from problem $P$ to problem $Q$ and $Q$ is \fpt, then $P$ is also \fpt.

The natural way of turning a minimization problem into a decision problem is to add a value $k$ to the input instance, and seek a solution with cost at most $k$. Taking this value $k$ appearing in the input as the parameter is called \emph{the standard parameterization} of the minimization problem (sometimes also referred to as \emph{the natural parameterization}).
In general, the parameter can be any function of the input instance, for example, the treewidth of the input graph, or the maximum degree of the input graph.

Parameterized approximability is an extension of the notion of classical approximability. Informally, an FPT approximation algorithm is an algorithm whose running time is fixed parameter tractable for the
parameter \emph{cost of the solution} and whose approximation factor~$\rho$ is a function  of the parameter (and independent of the input size).
For instance, every polynomial time approximation algorithm with constant approximation factor is automatically an FPT approximation algorithm, but an approximation algorithm with approximation factor $\Theta(\sqrt{n})$,
where $n$ denotes the input size, is not an FPT approximation algorithm. 
Next, following \cite{Marx}, for standard parameterization of minimization problems, we provide definitions for  FPT approximation algorithms and FPT cost approximation algorithms. Analogous definitions for maximization problems are also considered in~\cite{Marx}.

\begin{definition}[FPT approximation algorithm~\cite{Marx}] 
Let $P$ be an \NP minimization problem, and let $\rho:\mathbb{N}\rightarrow\mathbb{R}_{\geq1}$
be a computable function such that $k \mapsto k\cdot\rho(k)$ is nondecreasing. An \emph{FPT approximation algorithm} for $P$
(over some alphabet $\Sigma$) with approximation ratio $\rho$ is
an algorithm $\mathbb{A}$ with the following properties: 
\begin{enumerate}
\item For every input $(x,k)$ %\in\Sigma^{\ast}\times\mathbb{N}$ 
whose optimal solution has cost at most $k$, $\mathbb{A}$
computes a solution for $x$ of cost at most $k\cdot\rho(k)$. For
inputs $(x,k)$
%\in\Sigma^{\ast}\times\mathbb{N}$
 without a solution of cost at most $k$, the output can be arbitrary. 
\item The runtime of $\mathbb{A}$ on input $(x,k)$ is 
$O(g(k)\cdot|x|^{O(1)})$ for some computable function $g$.
\end{enumerate}
\end{definition}

%Note that, in the definition of FPT-approximability, the approximation ratio $\rho$ is a function of $k$, and is independent of the input size. If $k$ is reasonably small, we can expect $k\cdot\rho(k)$ to also be small, and hence an algorithm with such an approximation is desirable. Moreover, the running time for an FPT approximation algorithm is required to be fixed parameter tractable as opposed to  polynomial time. Implicitly, the parameter in FPT approximation algorithms is always the standard parameter.

%Chen et al.~\cite{ChenFlumGrohe} provide the following useful characterization of FPT-approximability.

%\begin{theorem}[\cite{ChenFlumGrohe}, Proposition 5]
%Let $P$ be an \NP-minimization problem over the alphabet $\Sigma$,
%and let $\rho:\mathbb{N}\rightarrow\mathbb{R}_{\ge1}$ be a computable
%function such that $k\cdot\rho(k)$ is nondecreasing. Then, the following
%two statements are equivalent: 
%\begin{enumerate}
%\item $P$has an FPT approximation algorithm with approximation ratio $\rho$. 
%\item There exists a computable function $g$ and an algorithm $\mathbb{B}$
%that on input $x\in\Sigma^{\ast}$ computes a solution with objective value at most
%$\min(x)\cdot\rho(\min(x))$ in time $g(\min(x))\cdot|x|^{O(1)}$.
%\end{enumerate}
%\end{theorem}

It is often convenient to work with a weaker notion of approximability
where an algorithm is only required to compute the cost of an optimal
solution rather than an actual optimal solution, and to work with
decision rather than optimization problems. With that in mind, the
notion of \emph{FPT cost approximability} was introduced in~\cite{ChenFlumGrohe}.

\begin{definition}[FPT cost approximation algorithm~\cite{Marx}]
Let $P$ be an \NP minimization problem (over the alphabet $\Sigma$), and
 $\rho:\mathbb{N}\to\mathbb{R}_{\geq1}$  a computable function.
 For an instance $x$ of $P$, let $\min(x)$ denote its optimal value.
Then, a decision algorithm $\mathbb{A}$ is an \emph{FPT cost approximation
algorithm} for $P$ with approximation ratio $\rho$ if 
\begin{enumerate}
\item For feasible instances $x$ of $P$ and parameterized instances $(x,k)$, $\mathbb{A}$ satisfies: 

\begin{enumerate}
\item If $k\ge\min(x)\cdot\rho(\min(x))$, then $\mathbb{A}$ accepts $(x,k)$.
\item If $k<\min(x)$, then $\mathbb{A}$ rejects $(x,k)$.
\end{enumerate}
\item $\mathbb{A}$ is an FPT algorithm. That is, there exists a computable function $f$ with the property that for an input $(x,k)$, the running time of $\mathbb{A}$ is bounded by $f(k)\cdot|x|^{O(1)}$ .
\end{enumerate}
\end{definition}

It can be readily checked that FPT-approximability implies FPT cost approximability with the same approximation factor. Please refer to Section~3.1~of~\cite{ChenFlumGrohe} for more details.

\begin{theorem}[Chen et al.~\cite{ChenFlumGrohe}] \label{thm:cfg}
Let $P$ be an \NP minimization problem over the alphabet $\Sigma$,
and let $\rho:\mathbb{N}\rightarrow\mathbb{R}_{\ge1}$ be a computable
function such that $k\cdot\rho(k)$ is nondecreasing and unbounded.
Suppose that $P$ is FPT approximable with approximation ratio $\rho$.
Then $P$ is FPT cost approximable with approximation ratio $\rho$.
\end{theorem}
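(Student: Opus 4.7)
The plan is to reduce FPT cost approximation of $P$ to its FPT approximability by simulating the FPT approximation algorithm $\mathbb{A}$ at a carefully chosen parameter value and then sanity-checking the cost of its output against the budget $k$. The underlying idea is to exploit the monotonicity and unboundedness of $k \cdot \rho(k)$ to effectively ``invert'' this map, identifying the largest $k'$ whose guaranteed approximation bound $k' \cdot \rho(k')$ still fits inside $k$.

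Concretely, on input $(x,k)$ I would first compute
\[
k^{\ast} := \max\{k' \in \bbn : k' \cdot \rho(k') \le k\},
\]
which is well-defined for all sufficiently large $k$, since $k' \cdot \rho(k')$ is nondecreasing and unbounded in $k'$. For the finitely many small values of $k$ where $k^\ast$ is undefined, the cost approximation algorithm can terminate directly, either by brute force (since then $\min(x)$ must be bounded by a constant, or $k < \min(x)$) or by outright rejection. I would then run $\mathbb{A}$ on $(x, k^{\ast})$ to produce a candidate $y$, and accept $(x,k)$ if and only if $y$ is a feasible solution of $P$ on $x$ of cost at most~$k$.

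For correctness, I would verify the two conditions from the definition of FPT cost approximation. If $k \ge \min(x) \cdot \rho(\min(x))$, then monotonicity of $k' \cdot \rho(k')$ gives $\min(x) \le k^{\ast}$, so the optimum of $x$ is at most the parameter fed to $\mathbb{A}$; hence $\mathbb{A}(x, k^{\ast})$ is guaranteed to return a feasible solution of cost at most $k^{\ast} \cdot \rho(k^{\ast}) \le k$, and the algorithm accepts. Conversely, if $k < \min(x)$, no feasible solution of cost $\le k$ exists, so whatever $\mathbb{A}$ outputs fails the cost check, and the algorithm rejects. The intermediate range $\min(x) \le k < \min(x) \cdot \rho(\min(x))$ is unconstrained by the definition.

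The running time is $O(\widehat{g}(k) \cdot |x|^{O(1)})$, where $\widehat{g}(k) := \max_{k' \le k} g(k')$ dominates the FPT factor of $\mathbb{A}$ at all parameters $\le k^\ast \le k$; computing $k^{\ast}$ itself takes time depending only on $k$ and $\rho$, and is absorbed into the FPT factor. The main obstacle, as far as I can see, is not the construction itself but the nuisance of handling the possible non-monotonicity of $g$ and the boundary cases where $k$ is too small for $k^{\ast}$ to exist --- neither poses any real difficulty, but both need a careful word to certify a clean FPT bound of the required form $f(k) \cdot |x|^{O(1)}$.
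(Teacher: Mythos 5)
Your proposal is correct: the paper itself does not prove this statement but simply cites Chen--Flum--Grohe (Section~3.1), and your construction --- taking $k^{\ast}=\max\{k' : k'\cdot\rho(k')\le k\}$ via the monotonicity and unboundedness of $k'\cdot\rho(k')$, running $\mathbb{A}$ on $(x,k^{\ast})$, and accepting iff the returned solution is feasible of cost at most $k$ --- is exactly the standard argument from that source, with the small-$k$ rejection and the $\widehat{g}(k)=\max_{k'\le k}g(k')$ fix handled appropriately. No gaps worth flagging (the only implicit assumption, $\min(x)\ge 1$ so that outright rejection is safe when $k<\rho(1)$, is part of the usual NP optimization convention).
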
 

An immediate consequence of the theorem above is that if $P$ is not FPT cost approximable with approximation ratio $\rho$ (under certain complexity theory assumptions), then $P$ is not FPT approximable with approximation ratio $\rho$ 
(under the same assumptions).

Gap problems and gap-preserving reductions were originally introduced in the context of proving the PCP theorem~\cite{Arora} -- a cornerstone in the theory of approximation algorithms.
These notions have natural analogues in the parameterized approximability setting. Below, we follow the definitions as provided by Eickmeyer et al.~\cite{Eickmeyer}.

\begin{definition}[gap instance of a parameterized problem~\cite{Eickmeyer}] 
Let $\delta:\mathbb{N}\to\mathbb{R}_{\geq1}$ be a function, $P$ a minimization problem, and $P'$ its standard parameterization. An instance $(x,k)$ is a $\delta$-gap instance of $P'$
if either $\min(x)\leq k$ or $\min(x)\geq k \cdot\delta(k)$.
\end{definition}

\begin{definition}[gap-preserving FPT reduction~\cite{Eickmeyer}] 
Let $\alpha,\beta \colon \mathbb{N} \to \mathbb{R}_{\geq1}$ be two computable functions, and let $P$ and $Q$ be two minimization problems.
Let $P'$ and $Q'$ be the natural parameterizations of $P$ and $Q$, respectively.
We say that a reduction $R$ from $P'$ to $Q'$ is a \emph{$(\alpha,\beta)$-gap-preserving
FPT reduction} if
\begin{enumerate}
\item $R$ is an FPT reduction from $P'$ to $Q'$,
\item for every $\alpha$-gap instance $(x,k)$ of $P'$, the instance $R(x,k)$ is a $\beta$-gap instance of $Q'$.
\end{enumerate}
\end{definition}

We use gap-preserving FPT reductions to establish FPT-inapproximability.

\subsection{Circuits}

%\todo{Explain and and or gates}

%We assume that the reader is familiar with parameterized and   approximation algorithms. In \Cref{sec:apxintro,sec:paramintro} we briefly describe  the key ideas.
First, we recall some elementary notions from Boolean circuits. 
In particular, by an \emph{and-node}, we mean the digital logic gate that implements logical conjuction ($\wedge$), by an \emph{or-node}, we mean the digital logic gate that implements logical disjunction ($\vee$), 
and by a \emph{not-node}, we mean the digital logic gate that implements negation ($\lnot$).

\begin{definition}[Boolean circuit]
A \emph{Boolean circuit} $C$ is a directed acyclic graph, where each
node is labeled in the following way:
\begin{enumerate}
\item every node with in-degree greater than $1$ is either an \emph{and-node} or an \emph{or-node}, 
\item each node of in-degree $1$ is labeled as a \emph{negation node}, 
\item and each node of in-degree $0$ is an \emph{input node}. 
\end{enumerate}
Moreover, exactly one of the nodes with out-degree 0 is labeled as the \emph{output node}.
%The \emph{depth} of $C$ is the maximum length of a directed path
%from an input node to the output node, and the \emph{weft} of $C$
%is the maximum number of large nodes on any path from an input node
%to the output, where large means that the node has in-degree at least
%$2$.
\end{definition}

Below, we  recall some essential parameterized complexity results concerning  circuits.

We use the terms \emph{gates} and \emph{nodes} interchangeably. We say that a gate has \emph{fan-in} $k$ if its in-degree is at most $k$.
We say that a gate is an \emph{ordinary gate} if it is neither an input gate nor an output gate. 
We denote the nodes and edges in $C$ by  $V(C)$ and $E(C)$ respectively. The size of a circuit $C$, denoted by $|C|$,  is the total number of nodes and edges in $C$. 
That is, $|C| = |V(C)| + |E(C)|$. The \emph{Hamming weight} of an assignment is the number of input gates receiving value 1. An assignment on the input nodes induces an assignment on all nodes. So given an assignment  from the input nodes of circuit $C$ to $\{0, 1\}$, we say that the assignment satisfies $C$ if the value of the output node is $1$ for that assignment. Let $\mathscr{G}_{I}$ denote the set of input gates of~$C$.
Then, an assignment $A$ can be viewed as a binary vector of size $|\mathscr{G}_{I}|$.
In the \wcs ($\wcsabbr$) problem, we are given a circuit $C$ and an integer $k$, and the task is to decide if $C$ has a satisfying assignment of Hamming weight at most $k$.
Accordingly, in the \minwcs ($\minwcsabbr$) problem, we are given a circuit~$C$, and the task is to find a satisfying assignment with minimum Hamming weight.

%The complexity classes that belong to the so-called \textbf{W}-hierarchy are
%defined by restricting \wcs to various classes of circuits. In particular,
%for a class $\mathcal{C}$ of circuits, we define $\wcsabbr[\mathcal{C}]$
%to be the restriction of the problem $\wcsabbr$ to input circuits belonging to $\mathcal{C}$.
%For instance, let $\mathcal{C}_{t,d}$ denote the class of circuits with weft at
%most $t$ and depth at most $d$. Then, $\wcsabbr[\mathcal{C}_{t,d}]$ is the restriction of the problem $\wcsabbr$ to input circuits belonging to $\mathcal{C}_{t,d}$.
% For some $t\geq1$, we say that a parameterized problem $P$ belongs to the class \WT if there exists a constant $d$ and a parameterized reduction from $P$ to standard parameterization $\wcsabbr[\mathcal{C}_{t,d}]$.

\begin{definition}[\WP]
A parameterized problem $W$ belongs to the class \WP if it can be reduced to the standard parameterization of $\wcsabbr$.
\end{definition}

%The complexity classes  \WT and \WP can equivalently be defined in terms of non-deterministic Turing machines~\cite{MR2238686}.
%In particular, the class \WP consists of all parameterized problems for which there exists a nondeterministic Turing machine that performs at most $f(k)\cdot p(|x|)$ steps, at most $h(k)\cdot \log |x|$ of them being nondeterministic, where $f$ and $h$ are any %computable functions of the parameter $k$, and $p$ is any polynomial function of the input size $|x|$. This characterization of class \WP first appeared in Chen et al.~\cite{ChenFlumGrohe}.

A Boolean circuit is monotone if it does not contain any negation nodes. Let $\mathcal{C}^+$ be the class of all monotone Boolean circuits.  
Then, \minmonotone ($\mincircuitsat$) is the restriction of  the problem $\minwcsabbr$ to input circuits belonging to $\mathcal{C}^+$.
%
%\begin{problem}[\minmonotone]
%{A monotone circuit $C$ with $n$ input nodes.}
%{A satisfying assignment with minimum Hamming weight.}
%\end{problem}

The following result seems to be folklore and appears in the standard literature \cite{MR2238686,Downey}.
\begin{theorem}[Theorem 3.14~\cite{FG06}] \label{thm:wpcircuit}
The standard parameterization of $\mincircuitsat$ is \WP-complete.
\end{theorem}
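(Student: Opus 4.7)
The plan is to prove the theorem in two stages: membership in \WP and \WP-hardness. The standard parameterization of $\mincircuitsat$ is the decision problem asking, on input a monotone circuit $C \in \mathcal{C}^+$ and an integer $k$ (the parameter), whether $C$ admits a satisfying assignment of Hamming weight at most $k$.

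For membership, I would give a trivial identity reduction from $\mincircuitsat$ to \wcsabbr. Every monotone circuit is a Boolean circuit, so each instance $(C, k)$ of $\mincircuitsat$ is already an instance of \wcsabbr, modulo a mild reformulation: \wcsabbr asks for a satisfying assignment of weight exactly $k$, while $\mincircuitsat$ asks for weight at most $k$. For monotone circuits these are polynomially equivalent, since flipping additional inputs from $0$ to $1$ cannot destroy satisfiability; one can handle ``at most $k$'' by trying each value $0, 1, \ldots, k$ in turn. This places the standard parameterization of $\mincircuitsat$ in \WP.

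For \WP-hardness, the plan is to produce an FPT reduction from the standard parameterization of \wcsabbr on general Boolean circuits (the canonical \WP-complete problem) to that of $\mincircuitsat$. Given $(C, k)$ with $C$ a general circuit on inputs $x_1, \ldots, x_n$, I would first convert $C$ into an equivalent circuit $C^*$ in negation normal form using De Morgan's laws, pushing all negations down to the input level in polynomial time and preserving satisfying assignments together with their weights. Next, I would introduce companion input variables $y_1, \ldots, y_n$ intended to represent $\neg x_1, \ldots, \neg x_n$, and form the monotone circuit $C^+$ on $2n$ inputs by substituting $y_i$ for each occurrence of $\neg x_i$ in $C^*$. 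Finally, I would attach a monotone consistency gadget yielding the final monotone circuit $C'$, so that minimum-weight satisfying assignments of $C'$ are in correspondence (up to a computable transformation of the weight parameter) with minimum-weight satisfying assignments of $C$.

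The hard part will be designing the consistency gadget. The ``at least one of $\{x_i, y_i\}$'' direction, $x_i \vee y_i$, is expressible by a monotone circuit, but the ``at most one'' direction $\neg(x_i \wedge y_i)$ is provably not, so one cannot simply force $y_i = \neg x_i$ by monotone constraints alone. My plan is to exploit the minimization objective itself to discipline satisfying assignments: because setting both $x_i$ and $y_i$ to $1$ increases the Hamming weight, any minimum-weight satisfying assignment will naturally avoid over-assignments, provided the gadget does not otherwise require them. Following the Flum--Grohe construction~\cite{FG06}, the gadget together with carefully chosen auxiliary variables must be arranged so that the target parameter of the reduced instance is a computable function of $k$ alone (and not of $n$), which is the delicate part of the reduction. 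Once this is done, correctness reduces to verifying the bijective correspondence between minimum-weight satisfying assignments, and FPT-reducibility from \wcsabbr to $\mincircuitsat$ follows, completing the proof.
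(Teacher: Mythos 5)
First, note that the paper does not prove this statement at all: it is imported verbatim as Theorem~3.14 of Flum--Grohe \cite{FG06} and described as folklore, so there is no in-paper argument to compare against; your proposal is an attempt to reconstruct the textbook proof. Your membership half is fine (a monotone circuit is a circuit, and for monotone circuits ``weight at most $k$'' and ``weight exactly $k$'' are interchangeable by flipping extra inputs to $1$). The hardness half, however, has a genuine gap: the consistency gadget on which everything hinges is never constructed, and the specific route you sketch --- doubling the inputs to $x_1,\dots,x_n,y_1,\dots,y_n$ with $y_i$ intended to mean $\lnot x_i$ --- cannot be repaired in the way you suggest. The obstruction is not the ``at most one of $x_i,y_i$'' direction, which the minimization objective does indeed discourage; it is the parameter. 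For the substitution to be sound you must force $y_i=1$ whenever $x_i=0$, so the image of a weight-$k$ satisfying assignment of $C$ is a satisfying assignment of the monotone circuit of weight roughly $n$, and the target parameter of the reduced instance then depends on $n$ rather than on $k$ alone --- precisely what an FPT reduction forbids. If instead you drop the constraints $x_i\vee y_i$ and simply ask for weight bounded by some $f(k)$, the forward direction fails for circuits whose satisfaction requires many negated literals to be true. No monotone gadget can resolve this tension at low weight: a gadget forcing $y_i=\lnot x_i$ within a budget depending only on $k$ would amount to expressing negation monotonically.

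The proof in \cite{FG06} that the paper cites avoids the doubling idea altogether. A weight-$k$ assignment over the $n$ original inputs is re-encoded as a weight-$k$ assignment over a $k\times n$ array of fresh inputs $y_{i,j}$, with $y_{i,j}=1$ meaning ``the $i$-th chosen variable is $x_j$.'' After pushing negations to the leaves, a positive literal $x_j$ is replaced by $\bigvee_{i=1}^{k} y_{i,j}$ and a negative literal $\lnot x_j$ by the monotone condition $\bigwedge_{i=1}^{k}\bigvee_{j'\neq j} y_{i,j'}$ (``no block selects $j$''), and one conjoins $\bigwedge_{i=1}^{k}\bigvee_{j=1}^{n} y_{i,j}$ so that any weight-$k$ satisfying assignment selects exactly one index per block. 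The new parameter is still $k$, the circuit is monotone, and correctness is immediate in both directions. Some such re-encoding of the selection (rather than of the negation) is the missing idea; without it your reduction does not go through.
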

Furthermore, Eickmeyer et al.~\cite{Eickmeyer} showed that unless \WP $=$ \fpt, $\mincircuitsat$ does not have an FPT approximation algorithm with polylogarithmic approximation factor $\rho$.
The FPT-inapproximability result was subsequently improved by Marx~\cite{Marx} as follows.
\begin{theorem}[Marx~\cite{Marx}]  \label{thm:bigguy}
$\mincircuitsat$ is not FPT cost approximable, unless $\fpt = \WP$.
\end{theorem}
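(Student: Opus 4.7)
The plan is to proceed by contradiction, combining the classical \WP-completeness of $\mincircuitsat$ (Theorem~\ref{thm:wpcircuit}) with a gap-amplification reduction that applies $\mincircuitsat$ to itself. Assume, for contradiction, that there is an FPT cost approximation algorithm $\mathbb{A}$ for $\mincircuitsat$ with some computable ratio $\rho \colon \mathbb{N} \to \mathbb{R}_{\ge 1}$. The goal is to show that, for every computable function $\rho$, such an $\mathbb{A}$ can be bootstrapped into an exact FPT algorithm for $\mincircuitsat$, contradicting $\WP \neq \fpt$ by Theorem~\ref{thm:wpcircuit}.

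The technical heart is a gap-amplifying self-reduction. Given a monotone circuit $C$ with $n$ input gates and a target weight $k$, I would construct, for a parameter $t$ to be chosen later, an amplified monotone circuit $C^{[t]}$ whose input gates are indexed by pairs $(i,j) \in [n]\times[t]$, together with auxiliary "copy" gadgets. The circuit $C^{[t]}$ should be satisfied exactly when, for each $j \in [t]$, the restriction of the assignment to the $j$-th "column" $\{X_{i,j}\}_i$ induces a satisfying assignment of $C$. If this is engineered correctly, then $\min(C^{[t]}) = t \cdot \min(C)$, and moreover any satisfying assignment of $C^{[t]}$ of weight $w$ must contain at least one column whose induced weight is $\le w/t$, which is a satisfying assignment of $C$. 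Consequently, an $\alpha$-approximate solution for $C^{[t]}$ yields an $\alpha$-approximate solution for $C$ (at the level of minimum weights), giving a $(1,\alpha)$-gap preservation that, with $t$ chosen large, shrinks any ratio $\rho(tk)$ down to less than $1+1/k$.

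The principal obstacle, and the step I expect to require the most care, is realizing the column-wise selection using only \emph{monotone} gadgets. Ordinary "exactly one per row" or equality constraints are not expressible without negation, and naive disjunctive constructions allow assignments to cheat by sharing input variables across multiple copies of $C$, which would collapse the gap. The fix is to introduce a layer of monotone selector subcircuits in which each column is guarded so that satisfying the output of $C^{[t]}$ forces payment of weight in every column independently; this can be done by a tree of monotone OR-gates over $t$ parallel wired-in copies of $C$, each fed by a disjoint block of input variables $\{X_{i,j}\}_i$, with the amplification factor enforced by the fact that $C$ is monotone (so the cheapest satisfying assignment per copy is independent of the others). Verifying the weight identity $\min(C^{[t]}) = t \cdot \min(C)$ and the gap-preservation bound in both directions is then a direct combinatorial argument.

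Finally, to close the contradiction, given an instance $(C,k)$ of $\mincircuitsat$ I would pick $t = t(k,\rho)$ large enough that $\rho(tk) < 1 + 1/k$ (possible since $\rho$ is computable and $k\rho(k)$ eventually dominates any fixed threshold), run $\mathbb{A}$ on $(C^{[t]}, tk)$, and use the gap-preservation to decide exactly whether $\min(C) \le k$. The running time is $f(tk) \cdot |C^{[t]}|^{O(1)} = g(k) \cdot |C|^{O(1)}$, yielding an FPT algorithm for the standard parameterization of $\mincircuitsat$, contradicting Theorem~\ref{thm:wpcircuit}. Theorem~\ref{thm:cfg} then transfers the impossibility from FPT cost approximation back to any claimed approximation ratio.
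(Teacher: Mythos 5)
First, note that the paper does not prove this statement at all: it is imported verbatim from Marx's work (cited as~\cite{Marx}) and used as a black box in the proof of \Cref{thm:interthm}. So there is no internal proof to compare against; the question is whether your proposed self-contained argument works. It does not, because the amplification step at its core fails.

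Your construction $C^{[t]}$ is a parallel (disjoint-copies, AND-composed) repetition, and for such a composition the optimum is \emph{additive}: $\min(C^{[t]})=t\cdot\min(C)$. Additive composition preserves multiplicative gaps, it does not amplify them: if $\min(C)\le k$ then $\min(C^{[t]})\le tk$, while if $\min(C)\ge k+1$ then $\min(C^{[t]})\ge t(k+1)$, so the multiplicative gap is $(k+1)/k$ both before and after, independently of $t$. Likewise, your own observation that a $\rho$-approximate solution of $C^{[t]}$ yields a $\rho$-approximate solution of $C$ shows the ratio is preserved, not improved. The decisive error is the claim that one can ``pick $t$ large enough that $\rho(tk)<1+1/k$'': the theorem must hold for \emph{every} computable $\rho\ge 1$ (with $k\mapsto k\rho(k)$ nondecreasing), e.g.\ $\rho(k)=2^{k}$, and for such $\rho$ the quantity $\rho(tk)$ grows with $t$ rather than shrinking; there is simply no $t$ with $\rho(tk)<1+1/k$. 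Since the FPT cost approximation algorithm $\mathbb{A}$ gives no guarantee on inputs $(x,k')$ with $\min(x)\le k'<\min(x)\rho(\min(x))$, you can only exploit it on instances whose promise gap \emph{exceeds} $\rho$ at the relevant parameter, and your construction never creates such a gap. What would be needed is a gap-\emph{creating} reduction whose gap function $\delta$ can be made to dominate the given $\rho$ (this is exactly what is invoked from Eickmeyer et al.\ and Marx in the proof of \Cref{thm:interthm}); naive multiplicative self-composition that squares weights would amplify the gap but blows the instance size up to $n^{2^{t}}$, which is no longer FPT-size, so the real proof requires genuinely different machinery than what you sketch.
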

Combined with~\Cref{thm:cfg}, the above theorem implies that $\mincircuitsat$ is not FPT-approximable for any function $\rho$, unless $\fpt = \WP$.

 \begin{remark}[Fan-in $2$ circuits] \label{rem:fan}
 We note that it is possible to transform a monotone circuit~$C$ to another monotone circuit $C'$ such that both circuits are satisfied on the same inputs, and  every gate of $C'$ has fan-in $2$.
This is achieved as follows: Each or-gate of in-degree~$k$ in~$C$ is replaced by a \emph{tree} of or-gates with in-degree-$2$ in $C'$, and each and-gate of in-degree $k$ in $C$ is replaced by a \emph{tree} of and-gates with in-degree-$2$ in $C'$.
In each case, we  transform a single gate having fan-in $k$ to a sub-circuit of $\Theta(k)$ gates having depth $\Theta(\log k)$ and fan-in $2$. In fact, it is easy to check that $|C'|$ is a polynomial function of $|C|$, and $C'$ can be computed from $C$ in time polynomial in $C$. Since the number of input gates for $C$ and $C'$ is the same, for the rest of the paper we will assume without loss of generality that an input circuit instance has fan-in $2$.
\end{remark}

\section{Reducing $\mincircuitsat$ to $\minrmm$} \label{sec:reduceit}

In this section, we describe how to construct a $2$-complex $\kc$ that corresponds to a monotone circuit $C(V,E)$. By~\Cref{rem:fan}, we assume without loss of generality that $C$ has fan-in~$2$.
For the rest of the paper, we denote the number of gates in $C$ by $n$. %That is, $n = |\VCC(C)|$.
Also, throughout, we use the notation $j \in [a,b]$ to mean that $j$ takes integer values in the interval $[a,b]$.

Following the notation from~\Cref{sec:apxintro},  given a monotone circuit $C=(\VCC,\ECC)$ and the associated complex $\kc$, let $\OPTA$ denote the optimal value of the $\mincircuitsat$ problem on $C$, and let $\OPTB$ denote the optimal value of the $\minrmm$ problem on $\kc$.
The value of the objective function $\ALGB$ is the number of critical simplices in $\VCC$ minus one; the value of the objective function $\ALGA$  is the Hamming weight of the input assignment.
In~\Cref{sec:kcmain}, we describe the map $K$ that transforms instances of
$\mincircuitsat$ (monotone circuits $C$) to instances of $\minrmm$ (simplicial complexes $\kc$), and the map $\ICC$ that transforms solutions of $\minrmm$ (discrete gradients $\VCC$ on $\kc$) to solutions of $\mincircuitsat$ (satisfying input assignments $\ICC(C,\VCC)$ of circuit $C$).

\subsection{The building block for the gadget}

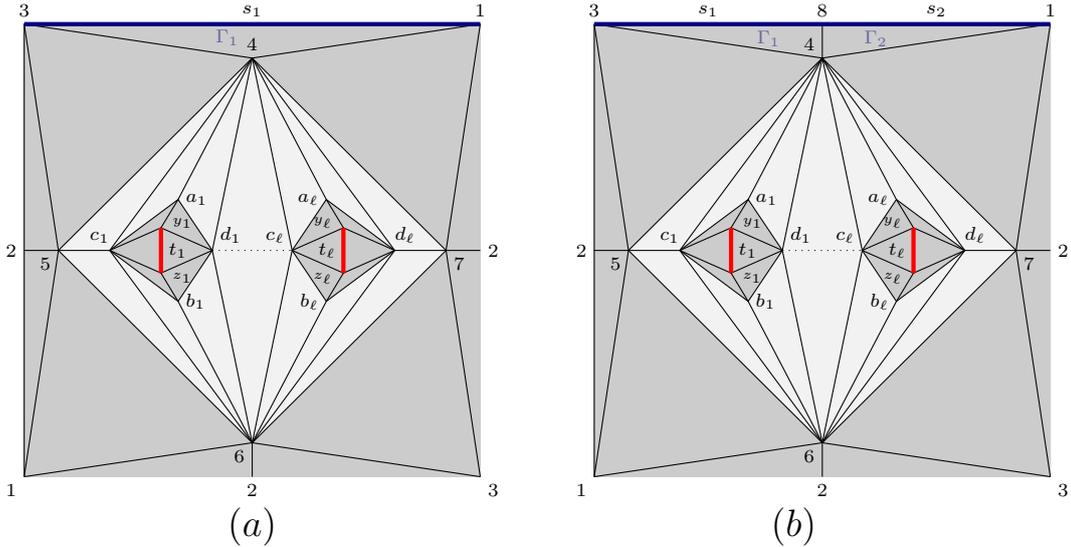
\begin{figure}[hbt]
   \centering{
    \scriptsize
  \begin{tikzpicture}[scale=1.5,baseline]%
{ [every path/.style = {line cap=round,line join=round}]

     \coordinate (lab) at (0,-2.2);

             \coordinate (r) at (0,-1.7);
    \coordinate (w) at (-1.7,0);
    \coordinate (t) at (0,1.7);
    \coordinate (v) at (1.7,0);
    \coordinate (s) at (2,-2);
    \coordinate (u) at (2,2);
    \coordinate (sb) at (-2,2);
    \coordinate (q) at (-2,0);
    \coordinate (ur) at (-2,-2);
    \coordinate (ql) at (2,0);
    \coordinate (qr) at (0,-2);
    %\coordinate (x1) at (-1.2,2);
    %\coordinate (x2) at (-0.4,2);
 %   \coordinate (xm1) at (1.2,2);
  %  \coordinate (x1) at (-1.2,2);
  %  \coordinate (x2) at (-0.4,2);
  %  \coordinate (xm1) at (1.2,2);
  
      \coordinate (c0) at (-1.5+.25,0);
    \coordinate (d0) at (-0.6+.25,0);
    \coordinate (a0) at (-0.9+.25,0.45);
    \coordinate (b0) at (-0.9+.25,-0.45);
    \coordinate (y0) at (-1.05+.25,0.2);
    \coordinate (z0) at (-1.05+.25,-0.2);

    \coordinate (cl) at (0.6-.25,0);
    \coordinate (dl) at (1.5-.25,0);
    \coordinate (al) at (0.9-.25,0.45);
    \coordinate (bl) at (0.9-.25,-0.45);
    \coordinate (yl) at (1.05-.25,0.2);
    \coordinate (zl) at (1.05-.25,-0.2);

    \fill[verylightgray] (t)--(w)--(r)--(v)--cycle;
    \fill[lightergray] 
    (u)--(s)--(ur)--(sb)--cycle 
    (t)--(w)--(r)--(v)--cycle
    (a0)--(d0)--(b0)--(c0)--cycle
    (al)--(dl)--(bl)--(cl)--cycle
    ;
    
    \draw[lightergray] (u)--(s)--(ur)--(sb);

 %   \draw (sb)--(x2);
 %   \draw[dotted] (x2)--(xm1);
    %\draw (xm1)--(u);

    \draw (ur)--(sb);
    \draw (r)--(v);
    \draw (v)--(u);
    \draw (t)--(sb);
    \draw (t)--(v);
    \draw (t)--(w);
    \draw (w)--(q);
    \draw (w)--(ur);
    \draw (w)--(sb);
    \draw (r)--(ur);
    \draw (r)--(qr);
    \draw (s)--(r);
    \draw (s)--(v);
    \draw (ql)--(v);
    \draw (u)--(t);

    \draw (r)--(w);

  %  \draw (t)--(x1);
  %  \draw (t)--(x2);
  %  \draw (t)--(xm1);
    
    \draw (w)--(c0);

    \draw (a0)--(d0)--(b0)--(c0)--cycle;
    \draw (y0)--(d0)--(z0)--(c0)--cycle;
    \draw (t)--(a0)--(y0)--(z0)--(b0)--(r);
    \draw (t)--(c0)--(r);
    \draw (t)--(d0)--(r);
    
    \draw[dotted] (d0)--(cl);
    
    \draw (al)--(dl)--(bl)--(cl)--cycle;
    \draw (yl)--(dl)--(zl)--(cl)--cycle;
    \draw (t)--(al)--(yl)--(zl)--(bl)--(r);
    \draw (t)--(cl)--(r);
    \draw (t)--(dl)--(r);

    \draw (dl)--(v);
      \draw (u)--(sb);
      
      \draw[ultra thick, red] (y0)--(z0);

\draw[ultra thick, red] (yl)--(zl);

\draw[ultra thick, NavyBlue] (u)--(sb);
     \node[above left=0.1cm,DarkBlue] at (t) {$\Gamma_1$};

    \node[below left] at (r) {$6$};
    \node[below left] at (w) {$5$};
    \node[above ] at (t) {$4$};
    \node[below right] at (v) {$7$};
    \node[below right] at (s) {$3$};
    \node[above] at (u) {$1$};
    \node[above] at (sb) {$3$};
    \node[above] at ($(u)!1/2!(sb)$) {$s_1$};
    \node[left] at (q) {$2$};
    \node[below left] at (ur) {$1$};
    \node[right] at (ql) {$2$};
    \node[below] at (qr) {$2$};

   % \node[above] at (x1) {$x_1$};
   % \node[above] at (x2) {$x_2$};
   % \node[above] at (xm1) {$x_{m-1}$};

    \node[above left=0mm and -1mm] at (c0) {$c_1$};
    \node[above right] at (d0) {$d_1$};
    \node[right] at (a0) {$a_1$};
    \node[right] at (b0) {$b_1$};
    \node[above right=-1mm and .5mm] at (y0) {\tiny $y_1$};
    \node[below right=-1mm and .5mm] at (z0) {\tiny $z_1$};    
    \node[right] at($(y0)!1/2!(z0)$) {$t_1$};

    \node[above left] at (cl) {$c_\ell$};
    \node[above right=0mm and -1mm] at (dl) {$d_\ell$};
    \node[left] at (al) {$a_\ell$};
    \node[left] at (bl) {$b_\ell$};
    \node[above left=-1mm and .5mm] at (yl) {\tiny $y_\ell$};
    \node[below left=-1mm and .5mm] at (zl) {\tiny $z_\ell$};    
    \node[left] at($(yl)!1/2!(zl)$) {$t_\ell$};

    \node[below] at (lab) {\Large $(a)$};
    %\node[above] at($(sb)!1/2!(x1)$) {$s_1$};
    %\node[above] at($(x1)!1/2!(x2)$) {$s_2$};
    %\node[above] at($(xm1)!1/2!(u)$) {$s_m$};
    %

             \coordinate (rnew) at (0+5,-1.7);
    \coordinate (wnew) at (-1.7+5,0);
    \coordinate (tnew) at (0+5,1.7);
    \coordinate (vnew) at (1.7+5,0);
    \coordinate (snew) at (2+5,-2);
    \coordinate (unew) at (2+5,2);
    \coordinate (sbnew) at (-2+5,2);
    \coordinate (qnew) at (-2+5,0);
    \coordinate (urnew) at (-2+5,-2);
    \coordinate (qlnew) at (2+5,0);
    \coordinate (qrnew) at (0+5,-2);
    \coordinate (x1new) at (0+5,2);
    %\coordinate (x2) at (-0.4+5,2);
 %   \coordinate (xm1) at (1.2+5,2);
  %  \coordinate (x1) at (-1.2+5,2);
  %  \coordinate (x2) at (-0.4+5,2);
  %  \coordinate (xm1) at (1.2+5,2);
  
      \coordinate (c0new) at (-1.5+5.25,0);
    \coordinate (d0new) at (-0.6+5.25,0);
    \coordinate (a0new) at (-0.9+5.25,0.45);
    \coordinate (b0new) at (-0.9+5.25,-0.45);
    \coordinate (y0new) at (-1.05+5.25,0.2);
    \coordinate (z0new) at (-1.05+5.25,-0.2);

    \coordinate (clnew) at (0.6+4.75,0);
    \coordinate (dlnew) at (1.5+4.75,0);
    \coordinate (alnew) at (0.9+4.75,0.45);
    \coordinate (blnew) at (0.9+4.75,-0.45);
    \coordinate (ylnew) at (1.05+4.75,0.2);
    \coordinate (zlnew) at (1.05+4.75,-0.2);
     \coordinate (labnew) at (0+4.75,-2.2);
     
    \fill[verylightgray] (tnew)--(wnew)--(rnew)--(vnew)--cycle;
    \fill[lightergray] 
    (unew)--(snew)--(urnew)--(sbnew)--cycle 
    (tnew)--(wnew)--(rnew)--(vnew)--cycle
    (a0new)--(d0new)--(b0new)--(c0new)--cycle
    (alnew)--(dlnew)--(blnew)--(clnew)--cycle
    ;
    
    \draw[lightergray] (unew)--(snew)--(urnew)--(sbnew);

%    \draw (sbnew)--(x2new);
%    \draw[dotted] (x2new)--(xm1new);
%    \draw (xm1new)--(unew);

    \draw (urnew)--(sbnew);
    \draw (rnew)--(vnew);
    \draw (vnew)--(unew);
    \draw (tnew)--(sbnew);
    \draw (tnew)--(vnew);
    \draw (tnew)--(wnew);
    \draw (wnew)--(qnew);
    \draw (wnew)--(urnew);
    \draw (wnew)--(sbnew);
    \draw (rnew)--(urnew);
    \draw (rnew)--(qrnew);
    \draw (snew)--(rnew);
    \draw (snew)--(vnew);
    \draw (qlnew)--(vnew);
    \draw (unew)--(tnew);

    \draw (rnew)--(wnew);

    \draw (tnew)--(x1new);
 %   \draw (tnew)--(x2new);
 %   \draw (tnew)--(xm1new);
    
    \draw (wnew)--(c0new);

    \draw (a0new)--(d0new)--(b0new)--(c0new)--cycle;
    \draw (y0new)--(d0new)--(z0new)--(c0new)--cycle;
    \draw (tnew)--(a0new)--(y0new)--(z0new)--(b0new)--(rnew);
    \draw (tnew)--(c0new)--(rnew);
    \draw (tnew)--(d0new)--(rnew);
    
    \draw[dotted] (d0new)--(clnew);
    
    \draw (alnew)--(dlnew)--(blnew)--(clnew)--cycle;
    \draw (ylnew)--(dlnew)--(zlnew)--(clnew)--cycle;
    \draw (tnew)--(alnew)--(ylnew)--(zlnew)--(blnew)--(rnew);
    \draw (tnew)--(clnew)--(rnew);
    \draw (tnew)--(dlnew)--(rnew);

    \draw (dlnew)--(vnew);
    
          \draw (unew)--(sbnew);
          
  \draw[ultra thick, red] (y0new)--(z0new);

\draw[ultra thick, red] (ylnew)--(zlnew);

\draw[ultra thick, NavyBlue] (unew)--(sbnew);

         \node[DarkBlue] at ([shift={(160:0.5)}]tnew) {$\Gamma_1$};
     \node[DarkBlue] at ([shift={(20:0.5)}]tnew) {$\Gamma_2$};
    \node[below left] at (rnew) {$6$};
    \node[below left] at (wnew) {$5$};
    \node[above left] at (tnew) {$4$};
    \node[below right] at (vnew) {$7$};
    \node[below right] at (snew) {$3$};
    \node[above] at (unew) {$1$};
    \node[above] at (sbnew) {$3$};
    \node[left] at (qnew) {$2$};
    \node[below left] at (urnew) {$1$};
    \node[right] at (qlnew) {$2$};
    \node[below] at (qrnew) {$2$};

    \node[above] at (x1new) {$8$};
%    \node[above] at (x2new) {$x_2$};
%    \node[above] at (xm1new) {$x_{m-1}$};

    \node[above left=0mm and -1mm] at (c0new) {$c_1$};
    \node[above right] at (d0new) {$d_1$};
    \node[right] at (a0new) {$a_1$};
    \node[right] at (b0new) {$b_1$};
    \node[above right=-1mm and .5mm] at (y0new) {\tiny $y_1$};
    \node[below right=-1mm and .5mm] at (z0new) {\tiny $z_1$};    
    \node[right] at($(y0new)!1/2!(z0new)$) {$t_1$};

    \node[above left] at (clnew) {$c_\ell$};
    \node[above right=0mm and -1mm] at (dlnew) {$d_\ell$};
    \node[left] at (alnew) {$a_\ell$};
    \node[left] at (blnew) {$b_\ell$};
    \node[above left=-1mm and .5mm] at (ylnew) {\tiny $y_\ell$};
    \node[below left=-1mm and .5mm] at (zlnew) {\tiny $z_\ell$};    
    \node[left] at($(ylnew)!1/2!(zlnew)$) {$t_\ell$};

    \node[above] at($(sbnew)!1/2!(x1new)$) {$s_1$};
        \node[above] at($(unew)!1/2!(x1new)$) {$s_2$};
 %   \node[above] at($(x1new)!1/2!(x2new)$) {$s_2$};
 %   \node[above] at($(xm1new)!1/2!(unew)$) {$s_m$};
    \node[below] at (labnew) {\Large $(b)$};

}
\end{tikzpicture}
  }
  \caption{The  figure $(a)$ on the left depicts $\gadget_{1,\ell}$ that is collapsible through one free face, namely $s_1 = \{3,1\}$. The figure $(b)$ on the right depicts  $\gadget_{2,\ell}$ that is collapsible through two free faces, namely $s_1  = \{3,8\}$ and $s_2=\{8,1\}$. The edges $\{1,2\}$ and $\{2,3\}$ on the right and at the bottom of both subfigures are shown in light grey to indicate that they are identified to $\{1,2\}$ and $\{2,3\}$ on the left.   \label{fig:gadget}}
\end{figure}

We shall first describe a complex that serves as the principal building block for the gadget in our reduction. 
The building block is based on a modification of Zeeman's \emph{dunce hat}~\cite{MR0156351}. The dunce hat is a simplicial complex that is contractible (i.e. has the homotopy type of a point) but has no free faces and is therefore not collapsible.
In contrast, we work with \emph{modified dunce hats}~\cite{Hachimori} that are collapsible through either one or two free edges. The modified dunce hat has been previously used to show  hardness of approximation of \maxmorse~\cite{BR19} and \WP-hardness of \paraexptwo~\cite{BRS19}, and is discussed extensively in these papers.

\Cref{fig:gadget} depicts two triangulations of modified dunce hats, which we denote by $\gadget_{m,\ell}$. 
We use the subscript $m,\ell$ to designate the numbers of distinguished edges of $\gadget_{m,\ell}$, which come in two types:
the free edges of $\gadget_{m,\ell}$ denoted by $s_i $, $1 \leq i \leq m$, and the edges of type $t_j = \{ y_j,z_j \}$, $1 \leq j \leq \ell$. 
These distinguished edges are precisely the edges that are identified to edges from other building blocks.
In \Cref{fig:gadget}, we depict $\gadget_{1,\ell}$, and $\gadget_{2,\ell}$, with distinguished edges $s_i $, and  $t_j$ highlighted.
 
Note that, in this paper, we only consider modified dunce hats with either one or two free edges. That is, for the purpose of this paper, $m \in \{1, 2\}$.
 Also, abusing terminology, we often refer to \enquote{modified dunce hats}  as simply \enquote{dunce hats}.

\begin{remark} \label{rem:gradient}
It is easy to check that after executing a series of elementary $2$-collapses, $\gadget_{m,\ell}$ collapses to a complex induced by edges
 \[\left\{\{1,2\}, \{2,3\},\{2,6\}, \{6,5\}, \{6,7\}, \{6,b_i\},\{6,c_i\},\{6,d_i\},\{b_i,z_i\} ,\{z_i,y_i\} ,\{y_i,a_i\} \right\} \cup \FCC\]
for $i \in [1,\ell]$, and,
 \begin{compactitem}
 \item $\FCC =  \{ \{v,4\} \}$ if $m = 1$, where $v = 6$, if $\ell$ is even, and $v = \nicefrac{(\ell - 1)}{2} + 1$  if $\ell$ is odd, 
 \item $\FCC = \{s_1, \{4,8\}\}$ if $m=2$ and  the collapse starts with a gradient pair involving $s_2$,    
 \item $\FCC = \{s_2, \{4,8\}\}$ if $m=2$ and  the collapse starts with a gradient pair involving $s_1$.
 \end{compactitem}
 The edges that are left behind after executing all the $2$-collapses are highlighted using examples in \Cref{fig:gadgetpaths}.
$(a)$ depicts the case where $\ell$ is odd and $m = 1$.
$(b)$ depicts the case where  $\ell$ is even, $m = 2$ and   the collapse starts with a gradient pair involving $s_1$.
\end{remark}

%$\{\{1,2\}, \{2,3\},\{2,6\}, \{6,5\}, \{6,7\}, \{6,b_i\},\{6,c_i\},\{6,d_i\},\{b_i,z_i\} ,\{z_i,y_i\} ,\{y_i,a_i\},$ $\{v,4\} \}$

\begin{figure}[hbt]
   \centering{
    \scriptsize
  \begin{tikzpicture}[scale=1.5,baseline]%
{ [every path/.style = {line cap=round,line join=round}]

     \coordinate (lab) at (0,-2.2);

    \coordinate (r) at (0,-1.7);
    \coordinate (w) at (-1.7,0);
    \coordinate (t) at (0,1.7);
    \coordinate (v) at (1.7,0);
    \coordinate (s) at (2,-2);
    \coordinate (u) at (2,2);
    \coordinate (sb) at (-2,2);
    \coordinate (q) at (-2,0);
    \coordinate (ur) at (-2,-2);
    \coordinate (ql) at (2,0);
    \coordinate (qr) at (0,-2);
    \coordinate (x1) at (-1.2,2);
    \coordinate (x2) at (-0.4,2);
    \coordinate (xm1) at (1.2,2);

  %  \coordinate (x1) at (-1.2,2);
  %  \coordinate (x2) at (-0.4,2);
%    \coordinate (xm1) at (1.2,2);

    \coordinate (c0) at (-1.5,0);
    \coordinate (d0) at (-0.6,0);
    \coordinate (a0) at (-0.9,0.45);
    \coordinate (b0) at (-0.9,-0.45);
    \coordinate (y0) at (-1.05,0.2);
    \coordinate (z0) at (-1.05,-0.2);

    \coordinate (cq) at (-0.4,0);
    \coordinate (dq) at (0.4,0);
    \coordinate (aq) at (0,0.45);
    \coordinate (bq) at (0,-0.45);
    \coordinate (yq) at (0,0.2);
    \coordinate (zq) at (0,-0.2);
    
    \coordinate (cl) at (0.6,0);
    \coordinate (dl) at (1.5,0);
    \coordinate (al) at (0.9,0.45);
    \coordinate (bl) at (0.9,-0.45);
    \coordinate (yl) at (1.05,0.2);
    \coordinate (zl) at (1.05,-0.2);

    \fill[verylightgray] (t)--(w)--(r)--(v)--cycle;
    \fill[lightergray] 
    (u)--(s)--(ur)--(sb)--cycle 
    (t)--(w)--(r)--(v)--cycle
    (a0)--(d0)--(b0)--(c0)--cycle
    (al)--(dl)--(bl)--(cl)--cycle
     (aq)--(dq)--(bq)--(cq)--cycle
    ;
    
         \fill[verylightblue] (u)--(sb)--(t);

    \draw[lightergray] (u)--(s)--(ur)--(sb);

  %  \draw (sb)--(x2);
   % \draw[dotted] (x2)--(xm1);
  %  \draw (xm1)--(u);

    \draw (ur)--(sb);
    \draw (r)--(v);
    \draw (v)--(u);
    \draw (t)--(sb);
    \draw (t)--(v);
    \draw (t)--(w);
    \draw (w)--(q);
    \draw (w)--(ur);
    \draw (w)--(sb);
    \draw (r)--(ur);
    \draw (r)--(qr);
    \draw (s)--(r);
    \draw (s)--(v);
    \draw (ql)--(v);
    \draw (u)--(t);

    \draw (r)--(w);

 %   \draw (t)--(x1);
 %   \draw (t)--(x2);
 %   \draw (t)--(xm1);
    
    \draw (w)--(c0);

    \draw (a0)--(d0)--(b0)--(c0)--cycle;
    \draw (y0)--(d0)--(z0)--(c0)--cycle;
    \draw (t)--(a0)--(y0)--(z0)--(b0)--(r);
    \draw (t)--(c0)--(r);
    \draw (t)--(d0)--(r);
    
  %  \draw (d0)--(cl);
  \draw (d0)--(cq);
  \draw (dq)--(cl);
    
    \draw (al)--(dl)--(bl)--(cl)--cycle;
    \draw (yl)--(dl)--(zl)--(cl)--cycle;
    \draw (t)--(al)--(yl)--(zl)--(bl)--(r);
    \draw (t)--(cl)--(r);
    \draw (t)--(dl)--(r);
    
        \draw (dl)--(v);
    
    \draw (aq)--(dq)--(bq)--(cq)--cycle;
    \draw (yq)--(dq)--(zq)--(cq)--cycle;
        \draw (t)--(aq)--(yq)--(zq)--(bq)--(r);
  \draw (t)--(cq)--(r);
    \draw (t)--(dq)--(r);
    \draw (u)--(sb);

\draw[ultra thick, Maroon3] (sb)--(ur);
\draw[ultra thick, Maroon3] (qr)--(r);
\draw[ultra thick, Maroon3] (b0)--(r);
\draw[ultra thick, Maroon3] (bq)--(r);
\draw[ultra thick, Maroon3] (bl)--(r);
\draw[very thick, SeaGreen4] (c0)--(r);
\draw[very thick, SeaGreen4] (cq)--(r);
\draw[very thick, SeaGreen4] (cl)--(r);
\draw[very thick, SeaGreen4] (d0)--(r);
\draw[very thick, SeaGreen4] (dq)--(r);
\draw[very thick, SeaGreen4] (dl)--(r);
\draw[very thick, SeaGreen4] (w)--(r);
\draw[very thick, SeaGreen4] (v)--(r);

\draw[very thick, SeaGreen4] (bq)--(t);
\draw[ultra thick, Maroon3] (b0)--(z0);
\draw[very thick, RoyalBlue] (z0)--(y0);
\draw[very thick, SeaGreen4] (y0)--(a0);

\draw[ultra thick, Maroon3] (bq)--(zq);
\draw[very thick, RoyalBlue] (zq)--(yq);
\draw[very thick, SeaGreen4] (yq)--(aq);

\draw[ultra thick, Maroon3] (bl)--(zl);
\draw[very thick, RoyalBlue] (zl)--(yl);
\draw[very thick, SeaGreen4] (yl)--(al);

 \draw[ ->] ($(u)!1/2!(sb)$)--($(u)!1/2!(sb)!1/2!(t)$);
 
    \draw[->] ($(u)!1/2!(t)$)--($(u)!1/2!(t)!1/3!(ql)$);
    \draw[->] ($(u)!1/2!(v)$)--($(u)!1/2!(v)!1/3!(ql)$);
    \draw[->] ($(s)!1/2!(v)$)--($(s)!1/2!(v)!1/3!(r)$);
   % \draw[->] ($(r)!1/2!(v)$)--($(r)!1/2!(v)!1/3!(t)$);
    \draw[->] ($(v)!1/2!(ql)$)--($(v)!1/2!(ql)!1/3!(s)$);
    \draw[->] ($(r)!1/2!(s)$)--($(r)!1/2!(s)!1/3!(qr)$);
  %  \draw[->] ($(r)!1/2!(t)$)--($(r)!1/2!(t)!1/3!(w)$);
  
   \draw[->] ($(r)!1/2!(ur)$)--($(r)!1/2!(ur)!1/3!(qr)$);
    \draw[->] ($(w)!1/2!(ur)$)--($(w)!1/2!(ur)!1/3!(r)$);
    \draw[->] ($(q)!1/2!(w)$)--($(q)!1/2!(w)!1/3!(ur)$);
    \draw[->] ($(sb)!1/2!(w)$)--($(sb)!1/2!(w)!1/3!(q)$);
    \draw[->] ($(sb)!1/2!(t)$)--($(sb)!1/2!(t)!1/3!(w)$);
    
    \draw[->] ($(w)!1/3!(t)$)--($(w)!1/3!(t)!1/3!(c0)$);
           \draw[->] ($(w)!1/2!(c0)$)--($(c0)!1/2!(w)!1/7!(r)$);

      \draw[->] ($(c0)!1/2!(t)$)--($(c0)!1/2!(t)!1/3!(a0)$);
        \draw[->] ($(a0)!1/2!(c0)$)--($(a0)!1/2!(c0)!2/3!(y0)$);
         \draw[->] ($(c0)!1/2!(y0)$)--($(c0)!1/2!(y0)!1/2!(z0)$); 
         \draw[->] ($(c0)!2/3!(z0)$)--($(c0)!2/3!(z0)!1/3!(b0)$); 
 \draw[->] ($(c0)!1/2!(b0)$)--($(c0)!1/2!(b0)!1/6!(r)$); 
 
       \draw[->] ($(cq)!1/2!(t)$)--($(cq)!1/2!(t)!1/3!(aq)$);
        \draw[->] ($(aq)!1/2!(cq)$)--($(aq)!1/2!(cq)!2/3!(yq)$);
         \draw[->] ($(cq)!1/2!(yq)$)--($(cq)!1/2!(yq)!1/2!(zq)$); 
         \draw[->] ($(cq)!2/3!(zq)$)--($(cq)!2/3!(zq)!1/3!(bq)$); 
 \draw[->] ($(cq)!1/2!(bq)$)--($(cq)!1/2!(bq)!1/6!(r)$);

        \draw[->] ($(al)!1/2!(t)$)--($(al)!1/2!(t)!1/3!(cl)$);
        \draw[->] ($(al)!1/2!(cl)$)--($(al)!1/2!(cl)!2/3!(yl)$);
         \draw[->] ($(cl)!1/2!(yl)$)--($(cl)!1/2!(yl)!1/2!(zl)$); 
         \draw[->] ($(cl)!2/3!(zl)$)--($(cl)!2/3!(zl)!1/3!(bl)$); 
 \draw[->] ($(cl)!1/2!(bl)$)--($(cl)!1/2!(bl)!1/6!(r)$);

     \draw[->] ($(v)!1/3!(t)$)--($(v)!1/3!(t)!1/3!(dl)$);
          \draw[->] ($(v)!1/2!(dl)$)--($(v)!1/2!(dl)!1/7!(r)$);
     
  \draw[->] ($(dl)!1/2!(t)$)--($(dl)!1/2!(t)!1/3!(al)$);
   \draw[->] ($(al)!1/2!(dl)$)--($(al)!1/2!(dl)!2/3!(yl)$);
      \draw[->] ($(yl)!1/2!(dl)$)--($(yl)!1/2!(dl)!1/2!(zl)$);
      \draw[->] ($(zl)!1/3!(dl)$)--($(zl)!1/3!(dl)!1/3!(bl)$);
  \draw[->] ($(bl)!1/2!(dl)$)--($(bl)!1/2!(dl)!1/6!(r)$);      
  
    \draw[->] ($(dq)!1/2!(t)$)--($(dq)!1/2!(t)!1/3!(aq)$);
   \draw[->] ($(aq)!1/2!(dq)$)--($(aq)!1/2!(dq)!2/3!(yq)$);
      \draw[->] ($(yq)!1/2!(dq)$)--($(yq)!1/2!(dq)!1/2!(zq)$);
      \draw[->] ($(zq)!1/3!(dq)$)--($(zq)!1/3!(dq)!1/3!(bq)$);
  \draw[->] ($(bq)!1/2!(dq)$)--($(bq)!1/2!(dq)!1/6!(r)$);

      \draw[->] ($(a0)!1/2!(t)$)--($(a0)!1/2!(t)!1/3!(d0)$);
   \draw[->] ($(a0)!1/2!(d0)$)--($(a0)!1/2!(d0)!2/3!(y0)$);
      \draw[->] ($(y0)!1/2!(d0)$)--($(y0)!1/2!(d0)!1/2!(z0)$);
      \draw[->] ($(z0)!1/3!(d0)$)--($(z0)!1/3!(d0)!1/3!(b0)$);
  \draw[->] ($(b0)!1/2!(d0)$)--($(b0)!1/2!(d0)!1/6!(r)$);      
  
    \draw[->] ($(t)!2/3!(d0)$)--($(t)!2/3!(d0)!1/3!(cq)$); 
      \draw[->] ($(cq)!1/2!(d0)$)--($(cq)!1/2!(d0)!1/6!(r)$); 
      
          \draw[->] ($(t)!2/3!(cl)$)--($(t)!2/3!(cl)!1/3!(dq)$); 
      \draw[->] ($(dq)!1/2!(cl)$)--($(dq)!1/2!(cl)!1/6!(r)$); 
      
        \draw[SeaGreen4,thick,->] (w)--($(w)!1/3!(r)$); 
        \draw[SeaGreen4,thick,->] (v)--($(v)!1/3!(r)$); 
           \draw[SeaGreen4,thick,->] (c0)--($(c0)!5/12!(r)$); 
               \draw[SeaGreen4,thick,->] (dl)--($(dl)!5/12!(r)$); 
               
                          \draw[SeaGreen4,thick,->] (dq)--($(dq)!1/2!(r)$); 
               \draw[SeaGreen4,thick,->] (cq)--($(cq)!1/2!(r)$);
                          \draw[SeaGreen4,thick,->] (d0)--($(d0)!1/2!(r)$); 
               \draw[SeaGreen4,thick,->] (cl)--($(cl)!1/2!(r)$);
      
       \draw[SeaGreen4,thick,->] (t)--($(t)!2/3!(aq)$);

       \draw[SeaGreen4,thick,->] (aq)--($(aq)!2/3!(yq)$); 
        \draw[RoyalBlue,very thick,->] (yq)--($(yq)!1/2!(zq)$); 
        
               \draw[SeaGreen4,thick,->] (a0)--($(a0)!2/3!(y0)$); 
        \draw[RoyalBlue,very thick,->] (y0)--($(y0)!1/2!(z0)$); 
        
               \draw[SeaGreen4,thick,->] (al)--($(al)!2/3!(yl)$); 
        \draw[RoyalBlue,very thick,->] (yl)--($(yl)!1/2!(zl)$); 
      
% 
%    \draw[thick,->,lightgray] (u)--($(u)!1/2!(ql)$);
  %  \draw[thick,->,lightgray] (ur)--($(ur)!1/2!(qr)$);
    
  %  \draw[thick,->] (ur)--($(ur)!1/2!(q)$);
 %   \draw[thick,->,lightgray] (ql)--($(ql)!1/2!(s)$);
  %  \draw[thick,->,lightgray] (qr)--($(qr)!1/2!(s)$);
  %  \draw[thick,->] (q)--($(q)!1/2!(sb)$);
  %  \draw[thick,->] (r)--($(r)!1/2!(w)$);
  %  \draw[very thick,->,CornflowerBlue] (v)--($(v)!1/2!(t)$);
  %  \draw[very thick,->,CornflowerBlue] (w)--($(w)!1/2!(t)$);

    \node[below left] at (r) {$6$};
  %  \node[below left] at (w) {$5$};
  %  \node[above] at (t) {$4$};
     \node[above left=0.1cm,DarkBlue] at (t) {$\Gamma_1$};
  %  \node[below right] at (v) {$7$};
  %  \node[below right,lightgray] at (s) {$3$};
    \node[above] at (u) {$1$};
    \node[above] at (sb) {$3$};
   \node[left] at (q) {$2$};
    \node[below left] at (ur) {$1$};
    \node[right] at (ql) {$2$};
    \node[below] at (qr) {$2$};
     \node[below right] at (s) {$3$};
   % \node[above] at (x1) {$x_1$};
   % \node[above] at (x2) {$x_2$};
   % \node[above] at (xm1) {$x_{m-1}$};

%    \node[above left=0mm and -1mm] at (c0) {$c_1$};
%    \node[above right] at (d0) {$d_1$};
  %  \node[right] at (a0) {$a_1$};
    
%        \node[left] at (aq) {$a_2$};
%    \node[right] at (b0) {$b_1$};
%        \node[right] at (bq) {$b_2$};
 %           \node[right] at (bl) {$b_3$};
  % \node[above right=-1mm and .5mm] at (y0) {\tiny $y_1$};
  %  \node[below right=-1mm and .5mm] at (z0) {\tiny $z_1$};    
 %\node[right] at($(y0)!1/2!(z0)$) {$t_1$};

% \node[left] at($(yq)!1/2!(zq)$) {$t_2$};

%    \node[above left] at (cl) {$c_3$};
   % \node[above right=0mm and -1mm] at (dl) {$d_3$};
  %  \node[left] at (al) {$a_3$};
    %\node[left] at (bl) {$b_3$};
    %\node[above left=-1mm and .5mm] at (yl) {\tiny $y_3$};
    %\node[below left=-1mm and .5mm] at (zl) {\tiny $z_3$};    
 %\node[left] at($(yl)!1/2!(zl)$) {$t_3$};
    \node[above] at($(sb)!1/2!(u)$) {$s_1$};
    
   %  \node[right] at($(y0)!1/2!(z0)$) {$t_1$};
          %\node[right] at($(yq)!1/2!(zq)$) {$t_2$};
             %  \node[right] at($(yl)!1/2!(zl)$) {$t_3$};

    \node[below] at (lab) {\Large $(a)$};
    
             \coordinate (rnew) at (0+5,-1.7);
    \coordinate (wnew) at (-1.7+5,0);
    \coordinate (tnew) at (0+5,1.7);
    \coordinate (vnew) at (1.7+5,0);
    \coordinate (snew) at (2+5,-2);
    \coordinate (unew) at (2+5,2);
    \coordinate (sbnew) at (-2+5,2);
    \coordinate (qnew) at (-2+5,0);
    \coordinate (urnew) at (-2+5,-2);
    \coordinate (qlnew) at (2+5,0);
    \coordinate (qrnew) at (0+5,-2);
    \coordinate (x1new) at (0+5,2);
    %\coordinate (x2) at (-0.4+5,2);
 %   \coordinate (xm1) at (1.2+5,2);
  %  \coordinate (x1) at (-1.2+5,2);
  %  \coordinate (x2) at (-0.4+5,2);
  %  \coordinate (xm1) at (1.2+5,2);
  
      \coordinate (c0new) at (-1.5+5.25,0);
    \coordinate (d0new) at (-0.6+5.25,0);
    \coordinate (a0new) at (-0.9+5.25,0.45);
    \coordinate (b0new) at (-0.9+5.25,-0.45);
    \coordinate (y0new) at (-1.05+5.25,0.2);
    \coordinate (z0new) at (-1.05+5.25,-0.2);

    \coordinate (clnew) at (0.6+4.75,0);
    \coordinate (dlnew) at (1.5+4.75,0);
    \coordinate (alnew) at (0.9+4.75,0.45);
    \coordinate (blnew) at (0.9+4.75,-0.45);
    \coordinate (ylnew) at (1.05+4.75,0.2);
    \coordinate (zlnew) at (1.05+4.75,-0.2);
     \coordinate (labnew) at (0+4.75,-2.2);
     
    \fill[verylightgray] (tnew)--(wnew)--(rnew)--(vnew)--cycle;
    \fill[lightergray] 
    (unew)--(snew)--(urnew)--(sbnew)--cycle 
    (tnew)--(wnew)--(rnew)--(vnew)--cycle
    (a0new)--(d0new)--(b0new)--(c0new)--cycle
    (alnew)--(dlnew)--(blnew)--(clnew)--cycle
    ;

               \fill[verylightblue] (unew)--(sbnew)--(tnew);
 
    \draw[lightergray] (unew)--(snew)--(urnew)--(sbnew);

%    \draw (sbnew)--(x2new);
%    \draw[dotted] (x2new)--(xm1new);
%    \draw (xm1new)--(unew);

    \draw (urnew)--(sbnew);
    \draw (rnew)--(vnew);
    \draw (vnew)--(unew);
    \draw (tnew)--(sbnew);
    \draw (tnew)--(vnew);
    \draw (tnew)--(wnew);
    \draw (wnew)--(qnew);
    \draw (wnew)--(urnew);
    \draw (wnew)--(sbnew);
    \draw (rnew)--(urnew);
    \draw (rnew)--(qrnew);
    \draw (snew)--(rnew);
    \draw (snew)--(vnew);
    \draw (qlnew)--(vnew);
    \draw (unew)--(tnew);

    \draw (rnew)--(wnew);

    \draw (tnew)--(x1new);
 %   \draw (tnew)--(x2new);
 %   \draw (tnew)--(xm1new);
    
    \draw (wnew)--(c0new);

    \draw (a0new)--(d0new)--(b0new)--(c0new)--cycle;
    \draw (y0new)--(d0new)--(z0new)--(c0new)--cycle;
    \draw (tnew)--(a0new)--(y0new)--(z0new)--(b0new)--(rnew);
    \draw (tnew)--(c0new)--(rnew);
    \draw (tnew)--(d0new)--(rnew);
    
  %  \draw[dotted] (d0new)--(clnew);
  \draw (rnew)--(tnew);
    
    \draw (alnew)--(dlnew)--(blnew)--(clnew)--cycle;
    \draw (ylnew)--(dlnew)--(zlnew)--(clnew)--cycle;
    \draw (tnew)--(alnew)--(ylnew)--(zlnew)--(blnew)--(rnew);
    \draw (tnew)--(clnew)--(rnew);
    \draw (tnew)--(dlnew)--(rnew);

    \draw (dlnew)--(vnew);
    
          \draw (unew)--(sbnew);
          
  \draw[->] ($(unew)!1/2!(tnew)$)--($(unew)!1/2!(tnew)!1/2!(x1new)$);
   \draw[->] ($(x1new)!1/2!(sbnew)$)--($(x1new)!1/2!(sbnew)!1/2!(tnew)$);

    \draw[->] ($(vnew)!1/2!(tnew)$)--($(vnew)!1/2!(tnew)!1/3!(unew)$);
    \draw[->] ($(unew)!1/2!(vnew)$)--($(unew)!1/2!(vnew)!1/3!(qlnew)$);
    \draw[->] ($(snew)!1/2!(vnew)$)--($(snew)!1/2!(vnew)!1/3!(rnew)$);
   % \draw[->] ($(rnew)!1/2!(vnew)$)--($(rnew)!1/2!(vnew)!1/3!(tnew)$);
    \draw[->] ($(vnew)!1/2!(qlnew)$)--($(vnew)!1/2!(qlnew)!1/3!(snew)$);
    \draw[->] ($(rnew)!1/2!(snew)$)--($(rnew)!1/2!(snew)!1/3!(qrnew)$);
  %  \draw[->] ($(rnew)!1/2!(tnew)$)--($(rnew)!1/2!(tnew)!1/3!(wnew)$);
  
   \draw[->] ($(rnew)!1/2!(urnew)$)--($(rnew)!1/2!(urnew)!1/3!(qrnew)$);
    \draw[->] ($(wnew)!1/2!(urnew)$)--($(wnew)!1/2!(urnew)!1/3!(rnew)$);
    \draw[->] ($(qnew)!1/2!(wnew)$)--($(qnew)!1/2!(wnew)!1/3!(urnew)$);
    \draw[->] ($(sbnew)!1/2!(wnew)$)--($(sbnew)!1/2!(wnew)!1/3!(qnew)$);
    \draw[->] ($(sbnew)!1/2!(tnew)$)--($(sbnew)!1/2!(tnew)!1/3!(wnew)$);
    
    \draw[->] ($(wnew)!1/3!(tnew)$)--($(wnew)!1/3!(tnew)!1/3!(c0new)$);
           \draw[->] ($(wnew)!1/2!(c0new)$)--($(c0new)!1/2!(wnew)!1/7!(rnew)$);

       \draw[->] ($(c0new)!1/2!(tnew)$)--($(c0new)!1/2!(tnew)!1/3!(a0new)$);
        \draw[->] ($(a0new)!1/2!(c0new)$)--($(a0new)!1/2!(c0new)!2/3!(y0new)$);
         \draw[->] ($(c0new)!1/2!(y0new)$)--($(c0new)!1/2!(y0new)!1/2!(z0new)$); 
         \draw[->] ($(c0new)!2/3!(z0new)$)--($(c0new)!2/3!(z0new)!1/3!(b0new)$); 
 \draw[->] ($(c0new)!1/2!(b0new)$)--($(c0new)!1/2!(b0new)!1/6!(rnew)$);

        \draw[->] ($(clnew)!1/2!(tnew)$)--($(clnew)!1/2!(tnew)!1/3!(alnew)$);
        \draw[->] ($(alnew)!1/2!(clnew)$)--($(alnew)!1/2!(clnew)!2/3!(ylnew)$);
         \draw[->] ($(clnew)!1/2!(ylnew)$)--($(clnew)!1/2!(ylnew)!1/2!(zlnew)$); 
         \draw[->] ($(clnew)!2/3!(zlnew)$)--($(clnew)!2/3!(zlnew)!1/3!(blnew)$); 
 \draw[->] ($(clnew)!1/2!(blnew)$)--($(clnew)!1/2!(blnew)!1/6!(rnew)$);

   \draw[->] ($(alnew)!1/2!(tnew)$)--($(alnew)!1/2!(tnew)!1/3!(dlnew)$);
   \draw[->] ($(alnew)!1/2!(dlnew)$)--($(alnew)!1/2!(dlnew)!2/3!(ylnew)$);
      \draw[->] ($(ylnew)!1/2!(dlnew)$)--($(ylnew)!1/2!(dlnew)!1/2!(zlnew)$);
      \draw[->] ($(zlnew)!1/3!(dlnew)$)--($(zlnew)!1/3!(dlnew)!1/3!(blnew)$);
  \draw[->] ($(blnew)!1/2!(dlnew)$)--($(blnew)!1/2!(dlnew)!1/6!(rnew)$);

      \draw[->] ($(a0new)!1/2!(tnew)$)--($(a0new)!1/2!(tnew)!1/3!(d0new)$);
   \draw[->] ($(a0new)!1/2!(d0new)$)--($(a0new)!1/2!(d0new)!2/3!(y0new)$);
      \draw[->] ($(y0new)!1/2!(d0new)$)--($(y0new)!1/2!(d0new)!1/2!(z0new)$);
      \draw[->] ($(z0new)!1/3!(d0new)$)--($(z0new)!1/3!(d0new)!1/3!(b0new)$);
  \draw[->] ($(b0new)!1/2!(d0new)$)--($(b0new)!1/2!(d0new)!1/6!(rnew)$);      
  
     \draw[->] ($(d0new)!1/2!(tnew)$)--($(d0new)!1/2!(tnew)!1/3!(rnew)$);
     
        \draw[->] ($(rnew)!1/2!(tnew)$)--($(rnew)!1/2!(tnew)!1/2!(clnew)$);

                 \draw[->] ($(dlnew)!1/3!(tnew)$)--($(dlnew)!1/3!(tnew)!1/3!(vnew)$);
          \draw[->] ($(vnew)!1/3!(dlnew)$)--($(vnew)!1/3!(dlnew)!1/7!(rnew)$);
          
                  \draw[SeaGreen4,thick,->] (wnew)--($(wnew)!1/3!(rnew)$); 
        \draw[SeaGreen4,thick,->] (vnew)--($(vnew)!1/3!(rnew)$); 
           \draw[SeaGreen4,thick,->] (c0new)--($(c0new)!5/12!(rnew)$); 
               \draw[SeaGreen4,thick,->] (dlnew)--($(dlnew)!5/12!(rnew)$);

                          \draw[SeaGreen4,thick,->] (d0new)--($(d0new)!1/2!(rnew)$); 
               \draw[SeaGreen4,thick,->] (clnew)--($(clnew)!1/2!(rnew)$);

\draw[ultra thick, Maroon3] (sbnew)--(urnew);
\draw[ultra thick, Maroon3] (qrnew)--(rnew);
\draw[ultra thick, Maroon3] (b0new)--(rnew);
\draw[ultra thick, Maroon3] (blnew)--(rnew);
\draw[very thick, SeaGreen4] (c0new)--(rnew);
\draw[very thick, SeaGreen4] (clnew)--(rnew);
\draw[very thick, SeaGreen4] (d0new)--(rnew);
\draw[very thick, SeaGreen4] (dlnew)--(rnew);
\draw[very thick, SeaGreen4] (wnew)--(rnew);
\draw[very thick, SeaGreen4] (vnew)--(rnew);

\draw[ultra thick, Maroon3] (b0new)--(z0new);
\draw[very thick, RoyalBlue] (z0new)--(y0new);
\draw[very thick, SeaGreen4] (y0new)--(a0new);

\draw[ultra thick, Maroon3] (blnew)--(zlnew);
\draw[very thick, RoyalBlue] (zlnew)--(ylnew);
\draw[very thick, SeaGreen4] (ylnew)--(alnew);

\draw[very thick, SeaGreen4] (tnew)--(x1new);
\draw[very thick, RoyalBlue] (x1new)--(unew);

                \draw[SeaGreen4,thick,->] (alnew)--($(alnew)!2/3!(ylnew)$); 
        \draw[RoyalBlue,very thick,->] (ylnew)--($(ylnew)!1/2!(zlnew)$); 
        
         \draw[SeaGreen4,thick,->] (a0new)--($(a0new)!2/3!(y0new)$); 
        \draw[RoyalBlue,very thick,->] (y0new)--($(y0new)!1/2!(z0new)$); 
     
       \draw[SeaGreen4,thick,->] (tnew)--($(tnew)!2/3!(x1new)$);                             
      \draw[RoyalBlue,very thick,->] (x1new)--($(x1new)!1/2!(unew)$); 
%\draw[very thick, SeaGreen4] (rnew)--(tnew);

    \node[below left] at (rnew) {$6$};
  %  \node[below left] at (wnew) {$5$};
 %   \node[above left] at (tnew) {$4$};
   % \node[below right] at (vnew) {$7$};
    \node[below right] at (snew) {$3$};
         \node[right, DarkBlue] at ([shift={(160:0.6)}]tnew) {$\Gamma_1$};
     \node[left, DarkBlue] at ([shift={(20:0.6)}]tnew) {$\Gamma_2$};

    \node[above] at (unew) {$1$};
    \node[above] at (sbnew) {$3$};
    \node[left] at (qnew) {$2$};
    \node[below left] at (urnew) {$1$};
    \node[right] at (qlnew) {$2$};
    \node[below] at (qrnew) {$2$};

 %   \node[above] at (x1new) {$8$};
%    \node[above] at (x2new) {$x_2$};
%    \node[above] at (xm1new) {$x_{m-1}$};

 %   \node[above left=0mm and -1mm] at (c0new) {$c_1$};
 %   \node[above right] at (d0new) {$d_1$};
  %  \node[right] at (a0new) {$a_1$};
  %  \node[right] at (b0new) {$b_1$};
   % \node[above right=-1mm and .5mm] at (y0new) {\tiny $y_1$};
   % \node[below right=-1mm and .5mm] at (z0new) {\tiny $z_1$};    
   % \node[right] at($(y0new)!1/2!(z0new)$) {$t_1$};

 %   \node[above left] at (clnew) {$c_2$};
  %  \node[above right=0mm and -1mm] at (dlnew) {$d_2$};
 %   \node[left] at (alnew) {$a_2$};
  %  \node[left] at (blnew) {$b_2$};
%    \node[above left=-1mm and .5mm] at (ylnew) {\tiny $y_2$};
%    \node[below left=-1mm and .5mm] at (zlnew) {\tiny $z_2$};    
 %   \node[left] at($(ylnew)!1/2!(zlnew)$) {$t_2$};

    \node[above] at($(sbnew)!1/2!(x1new)$) {$s_1$};
        \node[above] at($(unew)!1/2!(x1new)$) {$s_2$};
 %   \node[above] at($(x1new)!1/2!(x2new)$) {$s_2$};
 %   \node[above] at($(xm1new)!1/2!(unew)$) {$s_m$};

    \node[below] at (labnew) {\Large $(b)$};

}
\end{tikzpicture}
  }
  \caption{The  figure $(a)$ on the left depicts $\gadget_{1,3}$ that is collapsible through a unique free face, namely $s_1 = \{3,1\}$. 
  The complex $\gadget_{1,3}$ collapses to the subcomplex induced by the highlighted edges.
 When $s_1$ cannot be made free (because of edge identifications from other dunce hats), 
  then $s_1$ and $\Gamma_1$ are made critical. In any case, the remaining $2$-collapses are executed as shown in  figure $(a)$.
  The figure $(b)$ on the right depicts  $\gadget_{2,2}$ that is collapsible through two free faces, namely $s_1  = \{3,8\}$ and $s_2=\{8,1\}$. 
  There exists a collapsing sequence for $\gadget_{2,2}$ starting from the gradient pair $(s_1,\Gamma_1)$ such that $\gadget_{2,2}$ collapses to the subcomplex induced by the highlighted edges. A symmetric statement can be made for a collapse starting from  $(s_2,\Gamma_2)$. 
   \label{fig:gadgetpaths}}
  
\end{figure}
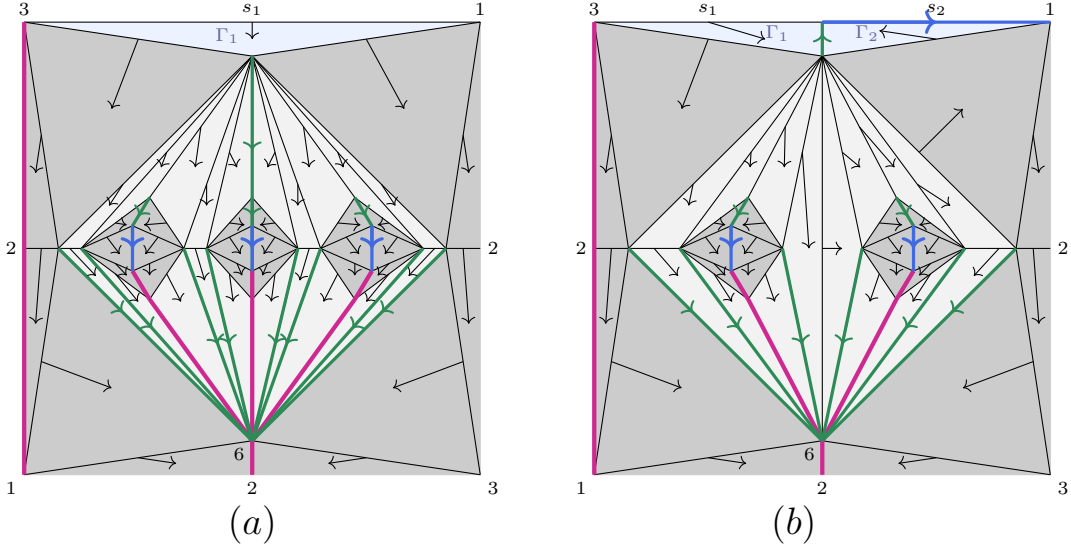

 \subsection{Construction of the complex $K(C)$ and the map $\ICC(C,\VCC)$} \label{sec:kcmain}

Given a circuit $C$, we first explain the construction of an intermediate complex $K'(C)$.  
We use the notation $\gadget_{m,\ell}^{\ijpair}$ to refer to the $j$-th copy of the the dunce hat associated to gate~$i$, having $m$ $s$-edges and $\ell$ $t$-edges.
Sometimes, we suppress the subscript, and use the notation $\gadget^{\ijpair}$ in place of $\gadget_{m,\ell}^{\ijpair}$.
As illustrated in~\Cref{fig:opgates}, to each input gate $G_i$ we associate a dunce hat $\gadget^{(i,1)}$. To the  output gate  $G_o$, we associate $n$ copies of dunce hats $\{ \gadget^{(o,j)} \}_{j = 1}^{n} $.
%in~\Cref{sub:gadgetfig}
Moreover,~\Cref{fig:nonoutputgates} depicts how we associate to each ordinary gate $G_i$  $n$ blocks $\{^{1}\gadget^{(i,j)},\, ^{2}\gadget^{(i,j)}, \, ^{3}\gadget^{(i,j)} \}_{j = 1}^{n} $.
The superscript to the left indexes dunce hats internal to the block.
 %in~\Cref{sub:gadgetfig}
We call $^{3}\gadget^{(i,j)}$ the \emph{output component} of block $j$ associated to $G_i$. Likewise, we call $^{1}\gadget^{(i,j)}$  and $^{2}\gadget^{(i,j)}$ the \emph{input components} of block $j$ associated to $G_i$.
If $G_p$ serves as one of the two inputs to $G_q$, we say that $G_p$ is a \emph{predecessor} of $G_q$, and $G_q$ is the \emph{successor} of $G_p$.

A simplex labeled $\sigma$ in $\gadget_{m,\ell}$ is correspondingly labeled as $\sigma^{\pjpair}$ in  $\gadget_{m,\ell}^{\pjpair}$, (respectively as $^{k}\sigma^{\pjpair}$ in  $^{k}\gadget_{m,\ell}^{\pjpair}$).
We call the unique $s$-edge of the dunce hat associated to an input gate $G_i$, namely $s^{(i,1)}_1$, its \emph{feedback edge}. 
As depicted in~\Cref{fig:nonoutputgates}, for an ordinary gate $G_p$, for each $j \in [1,n]$, the $s_2$ edges of  $^{1}\gadget^{(p,j)}$ and  $\, ^{2}\gadget^{(p,j)}$, namely $^{1}s^{(p,j)}_2$ and $^{2}s^{(p,j)}_2$ respectively, are called the \emph{feedback edges} of the $j$-th block associated 
to $G_p$. For an ordinary gate $G_p$, for each $j \in [1,n]$, the $s_1$ edges of  $^{1}\gadget^{(p,j)}$ and  $\, ^{2}\gadget^{(p,j)}$,  namely $^{1}s^{(p,j)}_1$ and $^{2}s^{(p,j)}_1$ respectively, are called the \emph{input edges} of the $j$-th block associated to $G_p$.
%in~\Cref{sub:gadgetfig}
For the output gate $G_o$, the  $s_1$ and  $s_2$ edges of $\gadget^{(o,j)}$, namely $s^{(o,j)}_1$ and $s^{(o,j)}_2$ respectively, are called the \emph{input edges} of the $j$-th copy associated to $G_o$.

To bring the notation of the edges closer to their function in the gadget, for the rest of the paper, we use the following alternative notation for $s$-edges. We denote the feedback edges  $s^{(i,1)}_1$, $^{1}s^{(p,j)}_2$ and $^{2}s^{(p,j)}_2$ described above as  $s^{(i,1)}_{f}$,  $s^{(p,j)}_{f_1}$ and  $s^{(p,j)}_{f_2}$ respectively. Also, we denote the input edges $^{1}s^{(p,j)}_1$,  $^{2}s^{(p,j)}_1$, $s^{(o,j)}_1$ and $s^{(o,j)}_2$ described above by $s^{(p,j)}_{\iota_1}$, $s^{(p,j)}_{\iota_2}$, $s^{(o,j)}_{\iota_1}$ and $s^{(o,j)}_{\iota_2}$ respectively.  Please see~\Cref{fig:nonoutputgates-glue}  for an example.

% in~\Cref{sub:gadgetfig}
 
We start with a disjoint union of dunce hats (or blocks) associated to each gate.  
Then, for an ordinary gate $G_p$ that is a predecessor of $G_q$,  for all $j,k \in [1,n]$ two distinct $t$-edges from the $j$-th copy (output component of the block) associated to $G_q$ are identified to the two feedback edges of the   $k$-th block associated to $G_p$.
Also, for all $j,k \in [1,n]$ a $t$-edge from the output component of the $k$-th  block associated to $G_p$ is identified to an input edge of the $j$-th copy (block) associated to $G_q$.

For an input gate $G_p$ that is a predecessor of $G_q$, for all $j \in [1,n]$, a $t$-edge from the $j$-th copy (output component of the block) associated to $G_q$ is identified to the feedback edge of the unique dunce hat associated to $G_p$.
Also, for all $j \in [1,n]$, a $t$-edge from the  dunce hat associated to $G_p$ is attached to the input edge of  the $j$-th copy (block) associated to $G_q$.

Moreover, these identifications are done to ensure that: a feedback edge of a copy (block) associated to $G_p$ is free only if all the copies (output components of all the blocks) associated to \emph{all} the successors of $G_p$ have been erased, and an input 
edge of a copy (block) associated to $G_q$ is free only if the unique copy (all the output components of all the block) associated to the predecessors of $G_p$ have been erased.
Please refer to~\Cref{fig:nonoutputgates-glue} for an example illustrating the identifications.

 %in~\Cref{sub:gadgetfig}
 
 \begin{figure}
\centering
\includegraphics[scale=0.72]{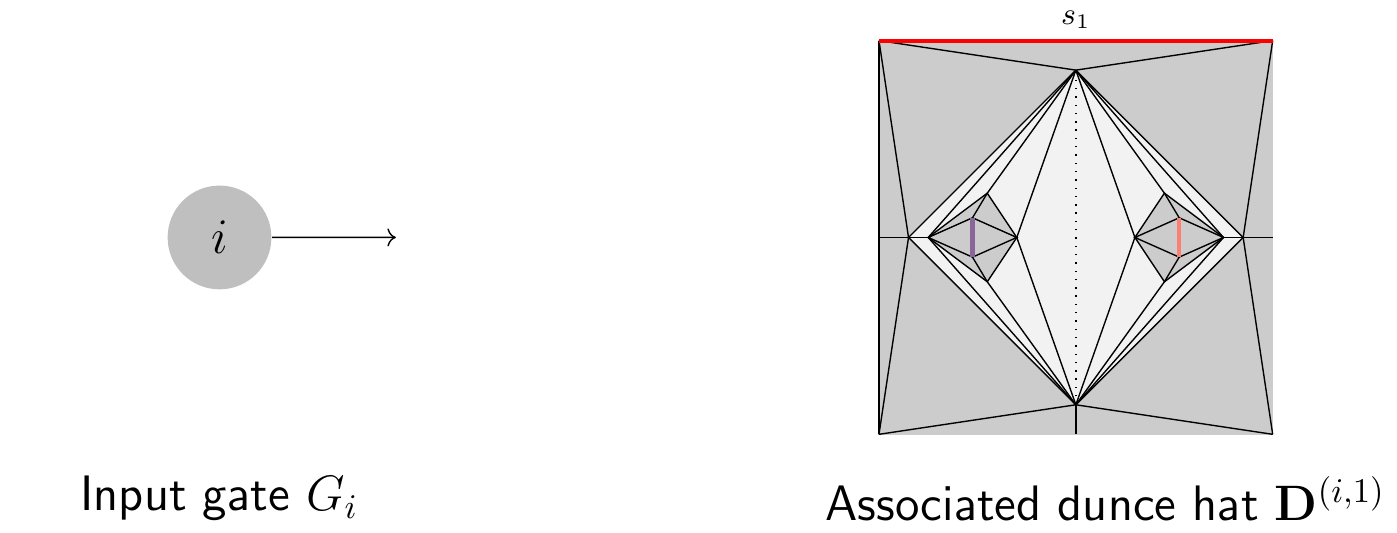} \\
\vspace{1cm}
\includegraphics[scale=0.72]{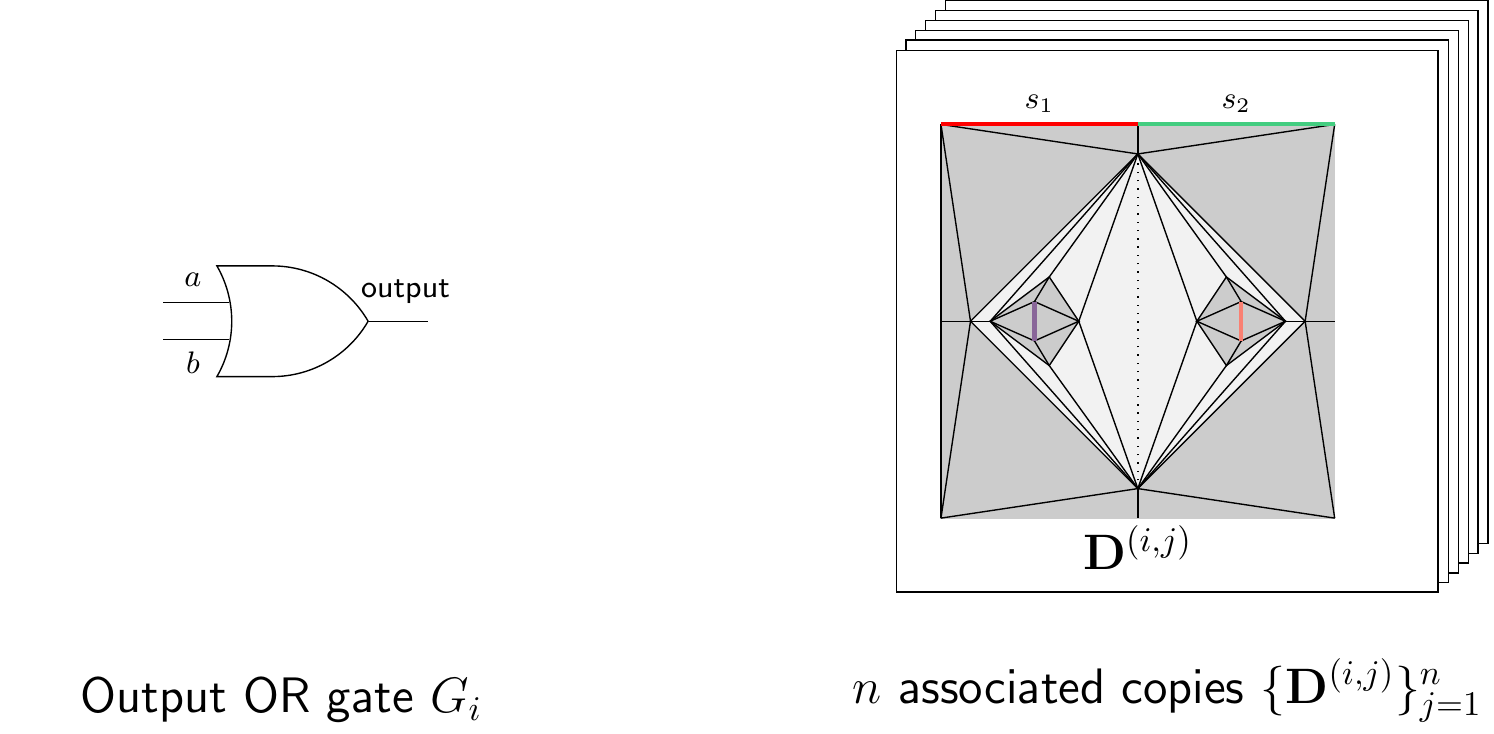} \\
\vspace{1cm}
\includegraphics[scale=0.72]{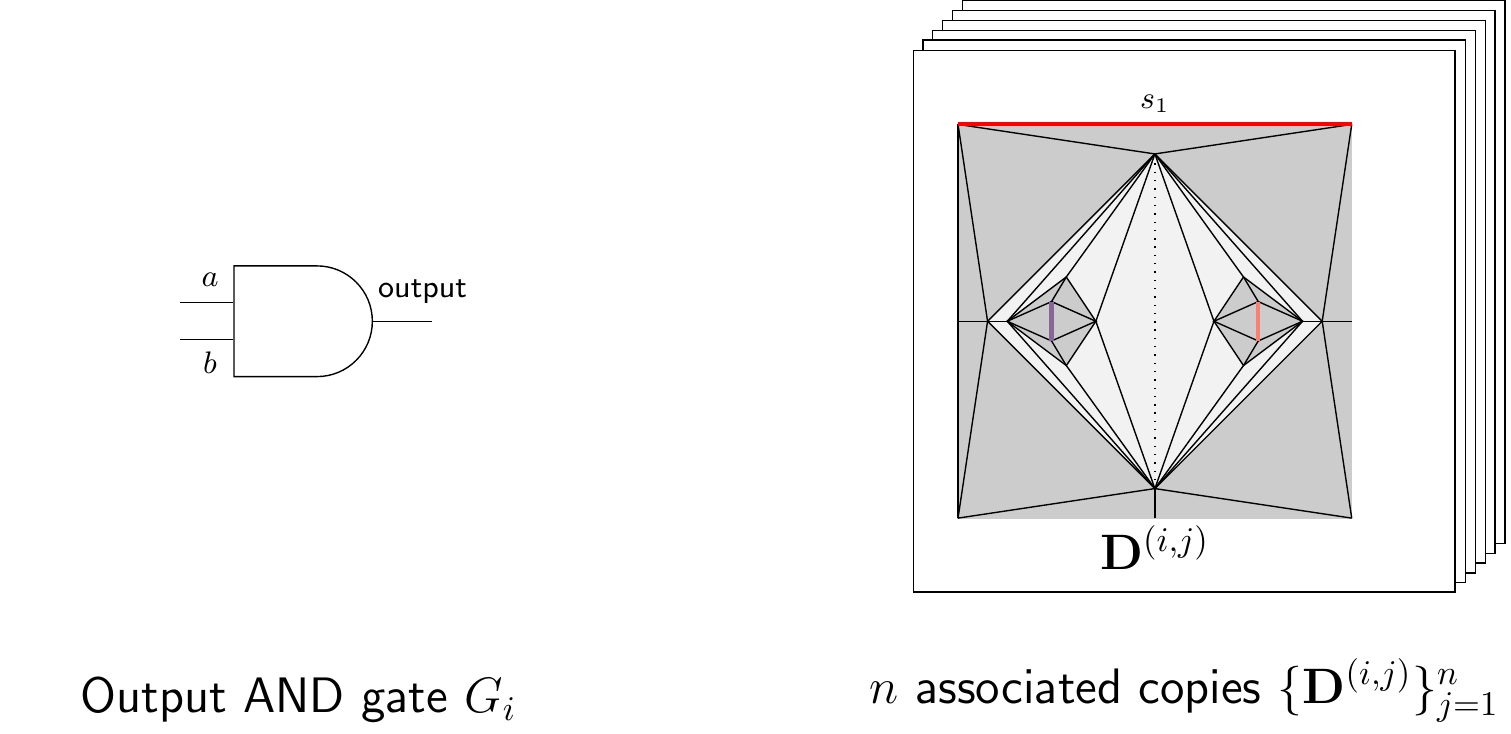} 

 \caption{ In all three figures, the distinguished  edges are highlighted. The top figure shows an input gate and the  dunce hat associated to it. 
 We conceive the input gate as activated when the associated dunce hat has a critical $2$-simplex in it. If the  dunce hat doesn't have critical $2$-simplices, then  $s_1$  must be paired to its coface for the dunce hat to be erased. The edge $s_1$ supports a feedback mechanism. In particular, if all the dunce hats associated to the output gate are erased without activating $G_i$, then we need an \emph{alternative} means to erase $G_i$, which is provided by $s_1$.
 The figure in the middle (resp. bottom) shows an output or-gate (resp. an output and-gate) and the associated $n$ copies of dunce hats. In both cases, the $j$-th copy consists of a single dunce hat $\gadget^{(i,j)}$,
  where $j \in [1,n]$. 
   The idea behind the dunce hat associated to the or-gate is that if either  $s_1$ or  $s_2$  is free, then $ \gadget^{(i,j)}$ can be erased. The idea behind the dunce hat associated to the and-gate is that if $s_1$  is  free, then $ \gadget^{(i,j)}$ can be erased. Finally, we have $n$ copies instead of a single copy per gate to ensure that optimum values of $\mincircuitsat$ and $\minrmm$ are the same. 
   \label{fig:opgates}}
\end{figure}

\begin{figure}
\centering
\includegraphics[scale=0.64]{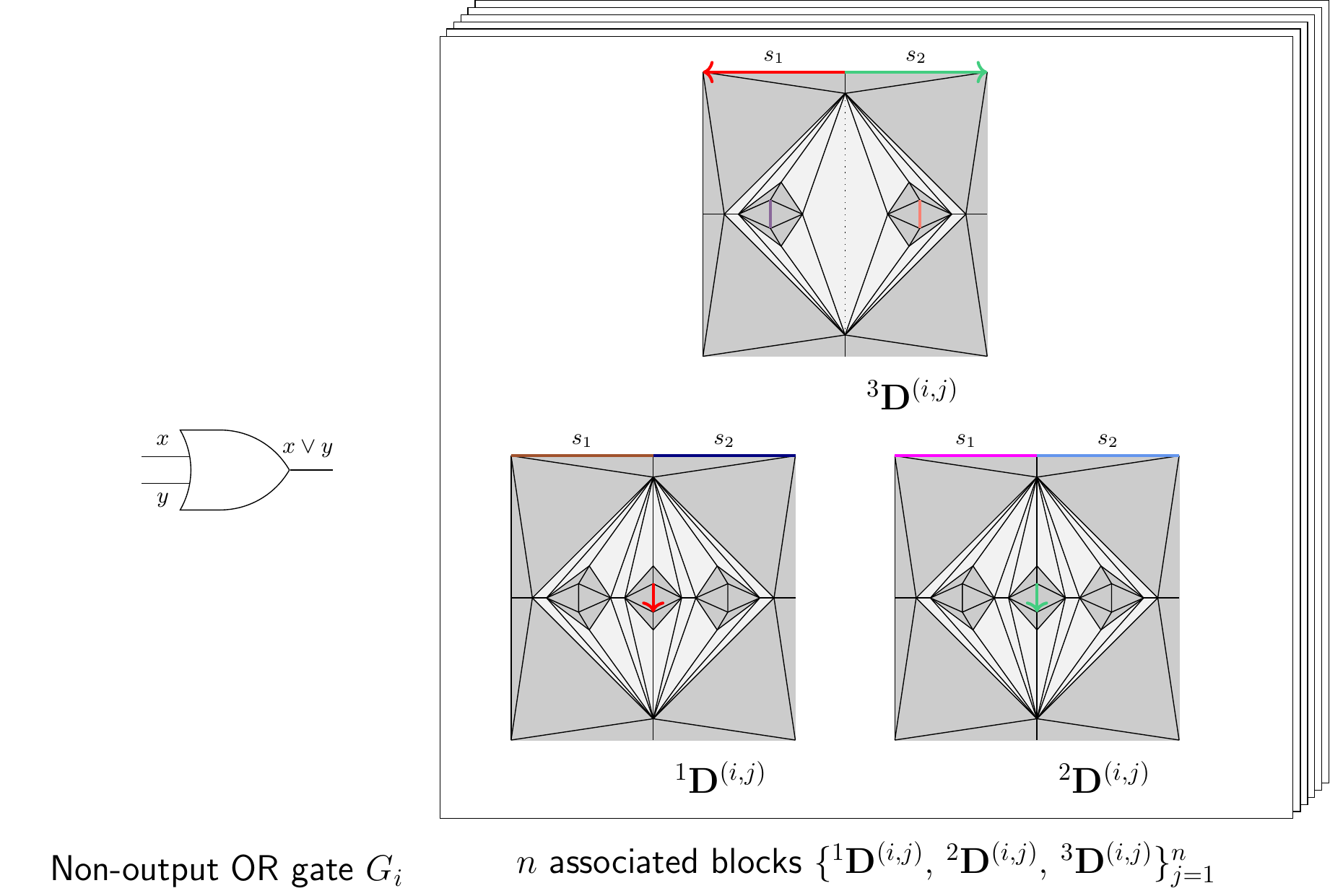} \\
\vspace{5mm}
\includegraphics[scale=0.64]{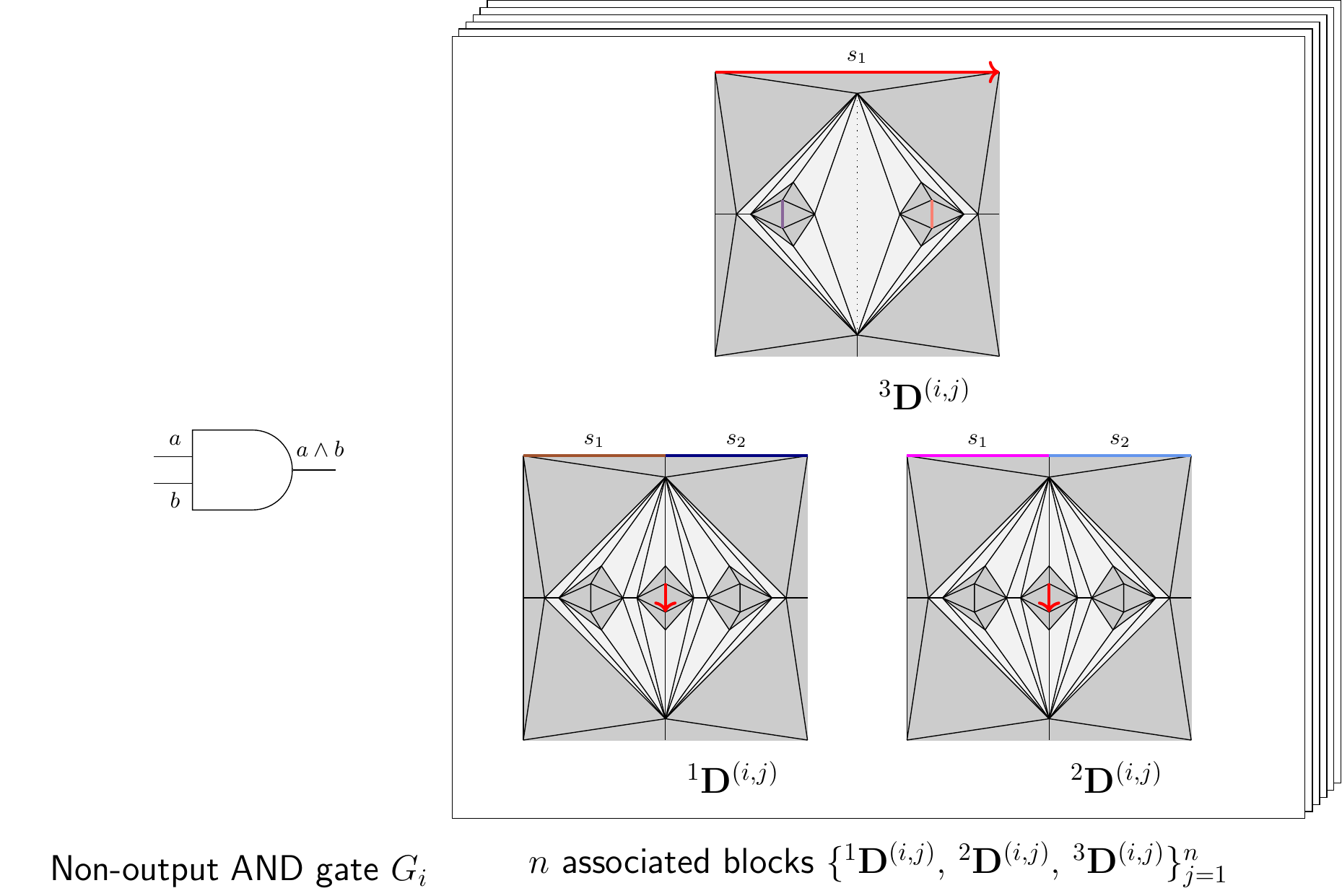} 

 \caption{ The figure on the top (resp. bottom) shows a non-output or-gate (resp. a non-output and-gate) and the associated $n$ blocks of dunce hats. In both cases $j$-th block consists of 3 dunce hats $\{^{1}\gadget^{(i,j)},\, ^{2}\gadget^{(i,j)}, \, ^{3}\gadget^{(i,j)} \}$,
  where $j \in [1,n]$. All distinguished  edges are highlighted, and identical color coding indicates identifications. That is, red edges are glued to red edges and green to green. The arrows on the highlighted edges show the orientations of identifications.
   The idea behind the blocks associated to the or-gate is that if either the $s_1$ edge of $ ^{1}\gadget^{(i,j)}$ or the $s_1$ edge of $ ^{2}\gadget^{(i,j)}$ is free, then all three dunce hats in the $j$-th block can be erased. The idea behind the blocks associated to the and-gate is that if the $s_1$ edge of $^{1}\gadget^{(i,j)}$ and the $s_1$ edge of $ ^{2}\gadget^{(i,j)}$ are free, then all three dunce hats in the $j$-th block can be erased. For each block, the dark and the light blue  $s_2$ edges of $ ^{1}\gadget^{(i,j)}$ and $ ^{2}\gadget^{(i,j)}$ respectively support a feedback mechanism. In particular, if the dunce hats associated to the output gate are erased then, we need an \emph{alternative} means to erase all the dunce hats, since the satisfaction of the output gate is all we really care about. Finally, we have $n$ blocks instead of a single block per gate to ensure that optimum values of $\mincircuitsat$ and $\minrmm$ are the same. 
   \label{fig:nonoutputgates}}
\end{figure}

\begin{figure}
\centering
\includegraphics[scale=0.6]{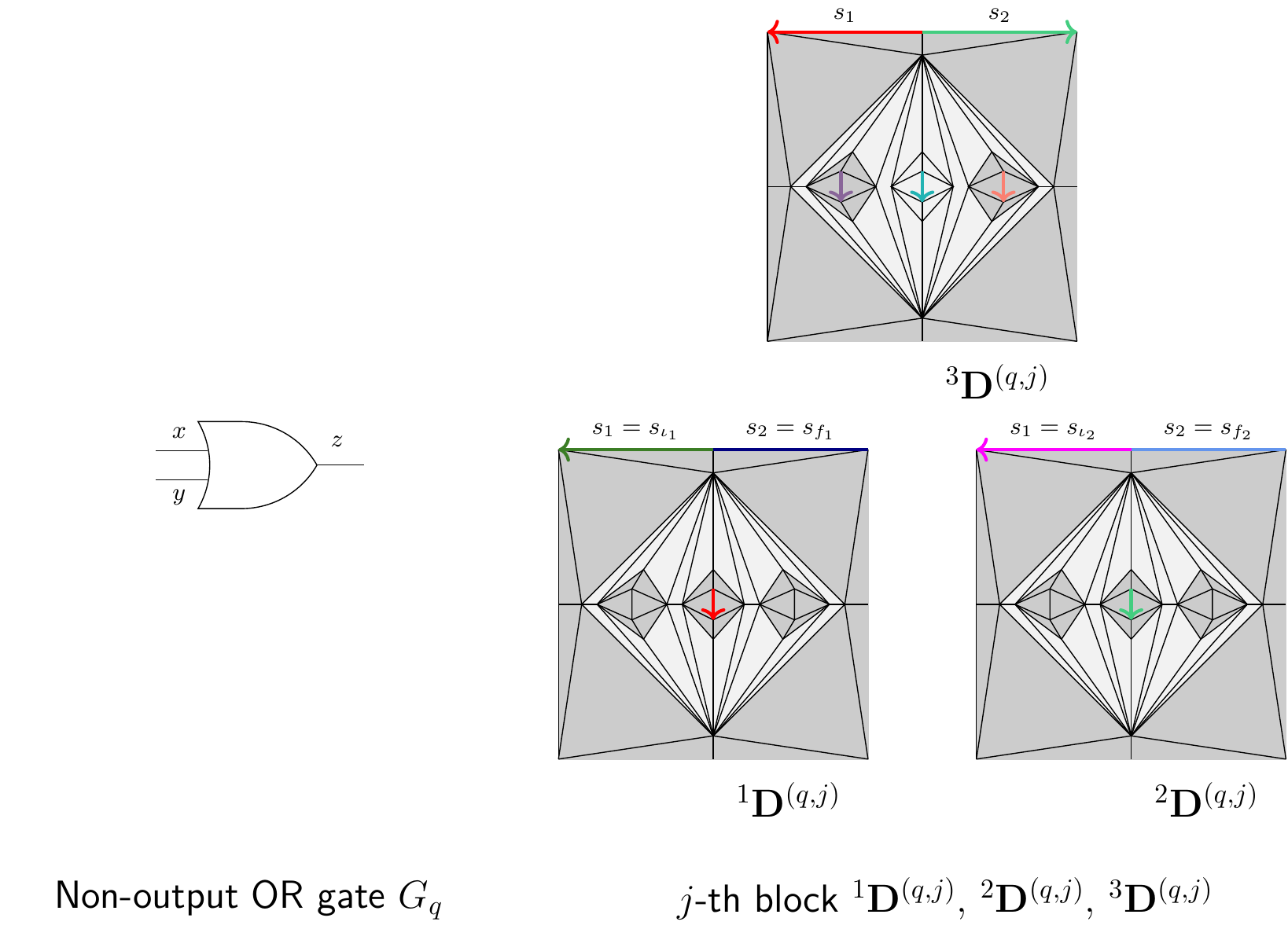} 
\vspace{1cm}
\includegraphics[scale=0.6]{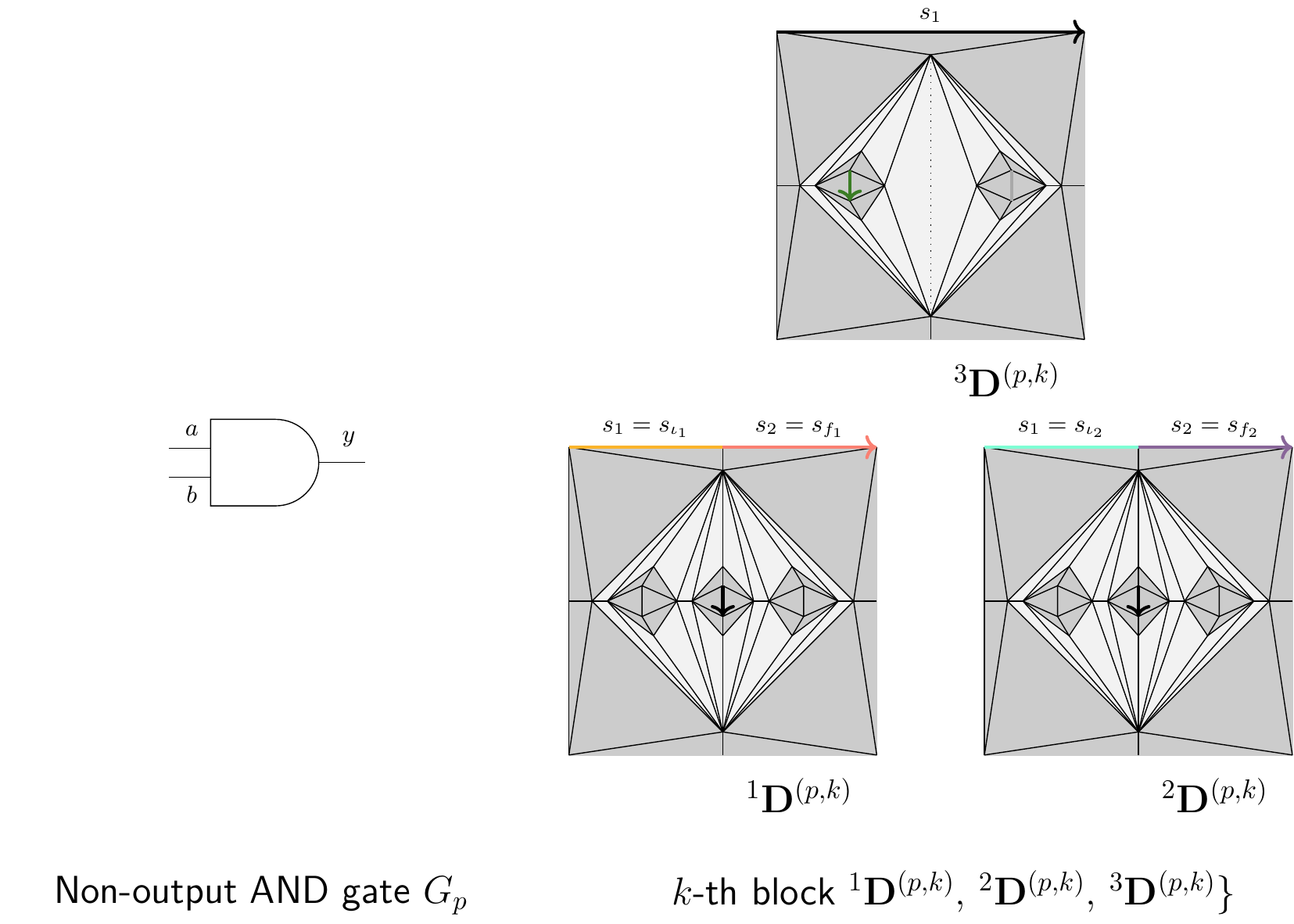} 
\vspace{1cm}
\includegraphics[scale=0.6]{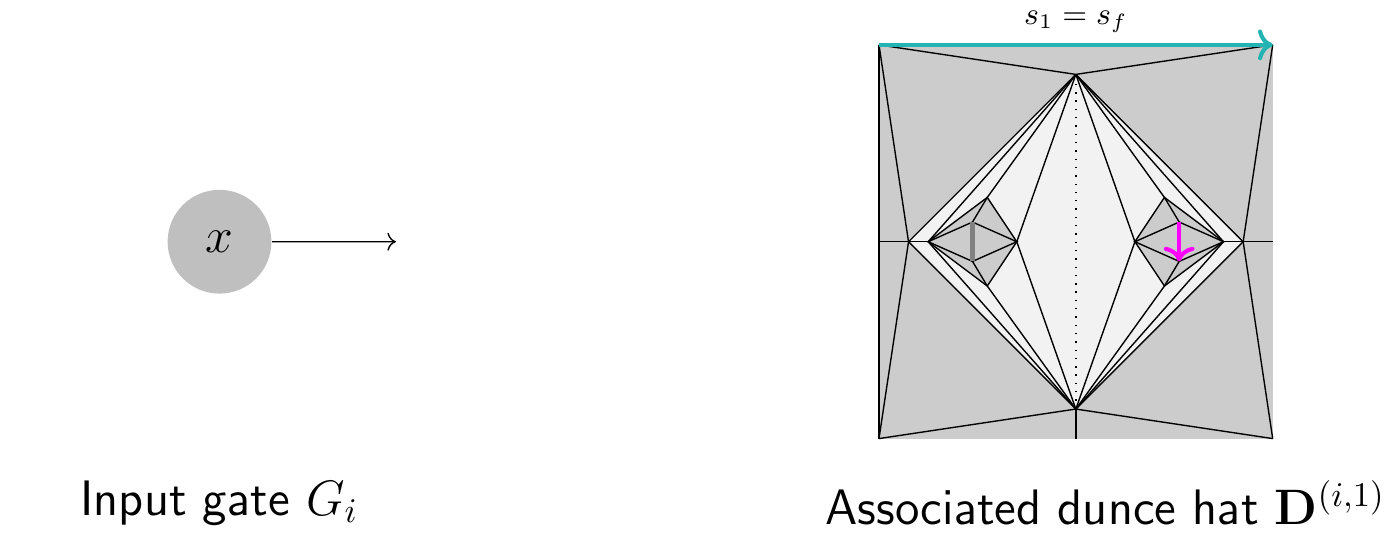} 
 \caption{ In this figure we depict the part of the complex  associated to the (partial) circuit that implements $z = (a \wedge b) \vee x$, where $x$ is an input to the circuit. Identical color coding indicates identifications, and the arrows indicate orientations of idenitifications. Here we only show identifications for $j$-th block of $G_q$ and $k$-th block of $G_p$ for arbitrary $j,k \in [1,n]$. Similar identifications occur across all respective associated blocks.
   \label{fig:nonoutputgates-glue}}
\end{figure}

It is important to note that the gluing is done so that the $s$-edges from two different copies (blocks) associated to the same gate are \emph{never} identified as an outcome of gluing, nor do they intersect in a vertex. In particular if $G_p$ is a gate with $G_{p_1}$ and $G_{p_2}$ as inputs, where,  for instance, if $G_{p_1}$ is an input gate and $G_{p_2}$ is an ordinary gate, then for every $k \in [1,n]$, $s^{(p,k)}_{\iota_1}$ is identified to a unique $t$-edge from the dunce hat associated to $G_{p_1}$, and   $s^{(p,k)}_{\iota_2}$  is identified to $n$ $t$-edges each from a block associated to  $G_{p_2}$.  These are the only identifications for edges $s^{(p,k)}_{\iota_1}$ and $s^{(p,k)}_{\iota_2}$. For every non-output gate $G_p$, let $\theta_p$ denote the number of  successors of $G_p$. Then, for all $k \in [1,n]$,  $s^{(p,k)}_{f_1}$ and $s^{(p,k)}_{f_2}$ each have $\theta_p n$ identifications from $t$-edges coming from each of the blocks associated to each of the successors of $G_p$. These are the only identifications for $s^{(p,k)}_{f_1}$ and $s^{(p,k)}_{f_2}$. 
If $G_p$ is an input gate, then $s^{(p,1)}_{f}$ is identifies to  $\theta_p n$ $t$-edges from blocks associated successors of $G_p$.
Finally, the input $s$-edges of the $k$-th copy associated to the output gate $G_o$ is identified to either one or $n$ $t$-edges coming from dunce hats associated to predecessor gates, depending on whether the predecessor is an input gate or an ordinary gate. 
We refrain from providing indices for the identified $t$-edges as this would needlessly complicate the exposition.

For every non-input gate $G_i$, set $\phi_i = 1$ if the first input to $G_i$ is from an input gate, and set $\phi_i = 2n$ otherwise.
Similarly, for every non-input gate $G_i$, set $\psi_i = 1$, if the second input to $G_i$ is from an input gate, and set $\psi_i = 2n$ otherwise.

Now we can readily check the following: In our construction, for a dunce hat  $^{3}\gadget_{m,\ell}^{\ijpair}$ associated to an ordinary gate $G_i$, we have $m = 1$ or $m = 2$ (depending on whether it is an and-gate or an or-gate), and $\ell = \theta_i n + \phi_i + \psi_i$. 
For dunce hats $^{2}\gadget_{m,\ell}^{\ijpair}$ and $^{1}\gadget_{m,\ell}^{\ijpair}$ associated to an ordinary gate $G_i$, we have $m = 2$, and $\ell = 1$. 
For a dunce hat  $\gadget_{m,\ell}^{\ijpair}$ associated to an output gate $G_i$,  we have  $m = 1$ or $m = 2$, and $\ell =  \phi_i + \psi_i$.  
Finally, for the dunce hat  $\gadget_{m,\ell}^{(i,1)}$ associated to an input gate $G_i$,  we have  $m = 1$ , and $\ell = \theta_i n$.

\begin{remark}\label{rem:reindex}
We reindex the dunce hats described above using the indexing set $\Xi$.  
That is, for every dunce hat  in $K'(C)$ there exists a unique $\zeta \in [1, |\Xi|]$ such that $\gadget^{\zeta}_{m,\ell} $ identifies the dunce hat of interest. Sometimes in our exposition it is more convenient to refer to dunce hats with a single index as opposed to using two or three indices in the superscript. 
\end{remark}

 Let $\zeta$  be the indexing variable, and $\Xi$ the indexing set as described in \Cref{rem:reindex}.
 For a dunce hat $\gadget_{m,\ell}^{\zeta}$, we call the complex induced by the edges $\{\{1^{\zeta},2^{\zeta}\},$ $\{2^{\zeta},3^{\zeta}\},$ $\{2^{\zeta},6^{\zeta}\},$  $\{6^{\zeta},b_k^{\zeta}\},$  $\{b_k^{\zeta},z_k^{\zeta}\} \}$ (i.e., the pink edges of $\gadget_{m,\ell}$ in~\Cref{fig:gadgetpaths}) the \emph{stem} of $\gadget_{m,\ell}^{\zeta}$.
 Then, in  complex $K'(C)$, let $H$ be the $1$-dimensional subcomplex formed by the union of stems of $\gadget_{m,\ell}^{\zeta}$,  for all $\zeta \in [1, |\Xi|]$. We call $H$ the  \emph{stem of the complex $K'(C)$}.
%Furthermore, we call the complex induced by the pink edges of $\gadget_{m,\ell}$ in~\Cref{fig:gadgetpaths}, the \emph{stem of the dunce hat} $\gadget_{m,\ell}$. 
%The collection of all the stems is itself a connected complex in $K'(C)$, which we call the \emph{stem of the complex $K'(C)$}.  
It can be shown that a basis for the  first homology group of the complex $\kprime$ is supported by the edges in the stem of the complex. 
 The complex $K(C)$ is formed as follows: We first assemble the minimal cycle basis of the stem of the complex $K'(C)$ in a matrix $\mathbf{M}$ and then make this matrix upper-triangular. Then, each cycle (column in $\mathbf{M}$) is filled with a triangulated disk, giving us the desired complex $K(C)$. Please refer to~\Cref{sec:kc} for further details.
 
 \begin{remark}[Design choices for ordinary gates]
At this point we would like to remark that, in principle, one could construct a complex for circuits with arbitrary fan-ins wherein the or-gate and and-gate like behaviour can easily be implemented with a single (suitably sub-divided) dunce hat having two or more free edges. 
The problem with this approach is that it is much harder to control where the $1$-cycles in the complex appear, and this makes the cycle filling procedure far more technical. This motivates our approach to first pass to circuits with fan-in  two and then implement or-gates and and-gates with blocks of three instead of single dunce hats. As we shall see later, this leads to a straightforward instance-independent description of the $1$-homology basis of $K'(C)$, which in turn simplifies cycle filling.
\end{remark}

 Given a gradient vector field $\tildev$ on $\KCC$, we construct the map $\ICC(C,\tildev) $ as follows: For every input gate $G_i$ whose associated dunce hat has a critical $2$-simplex in $\tildev$, we set $\ICC(C,\tildev) (G_i) = 1$. 
 Please refer to~\Cref{sec:algalg} for further details.

\subsection{Construction of the complex $K(C)$} \label{sec:kc}

 From the identifications described in \Cref{sec:kcmain}, it is easy to check that $H = (V_H,E_H)$ is, in fact, a connected graph. Please refer to  \Cref{lem:hconnap} for a simple proof.
  The procedure for constructing $K(C)$ is described in \Cref{alg:buildkc}.

\begin{algorithm}
\caption{Procedure for constructing $K(C)$ from $K'(C)$ }\label{alg:buildkc}
\begin{algorithmic}[1]

	\State{$\kc \gets \kprime$.}
	\LineComment{Initially, $\kc$ consists only of simplices from $\kprime$.}
 	\LineComment{Compute a cycle basis $\BCC$ of $H$ with $\bbz_2$ coefficients as follows (Steps 4--13).}
	\State{$i=1$; $\BCC =  \emptyset$.}
	\State{Let $\prec$ be an arbitrary total order on the edges of $H$.}
	\While{$E_H$ is non-empty}
		\State{Successively remove every edge from $H$ that is incident on a vertex of degree $1$.}	
		\State{Choose an  simple cycle in $H$  incident on the highest indexed edge w.r.t. $\prec$. }
		\State{Denote the cycle by $z_i$, and the highest indexed edge by $e_i^1$.}
		\State{ $\BCC = \BCC \bigcup \{ z_i \}$.}
		%\State{Choose an arbitrary edge in $z_i$, and denote it by $e_i^1$.}
		\State{Remove $e_i^1$ from $H$.}
		\State{$i = i+1$.}
	\EndWhile
	\State{Assemble the basis vectors of $\BCC$ in a matrix $\mathbf{M}$, where $\prec$ is used to index the rows of $\mathbf{M}$, and the iterator variable $i$ from the while loop above is used to index the columns.}
	%\State{Make $\mathbf{M}$ upper-triangular using column operations.}
	%\LineComment{The columns of $\mathbf{M}$ now represent a new basis  $\BCC$ of the cycle space of $H$.}
	\LineComment{For every $z_i \in \BCC$, let  $n_i$ be the number of edges in $z_i$, and $\{e_i^j \mid j \in [n_i] \}$ denote the edges in $z_i$.}
	 \For{$i \gets 1,  |\BCC|$}
		 \State{Add a new vertex $v_i$ to  $\kc$.}
	\For{$j \gets 1, n_i$}
		\State{Add to  $\kc$ a $2$-simplex $\sigma_i^j = e_i^j * v_i$ for each edge $e_i^j$ of $z_i$.}
		\State{Add to  $\kc$ all of the faces of simplices $\sigma_i^j$.}
	\EndFor
 \EndFor
 \State{$\mathscr{D} \gets \kc \setminus \kprime$.}
 \State{\textsc{Return} $\kc, \mathscr{D}$.}
\end{algorithmic}
\end{algorithm}

\begin{remark}\label{rem:basicobs}
%In the while loop of \Cref{alg:buildkc}, it is easy to check that the removal of edges $\{ e_i^1  \mid i \in [|\BCC| ]\}$ does not disconnect $H$, and at the end of the while loop $H$ becomes a tree. 
By construction,  the edges $e_i^1$, for $i \in [ |\BCC| ]$,  do not appear in the cycles $z_j$, for $j  > i$. Hence, $\mathbf{M}$ is upper-triangular.
\end{remark}
 
Note that there exists a polynomial time subroutine to implement Step 8 of \Cref{alg:buildkc}. In particular, because the edges of $H$ that are not incident on any cycles of $H$ are removed in Step 7, every edge of $H$ is incident on some simple cycle contained in a minimum cycle basis  (with unit weights on edges) of $H$.
 
 \begin{remark} \label{rem:topology}
  In the construction described in \Cref{alg:buildkc}, the star of the vertex $v_i$ may be viewed as a \enquote{disk} that fills the cycle $z_i$. See~\Cref{fig:cyclefill} for an illustration.
Furthermore, it can be shown that 
 \begin{itemize}
\item  the second homology groups $H_2(K'(C))$ and $H_2(K(C))$ are trivial,
\item  The cycles in $\BCC$  form a basis for $H_1(K'(C))$, 
 \item  $K(C)$ is contractible.
\end{itemize}
However, our hardness results can be established without proving any of the statements in \Cref{rem:topology}.
Having said that, it is important to bear in mind that the procedure of going from $K'(C)$ to $K(C)$ is, in fact, a $1$-cycle filling procedure. 
%As will be clear in the following section, the reason for using the basis $\BCC$ instead of $\BCC$ is to make edges associated to successive pivot entries in matrix $\mathbf{M}$ free after collapsing all $2$-simplices that fill the preceding cycle.
\end{remark}

To establish hardness results, we introduce some additional notation.
Given a monotone circuit $C=(\VCC,\ECC)$ let $\kc$ be its associated complex. Now let $\OPTA$ denote the optimal value of the $\mincircuitsat$ problem on $C$, and let $\OPTB$ denote the optimal value of the $\minrmm$ problem on $\kc$.
The value of the objective function $\ALGB$ is the number of critical simplices in $\VCC$ minus one; the value of the objective function $\ALGA$  is the Hamming weight of the input assignment.

 \subsection{Reducing $\mincircuitsat$ to $\minrmm$: Forward direction} \label{sec:relateopt}
% $ \OPTB  \leq 2\cdot \OPTA   $}
 
Given a circuit $C$, suppose that we are given an  input assignment  $A$ that satisfies the circuit $C = (V(C),E(C))$.
Let $\satgates$ be the set of gates that are satisfied by the assignment, and let $I(\satgates)$ be the set of input gates that are assigned $1$. 
Clearly, $I(\satgates) \subset \satgates$, and also the output gate $G_o \in \satgates$.
Let $\unsatgates = V(C) \setminus  \satgates$ denote the set of gates that are not satisfied by the input $A$.
Clearly, the subgraph $C_{\satgates}$ of $C$ induced by the gates in $\satgates$ is a connected graph.
Also, since $C$ is a directed acyclic graph, the induced subgraph $C_{\satgates}$ is also directed acyclic.
Let $\prec_{\satgates}$ be some total order on $\satgates$ consistent with the partial order imposed by $C_{\satgates}$, and let $\prec_{C}$ be some total order on $V(C)$ consistent with the partial order imposed by $C$.

 Next, given an assignment $A$ on $C$, we describe how to obtain a gradient vector field $\VCC$ on $K(C)$.  We denote the complex obtained after $i$-th step by $K^{i}(C)$.
 
  \paragraph*{Step 1: Erase satisfied input gates} 
First, for every input gate $G_i \in \satgates$, we make $\Gamma_1^{(i,1)} $ critical. By \Cref{lem:easy}, this is akin to removing  $\Gamma_1^{(i,1)}$ from $\gadget^{(i,1)}$. 
 Next, we make all $s_f^{(i,1)}$ for all $G_i \in \satgates$ critical.
 We then use  \Cref{lem:remainerase} from  \Cref{sec:spofr} to erase all the dunce hats  $\gadget^{(i,1)}$ associated to  satisfied input gates $G_i$, giving $K^{1}(C)$.

 \paragraph*{Step 2: Forward collapsing}
Assume throughout Step 2 that the gates in  $\satgates$ are indexed from $1$ to $|\satgates|$ so that 
\[\text{ for all }G_i, G_j \in \satgates,\quad i < j \Leftrightarrow G_i \prec_{\satgates} G_j.\]

\begin{lemma} \label{lem:successiveforwardap} 
Let  $G_p \in \satgates \setminus I(\satgates)$.
Suppose that all the gates in $I(\satgates)$ have been erased, and for all gates $G_k \in \satgates \setminus I(\satgates)$ with $k< p$ the associated dunce hats $^{3}\gadget^{(k,r)}$ for all  $r \in [1,n]$  have  been erased. Then,  the dunce hats $^{3}\gadget^{(p,j)}$, for all  $j \in [1,n]$ associated to $G_{p}$ can be erased.
\end{lemma}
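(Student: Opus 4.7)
The plan is to process each $j \in [1,n]$ separately and show that $^{3}\gadget^{(p,j)}$ becomes erasable after collapsing certain input components of the $j$-th block. Since $G_p \in \satgates \setminus I(\satgates)$, it is an ordinary gate satisfied by $A$, so either $G_p$ is an or-gate with at least one predecessor in $\satgates$, or $G_p$ is an and-gate with both predecessors in $\satgates$. Because $\prec_{\satgates}$ is consistent with the partial order of $C_{\satgates}$, every satisfied predecessor of $G_p$ strictly precedes $G_p$, so by the hypothesis of the lemma all their associated dunce hats (if the predecessor is an input gate) or their output components $^{3}\gadget^{(q,r)}$, $r \in [1,n]$ (if the predecessor is an ordinary gate) have already been erased.

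For each satisfied predecessor $G_{p_k}$ ($k \in \{1,2\}$), I would then argue that the input edge $s^{(p,j)}_{\iota_k}$ of the corresponding input component $^{k}\gadget^{(p,j)}$ is free in the current complex. By the gluing construction of \Cref{sec:kcmain}, the only external identifications of $s^{(p,j)}_{\iota_k}$ are with $t$-edges lying in dunce hats associated to $G_{p_k}$, which by the previous step have all been erased. Therefore $s^{(p,j)}_{\iota_k}$ lies in exactly one $2$-simplex of the remaining complex---the internal $2$-simplex of $^{k}\gadget^{(p,j)}$ adjacent to it---and is free. Since $^{k}\gadget^{(p,j)}$ is a copy of $\gadget_{2,1}$, \Cref{rem:gradient} lets me start its collapse from the gradient pair involving $s^{(p,j)}_{\iota_k}$ (playing the role of $s_1$), reducing $^{k}\gadget^{(p,j)}$ to its $1$-dimensional residue. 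The residue contains no $2$-simplex incident to the unique $t$-edge of $^{k}\gadget^{(p,j)}$, and that $t$-edge is identified with an $s$-edge of $^{3}\gadget^{(p,j)}$; consequently the corresponding $s$-edge of $^{3}\gadget^{(p,j)}$ is left with no coface outside of $^{3}\gadget^{(p,j)}$.

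To finish, I would erase $^{3}\gadget^{(p,j)}$ using the $s$-edge(s) freed in the previous step. In the or-gate case, $^{3}\gadget^{(p,j)}$ has $m=2$, so the single free $s$-edge produced from the collapse of the input component corresponding to the satisfied predecessor is enough to apply \Cref{rem:gradient} and erase $^{3}\gadget^{(p,j)}$. In the and-gate case, $^{3}\gadget^{(p,j)}$ has $m=1$, and its unique $s$-edge is identified with the $t$-edges of both $^{1}\gadget^{(p,j)}$ and $^{2}\gadget^{(p,j)}$; since both predecessors are in $\satgates$, both input components can be collapsed as above, and only this joint collapse makes the single $s$-edge of $^{3}\gadget^{(p,j)}$ free. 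In either case, invoking \Cref{rem:gradient} one more time yields the desired erasure of $^{3}\gadget^{(p,j)}$, and iterating over $j \in [1,n]$ completes the proof.

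The main technical obstacle is the careful bookkeeping of edge identifications: verifying that once the predecessors are erased the relevant $s$-edges become free in the \emph{global} complex $K(C)$ (and not merely inside a single copy of $\gadget_{m,\ell}$), and that the collapsing sequences described in \Cref{rem:gradient} remain valid once the local residues of the input components are glued back into the surrounding complex. A closely related subtlety is to confirm that the operations carried out for different values of $j$ do not interfere with one another; this should follow from the observation made in \Cref{sec:kcmain} that the $s$-edges of distinct blocks associated to the same gate are never identified and do not share vertices.
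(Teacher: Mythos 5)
Your proposal is correct and follows essentially the same route as the paper's proof: the same or-gate/and-gate case split, the same use of the construction's identification properties to conclude that the input edges $s^{(p,j)}_{\iota_k}$ become free once the satisfied predecessors' dunce hats are erased, and then the freeing of the $s$-edge(s) of $^{3}\gadget^{(p,j)}$. The only cosmetic difference is that you invoke the explicit collapsing sequence of \Cref{rem:gradient} where the paper cites \Cref{lem:erasefree}, which is itself proved by exactly that sequence.
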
 
\begin{proof} 
\setcounter{case}{0}

Let $G_{p_1}$ and $ G_{p_2}$  be inputs to $G_p$.
Assume without loss of generality that  $G_{p_1} , G_{p_2}$ are non-input gates. By our assumption on indexing, $p_1 < p$ and $p_2 < p$. 
By construction, the only identifications to  $s_{\iota_1}^{\pjpair} \in ~ ^{1}\gadget^{\pjpair}$ are from $t$-edges that belong to $^{3}\gadget^{(p_1,r)}$  for all  $r \in [1,n]$,
and the only identifications to $s_{\iota_2}^{\pjpair} \in ~ ^{2}\gadget^{\pjpair}$ are from  $t$-edges that belong to $^{3}\gadget^{(p_2,r)}$  for all  $r \in [1,n]$. We have two cases:

\begin{case}
Assume that  $G_{p}$ is a satisfied or-gate. Then, either  $G_{p_1} \in \satgates$ or $G_{p_2} \in \satgates$. Without loss of generality, we assume that $G_{p_1} \in \satgates$.
Then,  for all $j$, $s_{\iota_1}^{\pjpair} $  become free since, by assumption,  the dunce hats $^{3}\gadget^{(p_1,r)}$  associated to $G_{p_1}$ have been erased. So using~\Cref{lem:erasefree} from \Cref{sec:spofr}, for all $j$, $^{1}\gadget^{\pjpair}$ can be erased. 
For each $j$, the unique identification to $^{3}s_1^{\pjpair}$ is from a $t$-edge in  $^{1}\gadget^{\pjpair}$.
Hence, for all $j$, $^{3}s_1^{\pjpair}$ becomes free, making it possible to erase  $^{3}\gadget^{\pjpair}$ for all $j$.
\end{case}

\begin{case}
Now, assume that  $G_{p}$ is a satisfied and-gate. Then, both   $G_{p_1} \in \satgates$ and $G_{p_2} \in \satgates$. 
Thus, for all $j$, $s_{\iota_1}^{\pjpair} $  and $s_{\iota_2}^{\pjpair} $ become free since, by assumption, all the dunce hats $^{3}\gadget^{(p_1,p)}$ for all  $p \in [1,n]$ associated to $G_{p_1}$ and all dunce hats $^{3}\gadget^{(p_2,q)}$, for all  $q \in [1,n]$ associated to $G_{p_2}$ have been erased. So, using~\Cref{lem:erasefree} from \Cref{sec:spofr}, for all $j\in [1,n]$, $^{1}\gadget^{\pjpair}$ 
and $^{2}\gadget^{\pjpair}$  can be erased. 
For all $j \in [1,n]$, the only two edges identified to $^{3}s_1^{\pjpair}$ belong to $^{1}\gadget^{\pjpair}$ 
and $^{2}\gadget^{\pjpair}$ respectively.
Hence, for all $j\in [1,n]$, $^{3}s_1^{\pjpair}$ becomes free, making it possible to erase  $^{3}\gadget^{\pjpair}$ for all $j \in [1,n]$.
Thus, the dunce hats $^{3}\gadget^{\pjpair}$ for  $j\in [1,n]$  associated to $G_{p}$ can be erased. 
\end{case}

The argument is identical for the case when $G_{p_1}$ or $G_{p_2}$ is an input gate.
%\[^{2}e_{1}^{\pjpair} \sim \, ^{3}s_1^{\pjpair}  \text{if $G_p$  is an and-gate} \]
%\[^{2}e_{1}^{\pjpair} \sim \, ^{3}s_2^{\pjpair}  \text{if $G_p$  is an or-gate}\]
\end{proof}

\begin{lemma} \label{lem:lasteraseap} All dunce hats associated to the output gate are erased.
\end{lemma}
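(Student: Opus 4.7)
The plan is to derive this lemma as the finishing step of Step~2 (Forward collapsing), by combining an inductive application of~\Cref{lem:successiveforwardap} along the topological order $\prec_\satgates$ with a slightly modified freeing argument tailored to the output gate, whose copies are single dunce hats rather than blocks of three.

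First, I would observe that since $A$ satisfies $C$, the output gate $G_o$ lies in $\satgates$, and every predecessor of $G_o$ that lies in $\satgates$ precedes $G_o$ in $\prec_\satgates$. By iterating~\Cref{lem:successiveforwardap} in the order $\prec_\satgates$ over all ordinary gates in $\satgates$, every output component $^{3}\gadget^{(p,r)}$ for $G_p \in \satgates \setminus I(\satgates)$ and $r \in [1,n]$ has been erased; combined with Step~1, every dunce hat associated to a satisfied predecessor of $G_o$ has been erased.

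Next, I would handle $G_o$ separately, since its $j$-th copy is a single dunce hat $\gadget^{(o,j)}$ whose distinguished edges $s^{(o,j)}_{\iota_1}, s^{(o,j)}_{\iota_2}$ are identified (by the construction in~\Cref{sec:kcmain}) with $t$-edges coming from the output components (or the unique dunce hat, if the predecessor is an input gate) of the two predecessors $G_{o_1},G_{o_2}$ of $G_o$. Split into two cases: if $G_o$ is a satisfied or-gate, then at least one of $G_{o_1},G_{o_2}$ lies in $\satgates$; without loss of generality $G_{o_1} \in \satgates$, so all dunce hats providing identifications to $s^{(o,j)}_{\iota_1}$ have been erased, making $s^{(o,j)}_{\iota_1}$ free for every $j$, and an application of~\Cref{lem:erasefree} then erases $\gadget^{(o,j)}$. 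If $G_o$ is a satisfied and-gate, both $G_{o_1},G_{o_2} \in \satgates$, so both $s^{(o,j)}_{\iota_1}$ and $s^{(o,j)}_{\iota_2}$ become free for every $j$, and again~\Cref{lem:erasefree} erases $\gadget^{(o,j)}$.

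The only subtle point, and what I expect to be the main obstacle, is verifying that all the edge identifications listed in the construction are fully accounted for when declaring an input edge of $\gadget^{(o,j)}$ free: I need to check that every $t$-edge identified with $s^{(o,j)}_{\iota_k}$ indeed belongs to a dunce hat already erased in Step~1 or by the preceding inductive sweep, and that no other coface obstructs the freeness. This is a bookkeeping exercise that reduces to the explicit identification scheme in~\Cref{sec:kcmain}, and once checked, the lemma follows by performing the erasures of $\gadget^{(o,j)}$ for $j \in [1,n]$ in arbitrary order.
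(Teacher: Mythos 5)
Your proposal is correct and follows essentially the same route as the paper: the paper's proof simply notes that the output gate is satisfied and invokes a straightforward induction along the topological order using \Cref{lem:successiveforwardap}. Your additional explicit case analysis for $G_o$ (whose copies are single dunce hats, so the input edges $s^{(o,j)}_{\iota_1},s^{(o,j)}_{\iota_2}$ become free directly from the erased predecessors and \Cref{lem:erasefree} applies) just spells out the final step that the paper leaves implicit, and your ``bookkeeping'' concern is resolved exactly as you suggest by the identification scheme of \Cref{sec:kcmain}.
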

\begin{proof} Note that a satisfying assignment $A$ that satisfies the circuit, in particular, also satisfies the output gate. 
A simple inductive argument using \Cref{lem:successiveforwardap} proves the lemma.
\end{proof}

After applying Step 1, we apply Step 2, which comprises of executing the collapses described by \Cref{lem:successiveforwardap,lem:lasteraseap}. This immediately gives us the following claim.
 \begin{claim}\label{cl:sateraseap} If there exists an assignment satisfying a circuit $C$ with Hamming weight $m$, then there exists a gradient vector field on $K(C)$ such that after making $m$  $2$-cells critical, all the dunce hats $^{3}\gadget^{\pjpair}$ associated to the satisfied non-output gates $G_p$ and all dunce hats associated to the output gate  can be erased.
 \end{claim}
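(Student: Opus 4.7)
\medskip
\noindent\textbf{Proof proposal.} The statement essentially bundles Steps~1 and~2 described in \Cref{sec:relateopt}, so the plan is to verify that their combined effect consumes exactly $m$ critical $2$-cells and achieves the desired erasures. Starting from the satisfying assignment $A$ of Hamming weight $m$, let $I(\satgates)$ denote the set of input gates set to $1$; by definition $|I(\satgates)|=m$. I would begin by applying Step~1: for each input gate $G_i \in I(\satgates)$, declare $\Gamma_1^{(i,1)}$ critical and also declare the feedback edge $s_f^{(i,1)}$ critical. The first of these provides the $m$ critical $2$-simplices claimed in the statement (one per satisfied input gate), while the second is a critical edge and thus does not count toward the $2$-cell budget.

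Next, I would invoke \Cref{lem:easy} to argue that making $\Gamma_1^{(i,1)}$ critical is equivalent to removing it from $\gadget^{(i,1)}$ (it is a maximal face, and its cofaces in $K'(C)$ are none since it is a $2$-simplex). Then \Cref{lem:remainerase} from \Cref{sec:spofr} (invoked with the freed $s$-edge) provides the gradient on what remains of each $\gadget^{(i,1)}$, confirming that every dunce hat associated to a satisfied input gate can indeed be erased without introducing any further critical $2$-simplices.

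The remaining task is to handle the non-input satisfied gates. Here I would simply apply \Cref{lem:successiveforwardap} inductively along the total order $\prec_{\satgates}$ on $\satgates$. Concretely, for each $G_p \in \satgates \setminus I(\satgates)$ in order, the inductive hypothesis ensures that all $^{3}\gadget^{(k,r)}$ for $k < p$ (and all dunce hats associated to satisfied input gates) have already been erased, exactly matching the hypothesis of \Cref{lem:successiveforwardap}; the lemma then yields gradient pairs on $K(C)$ that erase all of $^{3}\gadget^{(p,j)}$ for $j\in[1,n]$, again introducing no critical $2$-simplices. Finally, since the output gate $G_o$ lies in $\satgates$, \Cref{lem:lasteraseap} takes care of erasing all $n$ dunce hats $\gadget^{(o,j)}$.

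The only subtle point I anticipate is bookkeeping: one must check that the collapse moves performed for different gates do not conflict (i.e., the sequences of gradient pairs remain disjoint and acyclic when amalgamated). This is guaranteed because distinct blocks $^{3}\gadget^{(p,j)}$ share simplices only along the identified $s$- and $t$-edges, and in each application of \Cref{lem:successiveforwardap} the free $s$-edges that enable the collapse belong to the current block and are paired within it; the shared edges only play the role of being \enquote{free} once their sibling block has been erased. With this observation, the $m$ critical $2$-cells introduced in Step~1 are the only $2$-cells made critical throughout, completing the argument.
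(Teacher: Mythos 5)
Your proposal is correct and follows essentially the same route as the paper: the paper obtains this claim immediately by combining Step~1 (making $\Gamma_1^{(i,1)}$ and $s_f^{(i,1)}$ critical for the satisfied input gates and erasing those dunce hats via \Cref{lem:remainerase}) with Step~2, i.e.\ the inductive application of \Cref{lem:successiveforwardap} and \Cref{lem:lasteraseap}, exactly as you do. Your added bookkeeping remark about non-conflicting collapses is a harmless elaboration of what the paper leaves implicit.
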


 The complex obtained after erasing executing Step 2 is denoted by $K^{2}(C)$. 
 We have, $K^{1}(C) \searrow K^{2}(C)$.

 \begin{remark}
 Note that the forward collapses do not erase all the dunce hats associated to satisfied gates. For instance, for a satisfied or-gate $G_p$, if one of the input gates, $G_{p_1}$, is satisfied and the other, $G_{p_2}$, is not, then $^{1}\gadget^{\pjpair}$ and $^{3}\gadget^{\pjpair}$ will be erased, but $^{2}\gadget^{\pjpair}$ will  not be erased. The dunce hats associated to the unsatisfied gates and the unerased dunce hats associated to the satisfied gates are erased in the next step while executing the backward collapses.
 \end{remark}
 
 \paragraph*{Step 3: Backward collapsing}
Assume throughout Step 3 that the gates in  $V(C)$ are indexed from $1$ to $n$ so that 
\[\text{ for all }G_i, G_j \in V(C),\quad i < j \Leftrightarrow G_i \prec_{C} G_j.\]
The idea behind backward collapsing is that the feedback edges become successively free when one starts the collapse from dunce hats associated to highest indexed gate and proceeds in descending order of index.

 \begin{lemma} \label{lem:nexteraseap} If all the dunce hats associated to gates $G_k$, where $k>i$, have been erased, then the dunce hats associated to $G_i$ can be erased.
\end{lemma}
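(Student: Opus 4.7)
My plan is a case analysis on the type of $G_i$, pivoting on the structural observation that, once every successor dunce hat has been erased, each feedback edge of a surviving dunce hat in any block of $G_i$ becomes free in the current (partially collapsed) complex. This is because, by the construction in \Cref{sec:kcmain}, the feedback edges $s_f^{(i,1)}$, $s_{f_1}^{(p,j)}$, $s_{f_2}^{(p,j)}$ receive \emph{only} identifications from $t$-edges of dunce hats attached to successor gates; and from \Cref{rem:gradient} the erasure of a dunce hat $\gadget_{m,\ell}^{\zeta}$ retains the $t$-edges but removes all of its $2$-simplices. Hence after the inductive hypothesis kicks in, each feedback edge of $G_i$'s block has its successor cofaces deleted, leaving only the unique $2$-simplex $\Gamma_2$ of $^1\gadget^{(i,j)}$ or $^2\gadget^{(i,j)}$ as its coface, i.e.\ it is free.

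With this observation, the three cases are routine. If $G_i$ is the output gate, there is nothing to do by \Cref{lem:lasteraseap}. If $G_i$ is an input gate and has not already been processed in Step 1 (i.e.\ $G_i$ is unsatisfied), the only dunce hat associated with it is $\gadget^{(i,1)}$, whose feedback edge $s_f^{(i,1)}$ has become free, so Lemma~erasefree from \Cref{sec:spofr} collapses $\gadget^{(i,1)}$. If $G_i$ is a non-output ordinary gate, I proceed block by block: for each $j \in [1,n]$, the now-free feedback edges $s_{f_1}^{(i,j)}$ and $s_{f_2}^{(i,j)}$ let me invoke Lemma~erasefree to erase whichever of $^1\gadget^{(i,j)}$ and $^2\gadget^{(i,j)}$ is still present. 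Once these are gone, the $s$-edges of $^3\gadget^{(i,j)}$, which by construction are identified with $t$-edges of $^1\gadget^{(i,j)}$ and $^2\gadget^{(i,j)}$, lose their cofaces from those blocks and become free, so one final application of Lemma~erasefree erases $^3\gadget^{(i,j)}$ if it had not already been taken care of during Step~2.

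The main obstacle, as I see it, is not a conceptual one but a careful bookkeeping one: some dunce hats of $G_i$ may already be erased in Steps~1 or~2 (for example, $^3\gadget^{(i,j)}$ whenever $G_i$ is satisfied, or both input components for a satisfied and-gate), so the sequence of erasures above must be read as ``if not already erased, then apply erasefree''. I also need to verify that at each intermediate step within a block the relevant edge is truly free in the evolving complex — here the clean separation of identifications recorded in \Cref{sec:kcmain} (feedback edges identified only with successor $t$-edges, $s$-edges of $^3$ only with $t$-edges of $^1$ and $^2$, distinct blocks share no $s$-edges) ensures that collapses in one block do not interfere with another, so the argument goes through block by block and gate by gate in a purely mechanical fashion.
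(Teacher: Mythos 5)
Your proposal is correct and follows essentially the same argument as the paper's proof: you use the fact that feedback edges receive identifications only from successor $t$-edges (so the inductive hypothesis frees them), erase the input components $^1\gadget^{(i,j)}$ and $^2\gadget^{(i,j)}$ via \Cref{lem:erasefree}, and then note that the $s$-edges of $^3\gadget^{(i,j)}$, identified only with $t$-edges inside the block, become free, which is exactly the paper's three-case analysis (ordinary gates, unsatisfied gates' output components, input gates) merged with the same bookkeeping remark about components already erased in Steps 1--2.
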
 
\begin{proof} 
We have three cases to verify:
\setcounter{case}{0}

\begin{case}
First, assume that $G_i$ is an ordinary gate. The only identifications to edges $s_{f_1}^{\ijpair} \in ~ ^{1}\gadget^{\ijpair}$ and $ s_{f_2}^{\ijpair} \in ~ ^{1}\gadget^{\ijpair}$  respectively are from the $t$-edges in dunce hats associated to successors of $G_i$.
By assumption, all dunce hats $^{3}\gadget^{\kppair}$ associated to ordinary gates $G_k$ where $k>i$ have been erased, and all dunce hats $\gadget^{(o,q)}$ associated to the output gate $G_o$ have been erased. Hence,  $s_{f_1}^{\ijpair}$  and  $s_{f_2}^{\ijpair} $ are free, for every $j$. Therefore, for every $j$,  
dunce hats $^{1}\gadget^{\ijpair}$ and $^{2}\gadget^{\ijpair}$ can be erased. 
\end{case}

\begin{case}
If  $G_i$ is an unsatisfied gate, then for all $j$, the only identifications to $s$-edge(s) of $^{3}\gadget^{\ijpair}$ are from $t$-edges of $^{1}\gadget^{\ijpair}$ and $^{2}\gadget^{\ijpair}$. So the $s$-edge(s) of $^{3}\gadget^{\ijpair}$ become free for all $j$, allowing us to erase $^{3}\gadget^{\ijpair}$,  for all $j$. Thus, all dunce hats associated to $G_i$ can be erased.
\end{case}

\begin{case}
Now assume that $G_i$ is an input gate. Then, the unique $s$-edge of the unique copy associated to $G_i$ is identified to $t$-edges of dunce hats associated to successors of $G_i$.
Since, by assumption, all dunce hats associated to  gates $G_k$, where $k>i$, have been erased, $s_{1}^{\ionepair}$ becomes free, allowing us to erase $\gadget^{(i,1)}$.
\end{case}
 Note that in the proof of this lemma, for ordinary satisfied gates only Case 1 may be relevant, whereas for ordinary unsatisfied gates both Case 1 and Case 2 apply.
\end{proof}

 \begin{claim}\label{cl:unsateraseap} If there exists an assignment satisfying a circuit $C$ with Hamming weight $m$, then there exists a gradient vector field on $K(C)$ with exactly $m$ critical $2$-cells.
 \end{claim}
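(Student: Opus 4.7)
The plan is to extend the partial gradient constructed in Claim \ref{cl:sateraseap} to a full gradient vector field on $K(C)$ without introducing any additional critical 2-cells. Recall that after Steps 1 and 2, the partial gradient has exactly $m$ critical $2$-simplices (the triangles $\Gamma_1^{(i,1)}$, one for each satisfied input gate), and has erased the dunce hats associated to satisfied input gates, the output component $^{3}\gadget^{(p,j)}$ of every satisfied non-output gate, and all dunce hats attached to the output gate.

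First I would execute Step 3, the backward collapsing phase. Process the gates in strictly decreasing order of $\prec_C$ and, for each unprocessed gate $G_i$, invoke Lemma \ref{lem:nexteraseap} to erase all dunce hats associated to $G_i$ through collapse pairs only. A straightforward induction on the reversed order shows that the inductive hypothesis of \Cref{lem:nexteraseap} is maintained throughout, and once Step 3 completes, every dunce hat of $K'(C)$ has been erased and no new critical $2$-simplex has been added.

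At this point the only simplices still to be accounted for are those in the stem $H \subseteq K'(C)$ and in the disk fillings $\mathscr{D} = K(C) \setminus K'(C)$. Each $\star(v_i) \subseteq \mathscr{D}$ is a triangulated cone whose boundary is the cycle $z_i \subseteq H$. I would equip each such cone with the standard matching that pairs the $2$-simplex $\sigma_i^1 = e_i^1 \ast v_i$ with its boundary edge $e_i^1$, pairs each remaining $2$-simplex $\sigma_i^j$ ($j \geq 2$) with a spoke $\{v_i, w_i^j\}$ in a shelling-like order around $v_i$, and pairs the leftover spoke with the apex $v_i$. This cone matching is acyclic and leaves exactly the edges of $z_i \setminus \{e_i^1\}$ unmatched as the boundary contribution of $\star(v_i)$.

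After all such disk matchings are installed, the simplices that remain unmatched lie in the subgraph $H \setminus \{e_i^1 : i \in [|\BCC|]\}$. By the upper-triangularity of $\mathbf{M}$ (Remark \ref{rem:basicobs}) the cycles $\{z_i\}$ form a basis for $H_1(H)$ and the chosen chords $\{e_i^1\}$ kill every independent cycle; since $H$ is connected by Lemma \ref{lem:hconnap}, removing these edges leaves a spanning tree, which I would then collapse to a single vertex by iterated removal of leaves, adding only one critical $0$-simplex. No additional critical $2$-simplex is ever introduced, so the total count is exactly $m$, as claimed.

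The main obstacle is verifying acyclicity of the \emph{combined} matching: each of the three pieces (dunce-hat collapses from Steps 1--3, disk matchings, and the final tree collapse) is acyclic in isolation, but one must check that no directed cycle in the modified Hasse diagram arises when the pieces are assembled. The crucial enabling fact is again upper-triangularity: each $e_i^1$ appears only in its own disk $\star(v_i)$ and in the stem component from which it originates, so the disk matchings for different $i$ cannot interact, and the coboundary orientation of $e_i^1$ into $\sigma_i^1$ never feeds back into a dunce-hat collapse (all of which are already concluded by the time we reach the disks).
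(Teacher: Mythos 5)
Your construction follows the paper's own route: the induction base is \Cref{cl:sateraseap}, the inductive step is \Cref{lem:nexteraseap} applied to gates in decreasing order (the paper's proof of this claim is exactly that and nothing more, deferring the disks and the tree to Steps 4--7 of the surrounding construction), and your cone matchings on the fillings together with the final tree collapse are precisely the paper's \Cref{alg:collapsedisks} and Step 7. Skipping the dangling-edge and critical-$s$-edge bookkeeping (the paper's Steps 4--5) is harmless here, since those simplices are at most $1$-dimensional and cannot affect the count of critical $2$-cells.

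The one genuine flaw is in your acyclicity argument for the assembled matching: the assertion that ``each $e_i^1$ appears only in its own disk $\star(v_i)$'' is false in general. \Cref{rem:basicobs} only guarantees $e_i^1 \notin z_j$ for $j>i$; a pivot edge may perfectly well lie on the boundary cycles $z_j$ of \emph{earlier} disks ($j<i$), so the disks do interact and your stated reason for non-interaction does not hold. The correct repair is the one the paper uses after \Cref{alg:collapsedisks}: process the disks in increasing order of $i$, so that when disk $i$ is reached all cofaces of $e_i^1$ in disks $j<i$ have already been removed and $e_i^1$ is genuinely free (upper-triangularity kills the cofaces in disks $j>i$); equivalently, observe that any $V$-path leaving disk $i$ through a matched boundary edge must enter a disk of strictly larger index, and no path returns from the disks to the dunce hats, so no directed cycle can close up. With that substitution your proof is correct and coincides with the paper's argument.
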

 \begin{proof} We prove the claim by induction. The base step of the induction is provided by~\Cref{cl:sateraseap}.
Then, we repeatedly apply the steps below until all gates in $\kc$  are erased: 
\begin{enumerate}
\item  Choose the highest indexed gate whose associated dunce hats haven't been erased.
 \item Apply the collapses  described in~\Cref{lem:nexteraseap} to erase  dunce hats associated to $G_k$.
 \end{enumerate}
 \end{proof}
 
  The complex obtained after erasing all dunce hats in $\kc$ is denoted by $K^{3}(C)$. 
 We have, $K^{1}(C) \searrow K^{2}(C) \searrow K^{3}(C)$.
 
 \paragraph*{Step 4: Deleting critical $1$-simplices}
 
 Note that in complex $K^{3}(C)$, the  $s$-edges $s_{f}^{(i,1)}$ from $\gadget^{(i,1)}$,  for all $G_i \in \satgates$ have no cofaces.
 Since they were already made critical in Step 1, by \Cref{lem:easy}, we can delete  $s_1^{(i,1)}$ from $K^{3}(C)$ for all $G_i \in \satgates$, and continue designing the gradient vector field on the subcomplex  $K^{4}(C)$ obtained after the deletion.
 
 \paragraph*{Step 5: Removing dangling edges}
 
 Since the $2$-collapses executed in Steps 1-3 are as described in~\Cref{rem:gradient}~and~\Cref{{fig:gadgetpaths}}, it is easy to check that for each $\maingadgetx \subset \kc$, the edges that remain are of the form:
$ \{ \{1,2\}^{\zeta_1},$   $\{2,3\}^{\zeta_1},$   $\{2,6\}^{\zeta_1},$   $\{5,6\}^{\zeta_1},$  $\{7,6\}^{\zeta_1},$  $\{b_k, 6\}^{\zeta_1},$  $\{c_k,6\}^{\zeta_1},$  $\{d_k,6\}^{\zeta_1},$  $\{b_k,z_k\}^{\zeta_1},$  $\{z_k,y_k\}^{\zeta_1},$  $\{y_k,a_k\}^{\zeta_1} \}$  $\cup \FCC$
for $k \in [1,\ell]$, and,
 \begin{compactitem}
 \item $\FCC =  \{ \{v,4\} \}$ if $m = 1$, where $v = 6$, if $\ell$ is even, and $v = \nicefrac{(\ell - 1)}{2} + 1$ if $\ell$ is odd, 
 \item $\FCC = \{s_1^{\zeta_1}, \{4,8\}^{\zeta_1}\}$ if $m=2$ and  $s_2^{\zeta_1}$ is removed as part of a $2$-collapse,    
 \item $\FCC = \{s_2^{\zeta_1}, \{4,8\}^{\zeta_1}\}$ if $m=2$ and  $s_1^{\zeta_1}$ is removed as part of a $2$-collapse.
 \end{compactitem}
 
 We now execute the following $1$-collapses (1-3 highlighted in green, and 4 highlighted in blue as illustrated in~\Cref{{fig:gadgetpaths}}).
% For details see \Cref{sec:stepper}.

  \begin{enumerate}
 \item Since  $5^{\zeta_1},$  $7^{\zeta_1},$   ${c_k}^{\zeta_1},$ and  ${d_k}^{\zeta_1}$   are free for all $k \in [1,\ell]$, for all  \mbox{$\maingadgetx \subset \kc$}, we execute the following collapses  for all $k \in [1,\ell]$, for all  \mbox{$\maingadgetx \subset \kc$}:
 \[(5^{\zeta_1}, \{5,6\}^{\zeta_1}), \quad  (7^{\zeta_1},\{7,6\}^{\zeta_1}),\quad (c_k^{\zeta_1},\{c_k,6\}^{\zeta_1}), \quad \text{ and } \quad (d_k^{\zeta_1},\{d_k,6\}^{\zeta_1}).\]
 \item  Since  the vertices $4^{\zeta_1}$ are free for all  \mbox{$\maingadgetx \subset \kc$}, for all  \mbox{$\maingadgetx \subset \kc$}:
 \begin{itemize}
 \item  if $m = 1$, we execute the collapse $(4^{\zeta_1}, \{4,v\}^{\zeta_1})$, where $v = 6$, if $\ell$ is even, and $v = \nicefrac{(\ell - 1)}{2} + 1$ if $\ell$ is odd,
 \item if $m=2$, we execute the collapse $(4^{\zeta_1}, \{4,8\}^{\zeta_1})$.
 \end{itemize}
 \item Now, ${a_k}^{\zeta_1}$  become free for all $k \in [1,\ell]$, for all  \mbox{$\maingadgetx \subset \kc$}. So, we execute the collapses  $({a_k}^{\zeta_2},\{a_k,y_k\}^{\zeta_2})$  for all $k \in [1,\ell]$, for all  \mbox{$\maingadgetx \subset \kc$}.
 \item Now, ${8}^{\zeta_1}$  become free for all  \mbox{$\gadget_{2,\ell}^{\zeta_1} \subset \kc$}. 
 So,  for all  \mbox{$\gadget_{2,\ell}^{\zeta_1} \subset \kc$}:
 \begin{itemize} 
 \item If $s_2^{{\zeta_1}}$ was removed as part of a $2$-collapse, we execute the collapse $({8}^{\zeta_1}, s_1^{{\zeta_1}})$,
\item  else if $s_1^{{\zeta_1}}$ was removed as part of a $2$-collapse, we execute the collapse $({8}^{\zeta_1}, s_2^{{\zeta_1}})$.
 \end{itemize} 
 Note that because of the identifications, there may exist several  \mbox{$\altgadgetx \subset \kc$} with  points $ {y_k}^{\zeta_2} \in \altgadgetx$ that are identical to $8^{\zeta_1}$. 
 So, the above collapses $({8}^{\zeta_1}, s_r^{{\zeta_1}}), r\in [1,2]$ may appear as  $ ({y_k}^{\zeta_2},   {\{y_k, z_k\}}^{\zeta_2} )$ in other dunce hats \mbox{$\altgadgetx \subset \kc$}.
 \end{enumerate}
 %$\{y_k,a_k\}^{{\zeta_1}} \}$
 
   The complex obtained after collapsing all the dangling edges is denoted by $K^{5}(C)$. 
 
 So far, we have, $K^{1}(C) \searrow K^{2}(C) \searrow K^{3}(C)$ and $K^{4}(C) \searrow K^{5}(C)$.

%   The complex obtained after collapsing all the dangling edges is denoted by $K^{5}(C)$. 
 
 %So far, we have, $K^{1}(C) \searrow K^{2}(C) \searrow K^{3}(C)$ and $K^{4}(C) \searrow K^{5}(C)$.
 
 \paragraph*{Step 6: Collapsing the cycle-filling disks}
 The  $1$-complex $H$ formed by the union of stems of $\gadget_{m,\ell}^{\zeta}$,  for all $\zeta \in [1, |\Xi|]$ described in~\Cref{sec:kc} is clearly a subcomplex of $K^{5}(C)$.
 Let $\mathscr{D} = \kc \setminus \kprime$ be the set described in \Cref{alg:buildkc} obtained while building $\kc$ from  $\kprime$.
It is, in fact, easy to check that $K^{5}(C) = H \sqcup \mathscr{D}$. 
 Next, we show that $H$ is a connected graph.
 
  \begin{lemma}  $H$ is connected. \label{lem:hconnap}
 \end{lemma}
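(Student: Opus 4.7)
My plan is to establish the connectivity of $H$ in two stages: first showing that each individual stem is connected, and then showing that identifications chain the stems together through the structure of the circuit.

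For the first stage, I observe that the stem of a single dunce hat $\gadget_{m,\ell}^{\zeta}$ is, by definition, the graph with edges $\{1,2\}^\zeta, \{2,3\}^\zeta, \{2,6\}^\zeta$ together with $\{6, b_k\}^\zeta$ and $\{b_k, z_k\}^\zeta$ for each $k \in [1,\ell]$. This is visibly a tree: vertex $2^\zeta$ connects to $1^\zeta, 3^\zeta, 6^\zeta$, and $6^\zeta$ in turn connects to each $b_k^\zeta$, which connects to $z_k^\zeta$. Hence every individual stem is connected.

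For the second stage, I analyze how stems of distinct dunce hats are pasted. By construction, every identification between two dunce hats glues some $t$-edge $\{y_k, z_k\}^{\zeta'}$ of one dunce hat to some $s$-edge of another. The vertex $z_k^{\zeta'}$ lies in the stem of $\zeta'$ (by the definition of the stem), and each $s$-edge has at least one endpoint, either $1$ or $3$, lying in the stem of the other dunce hat. For $\gadget_{1,\ell}$ the unique $s$-edge is $\{1,3\}$, both of whose endpoints are in the stem, so the identification necessarily pairs $z_k^{\zeta'}$ with a stem vertex. For $\gadget_{2,\ell}$ the $s$-edges $\{1,8\}$ and $\{3,8\}$ have vertex $8$ outside the stem, so one must use that the orientation of identification fixed in the construction (illustrated by the arrows in \Cref{fig:nonoutputgates,fig:nonoutputgates-glue}) pairs $z_k^{\zeta'}$ with the non-$8$ endpoint. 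I expect this to be the one genuinely tedious step of the proof --- it amounts to a finite case check against the conventions set in \Cref{sec:kcmain}.

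For the third and final stage, I define an auxiliary graph $\Gamma$ whose vertices are the dunce hats appearing in $K'(C)$ and whose edges record pairs of dunce hats sharing an identified edge. By the previous paragraph, adjacency in $\Gamma$ implies the two corresponding stems share a vertex in $H$. It therefore suffices to show that $\Gamma$ is connected. This follows from the connectivity of the circuit $C$: within a single block associated to an ordinary gate, the three dunce hats $^{1}\gadget^{(i,j)}, {}^{2}\gadget^{(i,j)}, {}^{3}\gadget^{(i,j)}$ are mutually glued through the identifications of $^{3}s_1^{(i,j)}$ with $t$-edges of $^{1}\gadget^{(i,j)}$ and $^{2}\gadget^{(i,j)}$; and across gates, the feedback and input edges of each block are identified with $t$-edges of dunce hats associated to the predecessors and successors of that gate. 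Since every gate of $C$ is connected (through predecessor/successor chains) to the output gate, tracing these identifications exhibits a path in $\Gamma$ between any two dunce hats. Combining the three stages, any vertex of any stem is connected in $H$ to any other via stem-internal paths punctuated by identified vertices, proving that $H$ is connected.
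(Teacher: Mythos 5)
Your proof is correct and follows essentially the same route as the paper's: each stem is a tree, every identification of a $t$-edge to an $s$-edge places the stem vertex $z_k$ at a stem vertex ($1$ or $3$) of the other dunce hat, and connectivity of the circuit together with the within-block gluings chains all the stems together. Your explicit flagging of the vertex-$8$ subtlety in the $m=2$ case is a point the paper simply asserts from its identification conventions, so your deferral to that convention matches the paper's own level of detail and leaves no gap.
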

 \begin{proof}
 First note that  for every $\zeta \in [1, |\Xi|]$, the stem of $\gadget_{m,\ell}^{\zeta}$ is connected.
 In particular, the stem of $\gadget_{m,\ell}^{\zeta}$ connects $1^{\zeta}$ and $3^{\zeta}$ to $z_k^{\zeta}$ all $k \in \ell $.
 
 Suppose $G_i$ and $G_j$ are two gates in $C$ such that  $G_i$  is the predecessor of $G_j$.
 Then, in every dunce hat associated to $G_i$, there exists a $t$-edge that is connected to an $s$-edge to every dunce hat associated to $G_j$.
 That is, for all $p, q \in [1,n]$ there exists a $z_k^{(i,p)}$ that  is identified to either $1^{(j,q)}$ or $3^{(j,q)}$.
 Thus, the stems of $G_i$ are connected to the stems of $G_j$. 
 Now, since $C$ itself is a connected directed acyclic graph, it follows that the complex $H$ which is the the union of stems of $\gadget_{m,\ell}^{\zeta}$,  for all $\zeta \in [1, |\Xi|]$ is also connected.
 \end{proof}

 Now, as in~\Cref{alg:buildkc}, let $\mathbf{M}$ be the matrix whose columns represent a basis $\BCC$ of the cycle space of $H$.
 The cycles $z_i$ of $\BCC$ are represented by columns $M^i$. Let $n_i$ denote the number of edges in $z_i$. 
 Let the vertices $v_i^j \in z_i, j \in [1,n_i]$ and the edges $e_i^j \in z_i, j \in [1,n_i]$ be indexed so that $e_i^1$ represents the lowest entry (that is the pivot) for column  $M^i$, and $v_i^{j}$ and $v_i^{j+1}$ form the endpoints of $e_i^j$.
 Simplices $\sigma_i^j$ are indexed so that the vertices incident on  $\sigma_i^j$ are $v_i^{j}$ and $v_i^{j+1}$ and $v_i$. Please refer to~\Cref{fig:cyclefill} for an example of a cycle $z_i$ with six edges.
 The procedure to collapse all the disks in $K^{5}(C) \subset \kc$ corresponding to cycles $z_i \in \BCC$ is  described in \Cref{alg:collapsedisks}.

  \begin{figure}[hbt]
   \centering{

  \begin{tikzpicture}[scale=5,cap=round]
% Radius of regular polygons
  \newdimen\R
  \newdimen\Rnew
  \R=0.8cm
  \coordinate (center) at (0,0);
 \draw (0:\R)
     \foreach \x in {60,120,...,360} {  -- (\x:\R) }
              -- cycle (300:\R) node[below] {$v^1_i$}
              -- cycle (240:\R) node[below] {$v^2_i$}
              -- cycle (180:\R) node[left] {$v^3_i$}
              -- cycle (120:\R) node[above] {$v^4_i$}
              -- cycle (60:\R) node[above] {$v^5_i$}
              -- cycle (0:\R) node[right] {$v^6_i$};
  \draw { (0:\R) -- (60:\R) -- (center) -- (0:\R) } [fill=verylightgray];
  \draw { (60:\R) -- (120:\R) -- (center) -- (60:\R) } [fill=lightergray];
    \draw { (120:\R) -- (180:\R) -- (center) -- (120:\R) } [fill=verylightgray];
  \draw { (180:\R) -- (240:\R) -- (center) -- (180:\R) } [fill=lightergray];
    \draw { (240:\R) -- (300:\R) -- (center) -- (240:\R) } [fill=verylightgray];
  \draw { (0:\R) -- (300:\R) -- (center) -- (0:\R) }  [fill=lightergray];
   %\R=0.1cm
    \node[below] at($(300:\R)!1/2!(240:\R)$) {$e^1_i$};
     \node[below left] at($(240:\R)!1/2!(180:\R)$) {$e^2_i$};
      \node[above left] at($(180:\R)!1/2!(120:\R)$) {$e^3_i$};
            \node[above ] at($(120:\R)!1/2!(60:\R)$) {$e^4_i$};
      \node[above right] at($(60:\R)!1/2!(0:\R)$) {$e^5_i$};
      \node[below right] at($(0:\R)!1/2!(300:\R)$) {$e^6_i$};
     
 \Rnew=0.0cm      
     \node at($(300:\R)!1/3!(240:\R)!1/3!(0:\Rnew)$) {$\sigma^1_i$};
          \node at($(240:\R)!1/3!(180:\R)!1/3!(0:\Rnew)$) {$\sigma^2_i$};
               \node at($(180:\R)!1/3!(120:\R)!1/3!(0:\Rnew)$) {$\sigma^3_i$};
                       \node at($(120:\R)!1/3!(60:\R)!1/3!(0:\Rnew)$) {$\sigma^4_i$};
                             \node at($(60:\R)!1/3!(0:\R)!1/3!(0:\Rnew)$) {$\sigma^5_i$};
       \node at($(0:\R)!1/3!(300:\R)!1/3!(0:\Rnew)$) {$\sigma^6_i$};

          \draw[thick,->] ($(300:\R)!1/2!(240:\R)$)--($(300:\R)!1/2!(240:\R)!1/3!(0:\Rnew)$);
            \draw[thick,->] ($(240:\R)!3/5!(0:\Rnew)$)--($(240:\R)!3/5!(0:\Rnew)!1/3!(180:\R)$);
         \draw[thick,->] ($(180:\R)!3/5!(0:\Rnew)$)--($(180:\R)!3/5!(0:\Rnew)!1/3!(120:\R)$);
           \draw[thick,->] ($(120:\R)!3/5!(0:\Rnew)$)--($(120:\R)!3/5!(0:\Rnew)!1/3!(60:\R)$);
            \draw[thick,->] ($(60:\R)!3/5!(0:\Rnew)$)--($(60:\R)!3/5!(0:\Rnew)!1/3!(0:\R)$);
       \draw[thick,->] ($(0:\R)!3/5!(0:\Rnew)$)--($(0:\R)!3/5!(0:\Rnew)!1/3!(300:\R)$);
       
         \draw[thick,->] (0:\Rnew)--($(0:\Rnew)!1/2!(300:\R)$); 
       
\R=0.03cm
 \node[above right] at (0:\R) {$v_i$};

 % \draw (0:\R) \foreach \x in {60,120,...,360} { -- (\x:\R) }
   % [fill=white] -- cycle (center) node {1};
\end{tikzpicture}
}
 \caption{ The above figure shows a triangulated disk that fills the cycle $z_i$. Here, $e^1_i$ is the pivot edge of $z_i$. The gradient field starts with a gradient pair that includes the pivot edge.
   \label{fig:cyclefill}}
  \end{figure}
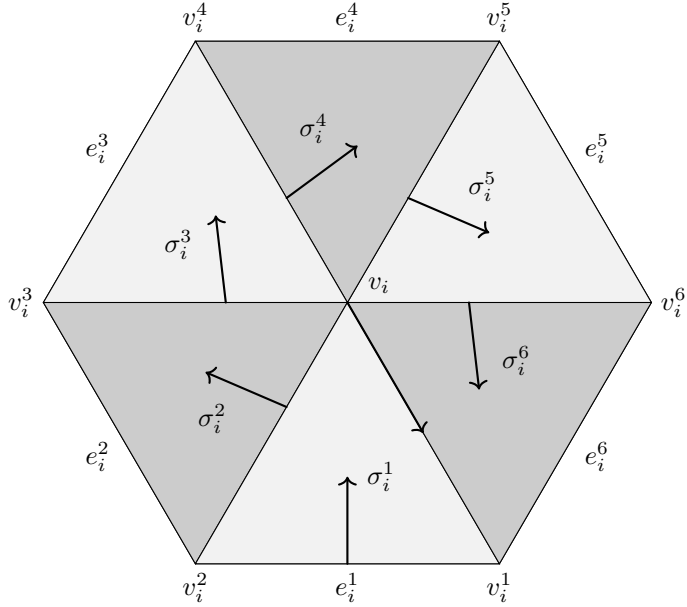

\begin{algorithm}
\caption{Procedure for collapsing cycle-filling disks}\label{alg:collapsedisks}
\begin{algorithmic}[1]

 \For{$i \gets 1,  |\BCC|$}
 	\State{Execute the collapse $(e_i^1,\sigma_i^1).$}  \label{lst:line:maincollapse}
	\For{$j \gets 2, n_i$}
		\State{Execute the collapse $(\{v_i,v_i^j\},\sigma_i^j).$}
	\EndFor
	\State{Execute the collapse $(v_i,\{v_i,v_i^1\})$.}
 \EndFor
 
 \State{\textsc{Return } $T$.}
\end{algorithmic}
\end{algorithm} 

 Note that in \Cref{alg:collapsedisks}, it is possible to execute the collapse $(e_i^1,\sigma_i^1)$ for each $i$ because the matrix $M$ of basis $\BCC$ is upper-triangular.
 This guarantees that after collapsing all the disks  corresponding to cycles $z_k, k \in [1,i-1]$,  $e_i^1$ is  free.
 Denote the complex obtained at the end of \Cref{alg:collapsedisks} as $T$.
 
 \paragraph*{Step 7: Collapsing the tree}

 Now observe that \Cref{alg:collapsedisks} removes all simplices in $\mathscr{D}$ from $K^{5}(C)$.
So, in particular, $T \subset H \subset K^{5}(C)$.
Moreover,  the pivot edges $e_i^1$ from cycles $z_i$ are also removed as part of $2$-collapses in Line~\ref{lst:line:maincollapse}~of~\Cref{alg:collapsedisks}.
In other words, $T = H \setminus  \bigcup_{i=1}^{|\BCC|}\{ e_i^1\}$, where $\BCC$ forms a basis for cycle space of $H$.
 
 \begin{claim} $T$ is a tree.
 \end{claim}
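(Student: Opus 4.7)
The plan is to verify the two defining properties of a tree---connectedness and acyclicity---using that $H$ is connected (Lemma~\ref{lem:hconnap}) and that $\BCC$ is a cycle basis of $H$ whose pivots obey the ordering observation in Remark~\ref{rem:basicobs}.

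For connectedness, I would delete the pivot edges $e_1^1,\dots,e_{|\BCC|}^1$ from $H$ one at a time in increasing order of $i$, and show by induction that the graph stays connected after each deletion. At step $i$, the edges removed so far are $e_1^1,\dots,e_{i-1}^1$, all of which are $\prec$-strictly higher than $e_i^1$ (the while loop in Algorithm~\ref{alg:buildkc} always picks the $\prec$-highest edge lying on a simple cycle). Since $e_i^1$ is by construction the $\prec$-highest edge of $z_i$, none of the previously removed edges lies in $z_i$. Hence $z_i \setminus \{e_i^1\}$ is still present in the current subgraph and provides a path between the two endpoints of $e_i^1$, so removing $e_i^1$ does not disconnect the graph. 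Iterating gives that $T$ is connected.

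For acyclicity, the cleanest route is an edge count: since $H$ is connected, its $\ztwo$-cycle space has dimension $|E(H)| - |V(H)| + 1 = |\BCC|$, so $T$ has exactly $|V(H)| - 1$ edges on $|V(H)|$ vertices, and together with connectedness this forces $T$ to be a tree. Alternatively one can derive acyclicity directly from the pivot structure: if $T$ contained a simple cycle $C$, then $C = \sum_{i \in S} z_i$ for some nonempty $S \subseteq [|\BCC|]$ since $\BCC$ is a cycle basis of $H$; letting $i^{*} = \min S$, Remark~\ref{rem:basicobs} gives $e_{i^{*}}^1 \in z_{i^{*}}$ and $e_{i^{*}}^1 \notin z_j$ for every $j > i^{*}$, so $e_{i^{*}}^1$ appears exactly once in this $\ztwo$-sum and therefore lies in $C$, contradicting $e_{i^{*}}^1 \notin T$. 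I do not anticipate any real obstacle; the entire argument rests on the simple pivot-ordering observation that each deletion leaves the corresponding alternative path in place.
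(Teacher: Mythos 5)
Your proof is correct and follows essentially the same route as the paper: the paper argues that removing each pivot $e_i^1$ leaves $\beta_0$ unchanged and drops $\beta_1$ by one, which is exactly your explicit observation that the surviving cycle $z_i\setminus\{e_i^1\}$ provides an alternative path, together with the count that all $|\BCC|$ independent cycles get destroyed. Your second, pivot-based acyclicity argument via Remark~\ref{rem:basicobs} is a fine (slightly more self-contained) variant of the same counting step, so there is nothing to add.
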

\begin{proof} By \Cref{lem:hconnap}, $H$ is connected. Removal of each edge $e_i^{1}$ from $H$, decreases the $\beta_1$ of $H$ by $1$, whereas $\beta_0$ of $H$ is unaffected.
Hence, $T$ is connected.
Moreover, since we destroy all $|\BCC|$ cycles of  $H$, $T$ has no cycles, proving the claim.  
\end{proof}

Next, we greedily collapse the tree $T$ to a vertex $v_{0} \in K(C)$, which can be done in time linear in the size of $T$. 
Finally, we make $v_{0} $ critical.
Let $\VCC$ be the collection of gradient pairs arising out of all the collapses from Steps 1-7. 
Also, note that $K^{1}(C)$ is obtained from $K(C)$ by deletion of $m$ critical $2$-simplices. 
Then, $K^{1}(C) \searrow K^{3}(C)$. Then, $K^{4}(C)$  is obtained from $K^{3}(C)$ by deleting $m$ critical  $1$-simplices. Then, $K^{5}(C)$ is obtained from $K^{4}(C)$ by executing some $1$-collapses. Finally, $K^{5}(C) \searrow T \searrow v_{0} $.
So, using~\Cref{lem:easy}, we conclude that given a circuit $C$ with a satisfying assignment $A$ of Hamming weight $m$, we can obtain a vector field  $\VCC$  on $\kc$ with $m$ critical $2$-simplices, $m$ critical $1$-simplices and a single critical vertex.
Now, for a circuit $C$ if the assignment $A$  is, in fact, optimal, that is, assuming $m = \OPTA  $, then $\minrmm$ for complex $\kc$ has a solution of size $(2m+1)-1$ giving us the following proposition.

\begin{proposition} \label{prop:appendoptrel}
 $ \OPTB  \leq 2\cdot \OPTA   $.
\end{proposition}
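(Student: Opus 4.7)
The plan is to instantiate the seven-step construction from Section~\ref{sec:relateopt} with an optimal satisfying assignment $A$ of Hamming weight $m = \OPTA$ and then carefully enumerate the critical simplices of the resulting gradient vector field $\VCC$ on $\kc$. Since all the machinery has been developed in Steps 1--7, the proposition reduces to a careful bookkeeping argument together with the fact that the $\minrmm$ objective subtracts one from the critical count.

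First I would count critical $2$-simplices: Step~1 declares exactly $m$ of them, namely $\Gamma_1^{(i,1)}$ for each of the $m$ input gates $G_i$ with $A(G_i) = 1$. Claim~\ref{cl:unsateraseap} ensures that Steps~2 and~3 erase all remaining $2$-simplices through elementary $2$-collapses (this is the content of \Cref{lem:successiveforwardap,lem:lasteraseap,lem:nexteraseap}), and Steps~5--7 add no further critical $2$-simplices. Next I would count critical $1$-simplices: Step~1 also declares the $m$ feedback edges $s_f^{(i,1)}$ critical, these are removed in Step~4 via \Cref{lem:easy} (permissible because after Step~3 they carry no cofaces), Step~5 performs only elementary $1$-collapses, Step~6 only collapses through pairs $(e_i^1,\sigma_i^1)$ and $(\{v_i,v_i^j\},\sigma_i^j)$, and Step~7 collapses the remaining tree $T$ without leaving any unmatched edges. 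Finally, Step~7 declares a single critical $0$-simplex $v_0$.

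Summing, the total number of critical simplices in $\VCC$ is $m + m + 1 = 2m+1$, so the $\minrmm$ objective evaluates to $\ALGB = (2m+1) - 1 = 2m$. Since $A$ was chosen optimally, $m = \OPTA$ and therefore $\OPTB \leq \ALGB = 2\,\OPTA$, which is the desired inequality.

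The only subtlety is verifying that Steps~5--7, introduced specifically to handle the cycle-filling simplices in $\mathscr{D}$ and the stem subcomplex $H$, contribute no extra critical simplices beyond $v_0$. This follows from the upper-triangular structure of $\mathbf{M}$ noted in \Cref{rem:basicobs}, which guarantees that when Algorithm~\ref{alg:collapsedisks} reaches cycle $z_i$ the pivot edge $e_i^1$ is free, so each disk collapses cleanly; and from the fact that $T = H \setminus \bigcup_{i=1}^{|\BCC|}\{e_i^1\}$ is a tree (since we destroy exactly $|\BCC| = \dim H_1(H;\mathbb{Z}_2)$ independent cycles while preserving connectivity by \Cref{lem:hconnap}), which collapses to a single point.
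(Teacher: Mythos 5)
Your proposal is correct and follows essentially the same route as the paper: instantiate Steps 1--7 with an optimal satisfying assignment of Hamming weight $m = \OPTA$, verify that the construction yields a gradient with $m$ critical $2$-simplices, $m$ critical $1$-simplices, and one critical vertex, and conclude $\OPTB \le (2m+1)-1 = 2\,\OPTA$. Your accounting of why Steps 5--7 contribute only the single critical vertex (upper-triangularity of $\mathbf{M}$ and the tree structure of $T$) is exactly the justification the paper relies on.
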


We highlight the entire collapsing sequence in~\Cref{fig:gadgetpaths}~$(a)$~and~$(b)$. First we perform the $2$-collapses as described in Steps 2-3. Then, the $1$-collapses for highlighted edges (in green) are executed. This is followed by $1$-collapses for highlighted edges (in blue), whenever these edges are available. These edges may not be  available
  if they are involved in $2$-collapses in other dunce hats, or if they are made critical. After executing the above collapses, for edges in green and blue, we first execute the $2$-collapses to erase all the cycle-filling disks, which leaves behind a tree supported by the edges in pink. The tree is then collapsed to a point.

%\begin{figure}
%\centering

 %\caption{  
  % \label{fig:nonoutputgates}}
%\end{figure}    

%In order to describe  the procedure for construction of $K(C)$, we need to first expound on the construction of an intermediate complex $K'(C)$ associated to circuit $C$.
%To do this, first fix an arbitrary total order $\prec$ on the gates $V(C)$ of $C$. 
%The gates are thus numbered $G_1$ to $G_n$ to reflect the order imposed by $\prec$.
%To each gate in circuit, we attach either a single dunce hat, or multiple dunce hats which are isomorphic as subcomplexes. 
%So, we sometimes say that we associate to each gate in the circuit, either one \emph{copy} of a dunce hat or $n$ \emph{copies} of dunce hats.

\section{Hardness results for $\minrmorse$} \label{sec:mainminrmorse}

For maps $K$ and $\ICC$ described in \Cref{sec:kcmain}, we can establish the following relations.

\begin{proposition} \label{prop:optrel}
 $ \OPTB  \leq 2\cdot \OPTA   $.
\end{proposition}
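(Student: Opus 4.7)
The plan is to derive this proposition as an immediate consequence of the construction laid out in \Cref{sec:relateopt}, whose analysis already establishes the same bound (stated there as \Cref{prop:appendoptrel}). So my main task is to package the Step~1--Step~7 construction into a concise proof.

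The first step would be to take an optimal satisfying assignment $A$ of the monotone circuit $C$, whose Hamming weight is by definition $\OPTA$. I would then invoke the construction from \Cref{sec:relateopt} to produce a discrete gradient $\VCC$ on $\kc$ from the assignment $A$. Recall that this construction proceeds in seven stages: (Step~1) make $\Gamma_1^{(i,1)}$ and the feedback edge $s_f^{(i,1)}$ critical for every input gate $G_i \in I(\satgates)$, then erase the remaining simplices of $\gadget^{(i,1)}$; (Step~2) forward collapsing through the dunce hats associated to satisfied non-input gates, culminating in the erasure of all dunce hats associated to the output gate via \Cref{cl:sateraseap}; (Step~3) backward collapsing, erasing all remaining dunce hats via \Cref{cl:unsateraseap}; (Step~4) deleting the critical $1$-simplices left dangling with no cofaces; (Step~5) collapsing the remaining dangling $1$-simplices in each dunce hat; (Step~6) collapsing the cycle-filling disks via \Cref{alg:collapsedisks}, which is well-defined because the basis matrix $\mathbf{M}$ is upper-triangular (\Cref{rem:basicobs}); and (Step~7) collapsing the resulting tree $T$ down to a single vertex $v_0$.

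I would then tally the critical simplices produced by $\VCC$. The only critical $2$-simplices introduced are the $\OPTA$ copies of $\Gamma_1^{(i,1)}$ corresponding to input gates $G_i$ with $A(G_i)=1$; the only critical $1$-simplices introduced are the $\OPTA$ corresponding feedback edges $s_f^{(i,1)}$; and Step~7 contributes the single critical $0$-simplex $v_0$. Hence $\VCC$ has exactly $2\,\OPTA + 1$ critical simplices in total. Since $\ALGB$ is defined as the number of critical simplices in $\VCC$ minus one, we obtain $\ALGB = 2\,\OPTA$, and by optimality of $\OPTB$ on $\kc$, this immediately yields $\OPTB \leq 2\,\OPTA$.

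The only subtlety worth flagging is ensuring that the erasure in Steps~1--3 is indeed realized by a valid discrete gradient (rather than just a sequence of elementary collapses), but this is handled by repeated application of \Cref{lem:easy} together with the erasing-gradient formulation from \Cref{sub:Discrete-Morse-Theory}: after making the $2\,\OPTA$ simplices of Steps~1 and~4 critical and noting that no coface conflicts arise (the chosen simplices have no higher cofaces by the time they are declared critical), the remaining subcomplex is collapsed, and the union of all collapse pairs together with the critical simplices forms a legitimate discrete gradient on $\kc$. No real obstacle arises here since every ingredient has already been verified in \Cref{sec:relateopt}; the proposition is essentially a restatement of \Cref{prop:appendoptrel} repeated in the current section for convenience.
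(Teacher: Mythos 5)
Your proposal is correct and follows the paper's own route: the paper proves this proposition by citing \Cref{prop:appendoptrel}, i.e.\ the Step~1--7 construction of \Cref{sec:relateopt}, which applied to an optimal satisfying assignment yields a gradient with $\OPTA$ critical $2$-simplices, $\OPTA$ critical $1$-simplices, and one critical vertex, hence $\OPTB \leq 2\cdot\OPTA$. Your tally and the handling of criticality via \Cref{lem:easy} match the paper's argument exactly.
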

\begin{proof}
For proof, please refer to \Cref{prop:appendoptrel} in  \Cref{sec:relateopt}.
\end{proof}

\begin{proposition} \label{prop:mineq}
$\ALGBB  \geq 2\cdot \ALGAA   $
\end{proposition}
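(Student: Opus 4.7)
My plan is to combine the Morse-theoretic Euler identity $m_0 - m_1 + m_2 = \chi(\kc)$, valid for any discrete gradient on the $2$-dimensional complex $\kc$, with the fact $\chi(\kc) = 1$ and the elementary bound $m_0 \geq 1$. Here $m_d$ denotes the number of critical $d$-simplices of $\tilde{\VCC}$.

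Writing $k = \ALGAA$, the first step is to note $m_2 \geq k$. By the definition of the map $\ICC$ from~\Cref{sec:kcmain}, the $k$ gates counted in the Hamming weight of $\ICC(C,\tilde{\VCC})$ are precisely those input gates $G_i$ for which the associated dunce hat $\gadget^{(i,1)}$ contains at least one critical $2$-simplex of $\tilde{\VCC}$. Since every identification in the construction of~\Cref{sec:kcmain} is performed along a $1$-simplex, the dunce hats $\gadget^{(i,1)}$ associated to distinct input gates remain pairwise disjoint on their $2$-simplices in $\kc$, so the $k$ witnessing critical $2$-simplices are all distinct. Combined with $m_0 \geq \beta_0(\kc) = 1$ (since $\kc$ is non-empty and connected by construction), the Euler identity yields $m_1 = m_0 + m_2 - 1 \geq m_2$, and therefore
\[
\ALGBB \;=\; m_0 + m_1 + m_2 - 1 \;\geq\; 1 + 2m_2 - 1 \;=\; 2m_2 \;\geq\; 2k \;=\; 2\ALGAA,
\]
as required.

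The one non-routine ingredient is the claim $\chi(\kc) = 1$; in fact only $\chi(\kc) \leq 1$ is needed for the chain of inequalities above. This is a direct consequence of the contractibility of $\kc$ asserted in~\Cref{rem:topology}, but since that remark is stated without proof in the excerpt I would instead verify the weaker inequality by a self-contained combinatorial Euler count following~\Cref{sec:kcmain} and~\Cref{alg:buildkc}: each modified dunce hat in the initial disjoint union contributes $\chi = 1$; each identification of two edges with disjoint endpoints decreases $\chi$ by $1$, while an identification whose endpoints have already been merged leaves $\chi$ unchanged; and each disk adjoined by~\Cref{alg:buildkc} to fill a cycle of the stem $H$ contributes $+1$ to $\chi$.

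I expect the main technical burden to lie in this bookkeeping, since the number of $t$-edges identified with a given feedback or input edge depends on the fan-out $\theta_i$ and fan-in parameters $\phi_i,\psi_i$ of each gate, and the number of disks added equals the dimension of the cycle basis $\BCC$ of $H$, which is governed by the same parameters. Once the contributions are matched gate by gate, the deficits from the edge identifications should balance exactly against the disks, confirming $\chi(\kc) = 1$; the rest of the proof is then essentially a one-line consequence of the Euler identity combined with $m_0 \geq 1$ and $m_2 \geq k$.
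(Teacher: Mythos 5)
Your reduction of the inequality to the two facts $m_2(\tilde{\VCC}) \geq \ALGAA$ (by counting distinct critical $2$-simplices in the input-gate dunce hats) and $\ALGBB \geq 2m_2(\tilde{\VCC})$ (from the Morse equality plus $m_0 \geq 1$) matches the skeleton of the paper's argument (\Cref{sec:algalg}). The problem is that your one genuinely quantitative input, $\chi(\kc) \leq 1$, is not actually proved: you explicitly defer it to a combinatorial Euler count of the gluing that you only sketch, and the sketch is already off --- identifying two edges both of whose endpoints have previously been merged \emph{raises} $\chi$ by $1$ (it removes an edge and no vertices), it does not leave $\chi$ unchanged, so the sign bookkeeping you propose does not obviously close, and making it close would essentially amount to proving the homological statements of \Cref{rem:topology} ($H_1(K'(C))$ carried by the stem, $H_2$ trivial, $|\BCC|=\beta_1(H)$) that the paper deliberately avoids. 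The paper gets the Morse equality for free in a different way: the forward direction (\Cref{sec:relateopt}, \Cref{prop:appendoptrel}) exhibits an explicit gradient on $\kc$ with $m$ critical $2$-simplices, $m$ critical $1$-simplices and one critical vertex, so Forman's Morse equality forces $m_0(\tilde{\VCC})-m_1(\tilde{\VCC})+m_2(\tilde{\VCC})=1$ for \emph{every} gradient, with no Euler-characteristic computation of the glued complex needed. As written, your proof has a hole exactly where the work is.

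A second point you pass over: the paper's backward direction spends most of its effort (the notion of properly satisfied gates, \Cref{lem:propsat}, \Cref{lem:propsatout}, \Cref{cl:allsat}, and the separate case $m_2(\tilde{\VCC})\geq n$ handled by the all-ones assignment) showing that $\ICC(C,\tilde{\VCC})$ is a \emph{satisfying} assignment. If you read $\ALGAA$ purely as a Hamming weight your chain of inequalities is numerically fine, but the objective value of $\mincircuitsat$ is only meaningful for feasible solutions, and the proposition is used downstream (e.g.\ in \Cref{prop:opteq}, where one needs $\ALGAA \geq \OPTA$) in a way that requires feasibility; so a complete proof in the paper's sense cannot simply omit this, or must at least say explicitly why it is not needed for the bare inequality.
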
 
\begin{proof}
For proof, please refer to \Cref{prop:mineqappend} in  \Cref{sec:algalg}.
\end{proof}

\begin{proposition} \label{prop:equality}
$\OPTB = 2 \OPTA$.
\end{proposition}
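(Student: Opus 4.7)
The plan is to combine Propositions~\ref{prop:optrel} and~\ref{prop:mineq}, which together sandwich $\OPTB$ between $2\OPTA$ from above and below. The upper bound $\OPTB \le 2\OPTA$ is precisely Proposition~\ref{prop:optrel}, established by the explicit forward construction: starting from any optimal satisfying assignment of weight $\OPTA$, one produces a gradient vector field on $K(C)$ with $2\OPTA + 1$ critical simplices, i.e.\ $2\OPTA$ counted by the $\minrmm$ objective.

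For the matching lower bound $\OPTB \ge 2\OPTA$, I would take a gradient vector field $\tildev$ on $K(C)$ realizing the optimum, so that $\ALGBB = \OPTB$. Feed $\tildev$ into the map $\ICC$ of Section~\ref{sec:kcmain} to obtain an input assignment $\ICC(C,\tildev)$ of $C$. Proposition~\ref{prop:mineq} then gives $\ALGBB \ge 2\,\ALGAA$, and granted that $\ICC(C,\tildev)$ is actually a \emph{satisfying} assignment of $C$ (see below), optimality of $\OPTA$ forces $\OPTA \le \ALGAA$. Chaining the inequalities yields
\[
    2\OPTA \;\le\; 2\,\ALGAA \;\le\; \ALGBB \;=\; \OPTB,
\]
which together with Proposition~\ref{prop:optrel} produces the desired equality.

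The one subtle point that the argument above relies on, and that I expect to be the main obstacle, is confirming that the assignment $\ICC(C,\tildev)$ extracted from an \emph{arbitrary} gradient vector field satisfies $C$. This should follow from the design of the map $\ICC$, which activates an input gate precisely when its associated dunce hat contributes a critical $2$-simplex, combined with the rigidity imposed by the dunce-hat identifications of Section~\ref{sec:kcmain}: the only way for all dunce hats associated to the output gate to be erased by $\tildev$ (without introducing further critical $2$-simplices that would violate the count in Proposition~\ref{prop:mineq}) is for the activation pattern on the input gates to genuinely propagate satisfaction through $C$. This satisfiability claim is already implicit in the proof of Proposition~\ref{prop:mineq}, so at the level of the present proposition it suffices to assemble the two inequalities and observe that the induced assignment lies in the feasible region of $\mincircuitsat$ on $C$.
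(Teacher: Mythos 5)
Your proposal is correct and follows essentially the same route as the paper: the upper bound is the forward construction (Proposition~\ref{prop:optrel}/\ref{prop:appendoptrel}), and the lower bound comes from applying the backward-direction machinery to an optimal gradient, where the crucial ingredient — that $\ICC(C,\tildev)$ is a genuine satisfying assignment — is exactly what the paper establishes via Lemmas~\ref{lem:propsat}, \ref{lem:propsatout} and Claim~\ref{cl:allsat} before concluding $\OPTB \ge 2\OPTA$. The only cosmetic difference is that the paper re-derives the lower bound for an optimal field using the Morse equality $m_1(\tildev)=m_2(\tildev)$ (single critical vertex) rather than quoting Proposition~\ref{prop:mineq} directly, but the underlying facts are identical.
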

\begin{proof}
For proof, please refer to  \Cref{prop:equalityappend} in \Cref{sec:algalg}.
\end{proof}

\begin{proposition}\label{prop:errorbound} 
\[ \ALGA - \OPTA   \leq \frac{1}{2} \ALGB - \OPTB  ) \]
\end{proposition}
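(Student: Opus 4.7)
The plan is to derive this error bound as a direct algebraic consequence of the two preceding propositions, so no new combinatorial content is needed beyond what has already been established.

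First, I would recall Proposition \ref{prop:mineq}, which states that for any gradient vector field $\tildev$ on $\kc$, the value $\ALGBB$ of the associated $\minrmm$ solution is at least twice the Hamming weight $\ALGAA$ of the assignment $\ICC(C,\tildev)$ obtained by the map $\ICC$. Rewriting this as
\[ 2\,\ALGAA \leq \ALGBB, \]
we have an inequality comparing the two objective values on corresponding solutions. Second, I would invoke Proposition \ref{prop:equality}, which provides the exact relation between the optima, $\OPTB = 2\,\OPTA$, or equivalently $-2\,\OPTA = -\OPTB$.

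Adding these two inequalities gives
\[ 2\,\ALGAA - 2\,\OPTA \leq \ALGBB - \OPTB, \]
and dividing through by $2$ yields precisely the desired bound
\[ \ALGAA - \OPTA \leq \tfrac{1}{2}\bigl(\ALGBB - \OPTB\bigr). \]
Restating in the notation of the proposition (dropping the tilde) completes the proof.

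Since both ingredients are already proved (in \Cref{prop:mineqappend} and \Cref{prop:equalityappend} of \Cref{sec:algalg}), there is no real obstacle here; the only thing to be careful about is keeping the roles of $\VCC$ and $\ICC(C,\VCC)$ consistent so that the inequality from Proposition \ref{prop:mineq} and the equality from Proposition \ref{prop:equality} can be combined cleanly. Together with Proposition \ref{prop:optrel}, this also shows that the reduction $(K,\ICC)$ is an L-reduction from $\mincircuitsat$ to $\minrmm$ with parameters $\mu = 2$ and $\nu = \tfrac{1}{2}$, which is precisely the form needed to invoke \Cref{thm:newfactorappend} for transferring inapproximability bounds in subsequent sections.
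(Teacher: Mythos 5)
Your proposal is correct and is essentially the paper's own argument: combine the per-solution inequality $\ALGBB \geq 2\,\ALGAA$ with the relation between optima and divide by two. The only cosmetic difference is that you invoke the equality $\OPTB = 2\,\OPTA$ (\Cref{prop:equality}) where the paper cites \Cref{prop:optrel}; since only the direction $\OPTB \leq 2\,\OPTA$ is needed, the two derivations coincide.
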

\begin{proof}
%From \Cref{prop:optrel}, we have 
%\[ - 2 \OPTA \leq  - \OPTB \]
Combining  \Cref{prop:mineq,prop:optrel} proves the claim.
\end{proof}

%Let \[\ensuremath{\FCC=\left\{ f\in\ECC \mid \exists\sigma\in\DCC_{f} : \sigma\textnormal{ is a critical 2-simplex in }\VSS\right\} }.\]
%By Lemma~\ref{lem:feedbackhelper}, $\FCC$ is a feedback arc set.
%The corresponding solution $\ACC(\GCC,\VSS)$ for \threeOMAS is then simply the subgraph of $\GCC$ with edges $\ECC \setminus \FCC$. 
We will use the following straightforward fact about L-reductions. %which appears as Theorem~16.6 in a book by Williamson and Shmoys~\cite{WS10}.
\begin{theorem}[Williamson, Shmoys~\cite{WS10}]\label{thm:newfactor} If there is an L-reduction with parameters $\mu$ and $\nu$ from a minimization problem $A$ to a minimization problem $B$, and there is a $(1+ \delta )$-approximation algorithm for $B$,
then there is a $(1+ \mu\nu\delta)$-approximation algorithm for $A$.
\end{theorem}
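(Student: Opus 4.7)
The plan is to design a composite approximation algorithm for $A$ by chaining the L-reduction with the given approximation algorithm for $B$. Specifically, on an instance $x$ of $A$, the algorithm proceeds in three steps: (i) compute $f(x)$, which is an instance of $B$; (ii) run the $(1+\delta)$-approximation algorithm for $B$ on $f(x)$ to obtain a solution $y$ of $f(x)$; (iii) output $g(x,y)$, which is a solution of $x$. Since $f$ and $g$ are polynomial-time computable and the approximation algorithm for $B$ is polynomial-time, the overall procedure runs in polynomial time. The only nontrivial task is to bound the quality of the returned solution $g(x,y)$.

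For the analysis, I would combine the three guarantees in sequence. By the approximation guarantee for $B$, we have
\[
m_B(f(x),y) - \opb(f(x)) \leq \delta \cdot \opb(f(x)).
\]
By condition (4) of the L-reduction applied to the pair $(x,y)$,
\[
m_A(x,g(x,y)) - \opa(x) \leq \nu \bigl(m_B(f(x),y) - \opb(f(x))\bigr) \leq \nu \delta \cdot \opb(f(x)).
\]
Finally, condition (2) of the L-reduction gives $\opb(f(x)) \leq \mu \cdot \opa(x)$, so that
\[
m_A(x,g(x,y)) - \opa(x) \leq \mu \nu \delta \cdot \opa(x),
\]
which rearranges to $m_A(x,g(x,y)) \leq (1 + \mu\nu\delta)\,\opa(x)$, exactly the claimed approximation ratio.

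There is no real obstacle in the proof; it is a routine chaining of the three defining inequalities of an L-reduction together with the approximation hypothesis for $B$. The only point worth emphasizing in the write-up is that the approximation \emph{error} (rather than just the approximation ratio) is what propagates through condition (4), which is precisely why the notion of L-reduction is tailored so that additive error transfers linearly and the multiplicative ratio on $A$ depends only on the product $\mu\nu$ and on $\delta$, independent of the instance.
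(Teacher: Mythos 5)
Your proof is correct: the paper itself states this result without proof, citing it as Theorem~16.6 of Williamson and Shmoys, and your argument is precisely the standard chaining of the approximation guarantee for $B$ with conditions (2) and (4) of the L-reduction that appears in that reference. Nothing is missing, and the emphasis on the additive error propagating through condition (4) is exactly the right point.
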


%Finally, in order to establish hardness of approximation for $\minrmm$, 
Next, we shall use the following result by Alekhnovich et al.~\cite{ABMP}.

\begin{theorem}[Theorem 3,\cite{ABMP}] \label{thm:abmp}
Unless $\NP\subseteq\QP$, there is no polynomial time algorithm which can approximate $\mincircuitsat$ within a factor of $2^{\log^{(1-\epsilon)}n}$, for any $\epsilon >0$. 
 \end{theorem}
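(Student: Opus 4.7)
The statement is quoted verbatim from Alekhnovich--Buss--Moran--Pitassi, so in the paper itself this ``proof'' is simply a citation. If I were to sketch the underlying argument, I would start from a constraint satisfaction problem for which strong (polynomial or quasi-polynomial) inapproximability is already known under $\NP\subseteq\QP$ — \textsc{Label Cover} or \textsc{Min-Rep} are the canonical choices — and build a gap-preserving reduction to $\mincircuitsat$. The encoding turns variable--label pairs into input gates and the consistency conditions of the CSP into sub-circuits of ANDs and ORs, so that a satisfying assignment of small Hamming weight corresponds to a small labeling satisfying many constraints. Equivalently, one can work with \emph{Minimum Monotone Satisfying Assignment} (MMSA), which is essentially the same optimization problem as $\mincircuitsat$, and for which set-cover-style encodings are particularly natural.

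The essential ingredient is amplifying the inapproximability gap to $2^{\log^{(1-\epsilon)}n}$. The standard route is a product/parallel-repetition construction, iterated $\Theta(\log^{(1-\epsilon)} N)$ times starting from a constant-factor PCP-style gap; each iteration multiplies the gap by a constant while the instance size grows by a polynomial factor, so after $k$ iterations the size is $N^{O(k)}$ and the gap is $2^{\Omega(k)}$. Choosing $k = \log^{(1-\epsilon)} N$ yields the advertised gap while the total blow-up remains quasi-polynomial. Hence a polynomial-time $2^{\log^{(1-\epsilon)}n}$-approximation for $\mincircuitsat$ would yield a quasi-polynomial-time decision algorithm for the original \NP-hard CSP, contradicting $\NP\not\subseteq\QP$.

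The hard part is making the amplification preserve both monotonicity and the numeric correspondence between optimal Hamming weight and optimal CSP value. Concretely, one composes monotone circuits in series (or tensors the corresponding set systems), taking care that the minimum-weight satisfying assignment of the composed circuit is exactly the product (or close to the product) of the optima of the pieces. Any ``slack'' introduced by auxiliary gadget gates must be negligible compared to the amplified gap, since otherwise the iteration degrades rather than strengthens the inapproximability bound. This bookkeeping — matching a quasi-polynomial blow-up in size against a quasi-polynomial growth in gap while respecting monotonicity — is where the main technical effort lies, and is what Alekhnovich et al.\ carry out in their paper.
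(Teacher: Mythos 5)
You are right that the paper does not prove this statement at all—it is imported verbatim as Theorem~3 of Alekhnovich et al.\ and used as a black box, so the ``proof'' in the paper is just the citation, exactly as you say. Your sketch of the underlying gap-amplification argument is a reasonable summary of how such $2^{\log^{(1-\epsilon)}n}$ bounds are obtained, but since the paper offers no proof of its own there is nothing further to compare.
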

 
 \begin{theorem} \label{thm:key}
   $\minrmm$  cannot be approximated within a factor of $2^{\log^{(1-\epsilon)}n}$, for any $\epsilon >0$ unless $\NP\subseteq\QP$.
\end{theorem}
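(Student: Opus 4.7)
The plan is to derive \Cref{thm:key} by combining \Cref{thm:abmp} with the L-reduction from $\mincircuitsat$ to $\minrmm$ given by the pair $(K, \ICC)$ constructed in \Cref{sec:kcmain}. By construction, both $K$ and $\ICC$ are polynomial-time computable, and $|\kc|$ is polynomially bounded in $|C|$, since each gate contributes $O(n)$ dunce-hat copies of bounded individual size and the cycle-filling procedure of \Cref{alg:buildkc} runs in polynomial time.

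The next step is to verify that $(K, \ICC)$ is an L-reduction with parameters $\mu = 2$ and $\nu = \tfrac{1}{2}$. Condition $(2)$ of the L-reduction definition, namely $\OPTB \leq 2 \cdot \OPTA$, is exactly \Cref{prop:optrel}, and condition $(4)$, namely $\ALGA - \OPTA \leq \tfrac{1}{2}(\ALGB - \OPTB)$, is exactly \Cref{prop:errorbound}. Conditions $(1)$ and $(3)$ are built into the definitions of $K$ and $\ICC$. Since $\mu\nu = 1$, \Cref{thm:newfactor} implies that any polynomial-time $(1+\delta)$-approximation for $\minrmm$ yields a polynomial-time $(1+\delta)$-approximation for $\mincircuitsat$, with polynomial overhead.

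Consequently, if $\minrmm$ admitted a polynomial-time $2^{\log^{(1-\epsilon)}N}$-approximation (where $N$ denotes the number of simplices of the input complex), then on a circuit $C$ of size $n$ the composed algorithm would approximate $\mincircuitsat$ within the same factor, but now on an instance whose associated complex has size $N = \mathrm{poly}(n)$. Writing $N \leq c\,n^{k}$, we get $\log^{1-\epsilon} N \leq k^{1-\epsilon} \log^{1-\epsilon} n + O(1)$. For any $\epsilon' \in (0, \epsilon)$, the ratio $\log^{1-\epsilon'} n / \log^{1-\epsilon} n = \log^{\epsilon-\epsilon'} n$ tends to infinity, so for sufficiently large $n$ the composed algorithm is in fact a $2^{\log^{(1-\epsilon')}n}$-approximation for $\mincircuitsat$, contradicting \Cref{thm:abmp} unless $\NP \subseteq \QP$.

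The main (mild) obstacle here is the polynomial-to-polylogarithmic bookkeeping in the final step; once that is handled, the theorem follows mechanically from the L-reduction apparatus developed in \Cref{sec:reduceit} together with \Cref{thm:newfactor} and \Cref{thm:abmp}.
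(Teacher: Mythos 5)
Your proposal is correct and follows essentially the same route as the paper: verify the L-reduction with parameters $\mu=2$, $\nu=\tfrac{1}{2}$ via \Cref{prop:optrel}/\Cref{prop:mineq}/\Cref{prop:errorbound}, apply \Cref{thm:newfactor}, and conclude from \Cref{thm:abmp}. Your explicit bookkeeping for the polynomial blow-up from circuit size $n$ to complex size $N$ (absorbing it by shrinking $\epsilon$ to $\epsilon'$) is a detail the paper's proof leaves implicit here, and it is handled correctly.
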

\begin{proof} From \Cref{prop:mineq} and \Cref{prop:errorbound}, we conclude that the reduction from $\mincircuitsat$ to $\minrmm$ is a strict reduction with parameters $\mu=2$ and $\nu=\frac{1}{2}$. By Theorem~\ref{thm:newfactor}, if there exists a $(1+ \delta)$-approximation algorithm for $\minrmm$, then there exists a  $(1+ \mu\nu\delta)$-algorithm for $\mincircuitsat$. Using~\Cref{thm:abmp}, the claim follows.
\end{proof}

 Denote the standard parameterizations for $\mincircuitsat$ and $\minrmm$ by $\mincircuitsat'$ and $\minrmm'$ respectively.
 Using the map $K: C \to \kc$ that transforms instances of $\mincircuitsat$ to instances of $\minrmm$, we define a new map $\hat{K}: (C,k) \to (\kc,k')$ that transforms instances of $\mincircuitsat'$ to  instances of $\minrmm'$, where we set $k' = 2k$.

\begin{proposition} \label{prop:mainreduce} The map $\hat{K}$  from  $\mincircuitsat'$ to $\minrmm'$ is
\begin{enumerate}
 \item an FPT reduction,
 \item a $(\delta,\delta')$-gap preserving reduction for every function $\delta$, where $\delta'(k) = \delta(\left\lfloor \frac{k}{2} \right\rfloor)$.
 \end{enumerate}
\end{proposition}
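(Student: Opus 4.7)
My plan is to observe that both parts follow almost directly from \Cref{prop:equality}, which asserts $\OPTB = 2\OPTA$. The map $\hat K$ is essentially just $K$ composed with the renaming $k \mapsto 2k$, so what needs to be checked is really only that this linear rescaling of the parameter matches up with the factor of $2$ in \Cref{prop:equality}.

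For part (1), I would verify the three defining conditions of an FPT reduction in turn. Computability by an FPT algorithm: the construction of $\kc$ in \Cref{alg:buildkc} is polynomial time in $|C|$ (the cycle-basis computation via repeated leaf-pruning is polynomial, the matrix $\mathbf{M}$ has polynomial size, and each cycle is filled with $O(|z_i|)$ extra simplices), so $\hat K$ is in particular FPT. Equivalence of yes-instances: by \Cref{prop:equality}, $\OPTA \le k$ iff $\OPTB = 2 \OPTA \le 2k = k'$, so $(C,k)$ is a yes-instance of $\mincircuitsat'$ iff $\hat K(C,k)$ is a yes-instance of $\minrmm'$. Parameter growth: $k' = 2k$, so we may take $g(k) = 2k$.

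For part (2), let $(C,k)$ be a $\delta$-gap instance of $\mincircuitsat'$, so that either $\min(C) = \OPTA \le k$ or $\min(C) = \OPTA \ge k \cdot \delta(k)$. Using $\min(\kc) = \OPTB = 2\OPTA$ from \Cref{prop:equality}, in the first case we get $\min(\kc) \le 2k = k'$, and in the second case we get
\[
\min(\kc) = 2\OPTA \ge 2k \cdot \delta(k) = k' \cdot \delta\bigl(\lfloor k'/2 \rfloor\bigr) = k' \cdot \delta'(k').
\]
Hence $(\kc, k')$ is a $\delta'$-gap instance of $\minrmm'$, as required.

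The only subtlety worth flagging is the choice $\delta'(k') = \delta(\lfloor k'/2 \rfloor)$: one might naively try $\delta'(k') = \delta(k'/2)$, but since $\delta'$ must be defined on $\bbn$ we need the floor, and the inequality in the second case still goes through because $k'=2k$ is even, so $\lfloor k'/2 \rfloor = k$ exactly. There is no real obstacle here; the content of the reduction has been absorbed into \Cref{prop:equality}, and the present proposition is a bookkeeping statement that packages that equality in the form needed to invoke the FPT-inapproximability machinery of Eickmeyer et al.\ and Marx in the subsequent theorems.
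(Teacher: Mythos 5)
Your proposal is correct and follows essentially the same route as the paper's own proof: both parts are reduced to \Cref{prop:equality} ($\OPTB = 2\,\OPTA$) together with the polynomial-time constructibility of $K(C)$, and the two-case gap check is identical. Your extra remarks (explicitly verifying the parameter-growth function $g(k)=2k$ and noting that $\lfloor k'/2\rfloor = k$ since $k'=2k$ is even) are just slightly more careful bookkeeping than the paper's write-up, not a different argument.
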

\begin{proof} 
\begin{enumerate}
\item First note that, using~\Cref{prop:equality}, for any value of parameter $k$, 
\[\OPTA \leq k \Leftrightarrow \OPTB \leq 2k.\]
Then, the conclusion follows immediately from observing that  complex $\kc$ can be constructed in time polynomial in the size of $C$.
\item Suppose an instance $(C,k)$ is a $\delta$-gap instance of $\mincircuitsat'$. That is, either $ \OPTA \leq k$ or $\OPTA \geq k\delta(k)$. So, we have two cases to check:

Uusing~\Cref{prop:equality},
\begin{itemize}
\item  $\OPTA \leq k \Rightarrow \OPTB \leq 2k =  k'.$ %, then  using~\Cref{prop:equality}, %$\OPTB \leq 2k$. In other words,  
%\[\OPTB \leq 2k =  k'.\]
\item If $\OPTA \geq k\delta(k)  \Rightarrow \OPTB \geq 2k\delta(k) = k' \delta( \frac{k'}{2} ) = k' \delta'(k').$ \qedhere
%\[\OPTB \geq 2k\delta(k) = k' \delta( \frac{k'}{2} ) = k' \delta'(k').\] %\left\lfloor \right\rfloor
\end{itemize}
\end{enumerate}
\end{proof}

%\begin{theorem} \label{thm:interthm}
%\begin{enumerate} 
%\item $\minrmm$ is \WP-hard.
%\item $\minrmm$ has no fixed-parameter tractable approximation algorithm with any approximation ratio function $\rho$, unless $\fpt = \WP$.
%\end{enumerate} 
%\end{theorem}
%\begin{proof}

%The first statement follows immediately from~\Cref{prop:mainreduce}~and~\Cref{thm:wpcircuit}.

%Eickmeyer et al.~\cite{Eickmeyer} provides a standard template to carry over FPT inapproximability results using gap preserving FPT reductions. 
%Accordingly, to prove the second statement, we closely follow the line of argument from \cite[Corollary 12]{Eickmeyer}. 
%For the remainder of the proof, please refer to  \Cref{thm:appendinterthm} in \Cref{sec:appendproofs}.
%\end{proof}

\begin{theorem} \label{thm:interthm}
\begin{enumerate} 
\item $\minrmm$ is \WP-hard.
\item $\minrmm$ has no fixed-parameter tractable approximation algorithm with any approximation ratio function $\rho$, unless $\fpt = \WP$.
\end{enumerate} 
\end{theorem}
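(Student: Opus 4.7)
Both parts follow by pulling back the already-established negative results for $\mincircuitsat$ through the reduction $\hat K$, so the heavy lifting has been done in \Cref{sec:reduceit} and in \Cref{prop:equality,prop:mainreduce}. The plan is therefore to invoke the appropriate hardness source and verify that the reduction's properties suffice.

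For part (1), my first step is to recall that the standard parameterization of $\mincircuitsat$ is $\WP$-complete by \Cref{thm:wpcircuit}. Since \Cref{prop:mainreduce}(1) provides an FPT reduction from $\mincircuitsat'$ to $\minrmm'$, composing immediately yields \WP-hardness of $\minrmm'$. This step is essentially one line once \Cref{prop:mainreduce} is in hand.

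For part (2), I argue by contradiction. Suppose $\minrmm$ admits an FPT approximation algorithm with some computable ratio $\rho \colon \bbn \to \bbr_{\geq 1}$. By \Cref{thm:cfg}, this implies an FPT cost approximation algorithm for $\minrmm$ with the same ratio $\rho$. I then propose to transfer this cost approximation back to $\mincircuitsat$ along $\hat K$: given an input $(C,k)$ to $\mincircuitsat'$, form the $\minrmm'$ instance $(K(C), 2k)$ and run the assumed cost-approximation on it. Using \Cref{prop:equality} (so $\OPTB = 2\OPTA$) together with the $(\delta,\delta')$-gap-preservation of \Cref{prop:mainreduce}(2), an algorithm that correctly decides whether $\OPTB \leq 2k$ or $\OPTB \geq 2k \cdot \rho(2k)$ translates directly into one that decides whether $\OPTA \leq k$ or $\OPTA \geq k \cdot \rho(2k)$. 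Setting $\tilde\rho(k) := \rho(2k)$ (which is still computable and for which $k\tilde\rho(k)$ is nondecreasing and unbounded whenever $\rho$ is) thus yields an FPT cost approximation for $\mincircuitsat$ with ratio $\tilde\rho$, contradicting Marx's \Cref{thm:bigguy}.

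The main subtlety, and the one place I would be careful when writing the formal proof, is the bookkeeping of ratios through the reduction: one must verify that the factor-$2$ distortion introduced by $k' = 2k$ only reparameterizes $\rho$ and does not destroy the hypotheses of \Cref{thm:cfg} (monotonicity and unboundedness of $k \mapsto k \cdot \tilde\rho(k)$), and that the gap-preservation statement applies in the \enquote{cost approximation} regime rather than just in the approximation regime. Once this matching is done, both conclusions drop out; no further reduction or gadget work is required beyond what is already built up in \Cref{sec:reduceit,sec:relateopt}.
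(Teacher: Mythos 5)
Your argument is correct, and part (1) coincides with the paper's own one-line derivation from \Cref{prop:mainreduce} and \Cref{thm:wpcircuit}. For part (2), however, you take a genuinely different and more direct route. The paper follows the template of Eickmeyer et al.: given $(C,k)$ it first invokes the gap-amplification construction (Eickmeyer's Theorem~6) to produce, in FPT time, a $\delta$-gap instance $(C',\alpha(k))$ with $\delta$ chosen so large that $\rho(2\alpha(k))<\delta(\alpha(k))$, pushes this through the gap-preserving reduction of \Cref{prop:mainreduce}, and then runs the assumed cost-approximation algorithm on the single query $(\kc,\rho(2\alpha(k))\cdot 2\alpha(k))$, using monotonicity of $k\mapsto k\rho(k)$ to show that this query correctly decides the gap instance; only then does it appeal to \Cref{thm:bigguy}. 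You bypass the gap machinery entirely: composing the assumed cost-approximation algorithm with $\hat{K}$ and the exact identity $\OPTB = 2\,\OPTA$ of \Cref{prop:equality} directly yields an FPT cost approximation for $\mincircuitsat$ with ratio $\tilde\rho(k)=\rho(2k)$, which already contradicts \Cref{thm:bigguy}. This works because the thresholds transfer exactly: $k<\OPTA$ forces $2k<\OPTB$ (so the algorithm rejects), while $k\ge \OPTA\cdot\rho(2\,\OPTA)$ forces $2k\ge \OPTB\cdot\rho(\OPTB)$ (so it accepts), and $\tilde\rho$ inherits computability and the monotonicity/unboundedness of $k\mapsto k\tilde\rho(k)$ needed for \Cref{thm:cfg}. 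What the paper's detour buys is fidelity to the published template (and it would still function if one only had gap-type hardness for $\mincircuitsat$ rather than Marx's full cost-inapproximability); what your route buys is brevity, since \Cref{thm:bigguy} is stated as cost-inapproximability for every computable ratio. One small imprecision to fix when writing this up: the cost-approximation algorithm run on $(\kc,2k)$ does not literally decide ``$\OPTB\le 2k$ versus $\OPTB\ge 2k\cdot\rho(2k)$'' (its guarantees are stated with $\rho(\OPTB)$, not $\rho(2k)$), but your final reparameterization $\tilde\rho(k)=\rho(2k)$ is exactly what makes the composed algorithm meet the definition, so the argument stands.
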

\begin{proof}

The first statement follows immediately from~\Cref{prop:mainreduce}~and~\Cref{thm:wpcircuit}.

Eickmeyer et al.~\cite{Eickmeyer} provides a standard template to carry over FPT inapproximability results using gap preserving FPT reductions. 
Accordingly to prove the second statement, we closely follow the line of argument from \cite[Corollary 12]{Eickmeyer}. In this case, the strong FPT inapproximability result for $\mincircuitsat$ from \Cref{thm:bigguy} is carried over to
$\minrmm$. 
We first reduce $\mincircuitsat'$ to the approximation variant of $\minrmm'$.
Assume there exists an FPT cost approximation algorithm for $\minrmm$ with approximation ratio $\rho$, where $\rho$ is any computable function. 

Given an input $(C,k)$ for $\mincircuitsat'$, we first use the construction described in the proof of \cite[Theorem 6]{Eickmeyer}. Using this construction, we obtain a circuit $C$ of size $|C|=f(k)\cdot|C|^{O(1)}$ for some computable function $f$
in FPT time (with parameter k), such that
\begin{compactitem}
\item $(C',\alpha(k))$ is a $\delta$-gap instance for some $\alpha: \bbn \rightarrow \bbn $ and $\delta: \bbn \rightarrow \bbr_{>1}$,
\item and $\rho(2\alpha(k))<\delta(\alpha(k))$.
\end{compactitem}
Note that satisfying the second condition becomes possible since we have no restriction on the function $\delta$.

Using the FPT gap-preserving reduction described in \Cref{prop:mainreduce} from $\mincircuitsat$ to $\minrmm$ on the $\delta$-gap instance $(C,\alpha(k))$,
we get a $\delta'$-gap instance $(\kc,2\alpha(k))$ of $\minrmm'$ with $\delta'(2\alpha(k))=\delta(\alpha(k))$. 
We run $\mathbb{A}$ on $(\kc,\rho(2\alpha(k))\cdot 2 \alpha(k))$.
 
If $\OPTB \le 2 \alpha(k)$, then 
\[\rho(2\alpha(k)) \cdot 2\alpha(k)  \ge  \rho(\OPTB) \cdot \OPTB\]
and $\mathbb{A}$ accepts. 
If, on the other hand,  $\OPTB\ge \delta'(2\alpha(k))2\alpha(k) $ then
\[\rho(2\alpha(k)) \cdot 2\alpha(k) < \delta(\alpha(k)) \cdot 2\alpha(k) = \delta'(2\alpha(k)) \cdot 2\alpha(k)  \leq \OPTB, \]
and $\mathbb{A}$ rejects. 

Hence, using such an algorithm $\bba$ we could devise an FPT cost approximable algorithm for $\mincircuitsat$ some computable function $\rho$,
which in turn would imply $\WP = \fpt$ using~\Cref{thm:bigguy}.
\end{proof}

\section{Hardness results for $\minmorse$} \label{sec:mainminmorse}

Denoting the standard parameterizations for  $\minmm$ by  $\minmm'$, we now consider the map $\tilde{K}: (K,p) \mapsto (K,p+1)$ that transforms instances of $\minrmm'$ (simplicial complexes) to instances of $\minmm'$ (identical simplicial complexes).

\begin{proposition} \label{prop:dumbreduce} The map $\tilde{K}$  from  $\minrmm'$ to $\minmm'$ is
\begin{enumerate}
 \item an FPT reduction,
 \item a $(\delta,\delta')$-gap preserving reduction for every function $\delta$, where $\delta'(p) =\frac{\left(p-1\right)\delta\left(p-1\right)+1}{p}$.
 \end{enumerate}
\end{proposition}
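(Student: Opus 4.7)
The map $\tilde{K}$ is essentially the identity on the underlying complex, with only the parameter shifted by $1$. The entire proposition is driven by one elementary observation: $\minrmm$ counts critical simplices minus one while $\minmm$ counts all critical simplices, and every discrete gradient has at least one critical $0$-simplex. Therefore for every complex~$K$,
\[\opt{\minmm}(K) \;=\; \opt{\minrmm}(K) + 1.\]

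For part~(1), the reduction $(K,p)\mapsto(K,p+1)$ is computable in constant time, so it is trivially FPT-computable. The above identity yields the equivalence $\opt{\minrmm}(K) \le p \iff \opt{\minmm}(K) \le p+1$, which is precisely the yes-instance correspondence. The parameter of the image instance is $p+1 \le g(p)$ for the computable function $g(p) = p+1$, so $\tilde{K}$ satisfies all conditions of an FPT reduction.

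For part~(2), the plan is to take an arbitrary $\delta$-gap instance $(K,p)$ of $\minrmm'$, so either $\opt{\minrmm}(K) \le p$ or $\opt{\minrmm}(K) \ge p\cdot\delta(p)$, and set $p' = p+1$. In the first case, the identity gives $\opt{\minmm}(K) \le p+1 = p'$, which is the \enquote{easy} side of a $\delta'$-gap instance. In the second case, I obtain $\opt{\minmm}(K) \ge p\cdot\delta(p) + 1$, and I need to recognize this as $p' \cdot \delta'(p')$. Substituting into the given formula $\delta'(p') = \frac{(p'-1)\delta(p'-1)+1}{p'}$ yields $p' \cdot \delta'(p') = p\cdot\delta(p)+1$ directly, so both cases of the $\delta$-gap hypothesis transport into the two cases defining a $\delta'$-gap instance $(K,p')$ of $\minmm'$.

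There is no real obstacle: the only content to verify beyond the one-line identity relating the two optima is the algebraic equality $(p+1)\delta'(p+1) = p\delta(p)+1$, and the formula for $\delta'$ is engineered precisely to absorb the additive offset of $+1$ that arises from the difference in definition between $\minmm$ and $\minrmm$.
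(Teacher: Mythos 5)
Your proposal is correct and follows essentially the same route as the paper: the identity $\opt{\minmm}(K)=\opt{\minrmm}(K)+1$, the trivial FPT-computability and yes-instance equivalence for part (1), and the two-case check with the algebraic identity $p'\delta'(p')=p\delta(p)+1$ for part (2). Nothing is missing.
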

\begin{proof} 
\begin{enumerate}
\item By definition, $\OPTD = \OPTC + 1$.
So, for any value of $p$, 
\[\OPTC \leq p \Leftrightarrow \OPTD \leq p+1.\]
So, the conclusion follows immediately.
\item Suppose an instance $(K,p)$ is a $\delta$-gap instance of $\minrmm'$. That is, either $ \OPTC \leq p$ or $\OPTC \geq p\delta(p)$. So, we have two cases to check:
\begin{itemize}
\item If $\OPTC \leq p$, then %$\OPTB \leq 2k$. In other words,  
\[\OPTD \leq p+1 =  p'.\]
\item If $\OPTC \geq p\delta(p)$, then
\[\OPTD \geq p\delta(p) + 1= p' \delta(p').
\qedhere
\] %\left\lfloor \right\rfloor
\end{itemize}
\end{enumerate}
\end{proof}

Combining \Cref{thm:interthm} and \Cref{prop:dumbreduce}, we obtain the following result:

\begin{theorem} \label{thm:theoremone}
$\minmm$ is \WP-hard.
Furthermore, it has no fixed-parameter tractable approximation algorithm within any approximation ratio function $\rho$, unless $\fpt = \WP$.
\end{theorem}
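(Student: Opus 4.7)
The plan is to derive Theorem~\ref{thm:theoremone} as an immediate consequence of Theorem~\ref{thm:interthm} via the trivial reduction $\tilde{K}\colon (K,p)\mapsto(K,p+1)$ from Proposition~\ref{prop:dumbreduce}. The key observation driving the reduction is that $\minmm$ and $\minrmm$ agree on the same complex $K$ up to the additive shift $\OPTDD = \OPTC+1$, reflecting the fact that every discrete gradient has at least one critical $0$-simplex. Thus $\tilde{K}$ is in a strong sense an \emph{identity} reduction: the instance complex is unchanged, only the threshold parameter is incremented by one.

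For the \WP-hardness half of the theorem, I would simply invoke Proposition~\ref{prop:dumbreduce}(1), which asserts that $\tilde{K}$ is an FPT reduction, and chain it with Theorem~\ref{thm:interthm}(1), which establishes the \WP-hardness of $\minrmm$. Since FPT reductions compose and preserve \WP-hardness, the conclusion is immediate.

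For the FPT-inapproximability half, the argument mirrors the proof of Theorem~\ref{thm:interthm}(2). Assuming for contradiction that $\minmm$ admits an FPT cost approximation algorithm $\mathbb{A}$ with approximation ratio some computable function $\rho$, I would compose the gap-preserving FPT reduction from $\mincircuitsat'$ to $\minrmm'$ (Proposition~\ref{prop:mainreduce}(2)) with the gap-preserving FPT reduction $\tilde{K}$ from $\minrmm'$ to $\minmm'$ (Proposition~\ref{prop:dumbreduce}(2)). Following the template of Eickmeyer et al.~\cite{Eickmeyer} exactly as in the proof of Theorem~\ref{thm:interthm}, one fixes a suitable gap $\delta$ satisfying $\rho(\text{threshold}) < \delta(\ldots)$, so that $\mathbb{A}$ distinguishes the two sides of the gap and hence solves the gap variant of $\mincircuitsat'$. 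This contradicts Theorem~\ref{thm:bigguy} (via~\Cref{thm:cfg}), yielding $\fpt=\WP$.

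The only substantive bookkeeping is tracking the composed gap function: Proposition~\ref{prop:mainreduce}(2) gives $\delta_1'(k') = \delta(k'/2)$ and Proposition~\ref{prop:dumbreduce}(2) gives $\delta_2'(p)=((p-1)\delta_1'(p-1)+1)/p$, and one needs that this composed function can still be chosen large enough (as a function of $\rho$) to execute the Eickmeyer et al.\ argument. This is not an obstacle since Theorem~\ref{thm:bigguy} provides inapproximability for \emph{any} computable ratio, so the function $\delta$ at the source of the reduction can be chosen to absorb both the halving and the additive shift. I expect no conceptual difficulty; the brief proof should essentially consist of pointing at the two propositions, citing Theorem~\ref{thm:interthm}, and noting that the composition of gap-preserving FPT reductions is again gap-preserving with the obvious composed gap function.
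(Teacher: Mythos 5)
Your proposal is correct and follows exactly the route the paper takes: the paper proves \Cref{thm:theoremone} simply by combining \Cref{thm:interthm} with the identity-plus-shift reduction of \Cref{prop:dumbreduce}, which is precisely what you spell out (including rerunning the Eickmeyer-style gap argument through the composed gap-preserving FPT reductions to contradict \Cref{thm:bigguy}). No gaps; your extra bookkeeping on the composed gap function is just an explicit version of what the paper leaves implicit.
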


\begin{definition}[Amplified complex] Given a pointed simplicial complex $K$ with $n$ simplices, the amplified complex $\madk$ is defined as the wedge sum of $n$ copies of~$K$. 
\end{definition}

\begin{lemma} \label{lem:ampopt}
For any $2$-complex K,  $\OPTDD=n \cdot \OPTC+1$.
\end{lemma}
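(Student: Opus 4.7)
The plan is to establish $\OPTDD = n \cdot \OPTC + 1$ by matching upper and lower bounds, both exploiting the fact that the wedge sum identifies only the basepoint $p$, so every simplex of $\madk$ other than $p$ belongs to a unique copy of $K$.

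For the upper bound $\OPTDD \le n \cdot \OPTC + 1$, the plan is to lift an optimal gradient from $K$ to $\madk$. First I would invoke \Cref{lem:uniqueCriticalVertex} to obtain a gradient $\VCC^{*}$ on $K$ with exactly $\OPTC + 1$ critical simplices, whose unique critical vertex is the distinguished basepoint $p$. Next, I would define a gradient $\madv$ on $\madk$ by placing an independent copy of $\VCC^{*}$ on each of the $n$ copies of $K$. Since distinct copies share no simplex other than $p$, the matchings combine into a single matching on the Hasse diagram of $\madk$. For acyclicity, I would argue that $p$ is critical in every copy, hence a sink in the modified Hasse diagram of $\madk$ (every edge incident to $p$ points toward $p$, and no such edge has been reversed by the matching); therefore no directed cycle can pass through $p$, and any hypothetical cycle must lie within a single copy, where acyclicity holds by choice of $\VCC^{*}$. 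The critical simplices of $\madv$ are then $p$ (counted once) together with the $\OPTC$ higher-dimensional critical simplices of $\VCC^{*}$ on each of the $n$ copies, for a total of $n \cdot \OPTC + 1$.

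For the lower bound $\OPTDD \ge n \cdot \OPTC + 1$, the plan is to restrict an arbitrary gradient on $\madk$ to each copy. Let $\WSS$ be any discrete gradient on $\madk$ and let $\WSS_i$ denote its restriction to $K_i$. Because every matched pair $(\sigma,\tau)$ with $\sigma\subsetneq\tau$ has $\tau\neq p$, the higher simplex lies in a unique copy, and hence so does $\sigma$; thus $\WSS_i$ is well-defined, and acyclicity of $\WSS$ immediately restricts to acyclicity of $\WSS_i$. Consequently $\WSS_i$ is a valid discrete gradient on $K_i$ with at least $\OPTD = \OPTC + 1$ critical simplices, a count I would denote by $c_i$. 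I would then bookkeep the basepoint: every non-basepoint critical simplex of $\madk$ belongs to a unique copy and is counted exactly once among the $c_i$'s, while $p$ is critical in $\WSS_i$ for $n$ copies if $p$ is critical in $\madk$, and for exactly $n-1$ copies if $p$ is matched in $\madk$ (with the matched pair living in a single copy $K_j$). A short case distinction then shows that in both cases the number of critical simplices of $\WSS$ equals $\sum_{i=1}^{n} c_i - (n-1)$, giving $\OPTDD \ge n(\OPTC+1) - (n-1) = n\cdot \OPTC + 1$.

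The only delicate point is the acyclicity arguments and the careful bookkeeping of $p$; everything else is routine. The main subtlety is to check that when $p$ is matched in $\madk$, the over-counting of $p$ across the restrictions is still exactly $n-1$, so that both cases of the case analysis give the same clean identity. Combining the two bounds yields the asserted equality.
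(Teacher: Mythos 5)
Your proof is correct and follows essentially the same route as the paper: the upper bound is exactly the paper's construction (repeat an optimal gradient on each copy, with the basepoint as the unique critical vertex via \Cref{lem:uniqueCriticalVertex}), and your lower bound via restriction to the copies is the direction the paper dismisses as ``easy to check'' (and uses again in \Cref{lem:ampapx}). The basepoint bookkeeping and the acyclicity check you spell out are accurate, so the argument stands as a more detailed version of the paper's one-line proof.
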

\begin{proof} It is easy to check that the optimal vector field on $\madk$ is obtained by repeating the optimal vector field on $K$ on each of the $n$ copies of $K$ in $\madk$, while making the distinguished vertex of $\madk$
the unique critical vertex in $\madk$.
\end{proof}
 
 \begin{lemma} \label{lem:ampapx}
 Using a vector field $\madv$ on $\madk$ with $m + 1$ critical simplices, one can compute a vector field $\VCC$ on $K$ with at most $\left\lfloor  \frac{m}{n} \right\rfloor   + 1$ critical simplices in polynomial time.  
 \end{lemma}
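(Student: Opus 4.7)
The plan is to exploit the wedge structure: since $\madk$ is the wedge of $n$ copies of $K$ glued at a single basepoint $p$, every simplex of $\madk$ other than $p$ belongs to exactly one copy $K_i$, and the copies only interact through $p$.

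First, I would apply \Cref{lem:uniqueCriticalVertex} to the given gradient $\madv$ on the connected complex $\madk$ with $p$ as the chosen basepoint. This produces in polynomial time a new gradient $\madv'$ on $\madk$ in which $p$ is the unique critical $0$-simplex, and the total number of critical simplices is at most $m+1$ (the lemma can only decrease this count). Since $p$ is critical, it is unmatched, so no pair of $\madv'$ involves $p$.

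Next, I would observe that every pair in $\madv'$ lies entirely within a single copy $K_i$. Indeed, a pair $(\sigma,\tau)$ is a face–cofacet pair, and any simplex of $\madk$ other than $p$ is contained in a unique copy; since $p$ is unmatched, both $\sigma$ and $\tau$ lie in the same copy. Hence the restriction $\VCC_i := \madv'|_{K_i}$ is a well-defined matching on the Hasse diagram of $K_i \cong K$ (with $p$ included as a vertex). Acyclicity of $\VCC_i$ follows from acyclicity of $\madv'$, because any directed cycle in $\HCC_{K_i}(\VCC_i)$ would be a directed cycle in $\HCC_{\madk}(\madv')$. Thus each $\VCC_i$ is a genuine gradient vector field on $K_i$.

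Let $c_i$ denote the number of critical simplices of $\VCC_i$ other than $p$. Since every non-basepoint simplex of $\madk$ belongs to exactly one copy, and $p$ itself contributes one critical simplex globally, the counts satisfy
\[
1 + \sum_{i=1}^{n} c_i \;\le\; m+1, \qquad \text{so} \qquad \sum_{i=1}^{n} c_i \;\le\; m.
\]
By an averaging argument there is an index $i^\ast$ with $c_{i^\ast} \le \lfloor m/n \rfloor$, and $\VCC := \VCC_{i^\ast}$ is a gradient on $K_{i^\ast}\cong K$ with at most $c_{i^\ast}+1 \le \lfloor m/n \rfloor + 1$ critical simplices. The algorithm is clearly polynomial time: it runs \Cref{lem:uniqueCriticalVertex} once, performs the trivial restriction to each copy, and selects the copy minimizing $c_i$.

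There is no real obstacle here; the only subtlety worth stating carefully is that the restriction of $\madv'$ to a single copy inherits acyclicity, which is immediate from the wedge structure because the copies share only the (critical, hence unmatched) basepoint.
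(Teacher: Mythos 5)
Your proposal is correct and follows essentially the same route as the paper: invoke \Cref{lem:uniqueCriticalVertex} to make the distinguished basepoint the unique critical vertex, restrict the gradient to the $n$ wedge summands, and pick the copy with the fewest critical simplices by averaging. The extra details you supply (pairs not crossing the basepoint, inherited acyclicity) are just the unspoken justifications behind the paper's one-line "restricting $\madv$ to each of the $n$ copies" step.
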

 \begin{proof} Using \Cref{lem:uniqueCriticalVertex}, we can assume without loss of generality that  $\madv$ has the distinguished vertex as its unique critical simplex.
Restricting $\madv$ to each of the $n$ copies of $K$, the claim follows.
 \end{proof}

 \begin{proposition} \label{prop:tedious}
  For a fixed $\epsilon >0$, let  $\rho =f(n)$, where $f(n) = o(n)$. Then, for any $\delta  \in (0,1)$  and $\varrho =f(n) -\delta $, if there exists a $\varrho$-factor approximation algorithm for $\minmm$, then there exists a  $\rho$-factor approximation algorithm for $\minrmm$.
 \end{proposition}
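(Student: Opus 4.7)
The plan is to chain the two amplification lemmas. Given a $\varrho$-approximation algorithm $\mathbb{A}$ for $\minmm$ and an instance $K$ of $\minrmm$ on $n$ simplices, I would first build the amplified complex $\madk$, which can be done in polynomial time and has $O(n^{2})$ simplices. Then I would run $\mathbb{A}$ on $\madk$ to obtain a discrete gradient $\madv$ whose number of critical simplices $M$ satisfies $M \le \varrho \cdot \OPTDD$. Applying \Cref{lem:ampapx} converts $\madv$ into a gradient $\VCC$ on $K$ with at most $\lfloor (M-1)/n \rfloor + 1$ critical simplices; as an $\minrmm$ solution its value is at most $\lfloor (M-1)/n \rfloor$.

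To verify the approximation ratio, I would combine the above bound with the identity $\OPTDD = n \cdot \OPTC + 1$ from \Cref{lem:ampopt} to get
\[
\left\lfloor \frac{M - 1}{n} \right\rfloor \;\le\; \frac{\varrho(n \cdot \OPTC + 1) - 1}{n} \;=\; \varrho \cdot \OPTC + \frac{\varrho - 1}{n}.
\]
The target bound $\varrho \cdot \OPTC + (\varrho - 1)/n \le \rho \cdot \OPTC$ reduces, since $\rho - \varrho = \delta$, to the single inequality $(\varrho - 1)/n \le \delta \cdot \OPTC$. Because $\varrho = f(n) - \delta$ with $f(n) = o(n)$, the left-hand side tends to $0$ as $n \to \infty$, so there is a threshold $n_0$ depending only on $f$ and $\delta$ beyond which $(\varrho - 1)/n \le \delta$; this handles every instance with $\OPTC \ge 1$.

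The main obstacle will be the degenerate case $\OPTC = 0$, where multiplicative approximation is vacuous and one must recover the optimum exactly. Here $\OPTDD = 1$ and thus $M \le \varrho = f(n) - \delta$; enlarging $n_0$ if necessary (using $\delta < 1$ and $f(n) = o(n)$) guarantees $M < n$, so $\lfloor (M-1)/n \rfloor = 0$, matching $\OPTC$ exactly. For the finitely many instances with $n < n_0$, the complex $K$ has size bounded by a constant depending only on $f$ and $\delta$, so $\minrmm$ on $K$ can be solved optimally by brute-force enumeration over all Morse matchings in its Hasse diagram in $O(1)$ time. Stitching these cases together yields a polynomial-time $\rho$-approximation algorithm for $\minrmm$.
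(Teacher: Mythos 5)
Your proposal is correct and follows essentially the same route as the paper: amplify $K$ to $\madk$, run the assumed $\varrho$-factor algorithm there, and pull the gradient back to $K$ via \Cref{lem:ampopt} and \Cref{lem:ampapx}. The only minor divergence is in the small-optimum regime: the paper's floor manipulations cost an extra additive $+1$, forcing the assumption $\OPTC > \frac{1}{\delta}$, which it discharges via the Joswig--Pfetsch observation that bounded-optimum instances can be solved exactly in $O(n^{1/\delta})$ time, whereas your tighter bound $\lfloor (M-1)/n \rfloor \le \varrho\cdot\OPTC + (\varrho-1)/n$ only needs $\OPTC \ge 1$ once $n$ is large, so you handle $\OPTC = 0$ through the same reduction and the bounded-size instances by brute force.
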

 \begin{proof} 
 For a complex $K$, the optimal value of $\minrmm$ on $K$ is denoted by $\OPTC$.
 Suppose that there exists  a $\varrho$-factor approximation algorithm $\bba$ for $\minmm$. 
 If we apply $\bba$ on $\madk$, then using~\Cref{lem:ampopt}, we obtain a vector field with at most $\varrho\left(n \cdot \OPTC+1 \right )$ critical simplices.
 Then, using~\Cref{lem:ampapx}, we can  compute a vector field $\VCC$  on $K$ with at most $m(\VCC)$ critical simplices, where
\begin{align*} 
m(\VCC)  & \leq  \left\lfloor \frac{\varrho\cdot n\cdot\OPTC+\varrho-1}{n}\right\rfloor +1 && \textnormal{}\\
 		& \leq  \left\lfloor \varrho\cdot\OPTC +  \frac{\varrho-1}{n}\right\rfloor +1 && \textnormal{}\\
		& \leq   \left\lfloor  \varrho\cdot\OPTC \right\rfloor + \left\lfloor \frac{\varrho-1}{n}  \right\rfloor +  2 && \textnormal{using $\left\lfloor x+y\right\rfloor \leq\left\lfloor x\right\rfloor +\left\lfloor y\right\rfloor +1$}\\
		&  = \left\lfloor  \varrho\cdot\OPTC \right\rfloor +   2 && \textnormal{using $\left\lfloor \frac{\varrho-1}{n}  \right\rfloor = 0$ for large $n$,} \\
		\intertext{which gives us}
m(\VCC) -1  & \leq \varrho\cdot\OPTC + 1		&& \textnormal{} \\
m(\VCC) -1  & \leq \rho\cdot\OPTC - \delta\cdot\OPTC + 1		&& \textnormal{} \\
		   & \leq \rho\cdot\OPTC && \textnormal{assuming $\OPTC > \frac{1}{\delta}$.}	   
 \end{align*}
 The above analysis shows that one can obtain a $ \rho$-factor approximation algorithm for $\minrmm$ assuming a $\varrho$ factor approximation algorithm for $\minmm$. 		   
Note that $n^{\frac{1}{\delta}}$   is bounded by a polynomial in $n$ given the fact that $ \frac{1}{\delta}$ is a constant.
 So, we can assume without loss of generality that $\OPTC > \frac{1}{\delta}$ based on  the observation  by Joswig and Pfetsch~\cite{JP06} that if $\OPTC \leq c$, for some constant $c$, then one can find the optimum in $O(n^{c})$ time.
 \end{proof}

%Combining \Cref{thm:key,prop:tedious}, we conclude  that for a fixed $\epsilon >0$, $\minmm$  cannot be approximated within a factor of $2^{\log^{(1-\epsilon)}n} - \delta$, for any $\delta >0$, unless $\NP\subseteq\QP$.
%Letting $\epsilon$ vary, it is easy to check that 
Combining \Cref{thm:key,prop:tedious}, we conclude  that for a fixed $\epsilon >0$, $\minmm$  cannot be approximated within a factor of $2^{\log^{(1-\epsilon)}n} - \delta$, for any $\delta >0$, unless $\NP\subseteq\QP$.
But, in order to get rid of the  $\delta$-term in the inapproximability bound for $\minmm$, we can do slightly better by allowing $\epsilon$ to vary. 
To make this precise, suppose there exists  an $\iota >0$ such that $\minmm$ can be approximated within a factor of $2^{\log^{(1-\iota)}n} $, and let $\delta \in (0,1)$.
Then, using \Cref{prop:tedious}, this  would give a $2^{\log^{(1-\iota)}n}  + \delta$ approximation algorithm for $\minrmm$. 
However, one can always find an $\epsilon>0$ such that $2^{\log^{(1-\iota)}n}  + \delta  =  O(2^{\log^{(1-\epsilon)}n}) $. 
Then, for sufficiently large $n$,  $2^{\log^{(1-\iota)}n}  + \delta  < 2^{\log^{(1-\epsilon)}n} $. 

Hence, the assumption of a $2^{\log^{(1-\iota)}n} $-factor approximation algorithm for $\minmm$ contradicts \Cref{thm:key}.
We can thus make the following claim.

 \begin{theorem} \label{thm:keytwo} For any $\epsilon >0$, $\minmm$  cannot be approximated within a factor of $2^{\log^{(1-\epsilon)}n} $, unless $\NP\subseteq\QP$.
  \end{theorem}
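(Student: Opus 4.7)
My plan is to derive this result by contradiction, combining the strong inapproximability of \Cref{thm:key} with the amplification-based transfer supplied by \Cref{prop:tedious}, essentially formalizing the argument already sketched in the paragraph preceding the statement. First I would suppose, for contradiction, that there exists some $\iota > 0$ and a polynomial time algorithm that approximates $\minmm$ within factor $\varrho(n) := 2^{\log^{(1-\iota)} n}$.

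Next, the plan is to invoke \Cref{prop:tedious} with $f(n) := \varrho(n) + \delta$ for some fixed $\delta \in (0,1)$, so that $\varrho = f - \delta$. Since $\log^{(1-\iota)} n = o(\log n)$, one checks that $f(n) = o(n)$, so the hypotheses of the proposition are satisfied, and it delivers a polynomial time $f(n)$-approximation for $\minrmm$.

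The central step is a purely asymptotic comparison: I need to exhibit an $\epsilon \in (0, \iota)$ such that $f(n) \le 2^{\log^{(1-\epsilon)} n}$ holds for all sufficiently large $n$, which would then contradict \Cref{thm:key} instantiated with this $\epsilon$. Because $1 - \epsilon > 1 - \iota$, the gap $(\log n)^{1-\epsilon} - (\log n)^{1-\iota}$ tends to infinity as $n \to \infty$, so the ratio $2^{\log^{(1-\epsilon)} n} / 2^{\log^{(1-\iota)} n}$ is unbounded and the additive constant $\delta$ is absorbed for large $n$, yielding $f(n) < 2^{\log^{(1-\epsilon)} n}$ eventually.

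Since all of the combinatorial content is already carried by \Cref{thm:key} and \Cref{prop:tedious}, I do not expect a real obstacle here: the proof is essentially a routine asymptotic manipulation whose sole purpose is to eliminate the additive $\delta$ slack inherited from \Cref{prop:tedious}. The only point requiring care is keeping the quantifiers over $\iota$, $\delta$, and the chosen $\epsilon$ straight, so that the contradictory $\epsilon$ is chosen after (and strictly below) the hypothesized $\iota$ and independently of $\delta$.
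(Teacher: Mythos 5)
Your proposal is correct and follows essentially the same route as the paper: assume a $2^{\log^{(1-\iota)}n}$-factor algorithm for $\minmm$, transfer it via \Cref{prop:tedious} to a $(2^{\log^{(1-\iota)}n}+\delta)$-approximation for $\minrmm$, and then absorb the additive $\delta$ by choosing $\epsilon<\iota$ so that the bound is eventually below $2^{\log^{(1-\epsilon)}n}$, contradicting \Cref{thm:key}. Your extra care with the order of quantifiers over $\iota$, $\delta$, and $\epsilon$ matches the intent of the paper's sketch.
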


\section{An approximation algorithm for \minmorse} \label{sec:approxalgo}
In this section, we assume without loss of generality that the input complex $K$ is connected.
The algorithm can be described as follows. Given a $2$-complex $K$, let $n$ be the number of 2-simplices.
Assume without loss of generality that $\log n$ is an integer that divides $n$.
Partition the set of $2$-simplices of $K$ arbitrarily into $\log n$ parts each of size $\frac{n}{\log n}$.
Writing $S$ for the partition, we note that the power set $P(S)$ of the parts has $n$ elements.
The $2$-simplices that belong to a part $s \in S$ is denoted by $K_s^{(2)}$.
Each element of $P(S)$ gives us a subset $\hat{S}$ of $S$. To each $\hat{S}$ we can associate a binary incidence vector $\mathbf{j} (\hat{S})$ of length $\log n$ in the natural way. 
 Let $\khat$ be a complex induced  by the $2$-simplices belonging to the parts that belong to some   $\hat{S} \subset S$.
 In this case, we may also write $\khat$ as $\khat = K(\mathbf{j}(\hat{S}))$ to emphasize the data from which $\khat$ can be constructed.
Compute such a complex $\khat$ for each subset $\hat{S}$, and let $\hat{S}_{\max}$ be  the subset  of largest cardinality whose induced complex $\khat_{\max}$ is erasable.  
In particular, $\khat_{\max} \searrow L$ where $L$ is a $1$-complex.
Make all the $2$-simplices in $K \setminus \khat_{\max}$ critical. The gradient on $\khat_{\max}$ is comprised of the erasing gradient of  $\khat_{\max}$, namely $ \VCC^2 $, combined with the 
optimal gradient for $L$, namely $ \VCC^1$.
In what follows, we will show that this simple algorithm provides a $O(\frac{n}{\log n})$-factor approximation for  \minmorse on $2$-complexes.  
 % \begin{theorem} There exists an $O(\frac{n}{\log n})$-factor approximation for  \minmorse on $2$-complexes.
%\end{theorem}

\begin{lemma} \label{lem:crittwo}
Let $\khatmax = K(\mathbf{j}(\hat{S}))$ for some $\hat{S}$. Let $w_{\bfj}$ be the Hamming weight of $\mathbf{j}(\hat{S})$, and let $\gamma = \log n - w_{\bfj}$. Then, every Morse matching on $K$ has at least $\gamma$ critical $2$-simplices.
\end{lemma}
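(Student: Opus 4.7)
The plan is to argue by contradiction, exploiting the maximality of $\hat{S}$. Suppose that some Morse matching $\VCC$ on $K$ has a set $C$ of critical $2$-simplices with $|C| < \gamma$. Because the $2$-simplices of $K$ are partitioned into the $\log n$ parts $\{K_s^{(2)}\}_{s \in S}$, the set $S_C \subseteq S$ of parts meeting $C$ satisfies $|S_C| \le |C| < \gamma$. Setting $S' := S \setminus S_C$ therefore yields $|S'| = \log n - |S_C| > \log n - \gamma = w_{\bfj}$. The goal is then to exhibit an erasable subcomplex of $K$ whose $2$-simplices lie in parts of $S'$, which would contradict the choice of $\hat{S}$.

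To produce such a subcomplex, the first step is to apply \Cref{lem:easy} with $M = C$: since every $2$-simplex is a maximal face of the $2$-complex $K$, no element of $C$ has a coface, so $L := K \setminus C$ is a subcomplex and the restriction $\VCC|_L$ is a gradient vector field on $L$. By construction, $\VCC|_L$ has no critical $2$-simplex, so every $2$-simplex of $L$ is paired with one of its facets; the union $N$ of these pairs then has the property that $L \setminus N$ is a $1$-dimensional subcomplex of $L$. The Gradient Collapsing Theorem (Forman's Theorem~3.3, cited earlier in the excerpt) then gives $L \searrow L \setminus N$, so $L$ is erasable.

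The final step is to observe that $K(\bfj(S'))$ is a subcomplex of $L$: its $2$-simplices come from parts disjoint from $S_C$ and hence lie outside $C$, while all their lower-dimensional faces are in $K$ and therefore in $L$. Applying \Cref{lem:subErasable} to the erasable complex $L$ shows that $K(\bfj(S'))$ is itself erasable. But this contradicts the defining property of $\hat{S}$ as a subset of maximum cardinality with $K(\bfj(\hat{S}))$ erasable, because $|S'| > w_{\bfj} = |\hat{S}|$. Hence no Morse matching on $K$ can have fewer than $\gamma$ critical $2$-simplices.

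No serious obstacles are expected. The only subtle point is verifying that deleting $C$ produces a subcomplex on which the restricted matching is a valid gradient vector field with no critical $2$-simplex; this is exactly what \Cref{lem:easy} guarantees, after which the passage from "no critical $2$-simplex" to erasability is immediate via the Gradient Collapsing Theorem, and containment $K(\bfj(S')) \subseteq L$ follows directly from the fact that the partition respects the assignment of $2$-simplices to parts.
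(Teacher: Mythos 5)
Your proof is correct and follows essentially the same route as the paper's: assume fewer than $\gamma$ critical $2$-simplices, take the parts disjoint from the critical set, note that the induced complex is a subcomplex of $K$ minus the critical $2$-simplices (which is erasable via the given gradient), and invoke \Cref{lem:subErasable} to contradict the maximality of $\hat{S}$. The only difference is that you spell out, via \Cref{lem:easy} and the Gradient Collapsing Theorem, why deleting the critical $2$-simplices leaves an erasable complex — a step the paper simply asserts.
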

\begin{proof}
Suppose that there exists a gradient vector field $\VCC^2$ with $\mu$ critical $2$-simplices where $ \mu< \gamma $. Let $\Psi$ denote the critical $2$-simplices of $\VCC^2$. 
Define $\khat_{new}^{\brat}$ as follows:
\[\khat_{new}^{\brat} = \bigcup\limits_{\substack{s \in S, \\ \Psi \cap K_{s}^{\brat} = \emptyset }}  K_{s}^{\brat} . \]
As before, let $\khat_{new}$ be the complex induced by simplices in $\khat_{new}^{\brat} $.
Then, $\khat_{new} \subset K \setminus \Psi$. However, $K \setminus \Psi$ is erasable via gradient $\VCC^2$. So, by \Cref{lem:subErasable}, $\khat_{new}$ is erasable. But this contradicts the maximality of $\khatmax$, proving the claim.
\end{proof}

We denote the critical $k$-simplices of $ \VCC^2 \cup \VCC^1 $ by $c_k$.

\begin{lemma} \label{lem:algcrit}
The gradient vector field $ \VCC^2 \cup \VCC^1 $ over $K$ has at most $\beta_1 - \beta_2 + 1 + 2\gamma \cdot \floornlogn$ critical simplices.
\end{lemma}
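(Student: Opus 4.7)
The plan is to pin down $c_0$, $c_1$, and $c_2$ separately, then invoke the discrete Morse analogue of the Euler--Poincar\'e identity to finish, so that $c_1$ never has to be bounded by hand. The first step is immediate from the construction: the algorithm declares exactly the $\gamma \cdot n/\log n$ two-simplices of $K^{(2)} \setminus \khatmax^{(2)}$ critical, while no $2$-simplex of $\khatmax$ is critical, since $\VCC^2$ is an \emph{erasing} gradient realising the collapse $\khatmax \searrow L$ and \Cref{Gradient Collapsing Theorem} forces every $2$-simplex in the collapsed part to be paired with one of its facets. Hence $c_2 = \gamma \cdot n/\log n$.

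Next I would arrange for $c_0 = 1$. Since $K$ is connected by standing assumption, \Cref{lem:uniqueCriticalVertex} allows us to modify $\VCC^2 \cup \VCC^1$ purely at the $0$-$1$ level, without touching the critical $2$-simplices and without increasing the total count, so as to leave a single critical vertex. I expect this to be the most delicate step: if the residual $1$-subcomplex on which $\VCC^1$ is optimal happens to be disconnected, a naive count would produce $c_0 = \beta_0(L) > 1$ and inflate the total. The connectedness of $K$, routed through \Cref{lem:uniqueCriticalVertex}, is precisely what absorbs this overcount.

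With $c_0 = 1$ and $c_2 = \gamma \cdot n/\log n$ in hand, the Morse--Euler identity
\[c_0 - c_1 + c_2 \;=\; \chi(K) \;=\; \beta_0(K) - \beta_1(K) + \beta_2(K) \;=\; 1 - \beta_1 + \beta_2\]
pins down $c_1 = c_0 + c_2 - \chi(K) = \beta_1 - \beta_2 + \gamma \cdot n/\log n$. Adding the three dimensions,
\[c_0 + c_1 + c_2 \;=\; 1 + \bigl(\beta_1 - \beta_2 + \gamma \cdot n/\log n\bigr) + \gamma \cdot n/\log n \;=\; \beta_1 - \beta_2 + 1 + 2\gamma \cdot n/\log n,\]
which is the claimed bound. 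The only nontrivial piece of bookkeeping is the second step, whose role is to rule out spurious critical vertices coming from possible disconnectedness of the residual $1$-complex.
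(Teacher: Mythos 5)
Your skeleton is the same as the paper's: $c_2=\gamma\cdot\floornlogn$ comes straight from the construction, the Morse--Euler identity $c_0-c_1+c_2=\beta_0-\beta_1+\beta_2$ then pins down $c_1$ once $c_0=1$ is known, and summing gives the bound. The gap is in how you arrange $c_0=1$. The lemma is a statement about the specific field $\VCC^2\cup\VCC^1$, and invoking \Cref{lem:uniqueCriticalVertex} only proves the bound for a \emph{different}, repaired gradient: the repair deletes $2(m_0-1)$ critical cells, so if $\VCC^2\cup\VCC^1$ itself had $m_0>1$ critical vertices it would have $2(m_0-1)$ more critical simplices than the field you count and would exceed the stated bound. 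So the connectedness of $K$ ``routed through'' \Cref{lem:uniqueCriticalVertex} cannot absorb the overcount you worry about; it would simply mean the lemma as stated fails for the named gradient. (A secondary point: \Cref{lem:uniqueCriticalVertex} controls only the number of critical vertices and the total count, not the critical $2$-simplices, so even for the repaired field your Morse--Euler step with $c_2=\gamma\cdot\floornlogn$ is not licensed by that lemma alone.)

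What makes the lemma true, and what the paper uses, is that no repair is needed: the residual $1$-complex $\kone$ on which $\VCC^1$ lives is connected whenever $K$ is, by \cite[Lemma~4.2]{JP06} (an elementary $(e,t)$-collapse never disconnects, since the other two edges of $t$ survive and connectivity of a $2$-complex is that of its $1$-skeleton), and $\VCC^1$ is taken to be an optimal gradient on this connected $1$-complex, computable by depth-first search, which has exactly one critical vertex. With that observation in place of your second step, $c_0=1$ holds for $\VCC^2\cup\VCC^1$ itself and the rest of your computation goes through verbatim, coinciding with the paper's proof.
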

\begin{proof} From~\Cref{lem:crittwo}, we have 
\begin{equation} \label{eq:c2}
 c_2 = \gamma \cdot\left(\frac{n}{\log n}\right).
\end{equation}
By~\cite[Lemma~4.2]{JP06}, $\kone$ is connected, and one can compute a gradient vector field $\VCC^1$ on $\kone$ with a single critical vertex in linear time  using depth first search starting from an arbitrary vertex in $\kone$ (see, e.g., \cite{RBN17}).
We have by~\cite[Theorem 1.7]{Fo98},
\[c_0 - c_1 + c_2 = \beta_0 - \beta_1 + \beta_2.\]
Since $\beta_0 = c_0 = 1$, we have, 
\begin{equation} \label{eq:morseineq}
c_2-\beta_2 = c_1 - \beta_1.
\end{equation}
Thus, combining~\Cref{eq:c2} and \Cref{eq:morseineq}, we have 
\[c_1 = \beta_1 - \beta_2 +  \gamma \cdot \floornlogn.\]
The claim follows.
\end{proof}

\begin{theorem}The exists a $O(\frac{n}{\log n})$-factor approximation algorithm for  \minmorse on $2$-complexes.
\end{theorem}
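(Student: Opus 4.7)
The plan is to verify the algorithm runs in polynomial time, and then combine \Cref{lem:crittwo} and \Cref{lem:algcrit} with the weak Morse inequality to obtain the approximation factor.

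First I would address polynomial running time. The partition has $|P(S)| = 2^{\log n} = n$ subsets, so enumeration over $\hat{S} \in P(S)$ costs $n$ rounds. For each candidate $\khat$ I would test erasability in polynomial time by greedy elementary $2$-collapse: repeatedly locate a $2$-simplex $\tau$ with a free facet $\sigma$, collapse the pair $(\tau,\sigma)$, and iterate until either no $2$-simplex remains (in which case $\khat$ is erasable and we obtain the erasing gradient $\VCC^2$ explicitly) or no such pair exists (in which case $\khat$ is not erasable). Correctness follows from the standard observation that if $\sigma'$ is a free facet of some other $2$-simplex $\tau' \neq \tau$, then removing $(\tau,\sigma)$ cannot add cofaces to $\sigma'$, so $\sigma'$ remains free; thus the greedy choice never destroys another valid collapse. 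The gradient $\VCC^1$ on the $1$-dimensional remainder $L$ with a single critical vertex is obtained in linear time by DFS, as in the proof of \Cref{lem:algcrit}.

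Next I would bound the approximation ratio. Let $\mathrm{OPT}$ denote the optimum for $\minmm$ on $K$. \Cref{lem:crittwo}, applied to $\khat_{\max}$, yields $\mathrm{OPT} \geq \gamma$. The weak Morse inequality $c_0+c_1+c_2 \geq \beta_0+\beta_1+\beta_2$ together with $\beta_0 = 1$ and $\beta_2 \geq 0$ gives
\[
\mathrm{OPT} \;\geq\; 1 + \beta_1 + \beta_2 \;\geq\; 1 + \beta_1 - \beta_2.
\]
By \Cref{lem:algcrit}, the gradient produced by the algorithm has
\[
\mathrm{ALG} \;\leq\; 1 + \beta_1 - \beta_2 + 2\gamma \cdot \frac{n}{\log n}.
\]
Combining the two displays,
\[
\mathrm{ALG} \;\leq\; \mathrm{OPT} + 2\,\mathrm{OPT}\cdot\frac{n}{\log n} \;=\; \left(1 + \frac{2n}{\log n}\right)\cdot\mathrm{OPT} \;=\; O\!\left(\frac{n}{\log n}\right)\cdot\mathrm{OPT}.
\]

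The only step that is not an immediate consequence of the previously proved lemmas is the polynomial-time erasability test. That is the part I expect to require the most care to justify cleanly; everything else is a mechanical combination of \Cref{lem:crittwo}, \Cref{lem:algcrit}, and the Morse inequalities.
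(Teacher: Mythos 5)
Your approximation-ratio analysis is exactly the paper's argument: lower-bound the optimum by $\gamma$ via \Cref{lem:crittwo} and by $1+\beta_1$ via the weak Morse inequalities, then compare with the upper bound of \Cref{lem:algcrit}; that part is correct as written, and so is the enumeration count of $2^{\log n}=n$ candidate subcomplexes.

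The one place you diverge from the paper is the justification of the polynomial-time erasability test, and as stated it has a gap. The paper simply invokes Tancer's Proposition~5~\cite{MR3439259}, which says a $2$-complex is erasable if and only if \emph{greedily} collapsing triangles yields a $1$-dimensional complex. Your local observation---that performing one elementary $2$-collapse cannot destroy another currently available collapse on a different triangle---is true, but it does not by itself show that a maximal greedy run which gets stuck certifies non-erasability: you must exclude the possibility that a different, non-greedy order of collapses would eventually free edges that your particular run never frees. What is needed is the confluence statement: if $\khat$ is erasable and $(\sigma,\tau)$ is \emph{any} available elementary $2$-collapse, then $\khat\setminus\{\sigma,\tau\}$ is again erasable. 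This is precisely Tancer's result, and it also follows from the paper's own \Cref{lem:subErasable}, since $\khat\setminus\{\sigma,\tau\}$ is a subcomplex of $\khat$; combined with the observation that an erasable complex still containing $2$-simplices always admits at least one elementary $2$-collapse (the first $2$-collapse of its erasing sequence is already available, because $1$-collapses do not alter edge--triangle incidences), an easy induction shows the greedy test is correct. With that citation or argument supplied, your proof coincides with the paper's.
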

\begin{proof} 
To begin with, we know from Tancer~\cite[Proposition 5]{MR3439259} that a $2$-complex is erasable if and only if greedily collapsing triangles yields a $1$-dimensional complex. That is, erasability of  a complex can checked in polynomial time. 
Since we check the erasability of  $O(n)$ complexes each of size $O(n)$, the algorithm terminates in polynomial time. 

Now, let $\VCC_{\min}$ be an optimal gradient vector field. By~\Cref{lem:crittwo}, $\VCC_{\min}$ has at least $\gamma$ critical $2$-simplices. 
By weak Morse inequalities~\cite[Theorem 1.7]{Fo98}, the number of critical $1$-simplices of $\VCC_{\min}$  is  at least $\beta_1$, and the number of critical $0$-simplices of $\VCC_{\min}$  is $\beta_0 = 1$.
Thus, an optimal gradient vector field has at least $O( \gamma + \beta_1)$ critical simplices.
Combining this  observation with \Cref{lem:algcrit}, it follows that  the algorithm described in this section provides an $O(\frac{n}{\log n})$-factor approximation for  \minmorse on $2$-complexes.
\end{proof}

\section{Morse matchings for Costa--Farber complexes}
\label{sec:costafarber}

The strong hardness results established in \Cref{sec:mainminmorse} belie what is
observed in computer experiments for both structured as well as random instances~\cite{RBN17,HMMNWJD10}.  In particular, the structured instances  generated by Lutz~\cite{Lutzlib,JLT14}
and by the RedHom and CHomP groups~\cite{HMMNWJD10}  and the random instances that come from Linial--Meshulam model~\cite{Mesh09} and the Costa--Farber model (referred to as type-2 random complexes in~\cite{RBN17})
turn out to be \enquote{easy} for Morse matching~\cite{RBN17}. We use the terms \enquote{easy} and \enquote{hard} in an informal sense.
Here, by \emph{easy} instances, we mean those instances for which simple heuristics give near-optimal matchings, and by \emph{hard} instances we mean instances for which known heuristics produce 
matchings that are far from optimal. Unfortunately, the approximation bounds in~\cite{RBN17}, and in \Cref{sec:approxalgo} of this paper do not explain the superior performance of simple heuristics in obtaining near-optimal matchings. 
Below, we provide some justification for this phenomenon from the lens of random complexes. We describe two types of gradients on the Costa--Farber complexes, namely,
\begin{enumerate}[(a)]
\item the apparent pairs gradient and
\item the random face gradient.
\end{enumerate}
Finally, we summarize the behavior of these gradients on Linial--Meshulam complexes.

\subsection{The apparent pairs gradient.}

We start with the definition of the \emph{apparent pairs gradient}~\cite{ripser}, which originates from work by Kahle on random complexes \cite{kahle2} and is used as a powerful optimization tool in software for computing persistent homology of Rips filtrations, like Ripser~\cite{ripser}
and Eirene~\cite{eirene}. 

Let $K$ be a $d$-dimensional simplicial complex, and let $V$ denote
the set of vertices in $K$. Suppose that the vertices in $V$ are
equipped with a total order $<$. For two simplices $\sigma,\tau \in K$, we write $\sigma  \prec_{K} \tau$
if $\sigma$ comes before $\tau$ in the lexicographic ordering.
%Then, we define a lexicographic ordering $\prec_{K}$ on complex $K^{(r)}$ as follows:
%\begin{enumerate}
%\item For two vertices $v_{i},v_{j}\in V=K^{(0)}$ , we have: \,\, $v_{i}\prec_{K}v_{j}\,\,\Leftrightarrow\,\,i<j$.
%\item $r\in[d]$
%For  any two simplices $\sigma,\tau\in K$, we
%have: \,\,
%$
%\sigma\prec_{K}\tau\,\,\Leftrightarrow\,\,\argmin\limits _{v_{i}\in\sigma\Delta\tau}\,\,i\,\,\in\sigma.
%$
%\end{enumerate}
% \begin{algorithm}[H]
%\caption{Lexicographic gradient algorithm}
%\begin{algorithmic}[1]
%\State{$\VCC\gets \emptyset$.}
%\For{$r = 1$ to $d$ }
%\ForAll{ simplices $\sigma\in K^{(r-1)}$ considered in ascending lexicographic order}
%\If{$\tau = \lexmin(\{ \rho \,\,|\,\,\sigma \text{ is a facet of }  \rho \text{  and  }\sigma \prec_{K} \rho \})$ is non-empty}
	%\State{ $\VCC  \gets \VCC \,\, \bigcup \,\, \{ (\sigma,\tau) \}$.}
%\EndIf	
%\EndFor
%\EndFor
%\end{algorithmic}
%\label{alg:lexi}
%\end{algorithm}
%
\begin{absolutelynopagebreak}
Following~\cite{kahle2,ripser}, we call a pair of simplices $(\sigma,\tau)$ of $K$ an \emph{apparent pair} of $K$ (with respect to the lexicographic order on simplices) if both 
\begin{itemize}
\item $\sigma$ is the lexicographically highest facet of $\tau$, and 
\item $\tau$ is the lexicographically lowest cofacet of $\sigma$.
\end{itemize}
\end{absolutelynopagebreak}

As observed in~\cite{kahle2}, the collection of all the apparent pairs in  $\cfm$  forms a discrete gradient  on $\cfm$; see also~\cite[Lemma 3.5]{ripser}. We denote this gradient by $\VCC_1$.

Kahle \cite{kahle2} introduced the lexicographic apparent pairs gradient to construct matchings with provably few critical simplices on Vietoris-Rips complexes built on a random set of points in space.
An earlier variation of this construction \cite{kahle1} was used by the same author to study random clique complexes.
%, respectively.
%on Vietoris-Rips complexes (built on a random set of points in space),
The  Morse numbers from this matching provides upper bounds on the Betti numbers of respective degrees.
In what follows, we observe that the apparent pairs gradient also provides very efficient matchings for an even larger class of random combinatorial complexes, namely, the multiparameteric complexes of Costa and Farber~\cite{cf1,cf2,cf3,cf4}.
Our analysis closely follows Kahle's work on random clique complexes~\cite[Section 7]{kahle2}.

The Costa--Farber complex $\cfm$  on a vertex set $V$ of size $n$ and a probability vector {$\mathbf{p}=\left\{ p_{1},\dots,p_{n-1}\right\} $} can be described as follows.
 First, add all the vertices in $V$ to the complex  $\cfm$. Next, include every
possible edge independently with probability $p_{1}$. So far, this
is the same as the Erd\H{o}s--R\'enyi graph $G(n,p_{1})$. 
That is, the $1$-skeleton $\mathsf{X}_{1} = G(n,p_{1})$.
Subsequently,
for every $3$-clique in $G(n,p_{1})$, include a $2$-simplex independently
with probability $p_{2}$ to obtain the $2$-skeleton $\mathsf{X}_{2}$.
More generally, consider an $r$-simplex $\sigma$ defined on an $r+1$-element
vertex set $V_{r}\subset V$. If all the simplices of the set $\partial\sigma$
are present in $\mathsf{X}_{r-1}$, then include $\sigma$ in $\mathsf{X}_{r}$
with probability $p_{r}$.  Do this for every for every $r+1$-element subset of $V$ to obtain the $r$-skeleton $\mathsf{X}_{r}$.
Following this process for every $r\in[n-1]$
gives  the complex $\mathsf{X}_{n-1}={\mathsf{X}(n,\textbf{p})}$.

Note that the Costa--Farber model generalizes both the $d$-dimensional Linial--Meshulam model $\mathsf{Y}_{d}(n,p)$ as well as the random clique complex model $\mathsf{X}(n,p)$. For instance, when $p_1=p$  and $p_i = 1$ for $i \in [2,n-1]$, we obtain the random clique complex model with parameter $p$.
When $p_i = 1$ for $i\in[d-1]$, $p_d = p$, and $p_i = 0$ for $i \in [d+1,n-1]$, we recover the $d$-dimensional Linial--Meshulam model with parameter $p$.

Let $\sigma=\left\{ v_{0},v_{1}\dots v_{r}\right\} $ be an $r$-dimensional simplex of $\cfm$, where the vertices are ordered such that $v_{0} < v_{1} < \dots <  v_{r}$.
A necessary condition for $\sigma$ to be a critical simplex in $\VCC_1$ is  that $\sigma$ has no lexicographically lower cofacet $\tau$
in $\mathsf{X}(n,\textbf{p})$. That is, if $\tau=\sigma\bigcup\left\{ v'\right\} $
and $ v' \prec_{K} v_i$ for all $v_i \in \sigma$, then $\tau\not\in\mathsf{X}(n,\textbf{p})$ (as otherwise $(\sigma,\tau)$ would be an apparent pair) 
Note that this condition is necessary but not sufficient for $\sigma$ to be a critical simplex, as it may still be matched to a lower dimensional simplex.

 %Then, an $r$-simplex $\sigma$ is critical  in $\VCC_1$ if and only if belongs to the complex $\mathsf{X}(n,\textbf{p})$ but none of its lexicographically higher facets do. 
Using independence, the probability that $\sigma$ is not matched to a higher-dimensional simplex (and is possibly critical) is given by
\[\prod\limits_{i=1}^{r}p_{i}^{\binom{r+1}{i+1}}\left(1-\prod\limits_{\ell=1}^{r+1}p_{\ell}^{\binom{r+1}{\ell}}\right)^{v_{0} - 1} . \]
Let $m_{r}$ denote the total number of critical $r$-simplices. 
Assume without loss of generality that $V=\{1,\dots,n\}$.
Since
there are $\binom{n-j}{r}$ possible choices for $\sigma$ with $v_{0}=j$,
and since  $v_{0} \in [1,n-r]$, we obtain the following
expression for $\mathbb{E}(m_{r})$:
\begin{align*}
\mathbb{E}(m_{r}) &\leq  \sum\limits _{j=1}^{n-r}\binom{n-j}{r}\prod\limits _{i=1}^{r}p_{i}^{\binom{r+1}{i+1}}\left(1-\prod\limits _{\ell=1}^{r+1}p_{\ell}^{\binom{r+1}{\ell}}\right)^{j-1}\\
 & \leq\binom{n}{r}\prod\limits _{i=1}^{r}p_{i}^{\binom{r+1}{i+1}}\sum\limits_{j=1}^{n-r}\left(1-\prod\limits _{\ell=1}^{r+1}p_{\ell}^{\binom{r+1}{\ell}}\right)^{j-1}\\
 & \leq\binom{n}{r}\prod\limits _{i=1}^{r}p_{i}^{\binom{r+1}{i+1}}\sum\limits _{j=1}^{\infty}\left(1-\prod\limits _{\ell=1}^{r+1}p_{\ell}^{\binom{r+1}{\ell}}\right)^{j-1}\\
 & = \left (\binom{n}{r}\prod\limits _{i=1}^{r}p_{i}^{\binom{r+1}{i+1}} \right )  \cdot \left ( {\prod\limits _{\ell=1}^{r+1}p_{\ell}^{-\binom{r+1}{\ell}}} \right ) .
\end{align*}

Let $c_{r}$ denote the total number of $r$-simplices in $\mathsf{X}(n,\textbf{p})$.
Then, the expected number of $r$-simplices in $\mathsf{X}(n,\textbf{p})$ is given by
\[\mathbb{E}(c_{r})=\binom{n}{r+1}\prod\limits _{i=1}^{r}p_{i}^{\binom{r+1}{i+1}}.\]
Therefore, 
\begin{align*}
\frac{\mathbb{E}(m_{r})}{\mathbb{E}(c_{r})} & 
\leq \frac{ \left (\binom{n}{r}\prod\limits _{i=1}^{r}p_{i}^{\binom{r+1}{i+1}} \right )  \cdot \left ( {\prod\limits _{\ell=1}^{r+1}p_{\ell}^{-\binom{r+1}{\ell}}} \right )}{\binom{n}{r+1}\prod\limits _{i=1}^{r}p_{i}^{\binom{r+1}{i+1}}}
 =\frac{(r+1)}{(n-r)\prod\limits _{\ell=1}^{r+1}p_{\ell}^{\binom{r+1}{\ell}}} .
\end{align*}
When $r$ is a fixed constant, 
\begin{equation}\label{eqn:kahlerone}
\frac{\mathbb{E}(m_{r})}{\mathbb{E}(c_{r})}=O\left(\frac1n{\prod\limits _{\ell=1}^{r+1}p_{\ell}^{-\binom{r+1}{\ell}}}\right).
\end{equation}
Assuming the denominator
$n\prod\limits _{\ell=1}^{r+1}p_{\ell}^{\binom{r+1}{\ell}}\to\infty$,
we obtain $\frac{\mathbb{E}(m_{r})}{\mathbb{E}(c_{r})}=o(1)$.

\begin{remark}
While the asymptotics of the ratio  $\frac{\mathbb{E}(m_{r})}{\mathbb{E}(c_{r})} $ is informative, it would still be interesting to also say something about the ratio   $\mathbb{E} \big ( \frac{m_{r}}{c_{r}} \big )$. 
%Indeed, this was one of the focal points of discussion during the second author's doctoral thesis examination. 
In particular, our results do not say anything about the variance of $m_r$ vis-a-vis the variance of $c_r$. Towards this end, the authors of \cite{TNR} show that for the apparent pairs gradient (referred to as the lexicographic gradient in~\cite{TNR}) on random clique complexes, the variance of the number of critical  $r$-simplices is of the same order as the  variance of the total number of $r$-simplices for random clique complexes, for fixed $r$. 
Furthermore, the authors of  \cite{TNR} prove a multivariate normal approximation theorem for the random vector  $(m_{2},m_{3},\dots,m_{r})$ for a fixed $r$. We recommend the results of \cite{TNR}  to the interested reader.
\end{remark}

\subsection{The random face gradient.}

Kahle~\cite[Section 7]{kahle2} describes an alternative method for designing gradients on 
random clique complexes with parameter $p_{1}$ for which the following
holds true.

\begin{equation}\label{eqn:kahler}
\frac{\mathbb{E}(m_{r})}{\mathbb{E}(c_{r})}=\frac{\binom{r+2}{2}\binom{n}{r+2}}{\binom{n}{r+1}}p_{1}^{r}.
\end{equation}

 We extend Kahle's strategy to the scenario where some of the $(r-1)$-simplices may be matched to $(r-2)$-dimensional facets and may not be available to be matched to $r$-dimensional cofacets. We call such simplices \emph{inadmissible simplices}. On the other hand, $(r-1)$-simplices that are not matched to their facets are called \emph{admissible simplices}. An admissible simplex along with a cofacet forms an \emph{admissible pair}, whereas an inadmissible simplex along with a cofacet forms an \emph{inadmissible pair}.  Randomly match
every $r$-simplex to one of its admissible facets. This strategy doesn't give
you a discrete gradient on the nose as there will be $(r-1)$-simplices
that are matched to more than one cofacets, and there might also
be some cycles. These events are termed as bad events. It suffices
to make one pair of simplices critical per bad event. Once the corresponding
simplices associated to all bad events are made critical, one indeed
obtains a discrete gradient $\VCC_2$. Bounding the expected number of bad
events $\mathcal{B}_{r}$ therefore gives a bound on the expected number
of critical simplices. It is then shown that the total number of bad
events for dimension $r$ is given by \[\mathbb{E}(\mathcal{B}_{r})  \leq \binom{r+2}{2}\binom{n}{r+2}p_{1}^{\binom{r+2}{2}-1}.\]
This is because each bad event contains at least one pair of $r$-simplices meeting in an admissible $(r-1)$-simplex. The total number of vertices involved are, therefore, $r+2$. 
So there are $\binom{n}{r+2}$ choices of $(r+2)$-vertex sets, and for every choice of an  $(r+2)$-vertex set, there are at most $\binom{r+2}{2}$ admissible pairs.
Finally, for an admissible pair to be a bad pair, all but one edge must be present among  the  $(r+2)$ vertices.
The expected number of simplices of dimension $r$ is given by \[\mathbb{E}(c_{r})=\binom{n}{r+1}p_{1}^{\frac{r(r+1)}{2}}.\]
Dividing the two, we obtain
\begin{equation}\label{eqn:fieldtwo}
\frac{\mathbb{E}(\mathcal{B}_{r})}{\mathbb{E}(c_{r})} \leq \frac{\binom{r+2}{2}\binom{n}{r+2}}{\binom{n}{r+1}}p_{1}^{r}.
\end{equation}
Note that if $r$ is a fixed constant, and   $np_{1}^{r}\to 0$, then
 \[\frac{\mathbb{E}(\mathcal{B}_{r})}{\mathbb{E}(c_{r})} = o(1).\]

Kahle's method~\cite{kahle2}  for constructing gradients in the sparse regime
easily extends to the Costa--Farber model, and \Cref{eqn:fieldtwo} generalizes as follows:
\[
\frac{\mathbb{E}(\mathcal{B}_{r})}{\mathbb{E}(c_{r})} \leq \frac{\binom{r+2}{2}\binom{n}{r+2}}{\binom{n}{r+1}}\prod\limits_{j=1}^{r}p_{j}^{\binom{r}{j}}.
\]
Let $r$ be a fixed constant, and $m_r$ denote the critical simplices of $\VCC_2$. Then, we obtain
\begin{equation}\label{eqn:kahlertwo}
\frac{\mathbb{E}(\mathcal{B}_{r})}{\mathbb{E}(c_{r})} =  \frac{\mathbb{E}(m_{r})}{\mathbb{E}(c_{r})} = O\left(n\prod_{j=1}^{r}p_{j}^{\binom{r}{j}}\right).
\end{equation}
When the parameters $\boldp$ in $\cfm$ are  such that
\[n\prod\limits _{j=1}^{r}p_{j}^{\binom{r}{j}}\to 0, \quad\text{we obtain}\quad \frac{\mathbb{E}(m_{r})}{\mathbb{E}(c_{r})}=o(1).\]
 %\[ \]
 %Note that $n\prod_{\ell=1}^{r+1}p_{\ell}^{\binom{r+1}{\ell}}\to\infty$ typically leads to  complexes in the dense regime, while $n\prod_{j=1}^{r}p_{j}^{\binom{r}{j}}\to 0$ typically leads to  complexes in the sparse regime.
 
 %\begin{remark}\label{rmk:modifyalgo}
 %If some of the $(r-1)$-dimensional simplices are already matched to one of their respective $(r-2)$-dimensional faces, and are not available to be matched with their respective $r$-dimensional cofaces, then we can modify the algorithm described above to randomly select for each $r$-dimensional simplex
 %only those  $(r-1)$-dimensional simplices that are not matched to one of their  $(r-2)$-dimensional faces. It is easy to check that the analysis and the bounds remain unchanged. 
 %\end{remark}
 
 \subsection{Linial--Meshulam complexes}
 
 From \Cref{eqn:kahlerone,eqn:kahlertwo}, in both cases, we obtain very good discrete gradients for typical instances. In particular, we obtain the following theorem.
 
 \begin{theorem} Let $r$ be a fixed dimension. Then, for the regimes of Costa--Farber complexes $\cfm$ that satisfy 
 \[
 n\prod\limits _{\ell=1}^{r+1}p_{\ell}^{\binom{r+1}{\ell}}\to\infty
 \quad\text{or}\quad
 n\prod\limits _{j=1}^{r}p_{j}^{\binom{r}{j}}\to 0 
 \]
 there exist respective discrete gradients that satisfy $\frac{\mathbb{E}(m_{r})}{\mathbb{E}(c_{r})}=o(1)$.
  \end{theorem}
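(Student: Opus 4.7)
The plan is to package the two asymptotic ratios already derived in the preceding two subsections, corresponding to the two discrete gradients $\VCC_1$ (apparent pairs) and $\VCC_2$ (random face), and note that each regime is precisely the setting in which one of them forces the expected number of critical $r$-simplices to be negligible compared to the expected number of $r$-simplices.

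First I would handle the regime where $n\prod_{\ell=1}^{r+1}p_{\ell}^{\binom{r+1}{\ell}}\to\infty$. Here I would take the gradient to be the apparent pairs gradient $\VCC_1$ on $\cfm$, which is a bona fide discrete gradient by the observation of Kahle recalled earlier in \Cref{sec:costafarber}. The expected count of critical $r$-simplices was bounded from above simplex-by-simplex by summing, over the lexicographically smallest vertex $v_0$ of the simplex, the probability that $\sigma$ has no lexicographically earlier cofacet and that all its facets are present; the geometric sum in $v_0$ telescopes to give \Cref{eqn:kahlerone}. Since $r$ is fixed, the combinatorial prefactor $(r+1)/(n-r)$ is $O(1/n)$, and the hypothesis exactly says that the reciprocal product in \Cref{eqn:kahlerone} is dominated by $n$; hence $\mathbb{E}(m_r)/\mathbb{E}(c_r)=o(1)$.

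Next I would treat the regime $n\prod_{j=1}^{r}p_{j}^{\binom{r}{j}}\to 0$ using the random face gradient $\VCC_2$ described just above the theorem. Here the strategy is to match each $r$-simplex to one of its admissible $(r-1)$-facets uniformly at random, then repair the resulting partial matching by declaring critical one pair per bad event (an $(r-1)$-facet matched to two $r$-cofacets, or a cycle created in the matching digraph). The number of bad events is controlled by the fact that every bad event witnesses two $r$-simplices sharing an admissible $(r-1)$-face, so it is supported on an $(r+2)$-vertex set in which all but possibly one edge of the underlying $1$-skeleton is present and all lower-dimensional faces are there as well. This yields the union bound \Cref{eqn:kahlertwo}, and since $r$ is fixed the binomial quotient contributes an $O(n)$ factor, so the hypothesis $n\prod_{j=1}^{r}p_{j}^{\binom{r}{j}}\to 0$ gives $\mathbb{E}(m_r)/\mathbb{E}(c_r)=o(1)$.

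I expect no real obstacle here, since both quantitative bounds are already established in display \Cref{eqn:kahlerone} and display \Cref{eqn:kahlertwo}; the only subtlety is to verify that in both cases the $r$-dependent combinatorial constants are absorbed into $O(\cdot)$ because $r$ is fixed, and that the two hypotheses listed in the theorem are respectively the sharp threshold conditions under which each ratio vanishes. Combining the two cases then gives the statement.
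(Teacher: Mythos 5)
Your proposal is correct and follows essentially the same route as the paper: the theorem is simply a packaging of the two bounds already established in the preceding subsections, using the apparent pairs gradient in the regime $n\prod_{\ell=1}^{r+1}p_{\ell}^{\binom{r+1}{\ell}}\to\infty$ and the random face gradient in the regime $n\prod_{j=1}^{r}p_{j}^{\binom{r}{j}}\to 0$, with the fixed-$r$ combinatorial factors absorbed into the asymptotic notation. The only quibble is that calling the hypotheses ``sharp threshold conditions'' overstates matters (they are just the sufficient conditions making each bound vanish), but this does not affect the argument.
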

  
 Specializing the above analysis to Linial--Meshulam complexes we obtain the following corollary. % we obtain%by letting $r+1= d$ in the first case, and $r=d$ in the second, we obtain the following statement.
  \begin{corollary} \label{cor:lm} For the regimes  of Linial-Meshulam complexes $\mathsf{Y}_{d}(n,p)$ that  satisfy 
 \[
(n p \to\infty \text{ and } r+1= d)
 \quad\text{or}\quad
 (n p \to 0 \text{ and } r=d)
 \]
 there exist respective discrete gradients that satisfy $\frac{\mathbb{E}(m_{r})}{\mathbb{E}(c_{r})}=o(1)$.
  \end{corollary}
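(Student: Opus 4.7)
The plan is to observe that the Linial--Meshulam model $\lmm$ is precisely the Costa--Farber complex $\cfm$ with the parameter vector $\boldp$ taken to be $p_\ell = 1$ for $\ell \in [1,d-1]$, $p_d = p$, and $p_\ell = 0$ for $\ell > d$, as the paper has already recalled. With this choice, every factor in the products appearing in the two hypotheses of the preceding theorem collapses to $1$, except for the single factor of $p$ contributed at the index $\ell = d$ (provided $d$ falls within the range of the product).

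For the apparent pairs regime I would take $r = d-1$ so that $r+1 = d$, and compute
\[ n \prod_{\ell=1}^{r+1} p_\ell^{\binom{r+1}{\ell}} \;=\; n \cdot p^{\binom{d}{d}} \;=\; np. \]
Hence the hypothesis $n \prod_{\ell=1}^{r+1} p_\ell^{\binom{r+1}{\ell}} \to \infty$ of the theorem is exactly $np \to \infty$, and applying the theorem immediately yields $\mathbb{E}(m_r)/\mathbb{E}(c_r) = o(1)$ in this regime. For the random face regime I would take $r = d$ and analogously compute
\[ n \prod_{j=1}^{r} p_j^{\binom{r}{j}} \;=\; n \cdot p^{\binom{d}{d}} \;=\; np, \]
so the hypothesis $n \prod_{j=1}^{r} p_j^{\binom{r}{j}} \to 0$ becomes $np \to 0$, and the theorem again gives $\mathbb{E}(m_r)/\mathbb{E}(c_r) = o(1)$.

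Since this is a direct specialization, there is essentially no obstacle beyond verifying the two product computations; the real content lies in the preceding theorem. The only thing worth remarking on is why the corollary restricts to the specific pairs of conditions above. For $r < d-1$, every $r$-simplex on $n$ vertices is deterministically present in $\lmm$, so the ratio $\mathbb{E}(m_r)/\mathbb{E}(c_r)$ carries no information about the random structure of the model; for $r > d$, there are no $r$-simplices at all, so the statement is vacuous. The choices $r = d - 1$ and $r = d$ stated in the corollary are thus precisely the regimes in which the theorem yields genuinely new information about $\lmm$.
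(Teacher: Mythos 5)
Your proposal is correct and is exactly the paper's argument: the paper obtains the corollary by the same direct specialization of the Costa--Farber theorem to the parameter vector $p_\ell=1$ for $\ell\in[1,d-1]$, $p_d=p$, $p_\ell=0$ for $\ell>d$, under which both products reduce to $p$ (via $\binom{d}{d}=1$), turning the two hypotheses into $np\to\infty$ for $r+1=d$ and $np\to 0$ for $r=d$. Your closing remark about why only these values of $r$ are informative is accurate supplementary commentary not present in the paper, and nothing in it affects the correctness of the proof.
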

 
In other words, the above corollary says that if ($p=o(\frac{1}{n})$  and $r+1= d$), or if  ($p=\omega(\frac{1}{n})$  and $r=d$) for Linial--Meshulam complexes $\mathsf{Y}_{d}(n,p)$, then  there exist respective discrete gradients that satisfy $\frac{\mathbb{E}(m_{r})}{\mathbb{E}(c_{r})}=o(1)$.

To further refine our analysis, we now define a gradient $\UCC$ on the entire Linial--Meshulam complex $\lmm$ as follows:
\begin{compactitem} \item when   $n p \to\infty$,  let $\UCC$  be the apparent pairs gradient on $\lmm$.
 \item when $n p \to 0$,  $\UCC$  is comprised of 
 \begin{compactitem}
  \item $\text{the  apparent  pairs gradient for matching } k-1\text{-simplices to } k \text{-simplices for } k\in[d-1]$
  \item $ \text{the random face gradient for matching the admissible } (d-1) \text{-simplices to } d \text{-simplices.} $ 
  \end{compactitem}
\end{compactitem} 
Also, let $\mathsf{opt}$ be the optimal discrete gradient on $\lmm$. Then, the following holds true.
  \begin{theorem} \label{thm:lmmain} For the regimes  of Linial-Meshulam complexes $\mathsf{Y}_{d}(n,p)$ that  satisfy 
 \[
 n p \to\infty
 \quad\text{or}\quad
 n p \to 0 
 \]
 the discrete gradient $\UCC$  satisfies  $\frac{\mathbb{E}(|\UCC|)}{\mathbb{E}(|\mathsf{opt}|)} \to 1$ as $n\to \infty$.
  \end{theorem}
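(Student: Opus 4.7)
The plan is to pin both $\mathbb{E}(|\UCC|)$ and $\mathbb{E}(|\mathsf{opt}|)$ to the same leading order in each regime: to $\binom{n}{d+1}p$ when $np \to \infty$ (essentially the expected number of $d$-simplices), and to $\binom{n}{d}$ when $np \to 0$ (essentially the number of $(d-1)$-simplices whose smallest vertex is not $1$). The upper bound on $\mathbb{E}(|\UCC|)$ will come from the dimensionwise estimates already established in this section, and the matching lower bound on $\mathbb{E}(|\mathsf{opt}|)$ from the weak Morse inequality $m_k(\mathsf{opt}) \ge \beta_k$ together with the pointwise rank estimate $\beta_k \ge c_k - c_{k-1} - c_{k+1}$.

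In the dense regime $np \to \infty$, I would first observe that the apparent pairs gradient matches every simplex of dimension $k \in [1,d-2]$: since the $(d-1)$-skeleton of $\lmm$ is complete, for any such $\sigma$ the lex-lowest cofacet is $\sigma \cup \{1\}$ when $1 \notin \sigma$, producing an apparent pair; and when $1 \in \sigma$, $\sigma$ is matched down to $\sigma \setminus \{1\}$ by a symmetric argument. Only vertex $1$ remains critical at dimension $0$, so $m_k(\UCC) = 0$ for $1 \le k \le d-2$. At dimension $d-1$, \Cref{cor:lm} yields $\mathbb{E}(m_{d-1}(\UCC)) = O\bigl(\binom{n}{d}/(np)\bigr)$, which is $o\bigl(\binom{n}{d+1}p\bigr)$ since $(np)^2 \to \infty$. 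The trivial bound $\mathbb{E}(m_d(\UCC)) \le \mathbb{E}(c_d) = \binom{n}{d+1}p$ then gives $\mathbb{E}(|\UCC|) \le (1+o(1))\binom{n}{d+1}p$. For the lower bound, rank-nullity yields $\beta_d \ge c_d - c_{d-1}$ pointwise, so $\mathbb{E}(|\mathsf{opt}|) \ge \mathbb{E}(\beta_d) \ge \binom{n}{d+1}p - \binom{n}{d} \sim \binom{n}{d+1}p$. Combined with the pointwise inequality $|\UCC| \ge |\mathsf{opt}|$, the ratio tends to $1$.

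In the sparse regime $np \to 0$, the same apparent pairs argument forces $m_k(\UCC) = 0$ for $1 \le k \le d-2$ and $m_0(\UCC) = 1$. At dimension $d-1$, the apparent pairs stage (matching $(d-2)$- to $(d-1)$-simplices) pairs down precisely the $(d-1)$-simplices with smallest vertex $v_0 = 1$, leaving the remaining $\binom{n-1}{d}$ admissible $(d-1)$-simplices for the random face gradient. Since each such admissible $\sigma$ has no $d$-cofacet in $\lmm$ with probability $(1-p)^{n-d} \to 1$, we obtain $\mathbb{E}(m_{d-1}(\UCC)) = (1-o(1))\binom{n-1}{d} \sim \binom{n}{d}$. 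The contribution from $m_d(\UCC) \le \mathbb{E}(c_d) = \binom{n}{d+1}p$ and from the bad events of the random face gradient, $\mathbb{E}(\mathcal{B}) = O(n^{d+2}p^2)$, are both $o\bigl(\binom{n}{d}\bigr)$ since $np \to 0$, so $\mathbb{E}(|\UCC|) \sim \binom{n}{d}$. On the lower bound side, $\beta_{d-1} \ge c_{d-1} - c_{d-2} - c_d$ yields $\mathbb{E}(|\mathsf{opt}|) \ge \binom{n}{d} - \binom{n}{d-1} - \binom{n}{d+1}p \sim \binom{n}{d}$, so the ratio again tends to $1$.

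The main obstacle is the dimensionwise dissection of the hybrid gradient $\UCC$ in the sparse regime, in particular verifying that the apparent pairs leave exactly the admissible $(d-1)$-simplices with $v_0 > 1$ unmatched for the random face stage, and that the resulting bad events are genuinely subleading. Once this is in place, the remaining work is routine asymptotic comparison of binomial coefficients, with the subleading dimensions suppressed by $(np)^2 \to \infty$ in the dense case and by $np \to 0$ in the sparse case.
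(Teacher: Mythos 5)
Your proposal proves a different statement from the one in the theorem, and the statement it does prove is vacuous. In the paper, $|\UCC|$ and $|\mathsf{opt}|$ denote the \emph{number of gradient pairs} (the proof uses $2|\UCC|=\sum_k \overline{m_k}$, the count of regular simplices), so $|\UCC|\le|\mathsf{opt}|$ pointwise and the content of the theorem is that the matching $\UCC$ captures asymptotically as many pairs as a maximum Morse matching. You instead read $|\cdot|$ as the number of \emph{critical} simplices: this is visible in your ``pointwise inequality $|\UCC|\ge|\mathsf{opt}|$'' and in your pinning of both quantities to $\binom{n}{d+1}p$ (dense case) and $\binom{n}{d}$ (sparse case), which are the dominant critical counts, not pair counts. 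Under that reading the claim is trivially true for \emph{every} gradient: when $np\to\infty$ any gradient has at least $c_d-c_{d-1}\sim\binom{n}{d+1}p$ critical $d$-simplices and at most $N\sim\binom{n}{d+1}p$ critical simplices in total, and when $np\to 0$ any gradient has at least $c_{d-1}-c_{d-2}-c_d\sim\binom{n}{d}$ critical $(d-1)$-simplices and at most $N\sim\binom{n}{d}$ critical simplices, so the critical-count ratio tends to $1$ without using any property of $\UCC$. Since $\mathrm{crit}+2\cdot\mathrm{pairs}=N$, a multiplicatively negligible difference in critical counts can be a multiplicatively large difference in pair counts (the pair scale is $c_{d-1}$ in the dense regime and $\max(c_{d-2},\mathbb{E}(c_d))$ in the sparse regime, both of strictly smaller order than the critical scale), so your conclusion does not imply the theorem.

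To prove the actual statement one must compare matched simplices against these smaller scales, which is what the paper does: dimensionwise, all simplices of dimension at most $d-2$ are matched (its Lemma~\ref{lem:lower}, which matches your apparent-pairs observation), and then one bounds the deficit via $\overline{m_d}\ge c_{d-1}-m_{d-1}-c_{d-2}$ together with $\overline{n_d}\le c_{d-1}$ when $np\to\infty$, and via $\overline{m_{d-1}}\ge c_d-m_d+c_{d-2}-c_{d-3}$ together with $\overline{n_{d-1}}\le c_d+c_{d-2}$ when $np\to 0$, concluding because $c_{d-2}+2\mathbb{E}(m_{d-1})=o(c_{d-1})$ resp.\ $c_{d-3}+2\mathbb{E}(m_d)=o(c_{d-2}+2\mathbb{E}(c_d))$ by Corollary~\ref{cor:lm}. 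Your intermediate estimates are actually strong enough for this once aimed at the right scale: $\mathbb{E}(m_{d-1})=O\bigl(\binom{n}{d}/(np)\bigr)=o(c_{d-1})$ handles the dense case, and $\mathbb{E}(m_d)=O(np\cdot\mathbb{E}(c_d))=o(\mathbb{E}(c_d))$ handles the sparse case; but as written you compare them to $\binom{n}{d+1}p$ and $\binom{n}{d}$, the wrong (much larger) denominators, so the argument as stated does not establish the theorem.
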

%Furthermore, it is quite straightforward to refine the above analysis to show that if  $n p \to\infty  \text{ or } n p \to 0$, such that  $\frac{\mathbb{E}(|\VCC|)}{\mathbb{E}(|\mathsf{opt}|)} \to 1$ as $n\to \infty$, where 

%Taking into account the length of the paper, and the simplicity of the argument, we omit the details here.
%We limit this discussion by mentioning that 
%\begin{compactitem}
%\item when   $n p \to\infty$, $\VCC$ can be chosen to be the apparent pairs gradient on the full complex,
%\item when $n p \to 0$, it suffices to use the apparent  pairs gradient for matching $k-1$-simplices to $k$-simplices for $k\in[d-1]$ and $\VCC_2$ for matching the remaining $(d-1)$-simplices to $d$-simplices.
%\end{compactitem}
For proof, we refer the reader to \Cref{sec:lmc}.

We would like to contrast the above observations with a known result from literature. We  start with a definition.
If a $d$-dimensional simplicial complex collapses to a $d-1$-dimensional complex, then we say that it is \emph{$d$-collapsible}.
The following result concerning the $d$-collapsibility threshold was established by Aronshtam and Linial\cite{la1,la2}.
See also \cite[Theorem~23.3.17]{handbook}.
\begin{theorem}[Aronshtam, Linial\cite{la1,la2}] \label{thm:lmside}
There exists a dimension dependent constant $c_{d}$ for Linial--Meshulam
complexes $\mathsf{Y}_{d}(n,p)$ such that
\begin{itemize}
\item If $p\geq\frac{c}{n}$ where $c>c_{d}$ then with high probability
$\mathsf{Y}_{d}(n,p)$ is not $d$-collapsible, 
\item and if $p\leq\frac{c}{n}$ where $c<c_{d}$ then $\mathsf{Y}_{d}(n,p)$
is $d$-collapsible with probability bounded away from zero.
\end{itemize}
\end{theorem}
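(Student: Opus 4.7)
The plan is to analyze the greedy collapse process and to characterize the sharp constant $c_d$ as the critical value of an associated branching process. First I would formalize the \emph{$d$-core}: starting from $\mathsf{Y}_d(n,p)$, repeatedly identify a $(d-1)$-face $\sigma$ contained in exactly one $d$-simplex $\tau$ and erase the pair $(\sigma,\tau)$, stopping when no such free face remains. A standard diamond-lemma / confluence argument shows that the terminal subcomplex is independent of the order of collapses; it admits the intrinsic description that every $(d-1)$-face present has either $0$ or $\ge 2$ $d$-cofaces. The complex $\mathsf{Y}_d(n,p)$ is $d$-collapsible if and only if its $d$-core contains no $d$-simplex, so from here on I would work entirely with the $d$-core.

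For the supercritical regime $p = c/n$ with $c > c_d$, I would run a first-plus-second-moment argument on the $d$-core. Iterating the intrinsic description produces a fixed-point equation for the survival probability $q$ of a $(d-1)$-face, of the approximate form $q = \Pr[\mathrm{Poisson}(c\, q^{\alpha}) \ge 2]$ for a suitable exponent $\alpha=\alpha(d)$. One shows that this equation has a strictly positive solution $q^{*}>0$ precisely when $c$ exceeds a critical value, which is taken as the \emph{definition} of $c_d$. Standard second-moment bounds then yield concentration of the $d$-core size around its mean, forcing it to contain $\Theta(n^{d})$ simplices with high probability and ruling out $d$-collapsibility.

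For the subcritical regime $c < c_d$, I would exploit the locally tree-like structure of $\mathsf{Y}_d(n,p)$ in the sparse limit. Exploring the neighborhood of a $d$-simplex during the collapse couples, up to a vanishing error, with a Poisson multi-type Galton--Watson branching process whose parameters grow linearly in $c$; the simplex is eventually collapsed exactly when the coupled branching process becomes extinct. Since the fixed-point equation above loses its positive solution when $c<c_d$, this branching process is subcritical and extinction has probability $1$ locally. Combining this with a union bound over linearly many starting points and careful control of global correlations gives the conclusion that the probability of full $d$-collapsibility is bounded below by a constant independent of $n$ (rather than tending to $1$, because small local obstructions, such as two $d$-simplices sharing a $(d-1)$-face, survive with constant probability).

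The principal obstacle is controlling correlations through the collapse process: each elementary collapse alters the local neighborhood of many other candidate free faces, so the ideal branching-process approximation only describes the first wave of removals. Honest control requires either a careful leave-one-out coupling between neighborhood exploration in $\mathsf{Y}_d(n,p)$ and the idealised Galton--Watson tree, with the correlation error shown to be $o(1)$, or the more delicate generating-function and differential-equation-method analysis used by Aronshtam and Linial, which tracks the full trajectory of $d$- and $(d-1)$-face counts along the collapse and pins down $c_d$ as the precise value at which that trajectory first fails to terminate at zero.
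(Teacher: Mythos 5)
This theorem is not proved in the paper at all: it is imported verbatim from Aronshtam and Linial \cite{la1,la2} and used in \Cref{sec:costafarber} only as a point of contrast with the gradients constructed there, so there is no in-paper argument to compare your sketch against. Judged against the original proofs, your outline does capture the broad shape of the known program: reduce $d$-collapsibility to the emptiness of the (order-independent) $d$-core, locate $c_d$ as the critical value of a fixed-point equation coming from a Poisson branching-process description of the local structure, and handle the subcritical side through a local-limit coupling together with a trajectory analysis of the collapse process. You also correctly identify where the real difficulty sits, namely controlling the correlations created by successive collapses, which is exactly why Aronshtam--Linial resort to a careful round-by-round / differential-equation analysis rather than a one-shot branching-process comparison.

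That said, two concrete points keep this from being a proof. First, your stated local obstruction is wrong: two $d$-simplices sharing a $(d-1)$-face form a collapsible complex (each of them still has plenty of free $(d-1)$-faces), so this configuration does not prevent $d$-collapsibility. The bounded-size obstruction that keeps the collapsibility probability bounded away from $1$ below the threshold is a genuine core, the simplest being the boundary complex of a $(d+1)$-simplex, every $(d-1)$-face of which lies in exactly two $d$-faces; such copies appear in $\mathsf{Y}_d(n,c/n)$ with probability bounded away from both $0$ and $1$, which is precisely why the subcritical statement asserts only probability bounded away from zero. Second, both halves are substantially harder than your sketch suggests: in the supercritical regime, passing from "the idealized fixed-point equation has a positive root" to "whp a positive fraction of $d$-faces survives every collapsing order" requires relating the local weak limit to the global greedy process (not merely a second-moment bound on the core size), and the subcritical half --- showing that apart from bounded obstructions no core survives --- is the content of an entire separate paper of Aronshtam and Linial. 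As a road map your proposal is reasonable; as an argument it leaves the decisive steps unproved.
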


Therefore, from \Cref{thm:lmmain,thm:lmside}, we conclude that sufficiently away from the $d$-collapsibility threshold, we expect to have very good gradients. It is natural to ask what happens at the threshold? In relation to what is known for hard satisfiability instances~\cite{sat1,sat2}, are complexes of dimension larger than $2$ sampled at the collapsibility thresholds of $\mathsf{Y}_{d}(n,p)$ and more generally $\cfm$ hard? The experiments in \cite{RBN17} do not address this question.
Secondly, for $2$-complexes is it possible to define a simple random  model built out of gluing dunce hats  geared specifically towards generating hard instances for $\minmorse$ for a wide range of parameter values?
We are optimistic about affirmative answers to both questions, but leave this topic for future investigation.

\section{Conclusion and Discussion} \label{sec:extra} 
 In this paper, we establish several hardness results for $\minmorse$. In particular, we show that  for complexes of all dimensions, $\minmorse$ with standard parameterization is $\WP$-hard and has no FPT approximation algorithm for any approximation factor.
 We also establish novel (in)approximability bounds for  $\minmorse$ on $2$-complexes.
 %(under standard complexity theory assumptions). Thereby, we show that for complexes of all dimensions, the problem is hard from a parameterized complexity standpoint.
While we believe that this paper provides a nearly complete picture of complexity of Morse matchings,  we  conclude the paper with two remarks.

\paragraph*{Strengthening of hardness results} We conjecture that for complexes of dimension $d>2$, $\minmorse$ does not admit an $f(n)$-approximation algorithm for any $f  = o(n)$. 
In particular, a result of this nature would show that while the problem is hard for complexes of all dimensions, it is, in fact, slightly harder for higher dimensional complexes when compared to $2$-dimensional complexes, from an inapproximability standpoint.

\paragraph*{Hardness of other related combinatorial problems}
In \cite{BRS19}, the complexity of the following problem (\paraexptwo) was studied: Given a $2$-dimensional simplicial complex $K$ and a natural number $p$, does there exist a sequence of expansions and collapses that take $K$ to a $1$-complex such that this sequence has at most $p$ expansions? A more natural variant (\paraexpert) would be to study the complexity of determining sequences of expansions and collapses (with at most $p$ expansions) that take $K$ to a point. From what we understand, the only obstruction in \cite{BRS19} towards considering the complexity of determining whether $K$ is simple homotopy equivalent to a point with bounded number of expansions is that the gadgets used in  \cite{BRS19} have $1$-cycles. We believe that an immediate application of the cycle filling method introduced in this paper would be towards establishing  $\WP$-completeness for \paraexpert.

\paragraph*{Acknowledgements}
The authors would like to thank Vidit Nanda, Tadas Tem\v{c}inas and Gesine Reinert for  helpful discussions regarding Section 8 of this paper.
The authors  would also like to thank Benjamin Burton and Herbert Edelsbrunner for providing valuable feedback during the second author's thesis defense. 

%\newpage

%\bibliographystyle{plainurl}%
\bibliography{morsetwocomplex}

\appendix

%\section{Additional results from discrete Morse theory}  \label{sec:addres}

\section{Reducing $\mincircuitsat$ to $\minrmm$} \label{sec:appendreduceit}

\subsection{Structural properties of the reduction}\label{sec:spofr}

Note that \Cref{lem:erasefree,lem:remainerase}  appear as Lemmas~4.1~and~4.4~in~\cite{BR19}, but with slightly different notation. For the sake of completeness, we restate the lemma with the notation introduced in this paper. 

%For proofs, please refer to  \Cref{lem:erasefreeap}~and~\Cref{lem:remaineraseap}~in~\Cref{sec:spofr}.
% \begin{lemma} \label{lem:erasefree} For a circuit $C = (V(C),E(C))$, let $F$ is a discrete Morse function on $\kc$ with gradient $\VCC$:
% \begin{enumerate}[(i)]
% \item If $s_{i}^{\zeta} \in \gadget_{m,\ell}^{\zeta}$ is eventually free in $\kc$, then $\gadget_{m,\ell}^{\zeta}$ is erasable in  $\kc$. 
% \item  Suppose that $\gadget_{m,\ell}^{\zeta}$ is erasable in $\kc$ through a gradient $\VCC$,
% \begin{itemize}
% \item If $m=1$, then  $\left( s_1^{\zeta} , \Gamma_1^{\zeta}  \right )$  is a gradient pair in $\VCC$, and for any simplex $\sigma^{\zeta} \in \gadget_{1,\ell}^{\zeta}$ such that $\sigma^{\zeta} \notin \{s_1^{\zeta}, \Gamma_1^{\zeta} \}$ we have  $F(s_1^{\zeta}) > F(\sigma^{\zeta})$. 
% \item If $m=2$, then  $\left( s_1^{\zeta} , \Gamma_1^{\zeta}  \right ) \in \VCC$ or $\left( s_2^{\zeta} , \Gamma_2^{\zeta}  \right ) \in \VCC$, and for any simplex $\sigma^{\zeta} \in \gadget_{m,\ell}^{\zeta}$ such that $\sigma^{\zeta} \notin \{s_i^{\zeta}, \Gamma_i^{\zeta} \}$ for $i=\{1,2\}$, then we have, $\max(F(s_1^{\zeta}),F(s_2^{\zeta})) > F(\sigma^{\zeta})$.
% \end{itemize}
% \end{enumerate}        
% \end{lemma}           

% \begin{lemma} \label{lem:remainerase} For any input gate $G_i$, the subcomplex $\gadget^{(i,1)} \setminus \{\Gamma_1^{(i,1)} \}$ is erasable in $\kc$.
% \end{lemma}

\begin{lemma} \label{lem:erasefree} For a circuit $C = (V(C),E(C))$, let $F$ is a discrete Morse function on $\kc$ with gradient $\VCC$:
\begin{enumerate}[(i)]
\item If $s_{i}^{\zeta} \in \gadget_{m,\ell}^{\zeta}$ is eventually free in $\kc$, then $\gadget_{m,\ell}^{\zeta}$ is erasable in  $\kc$. 
\item  Suppose that $\gadget_{m,\ell}^{\zeta}$ is erasable in $\kc$ through a gradient $\VCC$,
\begin{itemize}
\item If $m=1$, then  $\left( s_1^{\zeta} , \Gamma_1^{\zeta}  \right )$  is a gradient pair in $\VCC$, and for any simplex $\sigma^{\zeta} \in \gadget_{1,\ell}^{\zeta}$ such that $\sigma^{\zeta} \notin \{s_1^{\zeta}, \Gamma_1^{\zeta} \}$ we have  $F(s_1^{\zeta}) > F(\sigma^{\zeta})$. 
\item If $m=2$, then  $\left( s_1^{\zeta} , \Gamma_1^{\zeta}  \right ) \in \VCC$ or $\left( s_2^{\zeta} , \Gamma_2^{\zeta}  \right ) \in \VCC$, and for any simplex $\sigma^{\zeta} \in \gadget_{m,\ell}^{\zeta}$ such that $\sigma^{\zeta} \notin \{s_i^{\zeta}, \Gamma_i^{\zeta} \}$ for $i=\{1,2\}$, then we have, $\max(F(s_1^{\zeta}),F(s_2^{\zeta})) > F(\sigma^{\zeta})$.
\end{itemize}
\end{enumerate}        
\end{lemma}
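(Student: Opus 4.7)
The plan is to prove the two parts separately. Part (i) is a constructive extension-of-collapses argument, while part (ii) is the structural rigidity claim that captures the core design property of the modified dunce hat gadget.

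For part (i), suppose $s_i^\zeta$ is eventually free in $\kc$ through $\VCC$, so there is a subcomplex $\LCC \subseteq \kc$ with $\kc \searrow \LCC$ induced by $\VCC$ such that $s_i^\zeta$ is free in $\LCC$. I would first observe (by inspection of the construction in \Cref{sec:kcmain}) that the identifications in $\kc$ only glue $t$-edges of one dunce hat to $s$-edges of another; in particular, $s$-edges gain no external cofacets. Hence the unique cofacet of $s_i^\zeta$ in $\kc$ (and therefore in $\LCC$) is $\Gamma_i^\zeta$. Starting from the gradient pair $(s_i^\zeta, \Gamma_i^\zeta)$, I would apply the standard collapse sequence of $\gadget_{m,\ell}^\zeta$ described in \Cref{rem:gradient} and illustrated in \Cref{fig:gadgetpaths}. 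The key verification is that all the \emph{internal} simplices of the dunce hat required by that sequence (the $2$-simplices incident on vertex $6^\zeta$ and on vertex $4^\zeta$, together with the ladder structure around each $t_k^\zeta$) involve vertices and edges that are not identified with anything external, hence remain present in $\LCC$ throughout the prefix of collapses used to free $s_i^\zeta$. This extension erases all $2$-simplices of $\gadget_{m,\ell}^\zeta$.

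For part (ii), the crux is the following structural observation about the modified dunce hat, inherited from the design in \Cref{fig:gadget}: viewed as a standalone complex, the only edges of $\gadget_{m,\ell}^\zeta$ with a \emph{single} cofacet are precisely the edges $s_j^\zeta$ for $j \in [1,m]$. All other edges, including the distinguished $t$-edges, have at least two cofacets inside the dunce hat itself. Moreover, the identifications performed when assembling $\kc$ only \emph{add} cofacets to $t$-edges (by gluing them to $s$-edges of other dunce hats); they never remove cofacets. Consequently, the first elementary $2$-collapse in the sequence induced by $\VCC$ that removes a $2$-simplex of $\gadget_{m,\ell}^\zeta$ must pair some $s_j^\zeta$ with $\Gamma_j^\zeta$. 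When $m=1$ this forces $j=1$, giving the desired gradient pair; when $m=2$ it forces $j \in \{1,2\}$. The inequality on $F$-values then follows from the standard correspondence between discrete Morse functions and the time-ordering of their gradient collapses: the pair $(s_j^\zeta, \Gamma_j^\zeta)$ being executed first (from the reverse perspective of building the Morse function by increasing values) means $F(s_j^\zeta)$ exceeds $F(\sigma^\zeta)$ for every other simplex $\sigma^\zeta$ of $\gadget_{m,\ell}^\zeta$.

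The main obstacle is the bookkeeping in part (i): one must verify that every internal simplex of the dunce hat needed by the standard collapse sequence of \Cref{rem:gradient} is truly unaffected by the external collapses that made $s_i^\zeta$ free. This requires enumerating the identified versus non-identified simplices of $\gadget_{m,\ell}^\zeta$ and confirming that the collapse pattern depicted in \Cref{fig:gadgetpaths} only ever pairs pairs of internal simplices with internal cofacets. Since the only simplices of $\gadget_{m,\ell}^\zeta$ that are shared with other dunce hats are the $t$-edges $t_k^\zeta$ and $s$-edges $s_j^\zeta$ (together with their vertex sets), and since the standard sequence never requires these as free faces, this bookkeeping goes through and mirrors the arguments used for the analogous gadget in~\cite{BR19}.
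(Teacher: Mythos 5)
Your overall strategy is the same as the paper's: for (i), once $s_i^{\zeta}$ is free in the subcomplex $\LCC$ witnessing eventual freeness, you extend the collapse by the standard gradient of \Cref{rem:gradient}/\Cref{fig:gadgetpaths} to erase all $2$-simplices of $\gadget_{m,\ell}^{\zeta}$; for (ii), you use that the $s$-edges are the only free edges of the standalone dunce hat and that $\Gamma_j^{\zeta}$ is its unique cofacet of $s_j^{\zeta}$ inside the dunce hat, so the first $2$-collapse touching $\gadget_{m,\ell}^{\zeta}$ must be a pair $(s_j^{\zeta},\Gamma_j^{\zeta})$, with the $F$-inequalities coming from the ordering of the induced collapse. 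This matches the paper's proof almost verbatim in structure.

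There is, however, one concretely false step: you claim that since identifications glue $t$-edges of one dunce hat to $s$-edges of another, ``$s$-edges gain no external cofacets,'' and hence the unique cofacet of $s_i^{\zeta}$ in $\kc$ is $\Gamma_i^{\zeta}$. In the construction of \Cref{sec:kcmain} the $s$-edges are exactly the edges that \emph{do} acquire external cofacets: a feedback $s$-edge is identified with $t$-edges of the successors' blocks and an input $s$-edge with $t$-edges of the predecessors' output components, and each such $t$-edge carries at least two $2$-cofacets in its own dunce hat. This is the entire mechanism of the reduction -- if $s$-edges had no external cofacets they would be free in $\kc$ from the outset, every dunce hat would be trivially erasable, and the gadget would encode nothing. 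The statement your argument actually needs is that $\Gamma_i^{\zeta}$ is the unique cofacet of $s_i^{\zeta}$ \emph{within} $\gadget_{m,\ell}^{\zeta}$: freeness of $s_i^{\zeta}$ in $\LCC$ then says that all other cofacets (in particular the external ones) are gone, so either the unique surviving cofacet is $\Gamma_i^{\zeta}$ and your extension argument runs as stated, or (possible only when $m=2$) $\Gamma_i^{\zeta}$ was already removed, in which case your own part-(ii) analysis shows the erasure of $\gadget_{m,\ell}^{\zeta}$ already began at the other $s$-edge, so erasability holds anyway. With that repair your proof coincides with the paper's; the same inaccuracy reappears harmlessly in (ii), where only the facts that identifications never remove cofacets and that every non-$s$-edge has at least two cofacets inside the dunce hat are actually used.
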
          
\begin{proof} 
Suppose $s_i^{\zeta}$ is eventually free in $\kc$. Then there exists a subcomplex $\LCC$ of $\kc$ such that $\kc \searrow \LCC$ and $s_i^{\zeta}$ is free in $\LCC$.
Note that, by construction of $\gadget_{m,\ell}$, this implies that $\gadget_{m,\ell}^{\zeta}$ is a subcomplex of $\LCC$.
Now using the gradient specified in \Cref{fig:gadgetpaths} all the 2-simplices of $\gadget_{m,\ell}^{\zeta}$ can be collapsed, making $\gadget_{m,\ell}^{\zeta}$ erasable in $\kc$. This proves the first statement of the lemma. 
The last two statements of the lemma immediately follows from observing that the $s$-edges are the only free edges in complex $\gadget_{m,\ell}^{\zeta}$, the simplices $\{\Gamma_i^{\zeta}\}$ are the unique cofaces incident on edges $\{s_i^{\zeta}\}$ respectively, and $\gadget_{m,\ell}^{\zeta}$ is erasable in $\kc$ through the gradient $\VSS$ of $F$.   
\end{proof}  

\begin{lemma} \label{lem:remainerase} For any input gate $G_i$, the subcomplex $\gadget^{(i,1)} \setminus \{\Gamma_1^{(i,1)} \}$ is erasable in $\kc$.
\end{lemma}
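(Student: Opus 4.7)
The plan is to exhibit an explicit gradient vector field $\VSS$ on $\kc$ that matches every $2$-simplex of $\gadget^{(i,1)}\setminus\{\Gamma_1^{(i,1)}\}$ with a facet of itself. The starting point will be the Morse matching $M_0$ indicated by the arrows in Figure~\ref{fig:gadgetpaths}$(a)$, which erases the isolated dunce hat $\gadget^{(i,1)}\cong\gadget_{1,\ell}$ beginning with the pair $(s_1^{(i,1)},\Gamma_1^{(i,1)})$ and matching every one of its $2$-simplices.

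I will carry out the construction in two steps. First, I will form $M := M_0 \setminus \{(s_1^{(i,1)},\Gamma_1^{(i,1)})\}$, which still matches every $2$-simplex of $\gadget^{(i,1)}$ other than $\Gamma_1^{(i,1)}$. Since $M \subseteq M_0$, its modified Hasse subgraph is a subgraph of the acyclic modified Hasse diagram of $M_0$, so $M$ is itself acyclic; thus $M$ is a valid Morse matching on $\gadget^{(i,1)}$ in which $\Gamma_1^{(i,1)}$ and $s_1^{(i,1)}$ have additionally become critical. Second, I will extend $M$ to a matching $\VSS$ on all of $\kc$ by declaring every simplex of $\kc \setminus \gadget^{(i,1)}$ critical; by Lemma~\ref{lem:easy}, this extension is a well-defined gradient and its restriction still yields the desired pairings on $\gadget^{(i,1)}$.

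The only non-obvious verification will be that $\VSS$ remains acyclic on all of $\kc$, since the $t$-edges of $\gadget^{(i,1)}$ are identified with $s$-edges of successor dunce hats and thus acquire additional cofacets in $\kc$. I will argue that no external simplex $\tau \in \kc \setminus \gadget^{(i,1)}$ can lie on a directed cycle of the modified Hasse diagram of $\kc$. Indeed, such $\tau$ is critical, so its incoming arcs are only the unreversed arcs coming from its cofacets; by induction on dimension---using that $\kc$ is a $2$-complex, so external $2$-simplices have no cofacets, and that $\gadget^{(i,1)}$ is closed under taking facets, so external $1$-simplices and $0$-simplices can only have external cofacets---no external simplex has an in-cycle predecessor. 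Hence every directed cycle is confined to $\gadget^{(i,1)}$, where $M$ is already acyclic. Applying Theorem~\ref{Gradient Collapsing Theorem} to $\VSS$ then yields a collapse $\kc \searrow \MCC$ that removes every $2$-simplex of $\gadget^{(i,1)}\setminus\{\Gamma_1^{(i,1)}\}$, establishing erasability.

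\textbf{Main obstacle.} The set-theoretic matching is read off directly from Figure~\ref{fig:gadgetpaths}$(a)$, so the central task will be the acyclicity verification above---in particular, ruling out directed cycles that pass between $\gadget^{(i,1)}$ and neighbouring dunce hats through the identified $t$-edges. The criticality of external simplices combined with the $2$-dimensional structure of $\kc$ keeps this argument elementary.
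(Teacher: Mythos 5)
There is a genuine gap at the final step. \Cref{Gradient Collapsing Theorem} requires that the complex you collapse to---the complement in $\kc$ of the union of pairs of $\VSS$---be a subcomplex, and with your $\VSS$ it is not. In the matching of \Cref{fig:gadgetpaths}$(a)$, the two facets of $\Gamma_1^{(i,1)}$ other than $s_1^{(i,1)}$, namely the edges $\{3,4\}^{(i,1)}$ and $\{1,4\}^{(i,1)}$, are themselves paired with the adjacent triangles $\{3,4,5\}^{(i,1)}$ and $\{1,4,7\}^{(i,1)}$. Since you keep $\Gamma_1^{(i,1)}$ in the complex (it is critical, so no collapse induced by $\VSS$ can remove it), the surviving part would contain $\Gamma_1^{(i,1)}$ but not these two faces of it; so the hypothesis of \Cref{Gradient Collapsing Theorem} fails and no collapse of $\kc$ is produced---equivalently, those two edges are never free while $\Gamma_1^{(i,1)}$ is present. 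The paper sidesteps this because the lemma is invoked (Step 1) only after $\Gamma_1^{(i,1)}$ has been made critical and deleted via \Cref{lem:easy}; the collapsing sequence of \Cref{fig:gadgetpaths}$(a)$ minus the pair $(s_1^{(i,1)},\Gamma_1^{(i,1)})$ is then carried out in the complex with $\Gamma_1^{(i,1)}$ removed, where $\{3,4\}^{(i,1)}$ and $\{1,4\}^{(i,1)}$ have unique cofaces. Your write-up never performs this deletion (nor re-pairs the two triangles adjacent to $\Gamma_1^{(i,1)}$), so the claimed conclusion does not follow as stated.

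Moreover, you have identified the wrong step as the main obstacle. Acyclicity of the extended matching is essentially automatic: every matched arc lies inside the face-closed subcomplex $\gadget^{(i,1)}$, so any closed $V$-path is confined there, and a sub-matching of an acyclic matching is acyclic (your ``subgraph'' justification is not literally correct---unmatching $(s_1^{(i,1)},\Gamma_1^{(i,1)})$ flips an arc---but the $V$-path argument gives it; also \Cref{lem:easy} does not really cover your extension step, though extending by criticality is harmless). What actually needs checking, and is the entire content of the paper's proof, is the subcomplex/freeness condition in the presence of the gluings: every edge of $\gadget^{(i,1)}$ that the gradient of \Cref{fig:gadgetpaths}$(a)$ pairs with a $2$-simplex must have no coface outside $\gadget^{(i,1)}$. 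This holds because the only edges shared with other dunce hats are $s_1^{(i,1)}$ and the $t$-edges $\{y_k,z_k\}^{(i,1)}$, which are left unpaired (critical, respectively among the leftover edges of \Cref{rem:gradient}), and because the stem edges---the only edges that acquire cofaces among the cycle-filling disks $\mathscr{D}$---are likewise never $2$-paired. If some $2$-paired edge did have an external cofacet, that external (critical) $2$-simplex would survive while its face is removed, again violating the subcomplex hypothesis. Your proposal mentions the identifications only in the acyclicity discussion and never makes this verification, so even after fixing the $\Gamma_1^{(i,1)}$ issue the argument remains incomplete.
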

\begin{proof} 
Under the reindexing scheme described in~\Cref{rem:reindex}, let $\zeta_1$ be such that $\gadget^{(i,1)} = \gadget^{\zeta_1} $.
Consider the discrete gradient specified in \Cref{fig:gadgetpaths} $(a)$ as a gradient $\VCC^{(i,1)}$ on $\gadget^{(i,1)} \subseteq \kc$.
First note that $\gadget^{(i,1)} \setminus \{\Gamma^{(i,1)}\}$ is erasable in $\gadget^{(i,1)}$ through the gradient $\VCC^{(i,1)} \setminus \{(s_1^{(i,1)},\Gamma^{(i,1)})\}$.
Moreover, all 1-simplices of $\gadget^{(i,1)}$ that are paired in $\VCC^{(i,1)}$ with a $2$-simplex do not appear in $\gadget^{\zeta_2}$ for any edge $\zeta_1 \neq \zeta_2$.
It follows that $\gadget^{(i,1)} \setminus \{\Gamma_1^{(i,1)} \}$ is erasable in  $\kc$.
\end{proof}

%\begin{lemma} \label{lem:erasehelper} For a subgraph $\HCC$ of a directed degree-3 graph $\GCC$ and a vertex $v \in vh$ with $outdh{v} = l > 0$ and outgoing edges $\left\{ d_1,\dots, d_l \right\} \in eh$, we have:
%\begin{enumerate}
%\item If $indh{v} = 0$, then each $\omega_{d_j}$ is free in~$\kc$.
%\item If $indh{v} = k>0$, let $\left\{ e_1,\dots, e_k \right\} \in eh$ be the set of incoming edges of $v$. If there is a gradient $\VSS$ such that each $\omega_{e_i}$ is eventually free in $\kc$ through $\VSS$, then each $\omega_{d_j}$ is eventually free in~$\kc$ through $\VSS$ as well.
%\end{enumerate}
%\end{lemma}
%\begin{proof} If $indh{v} = 0$, then each $\omega_{d_j}$ is free by construction of~$\kc$.
%Now suppose that $indh{v} = k>0$ and each $\omega_{e_i}$ is eventually free in $\kc$.  
%From Remark~\ref{rem:twoincident} and from the construction of the complex $\kc$, it can be deduced that for any edge $d_j$, the only $2$-simplices incident on $\omega_{d_j}$ in the complex $\kc$ are $\Gamma_{d_j}$ and one pair of $2$-simplices of $%\DCC_{e_i}$ for each $e_i$.
%By assumption, each $\omega_{e_i}$ is eventually free, and so by part (i) of \cref{lem:erasefree}, each $\DCC_{e_i}$ is erasable. Hence, by \cref{lem:unionErasable}, their union is erasable too. This means that
%$\kc$ collapses to a complex $\MCC$ 
%in which each $\omega_{d_j}$ is free, proving the claim.
%\end{proof} 

  \subsection{Reducing $\mincircuitsat$ to $\minrmm$: Backward direction} \label{sec:algalg}
  %$\ALGB  \geq 2\cdot \ALGA   $}
  
   We intend to establish an L-reduction from $\mincircuitsat$ to $\minrmm$.
To this end,  in~\Cref{sec:kcmain}~and~\Cref{sec:kc} we described the map $K: C \mapsto \kc$ that transforms instances of
$\mincircuitsat$ (monotone circuits) to instances of $\minrmm$ (simplicial complexes). 
In this section, we seek to construct a map $\ICC$ that transforms solutions of $\minrmm$ (discrete gradients $\VCC$ on $\kc$) to solutions of $\mincircuitsat$ (satisfying input assignments $\ICC(C,\VCC)$ of circuit $C$).
Recall that  $\ALGB$ denotes the objective value of some solution $\VCC$ on $\kc$ for $\minrmm$, whereas $\ALGA $ denotes the objective value of a solution $\ICC(C,\VCC) $ on $C$ for $\mincircuitsat$.

Suppose that we are  given a circuit $C = (V(C),E(C))$ with   $ n= |V(C)|$ number of nodes.
  Also, for a vector field $\tildev$  on $\kc$, we denote the critical simplices of dimension $2,1$ and $0$ by $m_2(\tildev), m_1(\tildev)$ and $m_0(\tildev)$ respectively.
 Then, by definition, 
 \begin{equation}\label{eq:mdefn}
 \ALGBB = m_2(\tildev) + m_1(\tildev) + m_0(\tildev) -1.
 \end{equation}
  In~\Cref{sec:relateopt}, we designed a gradient vector field $\VCC$ on $\kc$ with  $m_2(\VCC) = m, m_1(\VCC) = m$ and $m_0(\VCC) = 1$, for some $m \leq n$.
  We have from~\cite[Theorem 1.7]{Fo98},
\[m_0(\tildev) - m_1(\tildev) + m_2(\tildev) =  m_0(\VCC) - m_1(\VCC) + m_2(\VCC) .\] 
which gives $m_0(\tildev) - m_1(\tildev) + m_2(\tildev) =  1.$
 Since $m_0(\tildev) \geq 1$, this gives, for any vector field $\tildev$ on $\kc$, the following inequality 
 \begin{equation}\label{eq:ineqcrit}
 m_2(\tildev) \leq m_1(\tildev). 
 \end{equation}
 In particular, from~\Cref{eq:mdefn}~and~\Cref{eq:ineqcrit}, we obtain
 \begin{equation} \label{eq:mm}
 \ALGBB \geq 2  m_2(\tildev).
 \end{equation}
 
 Now, if $m_2(\tildev) \geq n$, we set $\ICC(C,\tildev)$  to be the set of all input gates of $C$. Clearly, this gives a satisfying assignment and using~\Cref{eq:mm} also satisfies
 \[\ALGBB  \geq 2\cdot \ALGAA. \]
 So, for the remainder of this section, we assume that $m_2(\tildev) < n$. In particular, for any non-output gate $G_i$ with $n$ blocks, at most $n-1$ of them may have critical $2$-simplices.

\begin{definition}[$2$-paired edges]
Given a  vector field $\VCC$ on a $2$-complex $K$, we say that an edge $e\in K$ is  \emph{$2$-paired} in $\VCC$ if it is paired to a $2$-simplex in $\VCC$.
\end{definition}

\begin{definition}[properly satisfied gates] \label{def:propgates}
Suppose that we are given a circuit $C$, and a vector field $\tildev$ on the associated complex $\kc$. Then,
\begin{enumerate}
\item an ordinary gate $G_q$  is said to be \emph{properly satisfied} if there exists a $\jin$ such that
\begin{itemize}
\item for an or-gate $G_q$  at least of the two edges $s_{\iota_1}^{\qjpair}, \, s_{\iota_2}^{\qjpair}$ is $2$-paired (in $\tildev$),  or 
\item for an and-gate  $G_q$ both the  edges $s_{\iota_1}^{\qjpair}, \, s_{\iota_2}^{\qjpair}$ are $2$-paired (in $\tildev$),
\item in both cases, the $j$-th block has no critical $2$-simplices;
\end{itemize}
\item an input gate $G_i$ is said to be \emph{properly satisfied} if the dunce hat associated to it contains at least one critical $2$-simplex,
\item the output gate $G_o$ is said to be \emph{properly satisfied} if $G_o$ is an or-gate and at least one of the two inputs gates of $G_o$ is properly satisfied, or if $G_o$ is an and-gate and both input gates of  $G_o$ are properly satisfied.
%\item gates that are not properly satisfied are said to be \emph{feedback satisfied}.
\end{enumerate}
\end{definition}

  \begin{lemma} \label{lem:propsat}
  Suppose that $G_k$ is a non-output gate that is properly satisfied. Then,
 \begin{enumerate}
 \item if $G_k$ is an or-gate, then at least one of the two gates that serve as  inputs to $G_k$ is also properly satisfied.
 \item if $G_k$ is an and-gate, then both gates that serve as  inputs to $G_k$ are also properly satisfied.
 \end{enumerate}
 \end{lemma}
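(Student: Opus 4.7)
My plan is to reduce both items (1) and (2) to a single key claim: if $s_{\iota_i}^{(k,j)}$ is $2$-paired in $\tildev$ and block $j$ of $G_k$ has no critical $2$-simplex, then the input gate $G_{k_i}$ corresponding to $s_{\iota_i}^{(k,j)}$ is properly satisfied. Granted the claim, item (1) follows by applying it to whichever of $s_{\iota_1}^{(k,j)}$, $s_{\iota_2}^{(k,j)}$ is $2$-paired (at least one exists by the or-gate definition), while item (2) follows by applying it separately to each of them (both being $2$-paired by the and-gate definition).

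To prove the key claim, I plan a case analysis on the location of the $2$-simplex $\sigma$ paired with $s_{\iota_i}^{(k,j)}$ in $\tildev$. By construction, the cofacets of $s_{\iota_i}^{(k,j)}$ in $\kc$ are the simplex $\Gamma_1$ of ${}^i\gadget^{(k,j)}$, together with the two cofacets of each $t$-edge identified with $s_{\iota_i}^{(k,j)}$; there is one such $t$-edge in $\gadget^{(k_i,1)}$ when $G_{k_i}$ is an input gate, and one in each ${}^3\gadget^{(k_i,r)}$ for $r \in [1,n]$ when $G_{k_i}$ is an ordinary gate. Hence two subcases: (B) $\sigma$ lies in a dunce hat associated to $G_{k_i}$, or (A) $\sigma$ is the internal simplex $\Gamma_1$ of ${}^i\gadget^{(k,j)}$.

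In subcase (B), the identified $t$-edge is $2$-paired with $\sigma$ inside the ambient dunce hat of $G_{k_i}$; I plan to apply \Cref{lem:erasefree}(ii) to that dunce hat and argue that the external use of the $t$-edge forces either a critical $2$-simplex in $\gadget^{(k_i,1)}$ (giving proper satisfaction of an input-type $G_{k_i}$) or, in the ordinary case, an $s_\iota$ edge of the block $r^*$ containing $\sigma$ to be $2$-paired while that block is free of critical $2$-simplices, yielding proper satisfaction. Subcase (A) is more delicate, since the pair $(s_{\iota_i}^{(k,j)}, \Gamma_1)$ is purely internal to block $j$ of $G_k$. Here my plan is to invoke the standing assumption $m_2(\tildev) < n$, which by pigeonhole yields a block $r^*$ of $G_{k_i}$ (or the unique dunce hat of $G_{k_i}$, if it is an input gate) carrying no critical $2$-simplex; applying \Cref{lem:erasefree}(ii) to each of its dunce hats then fixes the mandated $(s, \Gamma)$ pair choices, which I chain across the identifications (input edges to $t$-edges of ${}^3\gadget$'s of predecessors, feedback edges to $t$-edges of ${}^3\gadget$'s of successors) to force an $s_\iota$ edge of block $r^*$ to be $2$-paired.

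The main obstacle I anticipate is this last propagation in subcase (A): ruling out the pathological scenario in which block $r^*$ of $G_{k_i}$ is erased entirely by selecting the feedback-edge option $(s_f, \Gamma_2)$ in \Cref{lem:erasefree}(ii) for every dunce hat, leaving all $s_\iota$ edges of block $r^*$ unpaired. The resolution will require a careful accounting of the strict Morse-function inequalities in \Cref{lem:erasefree}(ii), propagated across the identifications, together with the known pairing $(s_{\iota_i}^{(k,j)}, \Gamma_1)$ inside block $j$ of $G_k$, to reach a contradiction.
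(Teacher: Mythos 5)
There is a genuine gap: the decisive step of the proof is only announced, never carried out. In the paper, the entire content of the lemma is a contradiction obtained from a \emph{cyclic chain of strict Morse-function inequalities} routed through the feedback edges. Concretely, if a predecessor $G_{\ell}$ were not properly satisfied, then (using $m_2(\tildev)<n$ in the ordinary-gate case to find a clean block of $G_\ell$, or directly in the input-gate case) the relevant dunce hat of $G_{\ell}$ has no critical $2$-simplices, so \Cref{lem:erasefree}(ii) forces its feedback edge to be the dominating $2$-paired edge, giving $\tildef(s_{f}^{(\ell,1)})>\tildef(s_{\iota_i}^{(k,j)})$ because $s_{\iota_i}^{(k,j)}$ is identified with a $t$-edge of that hat; applying \Cref{lem:erasefree}(ii) again to ${}^{i}\gadget^{(k,j)}$ gives $\tildef(s_{\iota_i}^{(k,j)})>\tildef({}^{3}s_{1}^{(k,j)})$ since ${}^{3}s_{1}^{(k,j)}$ is a $t$-edge there; and applying it a third time to ${}^{3}\gadget^{(k,j)}$ (clean by hypothesis) gives $\tildef({}^{3}s_{1}^{(k,j)})>\tildef(s_{f}^{(\ell,1)})$ since the feedback edge is a $t$-edge of ${}^{3}\gadget^{(k,j)}$. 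Closing this cycle is the proof. Your proposal correctly assembles the ingredients (the reduction to a single key claim, \Cref{lem:erasefree}(ii), the identifications, the standing assumption $m_2(\tildev)<n$), and even points at "strict Morse-function inequalities propagated across the identifications" as what the resolution "will require" --- but you never exhibit the chain or the contradiction, so the heart of the argument is missing rather than proved.

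Moreover, the case split on where the $2$-simplex $\sigma$ paired with $s_{\iota_i}^{(k,j)}$ lies does not do the work you assign to it, and in subcase (B) your asserted conclusions do not follow. If $\sigma$ lies in $\gadget^{(k_i,1)}$ for an input-gate predecessor, nothing about that location forces a critical $2$-simplex in $\gadget^{(k_i,1)}$: an erased modified dunce hat routinely pairs its $t$-edges with incident triangles (see the collapsing scheme of \Cref{fig:gadgetpaths}), so the contradiction must again come from the feedback-edge chain above, not from the position of $\sigma$. Likewise, for an ordinary predecessor, the fact that $\sigma\in{}^{3}\gadget^{(k_i,r^{*})}$ gives you neither that block $r^{*}$ is free of critical $2$-simplices nor that its \emph{input} edges $s_{\iota}^{(k_i,r^{*})}$ are $2$-paired, which is what the definition of proper satisfaction demands. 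In short, both of your subcases reduce to the same unproved propagation argument that the paper supplies explicitly, so as it stands the proposal does not establish the lemma.
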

 \begin{proof} 
 Assume without loss of generality that $G_k$ is an and-gate, and the two inputs that go into $G_k$, namely $G_{\ell}$ and $G_j$ are  both input gates.
 Since $G_{k}$ is properly satisfied, there exists $p \in [1,n]$ such that $s_{\iota_1}^{(k,p)}, \, s_{\iota_2}^{(k,p)}$ are $2$-paired and the $p$-th block of $G_{k}$ has no critical $2$-simplices.
 Now suppose that either $G_{\ell}$ or $G_j$  is not properly satisfied. For the sake of argument, suppose that $G_{\ell}$ is not properly satisfied.
That is, $\gadget^{(\ell,1)}$ has no critical $2$-simplices and $s_{f}^{(\ell,1)}$ is $2$-paired. Note that $s_{\iota_1}^{(k,p)}$ is identified to a $t$-edge in $\gadget^{(\ell,1)}$.
Using~\Cref{lem:erasefree}, we obtain $\tildef(s_{f}^{(\ell,1)}) > \tildef(s_{\iota_1}^{(k,p)})$. Since $^{3}s_1^{(k,p)}$ occurs as a $t$-edge in $^{1}\gadget^{(k,p)}$, using~\Cref{lem:erasefree}, we obtain 
$\tildef( s_{\iota_1}^{(k,p)}) > \tildef( ^{3}s_1^{(k,p)})$. Combining the two inequalities we obtain
 \begin{equation} \label{eq:first}
 \tildef(s_{f}^{(\ell,1)}) >\tildef( ^{3}s_1^{(k,p)})
 \end{equation}
 Moreover, $s_{f}^{(\ell,1)}$ is identified to a $t$-edge in $^{3}\gadget^{(k,p)}$, and by assumption  $^{3}\gadget^{(k,p)}$ has no critical $2$-simplices and, hence  $^{3}s_1^{(k,p)}$ is $2$-paired. Therefore, once again, using~\Cref{lem:erasefree},
 we obtain 
  \begin{equation} \label{eq:second}
    \tildef( ^{3}s_1^{(k,p)}) > \tildef(s_{f}^{(\ell,1)}) 
 \end{equation}
  Since \Cref{eq:first}~and~\Cref{eq:second} combine to give a contradiction, we conclude that  $G_{\ell}$ is properly satisfied.
All combinations of $G_k$ as  an \{and-gate, or-gate\}, and $G_{\ell}$ and $G_j$ as \{or-gates, input gates, and-gates\} give similar contradictions, proving the claim.
 \end{proof}
 
 \begin{lemma} \label{lem:propsatout}
 Given a Morse function  $\tildef$ on $\kc$ with vector field $\tildev$, the output gate $G_o$ is properly satisfied.
 \end{lemma}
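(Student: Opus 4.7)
The plan is to exploit the assumption $m_2(\tildev) < n$ via pigeonhole, then trace the consequence into the inputs of $G_o$ using the rigid Morse-value hierarchy imposed by \Cref{lem:erasefree} on erasable dunce hats. Since $G_o$ has $n$ associated copies $\{\gadget^{(o,j)}\}_{j=1}^n$ and the total number of critical $2$-simplices in $\tildev$ is strictly less than $n$, there exists some index $j^*$ such that $\gadget^{(o,j^*)}$ contains no critical $2$-simplex. By \Cref{lem:erasefree}(ii), when $G_o$ is an and-gate ($m=1$) this forces $(s_1^{(o,j^*)}, \Gamma_1^{(o,j^*)}) \in \tildev$ with $\tildef(s_1^{(o,j^*)})$ strictly dominating every other simplex of $\gadget^{(o,j^*)}$; when $G_o$ is an or-gate ($m=2$), at least one of the pairs $(s_{\iota_k}^{(o,j^*)}, \Gamma_k)$ lies in $\tildev$, and $\max_{k \in \{1,2\}} \tildef(s_{\iota_k}^{(o,j^*)})$ dominates every simplex of $\gadget^{(o,j^*)}$ outside the distinguished $\{s_{\iota_k},\Gamma_k\}$ pairs.

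Next, suppose for contradiction that $G_o$ is not properly satisfied. For the and-gate case this means some predecessor $G_{p_k}$ fails to be properly satisfied; for the or-gate case it means both predecessors fail. In each case, we run a contradiction argument in the spirit of the proof of \Cref{lem:propsat}. When the predecessor $G_{p_k}$ is an input gate, failure to be properly satisfied means $\gadget^{(p_k,1)}$ contains no critical $2$-simplex, so by \Cref{lem:erasefree}(ii) the edge $s_f^{(p_k,1)}$ is $2$-paired and $\tildef$-dominates $\gadget^{(p_k,1)}$. When $G_{p_k}$ is an ordinary gate, pigeonhole on its $n$ blocks together with $m_2(\tildev) < n$ produces some block of $G_{p_k}$ with no critical $2$-simplex; in this block the non-properly-satisfied hypothesis forces, via \Cref{lem:erasefree}(ii) applied to $^1\gadget^{(p_k,\cdot)}$ and $^2\gadget^{(p_k,\cdot)}$, the feedback edges $s_{f_1}^{(p_k,\cdot)}, s_{f_2}^{(p_k,\cdot)}$ (rather than the input edges) to be $2$-paired. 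In both sub-cases we obtain $\tildef(s_f^{(p_k)}) > \tildef(s_{\iota_k}^{(o,j^*)})$, exploiting the identification of $s_{\iota_k}^{(o,j^*)}$ with a $t$-edge that sits inside the predecessor's relevant dunce hat.

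The last ingredient is the reverse inequality derived from $\gadget^{(o,j^*)}$ itself: since $s_f^{(p_k)}$ is identified with a $t$-edge inside $\gadget^{(o,j^*)}$, the domination statement from \Cref{lem:erasefree}(ii) gives $\max_{k} \tildef(s_{\iota_k}^{(o,j^*)}) > \tildef(s_f^{(p_k)})$. For the and-gate ($m=1$) this collapses to $\tildef(s_1^{(o,j^*)}) > \tildef(s_f^{(p_k)}) > \tildef(s_1^{(o,j^*)})$, an immediate contradiction for each $k$, so both predecessors are properly satisfied. For the or-gate, WLOG $\tildef(s_{\iota_1}^{(o,j^*)}) \ge \tildef(s_{\iota_2}^{(o,j^*)})$; combining with the previous step yields $\tildef(s_{\iota_1}^{(o,j^*)}) > \tildef(s_f^{(p_1)}) > \tildef(s_{\iota_1}^{(o,j^*)})$, again a contradiction, so at least one predecessor is properly satisfied. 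Hence $G_o$ is properly satisfied in either case.

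The main technical obstacle is the careful bookkeeping of edge identifications across the output copy and each predecessor type, and in particular verifying that the local Morse-value bounds supplied by \Cref{lem:erasefree}(ii) applied to the $3$-dunce-hat block of an ordinary predecessor propagate through the feedback edges $s_{f_1}, s_{f_2}$ exactly as cleanly as the single-dunce-hat argument used for input-gate predecessors; once that is in place, the pigeonhole reduction to \emph{some} critical-$2$-free block renders the general case indistinguishable from the input-gate case.
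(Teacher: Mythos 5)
Your proposal is correct and follows essentially the same route as the paper: pigeonhole on $m_2(\tildev)<n$ to find a critical-free output copy (and a critical-free block of each ordinary predecessor), then a two-sided contradiction in Morse values via \Cref{lem:erasefree}(ii), pitting the domination of the output copy's input edge(s) over the identified feedback $t$-edges against the reverse domination coming from the predecessor's dunce hats. The one step you leave implicit for ordinary predecessors --- chaining $\tildef(s_{f_1}^{(p_k,\cdot)})>\tildef({}^{3}s_{1}^{(p_k,\cdot)})>\tildef(s_{\iota_k}^{(o,j^*)})$ through the block's output component --- is exactly what the paper's displayed inequalities (the analogues of \Cref{eq:onelarge,eq:twolarge}) spell out, so it is a matter of detail rather than of approach.
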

 \begin{proof} 
 Assume without loss of generality that $G_o$ is an or-gate, and  the two inputs to $G_o$, namely $G_{\ell}$ and $G_j$ are non-input and-gates.
 Let $k \in [1,n]$ be such that the $k$-th copy of $G_o$ has no critical $2$-simplices. Such a $k$ exists because by assumption we have less than $n$ critical simplices.
 Now, suppose that neither $G_{\ell}$ nor $G_j$  is properly satisfied.
 
 Since, $G_{\ell}$ is not properly satisfied there exists a $p \in [1,n]$ such that either $s_{\iota_1}^{\lppair}$ or $s_{\iota_2}^{\lppair}$ is not $2$-paired and $p$-th block has no critical $2$-simplices (because by assumption we have less than $n$ critical simplces). Assume without loss of generality that $s_{\iota_1}^{\lppair}$ is not $2$-paired. Then, $s_{f_1}^{\lppair}$ is $2$-paired. 
 Using~\Cref{lem:erasefree}, we obtain $\tildef(s_{f_1}^{\lppair}) > \tildef(^{3}s_{1}^{\lppair})$. Now, $s_{\iota_1}^{(o,k)}$  is identified to a $t$-edge in $^{3}\gadget^{\lppair}$. So, using ~\Cref{lem:erasefree},
 we obtain  $ \tildef(^{3}s_{1}^{\lppair}) > \tildef(s_{\iota_1}^{(o,k)}).$ Combining the two inequalities, we obtain,
 \begin{equation} \label{eq:onelarge}
 \tildef(s_{f_1}^{\lppair}) > \tildef(s_{\iota_1}^{(o,k)}). 
 \end{equation}
 Similarly, there exists a $q \in [1,n]$ such that  either $s_{\iota_1}^{\jqpair}$ or $s_{\iota_2}^{\jqpair}$ is not $2$-paired. Assume without loss of generality that $s_{\iota_1}^{\jqpair}$ is not $2$-paired. Hence, we can show that 
 \begin{equation} \label{eq:twolarge}
 \tildef(s_{f_1}^{\jqpair}) > \tildef(s_{\iota_2}^{(o,k)}). 
 \end{equation}
 Combining~\Cref{eq:onelarge}~and~\Cref{eq:twolarge}, we obtain:
  \begin{equation} \label{eq:threelarge}
 \max (\tildef(s_{f_1}^{\lppair}), \tildef(s_{f_1}^{\jqpair})) >  \max (\tildef(s_{\iota_1}^{(o,k)}), \tildef(s_{\iota_2}^{(o,k)})). 
 \end{equation}
 But, $s_{f_1}^{\lppair}$ and $s_{f_1}^{\jqpair}$ appear as $t$-edges in $\gadget^{(o,k)}$. So, once again, using~\Cref{lem:erasefree}, 
\[ \max (\tildef(s_{\iota_1}^{(o,k)}), \tildef(s_{\iota_2}^{(o,k)})) >  \tildef(s_{f_1}^{\lppair})\text{ and } \max (\tildef(s_{\iota_1}^{(o,k)}), \tildef(s_{\iota_2}^{(o,k)})) >  \tildef(s_{f_1}^{\jqpair}), \]
 which combine to give:
 \begin{equation} \label{eq:fourlarge}
\max (\tildef(s_{\iota_1}^{(o,k)}), \tildef(s_{\iota_2}^{(o,k)})) > \max(\tildef(s_{f_1}^{\lppair}), \tildef(s_{f_1}^{\jqpair}))
 \end{equation} 
  Since \Cref{eq:threelarge}~and~\Cref{eq:fourlarge} combine to give a contradiction, we conclude that  $G_o$ is properly satisfied, and at least one of the two gates $G_{\ell}$ and $G_j$ are also  properly satisfied.
All combinations of $G_o$ as  an \{and-gate, or-gate\}, and $G_{\ell}$ and $G_j$ as \{or-gates, input gates, and-gates\} give similar contradictions, proving the claim.
Moreover, if $G_o$ is a properly satisfied and-gate, then both $G_{\ell}$ and $G_j$ are also be properly satisfied.
 \end{proof}
 
 Now, we construct the map $\ICC(C,\tildev) $ as follows: For every input gate $G_{\ell}$ whose associated dunce hat is properly satisfied, we set $\ICC(C,\tildev) (G_{\ell}) = 1$. 
That is, our assignment $\ICC(C,\tildev) (\cdot)$  ensures that an input gate is satisfied if and only if it is properly satisfied.

 \begin{claim}  \label{cl:allsat}
 With input assignment  $\ICC(C,\tildev) (\cdot)$, the circuit $C$ is satisfied. 
 \end{claim}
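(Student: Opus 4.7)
The plan is to prove the claim by a straightforward induction on the DAG structure of the circuit, showing that every properly satisfied gate evaluates to $1$ under the assignment $\ICC(C,\tildev)$. Since Lemma \ref{lem:propsatout} already establishes that the output gate $G_o$ is properly satisfied, this will immediately yield that the assignment satisfies $C$.

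First, I would fix a topological ordering $\prec_C$ of the gates in $V(C)$ consistent with the DAG structure, so that every non-input gate appears strictly after both of its predecessors. The base case concerns input gates: by the very construction of $\ICC(C,\tildev)$, an input gate $G_i$ is assigned the value $1$ if and only if it is properly satisfied, so the equivalence between being properly satisfied and evaluating to $1$ holds trivially for input gates.

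For the inductive step, let $G_q$ be a non-input, properly satisfied gate and assume the claim holds for all gates earlier in $\prec_C$. If $G_q$ is an ordinary (non-output) gate, I would apply Lemma \ref{lem:propsat}: when $G_q$ is an or-gate at least one of its inputs is properly satisfied, and when $G_q$ is an and-gate both of its inputs are properly satisfied. By the inductive hypothesis those properly satisfied predecessors evaluate to $1$ under $\ICC(C,\tildev)$, so by the semantics of monotone $\vee$ and $\wedge$ gates, $G_q$ also evaluates to $1$. If $G_q$ is the output gate, the same conclusion follows directly from clause (3) of Definition \ref{def:propgates}, which recursively defines proper satisfaction of $G_o$ in terms of proper satisfaction of its inputs, combined with the inductive hypothesis.

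Finally, Lemma \ref{lem:propsatout} gives that $G_o$ is properly satisfied, hence by the induction it evaluates to $1$ under $\ICC(C,\tildev)$, proving that $C$ is satisfied. There is no real obstacle here: all of the geometric and combinatorial work — in particular, linking gradient-theoretic features (which edges are $2$-paired, which dunce hats contain critical $2$-simplices) to the logical structure of the circuit — has already been done in Lemma \ref{lem:propsat} and Lemma \ref{lem:propsatout}, and this claim is essentially just their inductive assembly along $\prec_C$.
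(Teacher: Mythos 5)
Your proof is correct and follows essentially the same route as the paper: an induction along a topological order of the circuit, with input gates handled by the construction of $\ICC(C,\tildev)$, ordinary gates via \Cref{lem:propsat}, and the output gate via \Cref{def:propgates} together with \Cref{lem:propsatout}. No substantive difference from the paper's argument.
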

 \begin{proof}
 We prove the following claim inductively: Every gate of $C$ that is properly satisfied is also satisfied.
 
   To begin with, let $\prec_{C}$ be some total order on $V(C)$ consistent with the partial order imposed by $C$.
   Assume that the gates in  $C$ are indexed from $1$ to $|C|$ so that 
\[\text{ for all }G_i, G_j \in C,\quad i < j \Leftrightarrow G_i \prec_{C} G_j.\]

Let $\mathcal{P}$  denotes the set of properly satisfied gates. Let  $i_1, i_2, \dots i_{|\PCC|}$ denote the indices of the properly satisfied gates, where $i_k > i_{k-1}$ for all $k$.
 By repeated application of~\Cref{lem:propsat}, it follows that $G_{i_1}$ is an input gate.
 Then, from our construction of $\ICC(C,\tildev) (\cdot)$, we can conclude that $G_{i_1}$ is also satisfied, giving us the base case.
  
 Now, we make the inductive hypothesis that the gates $G_{i_1} \dots G_{i_{k-1}}$ are  satisfied.  
 Suppose that $G_{i_k}$ is an or-gate. Then, by~\Cref{lem:propsat}, one of the inputs to $G_{i_k}$, say $G_{i_j}$ is properly satisfied. 
 As a consequence of our indexing we have $j \in [1,k-1]$, and owing to the inductive hypothesis,  $G_{i_j}$ is  satisfied. 
 But, since $G_{i_k}$ is an or-gate, this implies that $G_{i_k}$ is also satisfied. 
 Suppose that $G_{i_k}$ be an and-gate. Then, by~\Cref{lem:propsat}, both the inputs to $G_{i_k}$, say $G_{i_j}, G_{i_p}$ are properly satisfied. 
 As a consequence of our indexing we have $j,p \in [1,k-1]$, and owing to the inductive hypothesis,  $G_{i_j}, G_{i_p}$ are  satisfied. 
 But, since $G_{i_k}$ is an and-gate, this implies that $G_{i_k}$ is also satisfied, completing the induction.
  
 Finally, using~\Cref{lem:propsatout}, the output gate is properly satisfied, and by the argument above it is also satisified.
 \end{proof}

 %and the set of properly satisfied input gates by $$.

 An immediate consequence of \Cref{cl:allsat}  is the following:
\begin{equation}\label{eq:mbigin}
m_2(\tildev)\geq \ALGAA
\end{equation}

\begin{proposition} \label{prop:mineqappend}
$\ALGBB  \geq 2\cdot \ALGAA   $
\end{proposition}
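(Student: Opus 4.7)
The plan is to show that the proposition reduces to combining two facts already established in the preceding discussion, together with a straightforward verification of the boundary case that was set aside early on.

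First I would recall the inequality \eqref{eq:mm}, namely $\ALGBB \geq 2 m_2(\tildev)$, which followed directly from Forman's weak Morse equality $m_0 - m_1 + m_2 = 1$ applied to any gradient on $\kc$ (since $\kc$ is to be shown contractible, and in any case the reference gradient $\VCC$ has $m_0 - m_1 + m_2 = 1$) together with $m_0(\tildev) \geq 1$. This part of the argument does not depend on the construction of the assignment $\ICC(C,\tildev)$ at all.

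Next, I would split on whether $m_2(\tildev) \geq n$ or $m_2(\tildev) < n$, matching the split already made in the construction of $\ICC(C,\tildev)$. In the case $m_2(\tildev) \geq n$, the assignment $\ICC(C,\tildev)$ was defined to activate every input gate, so $\ALGAA$ is at most the number of input gates of $C$, which is bounded by $n$. Combining with \eqref{eq:mm}, $\ALGBB \geq 2 m_2(\tildev) \geq 2n \geq 2 \ALGAA$. In the case $m_2(\tildev) < n$, the definition of $\ICC(C,\tildev)$ activates precisely the properly satisfied input gates, and \eqref{eq:mbigin} gives $m_2(\tildev) \geq \ALGAA$; then $\ALGBB \geq 2 m_2(\tildev) \geq 2 \ALGAA$ as required.

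The main conceptual work has already been done upstream, in establishing \Cref{cl:allsat} (which is what justifies calling $\ICC(C,\tildev)$ a feasible solution at all) and the refined counting inequality $m_2(\tildev) \geq \ALGAA$ in \eqref{eq:mbigin}; by contrast, the present step is essentially bookkeeping. The only place one has to be mildly careful is the $m_2(\tildev) \geq n$ branch, where one must observe that the trivial assignment's Hamming weight is bounded by $n$ so that the factor of $2$ remains valid; this is immediate once one notes that the number of input gates never exceeds the total gate count $n$.
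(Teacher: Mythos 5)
Your proposal is correct and follows essentially the same route as the paper: the paper's proof of this proposition is exactly the combination of \eqref{eq:mm} and \eqref{eq:mbigin}, with the $m_2(\tilde{\VCC})\geq n$ branch disposed of earlier in the section by the same observation you make (the all-ones assignment has Hamming weight at most $n\leq m_2(\tilde{\VCC})$). The only cosmetic difference is that you fold that boundary case into the proposition's proof rather than handling it beforehand.
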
 
 \begin{proof}
 This follows immmediately by combining~\Cref{eq:mm}~and~\Cref{eq:mbigin}.
 \end{proof}

Now, if the gradient vector field $\tildev$ is, in fact optimal for $\kc$, then $\tildev$ has a single critical $0$-simplex. That is, $m_0(\tildev) = 1$
Recall that  in~\Cref{sec:relateopt}, we designed a gradient vector field $\VCC$ on $\kc$ with  $m_2(\VCC) = m, m_1(\VCC) = m$ and $m_0(\VCC) = 1$, for some $m \leq n$.
 From~\cite[Theorem 1.7]{Fo98}, we have
\[m_0(\tildev) - m_1(\tildev) + m_2(\tildev) =  m_0(\VCC) - m_1(\VCC) + m_2(\VCC) .\]  
which gives us
\begin{equation}\label{eq:equal}
- m_1(\tildev) + m_2(\tildev)  = 0 
\end{equation}

From~\Cref{eq:equal}, we conclude that $\OPTB = 2 m_2(\tildev)$.

By~\Cref{eq:mbigin}, we have
\[m_2(\tildev)\geq \ALGAA.\]

Since by definition \[\ALGAA \geq \OPTA,\] we have the following proposition

\begin{proposition} \label{prop:opteq}
$\OPTB \geq 2 \OPTA$.
\end{proposition}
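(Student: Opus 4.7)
The plan is to chain together three ingredients: Forman's Morse inequality, the earlier bound $m_2(\tildev) \geq \ALGAA$ from~\Cref{eq:mbigin}, and the trivial inequality $\ALGAA \geq \OPTA$.

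First, I would take an optimal gradient vector field $\tildev$ for $\minrmm$ on $\kc$, and appeal to \Cref{lem:uniqueCriticalVertex} (noting $\kc$ is connected) to assume $m_0(\tildev) = 1$ without loss of generality. Next, recall from \Cref{sec:relateopt} that we constructed a particular vector field $\VCC$ on $\kc$ with $m_0(\VCC) = 1$ and $m_1(\VCC) = m_2(\VCC)$. Forman's identity~\cite[Theorem 1.7]{Fo98} applied to both vector fields gives
\[ m_0(\tildev) - m_1(\tildev) + m_2(\tildev) = m_0(\VCC) - m_1(\VCC) + m_2(\VCC) = 1, \]
which, combined with $m_0(\tildev) = 1$, yields $m_1(\tildev) = m_2(\tildev)$. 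Therefore
\[ \OPTB = m_0(\tildev) + m_1(\tildev) + m_2(\tildev) - 1 = 2\, m_2(\tildev). \]

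For the second ingredient, I would invoke~\Cref{eq:mbigin}, which is a direct consequence of~\Cref{cl:allsat}: the assignment $\ICC(C,\tildev)$ satisfies $C$ and has Hamming weight at most $m_2(\tildev)$, so $m_2(\tildev) \geq \ALGAA$. Finally, since $\ICC(C,\tildev)$ is a feasible (satisfying) assignment of $C$ and $\OPTA$ is the minimum Hamming weight over all satisfying assignments, $\ALGAA \geq \OPTA$ by definition.

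Stringing these together gives $\OPTB = 2 m_2(\tildev) \geq 2\, \ALGAA \geq 2\, \OPTA$, as required. I do not anticipate any real obstacles here — all the structural content (Morse theoretic bookkeeping, the proper satisfaction argument establishing~\Cref{eq:mbigin}, and the reduction to a single critical vertex) has already been carried out in the preceding sections, so this final proposition is essentially a one-line assembly of existing facts.
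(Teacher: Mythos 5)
Your proof is correct and follows essentially the same route as the paper's: Forman's Morse identity against the constructed gradient $\VCC$ to get $\OPTB = 2\,m_2(\tildev)$, then \Cref{eq:mbigin} and $\ALGAA \geq \OPTA$ by definition. The only (minor, welcome) difference is that you explicitly justify $m_0(\tildev)=1$ via \Cref{lem:uniqueCriticalVertex}, where the paper simply asserts it for an optimal gradient.
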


Combining \Cref{prop:appendoptrel} and \Cref{prop:opteq}, we obtain the following proposition.

\begin{proposition} \label{prop:equalityappend}
$\OPTB = 2 \OPTA$.
\end{proposition}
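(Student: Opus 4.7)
The plan is to establish the equality $\OPTB = 2\,\OPTA$ by showing both inequalities separately, using the forward and backward directions of the reduction that have already been set up.

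For the upper bound $\OPTB \leq 2\,\OPTA$, I would invoke \Cref{prop:appendoptrel}, which is proven in~\Cref{sec:relateopt} by the explicit construction that converts a satisfying assignment of Hamming weight $m$ into a gradient vector field on $\kc$ with exactly $m$ critical $2$-simplices, $m$ critical $1$-simplices, and a single critical vertex. Taking $m = \OPTA$ yields a gradient with $2m+1$ total critical simplices, whose $\minrmm$-objective value (critical simplices minus one) is $2\,\OPTA$.

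For the lower bound $\OPTB \geq 2\,\OPTA$, I would invoke \Cref{prop:opteq}. The essential ingredients have just been assembled: given any optimal gradient $\tildev$ on $\kc$, the construction of the assignment $\ICC(C,\tildev)$ together with \Cref{cl:allsat} gives a satisfying assignment of $C$, so $m_2(\tildev) \geq \ALGAA \geq \OPTA$. Since $\kc$ is connected, we may assume by \Cref{lem:uniqueCriticalVertex} that $\tildev$ has a unique critical $0$-simplex, i.e.\ $m_0(\tildev) = 1$. Comparing Euler characteristics of $\tildev$ with the reference gradient $\VCC$ from the forward direction (which has $m_0(\VCC) = 1$ and $m_1(\VCC) = m_2(\VCC)$) via \cite[Theorem 1.7]{Fo98} yields $m_1(\tildev) = m_2(\tildev)$, so $\OPTB = m_2(\tildev) + m_1(\tildev) = 2\,m_2(\tildev) \geq 2\,\OPTA$.

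Combining the two inequalities gives $\OPTB = 2\,\OPTA$. There is essentially no new obstacle here: the work was done in establishing \Cref{prop:appendoptrel} and \Cref{prop:opteq}, and the present proposition is a one-line consequence. The only subtlety to handle in the proof is the boundary case $m_2(\tildev) \geq n$ in the backward direction, which was already disposed of at the start of \Cref{sec:algalg} by the observation that setting all input gates to $1$ produces a satisfying assignment whose weight is dominated by $m_2(\tildev)$; under this case, $\ALGAA \leq n \leq m_2(\tildev)$ still holds, so the lower bound argument goes through unchanged.
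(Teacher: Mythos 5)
Your proposal is correct and follows essentially the same route as the paper: the upper bound is \Cref{prop:appendoptrel} from the forward construction, and the lower bound is \Cref{prop:opteq}, obtained by taking an optimal gradient $\tildev$ with $m_0(\tildev)=1$, using Forman's Euler-characteristic identity to get $m_1(\tildev)=m_2(\tildev)$, and then bounding $m_2(\tildev)\geq \ALGAA \geq \OPTA$ via \Cref{cl:allsat}. Your explicit appeal to \Cref{lem:uniqueCriticalVertex} and your remark on the $m_2(\tildev)\geq n$ boundary case merely make explicit what the paper leaves implicit, so there is no substantive difference.
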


\section{Morse matchings for Linial--Meshulam complexes} \label{sec:lmc}

For Linial--Meshulam complexes $\lmm$, 
\begin{itemize} \item when   $n p \to\infty$,  let $\UCC$  be the apparent pairs gradient on $\lmm$.
 \item when $n p \to 0$,  $\UCC$  is comprised of 
 \begin{itemize}
  \item $\text{the  apparent  pairs gradient for matching } k-1\text{-simplices to } k \text{-simplices for } k\in[d-1]$
  \item $\text{the random face gradient for matching the remaining } (d-1) \text{-simplices to } d \text{-simplices.} $
  \end{itemize}
\end{itemize} 
The apparent pairs gradient, and the random face gradient are  described in \Cref{sec:costafarber}.

Let $V$ be the vertex set of $\mathsf{Y}_{d}(n,p)$, and $v'$ be the lexicographically lowest vertex of $V$.
For each $r\in[0,d]$, let $c_{r}$ denote the total number of $r$-dimensional
simplices in $\mathsf{Y}_{d}(n,p)$. Let $m_{r}$ denote the total
number of critical $r$-simplices of $\UCC$ and $\overline{m_{r}}$
be the total number of regular simplices of $\UCC$. Also, let $n_{r}$
and $\overline{n_{r}}$ denote the total number of critical $r$-simplices
and regular $r$-simplices respectively of the optimal discrete gradient
on $\mathsf{Y}_{d}(n,p)$. 

\begin{lemma} \label{lem:lower}
All the $k$-simplices of $\mathsf{Y}_{d}(n,p)$ for $k\in[0,d-2]$ are matched by $\UCC$. In particular, $\overline{n_{k}}=\overline{m_{k}}=c_{k}$ for $k\in[d-2]$, and $\overline{n_{0}}=\overline{m_{0}}= V -1$.
\end{lemma}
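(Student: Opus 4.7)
The plan is to exhibit, for every $k$-simplex $\sigma$ with $k \in [0,d-2]$ apart from one distinguished vertex, an explicit apparent pair involving $\sigma$. The key structural fact is that the $(d-1)$-skeleton of $\lmm$ is the complete $(d-1)$-skeleton on the vertex set $V$, so every simplex of dimension at most $d-2$ has \emph{all} of its possible cofacets present in $\lmm$.

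First I would fix the lex-smallest vertex $v' \in V$ and split into cases based on whether $\sigma$ contains $v'$. If $v' \notin \sigma$ and $k \leq d-2$, I pair $\sigma$ with $\tau := \{v'\} \cup \sigma$, which lies in $\lmm$ because $\dim\tau \leq d-1$. A one-line check shows that $\sigma$ is the lex-highest facet of $\tau$ (obtained by removing the globally smallest vertex of $\tau$) and that $\tau$ is the lex-lowest cofacet of $\sigma$ (obtained by adjoining the globally smallest vertex not already in $\sigma$), so $(\sigma,\tau)$ is an apparent pair. If $v' \in \sigma$ and $k \geq 1$, the symmetric check shows that $(\sigma \setminus \{v'\}, \sigma)$ is an apparent pair. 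The vertex $v'$ itself remains critical: it has no facet, and for any cofacet $\{v', w\}$ the lex-highest facet is $\{w\}$ rather than $\{v'\}$, so $v'$ is not part of any apparent pair.

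These pairings mutually cover every simplex of dimension at most $d-2$ except $v'$, and each simplex is in at most one of them by construction, so the collection forms a partial matching consistent with $\UCC$. This immediately yields $\overline{m_k} = c_k$ for $k \in [1,d-2]$ and $\overline{m_0} = |V|-1$. The corresponding equalities $\overline{n_k} = \overline{m_k}$ should follow from the fact that $\lmm$ is $(d-2)$-connected (inherited from the full $(d-1)$-skeleton of the simplex on $V$), so any optimal gradient can also realize only one critical vertex and no critical simplices in the intermediate dimensions $[1,d-2]$, via a standard cancellation argument on critical pairs along $V$-paths. The main conceptual step is really just the lex-order verification in the apparent-pair argument above; the rest is bookkeeping.
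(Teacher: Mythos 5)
Your construction is exactly the paper's: match each simplex of dimension at most $d-2$ by coning with the lexicographically smallest vertex $v'$ (pair $(\sigma,\sigma\cup\{v'\})$ when $v'\notin\sigma$ and $(\sigma\setminus\{v'\},\sigma)$ when $v'\in\sigma$), using that the $(d-1)$-skeleton of $\mathsf{Y}_{d}(n,p)$ is complete, so this is essentially the same proof with the apparent-pair check spelled out. The only soft spot is the optimal-gradient half ($\overline{n_k}=c_k$), where your appeal to a ``standard cancellation argument'' is no more rigorous than the paper's own bare assertion; this is harmless in context, since the later estimates only use the trivial upper bounds $\overline{n_k}\le c_k$.
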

\begin{proof}
Let $\sigma$ be a $k$-simplex, where $k\in[d-2]$. If $v'\in\sigma$,
then $(\sigma\setminus\left\{ v\right\} ,\sigma)\in\UCC$, whereas
if $v'\not\in\sigma$, then $(\sigma,\sigma\cup\left\{ v'\right\} )\in\UCC$. 
That is for $k\in[0,d-2]$, $n_{k}=m_{k}=0$, and $\overline{n_{k}}=\overline{m_{k}}=c_{k}$.
Moreover, any vertex $v \neq v'$ is matched to the edge $\{v,v'\}$.
\end{proof}
Note that
\[
\frac{\mathbb{E}(|\UCC|)}{\mathbb{E}(|\mathsf{opt}|)}=\frac{\mathbb{E}(2|\UCC|)}{\mathbb{E}(2|\mathsf{opt}|)}=\frac{\mathbb{E}(\sum_{k=0}^{d}\overline{m_{k}})}{\mathbb{E}(\sum_{k=0}^{d}\overline{n_{k}})}
\]

\begin{theorem*}  For the regimes  of Linial-Meshulam complexes $\mathsf{Y}_{d}(n,p)$ that  satisfy 
 \[
 n p \to\infty
 \quad\text{or}\quad
 n p \to 0 
 \]
 the discrete gradient $\UCC$  satisfies  $\frac{\mathbb{E}(|\UCC|)}{\mathbb{E}(|\mathsf{opt}|)} \to 1$ as $n\to \infty$.
  \end{theorem*}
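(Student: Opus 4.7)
The plan is to upper-bound $\mathbb{E}(|\mathsf{opt}|) - \mathbb{E}(|\UCC|)$ by a quantity that is small relative to $\mathbb{E}(|\mathsf{opt}|)$, with this quantity driven by $\mathbb{E}(m_{d-1})$ in the dense regime and by $\mathbb{E}(m_d)$ in the sparse regime. Throughout I assume $d\geq 2$; the case $d=1$ is a random graph and can be handled by direct analysis.

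By Lemma~\ref{lem:lower}, the only critical simplices of $\UCC$ lie in dimensions $0$, $d-1$, and $d$, with $m_0 = 1$, so $\sum_k m_k = 1 + m_{d-1} + m_d$. Moreover, in $\UCC$ the $\binom{n-1}{d-1}$ many $(d-1)$-simplices that contain $v'$ are matched downward by apparent pairs, and the remaining $\binom{n-1}{d}$ admissible $(d-1)$-simplices are each either paired upward to a $d$-cofacet (via apparent pairs in the dense regime, via random face assignment in the sparse regime) or critical. Since the number of $(d{-}1,d)$-pairs equals both $\binom{n-1}{d} - m_{d-1}$ and $c_d - m_d$, this yields the key identity
\[
m_{d-1} - m_d \;=\; \binom{n-1}{d} - c_d,
\]
valid in both regimes.

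A standard fact is that the $(d-1)$-skeleton of $\Delta^{n-1}$ is homotopy equivalent to a wedge of $\binom{n-1}{d}$ spheres of dimension $d-1$; attaching $d$-cells cannot alter the $\beta_k$ for $k < d-1$, so an Euler-characteristic computation gives $\beta_{d-1} - \beta_d = \binom{n-1}{d} - c_d$. Combined with the above, this forces $m_{d-1} - \beta_{d-1} = m_d - \beta_d$. Applying the weak Morse inequalities $n_k \geq \beta_k$ to $\mathsf{opt}$, summed over $k$, together with $2(|\mathsf{opt}| - |\UCC|) = \sum_k m_k - \sum_k n_k$, then yields
\[
|\mathsf{opt}| - |\UCC| \;\leq\; m_d - \beta_d \;=\; m_{d-1} - \beta_{d-1} \;\leq\; \min(m_{d-1},\, m_d).
\]

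It therefore suffices to show $\mathbb{E}(\min(m_{d-1},m_d)) / \mathbb{E}(|\UCC|) \to 0$ in each regime. In the dense regime, \eqref{eqn:kahlerone} at $r = d-1$ gives $\mathbb{E}(m_{d-1}) / \mathbb{E}(c_{d-1}) = O(1/(np))$; since $\mathbb{E}(|\UCC|) \gtrsim \binom{n-1}{d} = \Theta(\mathbb{E}(c_{d-1}))$, the ratio is $O(1/(np)) \to 0$. In the sparse regime, \eqref{eqn:kahlertwo} at $r = d$ gives $\mathbb{E}(m_d) / \mathbb{E}(c_d) = O(np)$; a short case split on whether $\mathbb{E}(c_d)$ is smaller or larger than $\binom{n-1}{d-1}$, combined with $\mathbb{E}(|\UCC|) \geq \binom{n-1}{d-1} + (1 - o(1))\mathbb{E}(c_d)$, shows the ratio is $O(np) \to 0$ in either sub-case. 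The main conceptual step is the duality $m_{d-1} - \beta_{d-1} = m_d - \beta_d$, which lets us use whichever of the two critical counts is small in the regime at hand; the main bookkeeping step is verifying the combinatorial identity $m_{d-1} - m_d = \binom{n-1}{d} - c_d$ uniformly across the two distinct top-dimensional matching mechanisms employed by $\UCC$.
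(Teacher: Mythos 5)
Your proposal is correct, and it takes a genuinely different route from the paper. The paper argues regime by regime on expected counts of \emph{regular} simplices: in the dense case it uses $\overline{n_d}\le c_{d-1}$ and $\overline{m_d}\ge c_{d-1}-m_{d-1}-c_{d-2}$, in the sparse case $\overline{n_{d-1}}\le c_d+c_{d-2}$ and $\overline{m_{d-1}}\ge c_d-m_d+c_{d-2}-c_{d-3}$, and then kills the correction terms using \Cref{cor:lm} together with $c_j/c_k\to 0$ for $j<k$. You instead prove a single deterministic inequality valid in both regimes, $|\mathsf{opt}|-|\UCC|\le m_{d-1}-\beta_{d-1}=m_d-\beta_d\le\min(m_{d-1},m_d)$, obtained from your pair-counting identity $m_{d-1}-m_d=\binom{n-1}{d}-c_d$ (which I verified holds for both the apparent-pairs and the mixed sparse gradient, since $(d-1)$-simplices containing $v'$ are always matched down and the others never are), the Euler-characteristic relation $\beta_{d-1}-\beta_d=\binom{n-1}{d}-c_d$, and the weak Morse inequalities applied to $\mathsf{opt}$; the regime-specific estimates \Cref{eqn:kahlerone,eqn:kahlertwo} then finish exactly as in the paper. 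What your approach buys is a unified structural bound that makes transparent why one may use whichever of $m_{d-1},m_d$ is small in the given regime; what the paper's buys is purely combinatorial counting with no homological input. Two small remarks: your identity $m_{d-1}-\beta_{d-1}=m_d-\beta_d$ follows in one line from Forman's Morse equality \cite[Theorem 1.7]{Fo98} (already invoked elsewhere in the paper) together with $m_k=\beta_k$ for $k\le d-2$ and $m_0=\beta_0=1$, so the wedge-of-spheres computation can be bypassed; and in the dense regime your claim $\mathbb{E}(|\UCC|)\gtrsim\binom{n-1}{d}$ is not immediate from \Cref{lem:lower} alone (that only gives order $\binom{n}{d-1}$ pairs) but does follow from your own identity, since the number of pairs is at least $\binom{n}{d}-m_{d-1}$ and you have already shown $\mathbb{E}(m_{d-1})=o\bigl(\binom{n}{d}\bigr)$ -- worth stating explicitly, but not a gap in substance.
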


\begin{proof} We consider the following two cases:

\begin{description}

\item[Case 1]  $n p\to \infty$

\end{description}

By definition,
\begin{equation}\label{eqn:no1}
\overline{m_{d}} \geq c_{d-1}-m_{d-1}-c_{d-2}
\end{equation}
Also since the complex is $d$-dimensional,  we get
\begin{equation}\label{eqn:no2}
\overline{n_{d}}\leq c_{d-1}.
\end{equation}
%From , if $np\to\infty$, then $\frac{\mathbb{E}(m_{d-1})}{\mathbb{E}(c_{d-1})}\to0$. 
%Hence, $\frac{\mathbb{E}(\overline{m_{d-1}})}{\mathbb{E}(c_{d-1})}=\frac{\mathbb{E}(c_{d-1}-m_{d-1})}{\mathbb{E}(c_{d-1})}\to1$.
Using \Cref{eqn:no1,eqn:no2,lem:lower}, we obtain
\begin{align*}
\mathbb{E}(\sum_{k=0}^{d}\overline{n_{k}}) & \leq\mathbb{E}(\sum_{k=0}^{d-1}c_{k}+c_{d-1}) = \sum_{k=0}^{d-1}c_{k}+c_{d-1}.
\end{align*}
\begin{align*}
\mathbb{E}(\sum_{k=0}^{d}\overline{m_{k}}) & \geq\mathbb{E}(\sum_{k=0}^{d-2}c_{k}+c_{d-1}-m_{d-1}+c_{d-1}-m_{d-1}-c_{d-2})\\
 & =\sum_{k=0}^{d-1}c_{k}+c_{d-1}-c_{d-2}-2\mathbb{E}(m_{d-1}).
\end{align*}

Therefore, 
\begin{align*}
\frac{\mathbb{E}(|\UCC|)}{\mathbb{E}(|\mathsf{opt}|)} & =\frac{\mathbb{E}(\sum_{k=0}^{d}\overline{m_{k}})}{\mathbb{E}(\sum_{k=0}^{d}\overline{n_{k}})}\\
 & \geq\frac{\sum_{k=0}^{d-1}c_{k}+c_{d-1}-c_{d-2}-2\mathbb{E}(m_{d-1})}{\sum_{k=0}^{d-1}c_{k}+c_{d-1}}\\
 & =1+\frac{-c_{d-2}-2\mathbb{E}(m_{d-1})}{\sum_{k=0}^{d-1}c_{k}+c_{d-1}}.
\end{align*}
Using \Cref{cor:lm}, and  the fact that in $\lmm$, $\frac{c_j}{c_k}\to 0$ for $j<k$ and $j,k\in[0,d-1]$, we conclude that 
\[\frac{\mathbb{E}(|\UCC|)}{\mathbb{E}(|\mathsf{opt}|)} \to 1.\]

\begin{description}

\item[Case 2] $n p\to 0$

\end{description}

Since a regular $(d-1)$-simplex is paired to either a $d$-simplex or a $(d-2)$-simplex, we obtain
\begin{equation} \label{eqn:no3}
\overline{n_{d-1}}\leq c_{d}+c_{d-2},
\end{equation}
\begin{equation} \label{eqn:no4}
\overline{m_{d-1}}\geq c_{d}-m_{d}+c_{d-2}-c_{d-3}.
\end{equation}
Therefore, using \Cref{eqn:no3,eqn:no4,lem:lower}, 
\begin{align*}
\frac{\mathbb{E}(|\UCC|)}{\mathbb{E}(|\mathsf{opt}|)} & =\frac{\mathbb{E}(\sum_{k=0}^{d}\overline{m_{k}})}{\mathbb{E}(\sum_{k=0}^{d}\overline{n_{k}})}\\
 & \geq\frac{\sum_{k=0}^{d-2}c_{k}+ (\mathbb{E}(c_d)-\mathbb{E}(m_{d})+c_{d-2}-c_{d-3})+ (\mathbb{E}(c_d)-\mathbb{E}(m_{d}))}{\sum_{k=0}^{d-2}c_{k}+\mathbb{E}(c_d)+c_{d-2}+\mathbb{E}(c_d)}\\
 & =1+\frac{-c_{d-3}-2\mathbb{E}(m_{d})}{\sum_{k=0}^{d-2}c_{k}+c_{d-2}+2\mathbb{E}(c_d)}.
\end{align*}
Using \Cref{cor:lm}, and  the fact that in $\lmm$, $\frac{c_j}{c_k}\to 0$ for $j<k$ and $j,k\in[0,d-1]$, we conclude that 
\[\frac{\mathbb{E}(|\UCC|)}{\mathbb{E}(|\mathsf{opt}|)} \to 1.\]

\end{proof}

\end{document}